\def \be{\begin{equation}}
\def \ee{\end{equation}}
\def \beq{\begin{equation}}
\def \eeq{\end{equation}}
\def \bea{\begin{eqnarray}}
\def \eea{\end{eqnarray}}
\def\btheta{{\boldsymbol{\theta}}}
\def\BZ{{\rm BZ}}
\def\bq{{\bf q}}
\def\bQ{{\bf Q}}
\def\bx{{\bf x}}
\def\bv{{\bf v}}
\def\bj{{\bf j}}
\def\br{{\bf r}}
\def\bE{{\bf E}}
\def\bj{{\bf j}}
\def\bJ{{\bf J}}
\def\bS{{\bf S}}
\def\bk{{\bf k}}
\def\ij{\langle i,j\rangle}
\def\nd{{\vphantom{\dagger}}}
\def\cP{{\cal P}}
\def\etal{{\it et al.~}}
\def\be{\begin{equation}}
\def\beq{\begin{equation}}
\def\ee{\end{equation}}
\def\eeq{\end{equation}}
\def\bea{\begin{eqnarray}}
\def\eea{\end{eqnarray}}
\def\bA{{\bf A}}
\def\bB{{\bf B}}
\def\cO{{\cal O}}
\def\cM{{\cal M}}
\def\cL{{\cal L}}
\def\half{{1\over 2}}
\def\tt{{t}}
\def\bnabla{{\boldsymbol{\nabla} }}
\def\bsigma{{\boldsymbol{\sigma} }}
\def\balpha{{\boldsymbol{\alpha} }}
\def\bkappa{{\boldsymbol{\kappa} }}
\def\bOmega{{\boldsymbol{\Omega} }}
 \def\bq{{\bf q}}
\def\bB{{\bf B}}
\def\bR{{\bf R}}
\def\bE{{\bf E}}
\def\bx{{\bf x}}
\def\bv{{\bf v}}
\def\bp{{\bf p}}
\def\br{{\bf r}}
\def\bz{{\bf z}}
\def\bj{{\bf j}}
\def\ba{{\bf a}}
\def\bA{{\bf A}}
\def\bk{{\bf k}}
\def\bQ{{\bf Q}}
\def\cL{{\cal L}}
\def\cP{{\cal P}}
\def\cO{{\cal O}}
\def\cM{{\cal M}}
\def\bP{{\bf P}}
\def\tc{{\tilde c}}
\def\Im{{\mbox{Im}}}
\def\Re{{\mbox{Re}}}
\def\Tr{{\mbox{Tr}}}
\def\bnabla{{\boldsymbol{\nabla}}}
 \def\bq{{\bf q}}
\def\bB{{\bf B}}
\def\bR{{\bf R}}
\def\bE{{\bf E}}
\def\bM{{\bf M}}
\def\bx{{\bf x}}
\def\bv{{\bf v}}
\def\bp{{\bf p}}
\def\br{{\bf r}}
\def\bz{{\bf z}}
\def\bj{{\bf j}}
\def\ba{{\bf a}}
\def\bs{{\bf s}}
\def\bA{{\bf A}}
\def\bk{{\bf k}}
\def\bQ{{\bf Q}}
\def\cL{{\cal L}}
\def\cV{{\cal V}}
\def\cP{{\cal P}}
\def\cO{{\cal O}}
\def\cM{{\cal M}}
\def\bP{{\bf P}}
\def\tc{{\tilde c}}
\def\Im{{\mbox{Im}}}
\def\Re{{\mbox{Re}}}
\def\Tr{{\mbox{Tr}}}
\def\bpsi{{\boldsymbol{\psi}}}
\def\bDelta{{\bar{\Delta}}}
\def\bmu{{\bar\mu}}
\newtheorem{lemma}{Lemma}
\def\ve{{\varepsilon}}
\def\vve{{\epsilon}}
\def\ij{{\langle ij\rangle}}
\def\jk{{\langle jk\rangle}}
\def\blangle{\boldsymbol{\Big<}}
\def\brangle{\boldsymbol{\Big>}}
\def\bvert{\boldsymbol{\big|}}
\def\kexpect#1#2#3{\blangle \, #1 \bvert #2 \bvert #3 \,\brangle}
\def\kbraket#1#2{\blangle \, #1 \bvert #2 \,\brangle}
\def\kbra#1{\blangle \, #1 \bvert }
\def\kket#1{\bvert #1 \,\brangle}
\def\vvv{\vphantom{\big |}}
\begin{document}
\title{Quantum Transport Theory of Strongly Correlated Matter}

\author{Assa Auerbach and Sauri Bhattacharyya}
\affiliation{Physics Department, Technion, 32000 Haifa, Israel.}

\date{\today }
\begin{abstract}  
This report reviews recent progress in 
computing Kubo formulas for general interacting Hamiltonians.  The aim is to  calculate electric and thermal magneto-conductivities in strong scattering regimes where Boltzmann equation and Hall conductivity proxies exceed their validity.  Three primary approaches are explained.
\begin{enumerate}
    \item   
Degeneracy-projected polarization formulas for Hall-type conductivities, which substantially reduce the number of calculated current matrix elements. These expressions generalize the Berry curvature integral formulas to imperfect lattices. 
\item Continued fraction representation of dynamical longitudinal conductivities. The calculations produce
a set of  thermodynamic averages, which can be controllably extrapolated using their mathematical relations to  low and high frequency conductivity asymptotics.
\item Hall-type  coefficients summation formulas, which are constructed from thermodynamic averages.
\end{enumerate}

The thermodynamic formulas are derived in the operator Hilbert space formalism,  which  avoids the opacity and high computational cost of the Hamiltonian eigenspectrum. The coefficients can be obtained by well established imaginary-time Monte Carlo sampling, high temperature expansion, traces of operator products, and variational wavefunctions at low temperatures.  

We demonstrate the power of  approaches 1--3  by their application to well known models of  lattice electrons and bosons. The calculations clarify the far-reaching influence of strong local interactions on the metallic transport near Mott insulators. Future directions for these approaches are discussed.
\end{abstract}

\maketitle
\newpage
\tableofcontents
\newpage
\part{Introduction}
\section{Why calculate conductivities?}
Theorists commonly describe phases of matter by their order parameters, since they can be calculated by tried and tested algorithms of equilibrium statistical mechanics. These  {\em ``thermodynamic methods''} include stochastic series expansions~\cite{SSE}, worm algorithm of Quantum Monte-Carlo simulations~\cite{prokofiev,Worm}, and variational methods such as
density matrix renormalization group~\cite{DMRG,DMRG-Schollwok}, projected entangled-pair states~\cite{PEPS} and  tensor networks~\cite{TN}.  

Experimentalists on the other hand, commonly probe phases of matter by transport measurements. For example:  Hall coefficients can characterize the current-carriers' density (near band extrema).  Temperature dependent resistivity may herald the onset of superconductivity or  charge localization. Quantized Hall conductivity is associated with topologically ordered  ground states.    

Electric, thermo-electric and thermal transport coefficients,  $\bsigma,\balpha, {\bar\balpha}$ and $\bkappa$,  are defined by linear response equations,
\bea
\bj   &=& \bsigma \cdot \left( \bE  -  \bnabla  \mu /e\right)  ~ + ~ \balpha  \cdot    \left( - \bnabla T   \right) \quad ,\nonumber\\
 \bj_Q   &=&  T  \bar{\balpha}   \cdot  \left(\bE  -  \bnabla \mu /e\right) ~ + ~\bkappa \cdot  ( - \bnabla T  ) \quad .
\label{Tr0}
\eea
$\bj$ and $\bj_Q$ are the  
electric and thermal currents respectively. $\bE-\bnabla\mu/e$ and $\bnabla T$  are the electrochemical field and temperature gradient respectively. In Eq.~(\ref{Tr0}), the wavevector $\bq$ and frequency $\omega$ dependence of all variables are suppressed.

For systems described by weakly scattered Bloch-band quasiparticles, the Boltzmann equation~\cite{KL,Ziman} and diagrammatic perturbation theory~\cite{Mahan} are adequate.  For incompressible quantum Hall phases~\cite{QHE} and topological insulators~\cite{Bernevig}, proxies such as the Chern number~\cite{TKNN,Avron}  and Streda formula~\cite{Streda} yield the Hall conductivity. 

However, for strongly correlated gapless systems which exhibit   ``bad  metal'' phenomenology~\cite{badmetals}, quasiparticle  descriptions may fail. The only Hamiltonian-based alternative to weak scattering approaches is the computation of the Kubo formulas~\cite{Kubo}.
However, dynamical Kubo formulas may be forbiddingly difficult.
Exact diagonalizations entail exponentially large memory costs, and analytic continuation of Quantum Monte Carlo data to real frequencies is ill-posed at low frequencies~\cite{Snir}. Simplifications of Kubo formulas which apply to gapless phases are in dire need.

This Report reviews recent advances  which sidestep some of the Kubo formula difficulties, and
render conductivity calculations in the presence of strong interactions more accessible. 

\section{Lingering issues clarified}
Before delving into details, we list certain questions which have permeated the common lore of transport theory, and are resolved in this Report. 

\subsection{Are DC Hall conductivities {\em on-shell} or {\em off-shell} expressions?} 
In the Lehmann (eigenstates) representation of Kubo formulas,  {\em on-shell} expressions involve current matrix elements $j_{nm}^\alpha$ between quasi-degenerate states (i.e. whose  energies' separation $E_n-E_m$ vanishes in the large volume limit).
On-shell expressions are  implemented by taking
\be
\lim_{\ve\to 0}\lim_{\bq\to 0}\lim_{\cV\to \infty} \Im {1\over E_n(\cV)-E_m(\cV)-i\ve} \to \pi \lim_{\bq\to 0}\lim_{\cV\to \infty}\delta(E_n(\cV)-E_m(\cV)) \quad .
\label{onshellIm}
\ee
Real longitudinal conductivities turn out to be on-shell expressions.

In contrast, Hall-type (antisymmetric transverse) conductivities, involve sums over the {\em real} part of  energy denominators, i.e.
\be
 \Re {1\over E_n-E_m-i\ve}={E_n-E_m\over (E_n-E_m)^2 + \ve^2} \quad .
\label{onshellRe}
\ee
By blithely setting $\ve\to 0$, and neglecting $\cO(\ve^2)$ terms, one may be wrongly led to believe that  ``Hall conductivities are off-shell expressions'', i.e. that they include $j_{nm}^\alpha$ which connect between well separated energies in the large volume limit.

This statement is misleading. As shown below by the degeneracy projected polarization (DPP) formulas in Part \ref{sec:DPP}, on open boundary conditions (OBC), terms of order $\ve^2$ in  (\ref{onshellRe}) are {\em essentially important} in the DC limit!
In fact, Hall-type conductivities with OBC are {\em purely  on-shell expressions}. Physically, this implies that at low temperatures, the Hall current is carried by low energy gapless excitations, which may be located in the bulk or at the system's edges.

Off-shell Kubo formulas have been used in the literature, most notably by Thouless \etal~\cite{TKNN} in their seminal derivation of the quantized Hall conductance of perfectly periodic lattices. However, they are only valid in limited cases such as a gapped ground state with periodic boundary conditions (PBC), as discussed in Section~\ref{sec:Proxies}.
\\

\subsection{What is the origin of  magnetization subtractions in $\alpha_{xy}$ and $ \kappa_{xy}$? Must we compute them?}

The infamous magnetization corrections for $\alpha_{xy}$ and $\kappa_{xy}$~\cite{Cooper,QNS}  (which inconveniently diverge at zero temperature)  are an artifact of introducing a static ``gravitational field'' $\bpsi_{\omega=0}$ in lieu of the  temperature gradient $-\bnabla T$, which is a {\em non-equilibrium} statistical force.  The static $\bpsi$ produces superfluous magnetization currents which must be  subtracted from the Kubo formula in the DC limit~\cite{Cooper,QNS}. 
However, magnetization subtractions can be eliminated using
the DPP formulas for $\alpha_{xy}$ and $ \kappa_{xy}$ as shown in subsection \ref{sec:MagCT}. The magnetization terms also fall out of the Hall coefficient summation formulas of Part~\ref{Part:Thermo II}.

\subsection{Can DC dissipative transport coefficients be expressed in terms of static thermodynamic coefficients?}
At first thought, one may suspect that expressions involving on-shell scattering rates (i.e. Fermi's golden rule),  may not be applicable to calculations of static thermodynamic averages.

On the other hand, it has long been realized that some dynamical conductivities may be computed by continued fraction representations~\cite{RXX-PRB,Viswanath}, which are constructed from the conductivity moments which are a set of thermodynamic averages.
The price to pay is a necessary {\em extrapolation} of the calculated moments up to infinite order. While this may turn out to be  a daunting task, extrapolation is occasionally facilitated by appealing to mathematical relations, which are reviewed in Section \ref{sec:Recurrents-Cond}. These relations connect between 
high order moments and high frequency limits of the dynamical conductivity. 

In a different approach, Part \ref{Part:Thermo II} derives 
thermodynamic summation formulas for the low magnetic field
 Hall-type coefficients. 
Truncation  of this sum may sometimes be justified by calculating leading order corrections, as demonstrated  in the model examples of Part \ref{part:SCE} and Part \ref{part:SCB}.

\subsection{What are the effects of a Mott insulator phase on the longitudinal and Hall transport of a nearby metal?} 

 Strong local interactions effects give rise to the Mott insulator and extend deep into the nearby metallic phase. The effects include large high temperature linear resisitivity slopes, and Hall coefficient sign reversal and divergence at low doping of the Mott insulator. These effects are captured by thermodynamic
 Kubo formula calculations of the two dimensional t-J model  and Hard Core Bosons model, in Part \ref{part:SCE} and Part \ref{part:SCB} respectively.  
 
\section{Organization of the report}
In Part \ref{sec:BH}, we present a  brief review of Drude, Boltzmann, and memory functional transport theories which apply to Hamiltonians of weakly scattered quasiparticles. In Parts~\ref{sec:KF}-\ref{Part:Thermo II}, we
lay the mathematical basis for approaches applicable to strongly interacting Hamiltonians. First, in Part \ref{sec:KF}, the Kubo formula with its proper DC order of limits is defined. The reversal of that order of limits in Chern number and Streda formula proxies is discussed.

New approaches are derived, starting from Part \ref{sec:DPP}, where the DPP formulas for DC Hall type conductivities are presented. The DPP formulas reduce to Berry curvature integrals in the disorder-free limit.
Part~\ref{Part:Thermo I} derives the continued fraction conductivities from the moments expansion, and presents viable extrapolation schemes. 
Part \ref{Part:Thermo II} derives the Hall coefficient summation formulas. 

\begin{figure}[t]
\begin{center}
\includegraphics[width=8.5cm,angle=0]{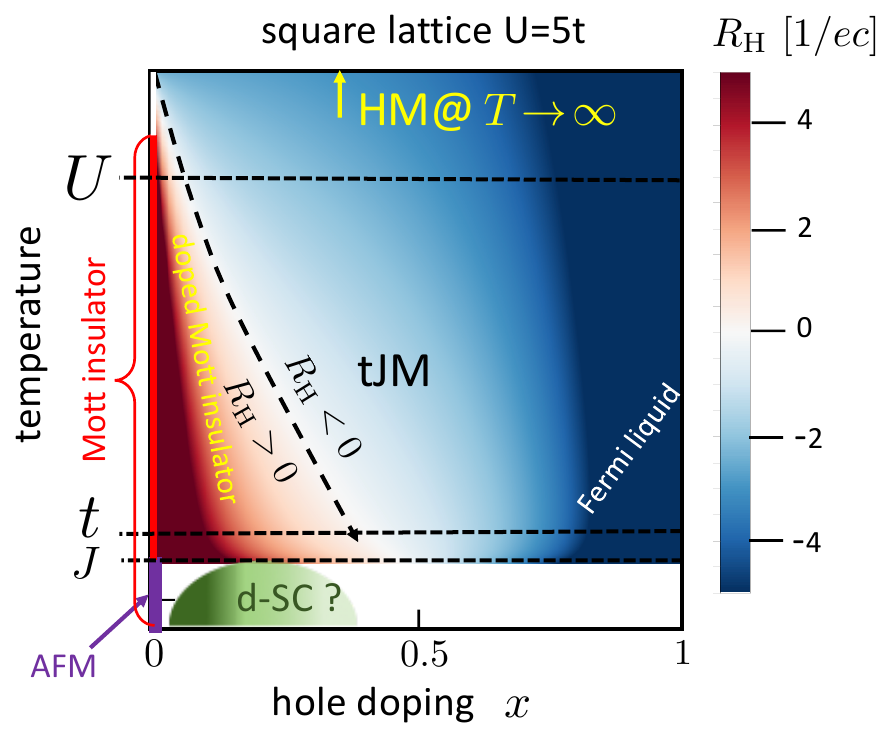}
\caption{{
Hall coefficient map  for the strongly interacting Hubbard model  on the square lattice.  The Hall coefficient was calculated~\cite{tJM} using the thermodynamic summation formula reviewed in Part~\ref{Part:Thermo II}.
The  model and its parameters are defined in Part~\ref{part:SCE}.   The anomalous (reddish-brown)  region near the Mott insulator is a consequence of the strong local electron-electron interactions which affect the current commutation relations and 
produce the sign reversal and divergence.
At low temperatures (beyond the present analysis),   we mark the  region of antiferromagnetic (AFM) order, and a hypothetical region of $d$-wave superconductivity (d-SC)~\cite{DMRG,Sorella,Troyer}. }}
\label{fig:HallMap-U}
\end{center}
\end{figure}

Part~\ref{part:SCE} demonstrates recent application~\cite{tJM} of the continued fraction conductivity and Hall coefficient summation formula to strongly interacting electrons near a Mott insulator, as modelled by the two dimensional Hubbard and t-J Hamiltonians. The resulting Hall coefficient map is depicted in Fig.~\ref{fig:HallMap-U}.
Part~\ref{part:SCB} demonstrates the application of the same formulas to obtain the resistivity and Hall coefficient of strongly interacting lattice bosons.

 Part~\ref{Part:Summary} summarizes the Report. A discussion of  strategies for optimizing the   thermodynamic approaches is given. Future applications and directions of research are proposed.

\part{Brief history of weak scattering}
\label{sec:BH}
Perturbative calculations of Kubo formulas  for DC conductivities  of metals suffer from the singular effect of scattering.
At leading order in impurity concentration  an infinite resummation of diagrams is needed to obtain a finite longitudinal conductivity at finite impurity concentration~\cite{Mahan}.  

Older (and simpler) alternatives for the weak scattering regimes are to apply the Drude and Boltzmann approaches.  In this Report,  unless otherwise specified, we use units of $\hbar=k_B=1$.
  
\section{Drude theory} 
Drude theory~\cite{AM} is  particularly useful for lightly doped semiconductors.
It is based on  a Fermi gas of electrons of mass $m^*$ and charge $e$ whose  kinetic energy is described by the single particle dispersion,
\be
\epsilon_\bk = {\bk^2\over 2m^*} \quad .
\ee
Collisions with disorder and other electrons are introduced by a scattering time $\tau$.
Solving for the single electron equation of motion in a time dependent field, the dynamical longitudinal conductivity is 
\be 
\sigma^{\rm Drude}_{xx}(\omega) =  {ne^2\over m^*} {\tau \over 1+(\omega\tau)^2} \quad .
\label{Sxx-Drude}
\ee
where $e$ is the electron charge.
The  DC  conductivities 
in a uniform magnetic field $\bB=B\hat{\bz}$ are
\bea 
\sigma^{\rm dc}_{xx} &=& {ne^2\over m^*} ~ {\tau\over 1+ (\omega_c \tau)^2} \quad ,\nonumber\\
\sigma^{\rm dc}_{xy} &=& {ne^2\over m^*}  ~  {\omega_c \tau^2 \over    1 + (\omega_c \tau)^2 } \quad ,
\label{S-Drude}
\eea
where $\omega_c= {e  B\over m^* c}$ is the cyclotron frequency,
and $c$ is the speed of light.
The zero field  Hall coefficient is defined as,
\be
R^{\rm Drude}_{\rm H} \equiv  -\lim_{B\to 0}  {\rho_{xy}(B)\over B}=  {1 \over \sigma_{xx}^2(0)}\left.{d\sigma^{\rm dc}_{xy} \over dB}\right|_{B=0} = {1 \over n e   c} \quad .
\label{RH-Drude}
\ee
Thus, $R_{\rm H}^{\rm Drude}$ is proportional to  the inverse charge density. We note that $R_{\rm H}^{\rm Drude}$, in contrast to $\sigma_{xy}$, is independent of dynamical parameters $m^*$ and $\tau$.
Later, in Part~\ref{Part:Thermo II},  the expression of the Hall coefficient in terms
of thermodynamic coefficients  is shown to be a general feature of non-separable Hamiltonians.

\section{Boltzmann equation} 
The single band Boltzmann equation (BE)~\cite{KL,Ziman,Girvin} for the quasiparticle distribution function deviation $\delta f_{\bk,\br}$ is based on small deviations from the  Fermi-Dirac distribution $f_0(\epsilon_\bk-\mu)$, where
 $\epsilon_{\bk}$ is the non-interacting
 Bloch band dispersion,  $\bk$ is a wavevector within the Brillouin zone (BZ), and $\mu$ is the chemical potential.
In the presence of  presence of an externally imposed  electrochemical field $\bE-{1\over e} \bnabla \mu$ and temperature gradient $ -\bnabla T$, the BE is,
\be
{\partial \delta f\over \partial t} + \dot{\bk}\cdot {\partial \delta f\over \partial \bk}+ \dot{\bx}_\bk \cdot \left[ -\bnabla\mu - {\epsilon_\bk-\mu\over T} \nabla T \right]\left(  {\partial f^0_\bk\over \partial \epsilon}\right)= {\cal I}^1_\bk [\delta f] \quad .
\label{BE-lin}
\ee 
where the semiclassical equations of motion are,
\bea
\dot{\bx}_\bk&=& \bnabla_\bk \epsilon_\bk -\dot{\bk}\times \bOmega_\bk\quad,\nonumber\\
\dot{\bk} &=& e\bE - {e\over c}\dot{\br}_\bk \times \bB\quad .
\label{Semiclass}
\eea
 The band and spin indices are suppressed. The band Berry curvature $\bOmega_\bk$, which modifies the velocity~\cite{Girvin,Niu-Sxy} is given by,
 \be
 \bOmega_\bk\equiv i \langle \bnabla_\bk u_\bk |\times |\bnabla_\bk u_\bk\rangle \quad ,
 \label{Berry-C}
 \ee
 where $|u_\bk\rangle$ is the periodic part of the Bloch state.
${\cal I}^1_\bk$ is the collision integral which is commonly simplified by the relaxation time approximation,
\be
{\cal I}^1_\bk [f]= -{\delta f_\bk\over \tau_\bk} \quad .
\ee
Weak electron-electron and electron-phonon interactions can be incorporated into BE by renormalizing the $\bv_\bk$,
and contributing to the quasiparticle scattering rate  $1/\tau_\bk$.

In the absence of time-reversal symmetry breaking in the equilibrium density matrix, the electric and thermal currents are given respectively by
\bea
\bj&=&  \sum_{\bk \in \BZ} \bv_{\bk} \ \delta f_\bk \quad , \nonumber\\
\bj_Q&=&  \sum_{\bk \in \BZ} (\epsilon_\bk-\mu)\bv_{\bk} \ \delta f_\bk \quad ,
\label{BE-currents}
\eea
where $\BZ$ is the Brillouin zone.
Using the solution of Eq.~(\ref{BE-lin}) in (\ref{BE-currents}), for $C_4$ symmetric bands, yields  BE expression for the DC longitudinal and Hall conductivities,
\bea
\sigma_{xx}^{\rm dc} &=& {e^2\over c \cV} \sum_\bk  \left(-{\partial f\over \partial \epsilon}\right) (v_\bk^x)^2 \tau_{\bk} \quad ,\nonumber\\
\sigma^{\rm dc}_{xy} &=&  { e^3 B\over c \cV} \sum_{\bk \in \BZ}   \left(- {\partial f_\bk^0 \over \partial \epsilon} \right)  v^y_\bk \tau_{\bk}  \left( v^y_\bk {\partial \over \partial k_{x}} - v_\bk^{x} {\partial \over \partial k_{y}} \right) (v_\bk^{x} \tau_{\bk}) \quad .
\label{Sab-Boltz}
\eea
For an isotropic (energy dependent)  scattering time $\tau_\bk=\tau(\epsilon_\bk)$, one obtains simplified expressions at low temperatures relative to the Fermi energy $\epsilon_F$, 
\bea
\sigma_{xx}^{\rm dc} &=& \tau(\epsilon_F)~\chi_{\rm csr} \quad , \nonumber\\
\sigma_{xy}^{\rm dc} &=&   
\tau^2(\epsilon_F)~ {e^3\over c \cV}  \sum_{\bk \in \BZ}   \left( -{\partial f_\bk^0 \over \partial \epsilon} \right) \left( v_\bk^{y} \left( v_\bk^{y} {\partial \over \partial k^{x}} - v_\bk^{x} {\partial \over \partial k^{y}}\right)  v_\bk^{x}\right) \quad ,
\label{Sab-tauconst}
\eea
where the conductivity sum rule (CSR) is,
\be
\chi_{\rm csr} = 
{e^2\over \cV} \sum_{\bk \in \BZ}  \left(- {\partial f_\bk^0 \over \partial \epsilon} \right)  (v_\bk^x)^2 \quad .
\label{CSR-NI}
\ee

The Hall coefficient acquires a scattering time independent expression~\cite{Ziman},
\be
R_{\rm H}^{\rm Boltz} = {e^3 \over   c \chi_{\rm csr}^2 \cV }    \sum_{\bk}   \left( -{\partial f_\bk^0 \over \partial \epsilon} \right) \left( v_\bk^{y} \left( v_\bk^{y} {\partial \over \partial k^{x}} - v_\bk^{x} {\partial \over \partial k^{y}}\right)  v_\bk^{x}\right) \quad .
\label{RH-Boltz}  
\ee
 Eq.~(\ref{RH-Boltz}) generalizes Drude's result~(\ref{RH-Drude}) to non-spherical Fermi surfaces.

 \subsection{Example: The square lattice}
 \label{sec:SL}
The square lattice (SL) tight binding model is,
\be
H^{\rm SL}=-\sum_{\ij,s=\uparrow,\downarrow}t_{ij} (c^\dagger_{is}c^{\nd}_{js} +c^\dagger_{js}c^{\nd}_{is})   \quad ,
\label{SL}
\ee
where $c^\dagger_{is}$ creates an electron on site $i$ with spin $s$. $\ij$ are nearest neighbor bonds on the square lattice (SL), and electron occupation per site is $n_{is}= c^\dagger_{is} c^{}_{is}$.

The Hall coefficients given by Eq.~(\ref{RH-Boltz}) and related
band structure contours are depicted in Fig.~\ref{fig:SL-RH} for nearest neighbor (nn) and next nearest neighbor (nnn) model.

\begin{figure}[h!]
\begin{center}
\includegraphics[width=8cm,angle=0]{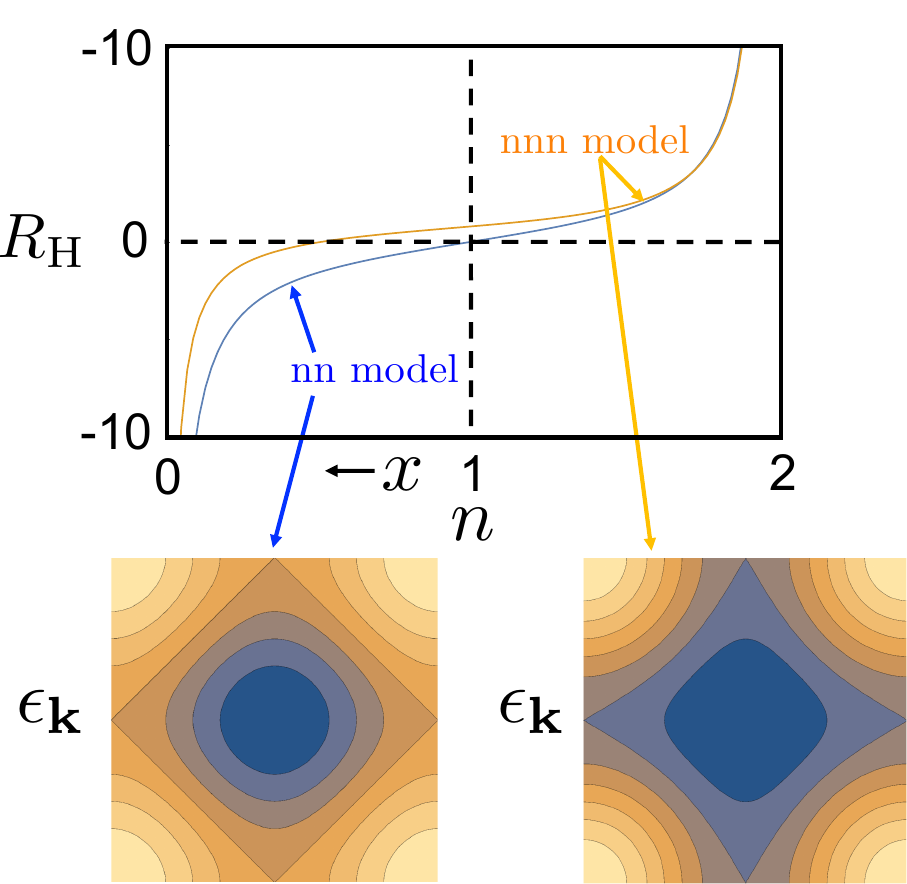}
\caption{ 
Hall coefficient $R_{\rm H}$, as given by Boltzmann equation   of the weakly scattered ($U=0$) square lattice, given by Eq.~(\ref{RH-Boltz}) at $T=0$. 
 $n$ is the electron filling, and $x$ is the hole doping
concentration customarily used in describing cuprate superconductors. The nearest neighbor (nn) model has particle-hole antisymmetry about $n=1$.  Next nearest neighbor (nnn) hopping moves the Hall sign reversal toward lower electron density. }
 \label{fig:SL-RH}
\end{center}
\end{figure}
The most important prediction of Boltzmann theory is that  the Hall coefficient is everywhere continuous and diverges only toward the full and empty band limits. 
\subsection{Conductivity relations at low temperatures}
For  $C_4$ symmetric bands, the isotropic lifetime leads to simple relations between for the longitudinal electric, thermoelectric and thermal conductivities, which can be expanded at low temperatures as
\bea
\sigma^{\rm dc}_{xx}(T) &=&  \sigma^{\rm dc}_{xx}(0)  \left(  1 + {\pi^2 \over 8}{(k_B T)^2 \over \mu^2} + ...\right) \\ \nonumber
\alpha^{\rm dc}_{xx}(T) &=&  {\sigma^{\rm dc}_{xx}(0)  \over e^2} {\pi^2 \over 2} {(k_B T)^2 \over \mu} + ... \\ \nonumber
\kappa^{\rm dc}_{xx}(T)  &=& {\sigma^{\rm dc}_{xx}(0)  \over e^2} {\pi^2 \over 3} (k_B T)^2 + ...
\eea
As $T \rightarrow 0$, one obtains the 
 Wiedemann-Franz law
 \be
{\kappa \over \sigma T} = {\pi^2 \over 3} (k_B/e)^2 = 2.45 \times 10^{-8} V^2K^{-2} \quad .
\label{WF}
\ee
\section{Memory Function Formalism}
The Memory Function (MF) formalism~\cite{Mori,Zwanzig,mem-das} is closely related to the Operator Hilbert Space (OHS) formulation of dynamical response functions as described in Section \ref{sec:OHS}. 

A primary goal of the MF was to obtain the dynamical response of a set of  ``slow operators''  which nearly commute with the Hamiltonian, by integrating out all the ``fast operators'' to obtain 
a self energy, called `memory matrix'' $M(z)$.
G\"{o}tze-W\"{o}lfle~\cite{GW}  have applied the MF to evaluate the longitudinal conductivity in a weakly scattered parabolic band of effective mass $m^*$,
\be
\sigma_{xx}(\omega) = {ne^2\over m^* } \Re {i \over z - M(z) / {ne^2\over m^* }} \quad ,
\ee
where $M(z)$ is the current-current correlation function. They consider  a white noise random potential $V$, whose ensemble average is
\be
\langle V_\bq V_{-\bq'}\rangle_{\rm dis} =w^2 \delta_{\bq,-\bq'} \quad ,
\ee
and its commutator with the current is 
\be
[j^x, V_\bq]= -i{\hbar\over m} q_x V_\bq \quad .
\ee
At low temperatures, MF recovers Drude result (\ref{Sxx-Drude}),
\be
\sigma_{xx}={  ne^2 \tau \over m^*} \quad .
\ee
The MF scattering rate agrees with   Fermi's golden rule,
\be
{\hbar \over \tau}  =  \pi {\cal N}(\epsilon_F) w^2 \quad ,
\label{FGR-tau}
\ee 
 where ${\cal N}(\epsilon_F)$ is the density of electron states at the Fermi energy.

More generally, the MF separation of timescales can facilitate obtaining the low temperature and frequency conductivities
by renormalization of the dynamical response functions onto the low energy Hilbert space.

In Part~\ref{Part:Thermo I}, the memory function $M(z)$ can be related to the first order termination function $\Delta_1^2 G^>_{11}(z)$ in the continued fraction of the longitudinal conductivity.

\section{Limits of weak scattering approaches}
BE describes transport by quasiparticles in the conduction band, with velocity $\bv_\bk$ at wavevector $\bk$  near the Fermi surface. 
For elastic scattering in dimensions one and two,  BE is invalidated by wavefunction localization~\cite{PWA,Go4,Loc-Abrahams,Eric}. The collision integral of Eq.~(\ref{BE-lin}) describes incoherent scattering processes which do not lead to localization.  

The validity of BE depends on the existence of well defined quasiparticles. Inelastic scattering broadens the quasiparticle energies
and wavevectors by $1/\tau$  and $2\pi/l$. The BE therefore  requires 
\be
\epsilon_F \tau  \gg 1 \quad ,\quad k_F l \gg 1 \quad .
\label{ROV}
\ee
The criteria (\ref{ROV}) have been related by Ioffe and Regel (IR)~\cite{IoffeRegel} to experimental values of the resistivity.
Using the simplified Drude theory of a parabolic band, $\epsilon_F = \hbar^2 k_F^2 / 2 m^*$, and $n=k_F^3 / 3\pi^2$, and by (\ref{Sxx-Drude}), one obtains an upper bound on  value of resistivity which can be explained  by Boltzmann equation,
\bea
\rho_{xx}&=&  {m^*\over n e^2 \tau} =  {3 \lambda_F\over 2}   {  h\over e^2} {1 \over k_F l}  \le \rho_{xx}^{\rm IR}\big|_{k_F l=1} \quad .\eea
For typical cubic metals with $\lambda_F \sim  1$nm,  
the IR limit is  $\rho_{xx}^{\rm IR} \simeq 40 \mu \Omega ~\mbox{\rm cm}$.

In metals, $\rho_{xx}$ usually increases with temperature
due to enhanced inelastic scattering. Many metals (see review in \cite{IR-Hussey}) exhibit resistivity saturation at the values not far above the IR limit.  

However, resistivity in certain  strongly correlated metals exceeds  this limit as temperature is raised, which has been dubbed ``bad metal'' behavior~\cite{badmetals}.   Boltzmann equation fails to account for this regime.

In semimetals and semiconductors with a small  interband gap $\Delta_{\rm ib}$, interband matrix elements of the current  must be included when
\be
\Delta_{\rm ib} \tau <1 \quad .
\ee
This leads to a multi-band BE, which involves coupled equations for intra-band and inter-band distribution functions~\cite{Bulbul,Culcer,Tserkov}. The  equations are in general unwieldy. In the presence of disorder and electron-phonon scattering, a microscopic knowledge of inter-band temperature dependent scattering rates,
and inter-band current matrix elements, is required. 
\newpage
\part{Kubo Formulas} 
\label{sec:KF}

\section{Polarizations and Currents}
Kubo's formulation~\cite{Kubo} of dynamical linear response, provides a rigorous approach to calculating conductivities of a microscopic model Hamiltonian $H_0$ which is supported on a finite $d$-dimensional volume $\cV$. 

The underlying assumption of Kubo's linear response theory is that local equilibrium is established throughout the sample. In other words, the currents' equilibration length and time scales are much shorter than those of the driving fields.

Following Luttinger~\cite{Cooper,Luttinger,Shastry-SR},  it is also assumed that the same electric and thermal currents which are driven by slowly varying {\em statistical }forces, such as $-\bnabla\mu$ and $-\bnabla T$, can be induced by {\em mechanical} electric  and ``gravitational'' forces, $\bE$ and $\bpsi$ respectively, which  couple linearly to the Hamiltonian, 
\be
H(t) = H_0 -  \Theta(t) \int {d^dq\over (2\pi)^d} \left(\bP_\bq \bE_{-\bq}(t) +  {1\over T} \bQ_\bq \bpsi_{-\bq}(t) \right) \quad .
\label{Couplingtofields}
\ee 
The polarizations which couple to the external force fields are,
\bea
\mbox{\rm Electric polarization}:&\quad &  \bP_\bq=  e \int_\cV d^d x~ e^{-i\bq\cdot \bx} ~\bx~  n(\bx) \quad ,\nonumber\\
\mbox{\rm Thermal polarization}:&\quad &  \bQ_\bq =   \int_\cV d^d x~ e^{-i\bq\cdot \bx}~ \bx ~\Big(h(\bx)-\mu n(\bx)\Big) \quad .
\label{Pol-q}
\eea
$n(\bx)$ is particle density and $h(\bx)$ is a local decomposition of the Hamiltonian which satisfies,
\be
H_0=\int d^d x~h(\bx) \quad .
\ee
The Kubo formulas for conductivities require knowledge of $H_0$ and its polarization operators of Eq.~(\ref{Pol-q}). 
The electric and thermal currents are derived by Heisenberg's equations,
\bea
\bj_\bq&=&     i\left[ H_0, \bP_\bq\right] \quad ,\nonumber\\
(\bj_Q)_\bq&=&     i\left[ H_0, \bQ_\bq\right] \quad .
\label{JP}
\eea
We note that the polarizations (\ref{Pol-q}) depend  on the choice of coordinate $\bx$.   However, any finite shift in the origin $\bx\to \bx+\ba$  drops out of the commutators in Eqs.~(\ref{JP}).
For PBC,  uniform polarizations can be defined as a $\bq\to 0$ limit after taking  volume to infinity, as  shown later in Eqs.~(\ref{Pol0-PBC}). 

\section{Examples: Hamiltonians and polarizations}
The explicit forms of polarizations are shown for five generic models of many particle systems.
\begin{enumerate}
    \item  The Hamiltonian of $N_p$ interacting Schr\"odinger particles (bosons or fermions) in first quantization notation is,
\be 
H_0^{\rm particles}=\sum_{i=1}^{N_p} {\bp^2_i\over 2m}+ \sum_i V(\bx_i) + \sum_{i<j} U(\bx_i-\bx_j) \quad ,
\label{Hparticles}
\ee
where $[\bx^\alpha_i,\bp_j^\beta]=i\hbar \delta_{ij}\delta_{\alpha\beta}$.
The  polarizations are given by,
\bea
 P^\alpha_\bq &=&      -|e|  \sum_{i=1}^{N_p} \bx^\alpha_i e^{-i\bq\cdot \bx_i} \quad ,\nonumber\\
 Q^\alpha_\bq  &=&       \sum_{i=1}^{N_p} {1\over 2}\left\{ x^\alpha_i e^{-i\bq\cdot \bx_i},{\bp^2_i\over 2m}\right\} + \sum_i x^\alpha_i e^{-i\bq\cdot \bx_i} \left(V(\bx_i) + \sum_{i<j} U(\bx_i-\bx_j)\right) \quad .
\label{Pol-particles}
\eea

\item Particles on a lattice $L$ with electric charge $e^*$, local occupation $n_i$, and hermitian two-site interaction terms:
\bea
H_0^{\rm lattice} &=&  \sum_{ij\in L}~O_{ij} \quad ,\nonumber\\
P_\bq^\alpha &=& -|e^*|\sum_{i\in L} e^{-i\bq\cdot\bx_i} x_i^\alpha n_i \quad ,\nonumber\\
Q_\bq^\alpha &=&  \sum_{i\in L}  e^{-i\bq\cdot\bx_i} x_i^\alpha \left(\left(\sum_{j\in L} O_{ij}\right) - \mu n_i\right) \quad .
\label{Hlatt}
\eea

\item General non-interacting (NI) normal Hamiltonians in second quantized form as
\be
H^{\rm NI}=\sum_{l}  \epsilon_l    a^\dagger_l a^{}_{l} \quad ,\quad [a_l,a^\dagger_{l'}]_\pm=\delta_{ll'} \quad .
\label{Hsp}
\ee
where $ a^\dagger_{l}$ creates a particle of charge $e^*$ are in single-particle eigenstate   $|l\rangle$ with energy $\epsilon_l$. $[\bullet , \bullet]_{\pm}$ denotes an anticommutator (commutator) for fermions (bosons).

The  polarizations are given by the bilinear forms,
\bea
P_\bq^\alpha &=& i e^*  {\partial\over \partial q^\alpha}\sum_{ll'} \langle l|e^{-i\bq\cdot\bx} |l'\rangle  a^\dagger_l a^{}_{l'} \quad ,\nonumber\\
Q_\bq^\alpha &=& i{\partial\over \partial q^\alpha}  \sum_{ll'}   {\epsilon_l+\epsilon_{l'}-2\mu\over 2}
\langle l|e^{-i\bq\cdot\bx} |l'\rangle  a^\dagger_l a^{}_{l'} \quad .
\label{Pol-l}
\eea

\item Non-interacting  normal Hamiltonians of bosons or fermions with $M$ basis states $\{|\bR,i\rangle\}$ per unit cell at lattice site $\bR\in L$:
\be
\langle\bR,i|\bR',i'\rangle=\delta_{\bR,\bR'}\delta_{i,i'} \quad .
\ee
In a periodic crystal (PC), the Hamiltonian is given by
\bea
H^{\rm PC} &=& \sum_{\bk\in \BZ}\sum_{i j=1}^M\left(h_{ij}(\bk)-\mu\delta_{ij} \right) a^\dagger_{\bk,i} a^{}_{\bk, j} \quad .
\label{Hperiodic}
\eea
$a^\dagger_{\bk,i}|0\rangle=|\bk,i\rangle$ creates a   fermion (boson) state  of wavevector $\bk\in \BZ$ of lattice $L$, and basis state $i$.
After taking $\cV\to \infty$,  the polarization operators of Eq.~(\ref{Pol-l}) may be represented by continuous derivatives with respect to $\bq$,
\bea
 {P}^\alpha_\bq &=& e     \sum_{\bk ij} a^\dagger_{\bk i} ~(i\bnabla_{\bq}^\alpha  a^{}_{\bk+\bq j}) \quad ,\nonumber\\
 {Q}^\alpha_\bq   &= &   \sum_{\bk ij} a^\dagger_{\bk i}\left\{ \left({h_{ij}(\bk)+h_{ij}(\bk+\bq)\over 2}-\mu\delta_{ij} \right),  i\bnabla_{\bq}^\alpha  \right\}a^{}_{\bk+\bq j} ~ \quad .
\label{Pol-periodic}
\eea

\item 
Coupled Harmonic Oscillators (CHO) Eq.~can describe collective bosonic modes such as phonons~\cite{QZS} and magnons~\cite{Nagaosa-Lee} in an insulator.  A general CHO Hamiltonian which linearly couples to an external orbital magnetic field is,
\be
H^{\rm CHO} = {1\over 2} \sum_{i} { p_i^2 \over m_i} + {1\over 2} \sum_{i,j}u_i D_{ij} u_j +  \sum_{\alpha=x,y,z} B^\alpha  \cdot \sum_{i,j } p_i M^\alpha_{ij} u_j \quad ,
\label{CHO}
\ee
where $i \to \bx_i,s(i)$ denotes both site and polarization indices, where  and   $[u_i ,p_j]=i \delta_{ij} $ are  canonically conjugated coordinates. $m_i$ and $D_{ij}$ are  local mass and force constants. $H^{\rm CHO}$ can include lattice imperfections, impurities with different masses $m_i$, and boundaries. The magnetic field $\bB$ breaks time reversal symmetry by coupling between $u_i$ and $p_j$ as represented by a magnetization matrix $\bM$.
The thermal polarization is,
\be
Q_\bq^\alpha=\sum_{i=1}^N e^{i\bq\cdot\bx_i} x^\alpha_i\left(  { p_i^2 \over 2m_i} + {1\over 2} u_i \sum_{j}
D_{ij} u_j +  {1\over 2} p_i \sum_{j}\sum_{\alpha=x,y,z} B^\alpha  \cdot \sum_{j} M^\alpha_{i,j} u_j  \right) \quad .
\label{TP-CHO}
\ee
The thermal current is obtained by Eq.~(\ref{JP}).
Using second quantized operators,
\be
a_i = {1\over \sqrt{2}}\left(u_i + i p_i\right),\quad a^\dagger_i = {1\over \sqrt{2}}\left( u_i - i p_i\right),\quad [a_i,a^\dagger_j]=\delta_{ij} \quad ,
\label{ladder}
\ee
the Hamiltonian is written in a duplicated Bogoliubov form,
\be
H^{\rm CHO} =  {1\over 2}  \sum_{ij=1}^N (a^\dagger_i,a_i)  H_{ij} \left( \begin{array}{c} a_j \\ a^\dagger_j\end{array}\right)  \equiv {1\over 2}  \sum_{ij} (a^\dagger_i,a_i)  \left( \begin{array}{cc} H^N_{ij} &  H^A_{ij} \\ (H^A_{ij})^*  &(H^N_{ij} )^* \end{array}\right)  \left( \begin{array}{c} a_j \\ a^\dagger_j\end{array}\right)  + {\rm const} \quad ,
\label{H-dup}
\ee
where the constant comes from the ordering  the operators $a,a^\dagger$. The duplicated form allows us to choose $H^N$ to be hermitian, and $H^A$ to be a symmetric matrix.

$H^{\rm CHO}$ can be diagonalized by a Bogoliubov transformation~\footnote{We thank Dan Arovas for sharing with us his impeccable notes} defined by a symplectic matrix $S_{in}$ 
of size $2N\times 2N$,
\be
  \left( \begin{array}{c} a_i \\ a^\dagger_i \end{array}\right)  =  \sum_{n=1}^N S_{in}  \left( \begin{array}{c} b_n  \\ b^\dagger_n\end{array}\right)
 =  \sum_{n=1}^N  \left( \begin{array}{cc} U_{in} &  V_{in}^* \\ V_{in}  &U_{in}^*  \end{array}\right) 
 \left( \begin{array}{c} b_n  \\b^\dagger_n\end{array}\right)
\label{S-mat} \quad ,\ee
where $[b_n,b^\dagger_{n'}]=\delta_{nn'}$, since
\be
S^\dagger J S =  S J S^\dagger =J,\quad J\equiv  \left( \begin{array}{cc} \mathbb{I}_N  &0 \\ 0  &-\mathbb{I}_N  \end{array}\right) 
\label{CCR-S} 
\ee 
and  $\mathbb{I}_N$ is a unit matrix of size $N$.
Using $S$, 
\be
H^{\rm CHO} =  {1\over 2}  \sum_{ij} (b^\dagger_n,b_n)  \left( \begin{array}{cc} \ve_n& 0 \\0  &\ve_n  \end{array}\right) \left( \begin{array}{c} b_n  \\ b^\dagger_n \end{array}\right) \quad . 
\ee
Using the relation $S^{-1} =   J S^\dagger J$,
$S$ is determined from $H^{\rm CHO}$ by the equations,
\bea
S^\dagger H S &=& \left( \begin{array}{cc} \ve & 0 \\ 0  &\ve \end{array}\right) \nonumber\\
& \Rightarrow & H S = (S^\dagger)^{-1} \left( \begin{array}{cc} \ve & 0 \\ 0  &\ve \end{array}\right) =JSJ  \left( \begin{array}{cc} \ve & 0 \\ 0  &\ve \end{array}\right)  \nonumber\\
&&J  H S = S \left( \begin{array}{cc} \ve & 0 \\ 0  &-\ve \end{array}\right) \quad .
\eea
$U,V$ are readily obtained numerically by computing the  {\em  right eigenvectors}  of the (non hermitian) matrix $JH$, and retaining those which belong to the positive spectrum $\ve_n>0$. (Existence of complex eigenvalues is possible: they reflect an instability of the harmonic Hamiltonian). The upper left block yields the eigenvalue equation, for $n=1,\ldots N$:
\bea
 && \sum_{j=1}^N H^N_{ij} U_{jn} + H^A_{ij}V_{jn} = U_{in}  \ve_n\nonumber\\
&& \sum_{j=1}^N -(H^A_{ij} )^*V_{jn} - (H^N_{ij})^* U_{jn} =  V_{in}  \ve_n \quad ,
 \eea
 where the  eigenvectors $(U_{in},V_{in})$ must be  normalized as,
\be
\forall n\in[1,\ldots N] : \quad \sum_{i=1}^N\left( |U_{in}|^2-|V_{in}|^2\right) = 1 \quad .
\ee

Using Eqs.~(\ref{ladder}) and (\ref{S-mat}),  $(p_i,u_i)$
are transformed into the eigenmode representation $b^\dagger_n,b_n$,     the thermal polarization (\ref{TP-CHO}) is given by,
\be
\bQ_\bq  = {1\over 2}  \sum_{n,n'=1}^N (b^\dagger_n,b_n)  \left( \begin{array}{cc}  \bQ^N_{ \bq}&  \bQ^A_{\bq}  \\(\bQ^A_{\bq})^*  &(\bQ^N_{\bq})^* \end{array}\right) \left( \begin{array}{c} b_{n'}  \\ b^\dagger_{n'} \end{array}\right) \quad .
\label{Q-CHO}
\ee
 
\end{enumerate}

\section{Kubo formulas in Lehmann Representation}
Henceforth we use unified notations
\bea
(\bP,\bQ)&\to& (\bP_1,\bP_2)\nonumber\\
(\bj,\bj_Q)&\to& (\bJ_1,\bJ_2) \quad .
\eea
In the Lehmann (eigenstate) representation, the currents' response functions are given by the sums,
\bea
 L^{\alpha\beta}_{ij}(\bq,\omega+i\ve) &=&- {i\over \cV} \sum_{mn} {\rho_m-\rho_n \over E_n-E_m}  ~\left({\langle n|(J^\alpha_i)_{-\bq}|m\rangle 
\langle n|(J^\beta_j)_{\bq}|m\rangle\over E_n- E_m - \omega -i\ve}\right)\nonumber\\
&=& {1\over \cV} \sum_{mn} {\rho_m-\rho_n \over E_n-E_m}  ~\Im \left({\langle n|(J^\alpha_i)_{-\bq}|m\rangle 
\langle n|(J^\beta_j)_{\bq}|m\rangle\over E_n- E_m - \omega -i\ve}\right)\nonumber\\
&&~~~-  {i\over \cV} \sum_{mn} {\rho_m-\rho_n \over E_n-E_m}  ~\Re \left({\langle n|(J^\alpha_i)_{-\bq}|m\rangle 
\langle n|(J^\beta_j)_{\bq}|m\rangle\over E_n- E_m - \omega -i\ve}\right) \quad .
\label{Kubo-Lehmann}
\eea
where $E_m$ and $|m\rangle$ are the eigenenergies and eigenstates of
$H_0$, with  $\rho_m=e^{-\beta E_m}/Z$ as Boltzmann's weights. 
$\alpha,\beta\in (x,y,z)$ are the Cartesian components of the currents.

The complex uniform ($\bq=0$) dynamical conductivities which correspond to the transport equations  (\ref{Tr0}), are calculable by the Kubo formulas,
\bea
  &\mbox{\rm electric conductivity}  : &  \sigma_{\alpha\beta}(\omega)=  \lim_{\bq\to 0}\lim_{\cV\to \infty}  ~ \Re L^{\alpha\beta}_{11}(\bq,\omega+i\ve) \quad ,\nonumber\\
 &\mbox{\rm thermoelectric conductivities} : &   \alpha_{\alpha\beta}(\omega)={1\over T}\lim_{\bq\to 0}\lim_{\cV\to \infty}    \Re  L^{\alpha\beta}_{12}(\bq,\omega+i\ve) \quad ,\nonumber\\
 &&\bar{\alpha}_{\alpha\beta}(\omega)={1\over T} \lim_{\bq\to 0}\lim_{\cV\to \infty}  \Re  L^{\bq,\alpha\beta}_{21} (\omega+i\ve) \quad ,\nonumber\\
&  \mbox{\rm thermal conductivity}  : &   \kappa_{\alpha\beta}(\omega)={1\over T} \lim_{\bq\to 0}\lim_{\cV\to \infty}  \Re L^{\alpha\beta}_{22}(\bq,\omega+i\ve) \quad .
  \label{Lsigma}
 \eea

Straightforward computation of Eqs.~(\ref{Kubo-Lehmann}) for general many-body Hamiltonians is generally a daunting task. Exact diagonalizations (ED) of $H_0$  may increase exponentially with $\cV$ even for  a single eigenstate and eigenenergy. The difficulty is compounded by the apparent necessity to compute many current matrix elements. 

\subsection{Non-interacting conductivities}
ED is of course more manageable for non-interacting bosons and fermions. For the single particle Hamiltonian  (\ref{Hsp}) and polarizations  Eq.~(\ref{Pol-l}), using Eq.~(\ref{JP}), the current matrix elements are,
\bea
\langle l | j_\bq^\alpha |l'\rangle &=& ie (\epsilon_l-\epsilon_{l'})
\langle l | x^\alpha e^{i\bq\cdot \bx} |l'\rangle \quad ,\nonumber\\
\langle l | (j_Q)_\bq^\alpha |l'\rangle &=&  i (\epsilon_l-\epsilon_{l'}) {(\epsilon_l+\epsilon_{l'}-2\mu)\over 2}\langle l | x^\alpha e^{i\bq\cdot \bx} |l'\rangle \quad .
\label{Curr-l}
\eea
The Kubo formulas reduce to  sums over single particle eigenstates,
\be
L_{ij}^{\alpha\beta}  (\bq,\omega + i\ve)  =  -i {1    \over \cV}
 \sum_{ll'} {n(\epsilon_{l'} ) - n(\epsilon_{l})\over
 \epsilon_l- \epsilon_{l'} }
~\left({ \langle l|(J_i^\alpha)_{ -\bq} |l'\rangle\langle l'|(J_j^\beta)_{ \bq} |l\rangle \over \epsilon_l- \epsilon_{l'} -\omega-i\varepsilon} \right) \quad ,
\label{KuboAC-LehmannSP1}
\ee
where
\be
n(\epsilon)={1\over e^{\beta(\epsilon-\mu)}\pm 1 } \quad ,
\ee
for fermions (bosons) with a plus (minus) sign.

\section{The Tricky DC limit}
\label{sec:DC}
DC transport coefficients require taking the orders of limits carefully.  Since for any  time independent Hamiltonian on a finite volume ($\cV<\infty$) the density matrix is in equilibrium. By definition, it cannot support any dissipative (entropy generating) steady-state transport currents.  As argued by Luttinger~\cite{Luttinger}, the DC steady state (which he called ``rapid case'') can be achieved if the driving force satisfies $|\omega +i\ve| > |\bq|^2$, as both $\omega+i\ve ,|\bq|$ are taken to zero, after we have taken  $\cV\to \infty$ to eliminate finite size gaps in the continuous thermodynamic spectrum.

For thermal transport, the statistical field $-\bnabla T$ is replaced by  Luttinger's frequency dependent ``gravitational'' field $\bpsi( \omega)$~\cite{Luttinger}. The legality of this substitution has been widely accepted~\cite{Cooper,Shastry-SR}, although its rigorous conditions are still debated~\footnote{Challenges to the equality of the current response to ``mechanical forces'' and ``statistical forces'' have been made by e.g. Ref~\cite{Simon-arg}.  }.

In taking the DC limit, one must contend with the (superfluous) effects of the static gravitational field $\bpsi_{\omega=0}$. This force field may create  an equilibrium  circulating ``magnetization current'' in any finite system, which is not part of the transport current~\cite{Cooper}.  It contributes to  $L^{\alpha\beta}_{ij}(\omega+i\ve=0)$ part of the Kubo formula which must  be subtracted out {\em before} taking the DC limit.

In summary, the proper DC order of limits is given by
\be
(L^{\alpha\beta}_{ij})^{\rm dc}= \lim_{\omega+i\ve \to i0^+} \lim_{\bq\to 0} \lim_{\cV\to \infty} ~\Re\left( L_{ij}(\bq,\omega+i\ve,\cV)-L_{ij}(\bq,0,\cV)\right) \quad .
\label{DC}
\ee
The causal decay of the real-time response function ensures that $L_{ij}(z)$ is analytic in the upper half plane ($\Im(z)>0$). Therefore the limit
$\omega+i\ve\to 0$ can be taken by a-priori setting $\omega=0$, and sending $\ve\to 0^+$  {\em after} sending $\cV\to \infty$.

On finite volume, we can distinguish between OBC and PBC.
For  OBC, the uniform limit $\lim_{\bq\to 0} (\bP_i)_\bq$ can be taken continuously, since the driving field is not required to be periodic between the boundaries.
The DC limit is then 
simplified further to, 
\be
(L^{\alpha\beta}_{ij})^{\rm dc,OBC}= \lim_{\ve \to i0^+}  \lim_{\cV\to \infty} ~\Re\left( L_{ij}(0,i\ve,\cV)-L_{ij}(0,0,\cV)\right) \quad ,
\label{DC1}
\ee
where the uniform polarizations are given by 
Eqs.~(\ref{Pol-q}) by setting $\bq=0$ on any finite volume.
In practice, as demonstrated in   
in Part~\ref{sec:DPP}, Eq.~(\ref{DC1}) can be implemented by computing $L_{ij}(i\ve(\cV),\cV)$ on an increasing sequence of volumes $\{\cV_i\}$,  by choosing 
the  $\ve(\cV_i)$ to be larger than the finite-size 
eigenenergy gaps in the Kubo formula.

On finite PBC lattices, the force fields must be continuous, and therefore their wavevectors $\bq$ are discretized. Therefore, in contrast to OBC,  the order of limits on PBC must be taken by Eq.~(\ref{DC}). Furthermore, since the uniform polarization  cannot be taken continuously on $\cV<\infty$, the limit is taken after $\cV\to \infty$, i.e. 
\bea
\bP^\alpha_{\bq=0} &=&  i e \lim_{\bq\to 0}
\lim_{\cV\to \infty}   {1\over   q^\alpha} \int_\cV d^d x~ e^{-i\bq\cdot \bx}  n(\bx) \quad ,\nonumber\\
 \bQ^\alpha_{\bq=0} &=&  i  \lim_{\bq\to 0} 
\lim_{\cV\to \infty} {1\over   q^\alpha} \int_\cV d^d x~ e^{-i\bq\cdot \bx}  ~\Big(h(\bx)-\mu n(\bx)\Big) \quad .
\label{Pol0-PBC}
\eea
An alternative to using the uniform polarization,
the uniform {\em electric} current on a finite volume PBC can be defined as the derivative of the free energy with respect to an enclosed Aharonov-Bohm flux, see Eq.~(\ref{Der-AB}). No such definition exits for  the thermal current.

\section{Onsager relations}
\label{sec:Onsager}
In the absence of spontaneous time reversal (TR) symmetry breaking, 
TR transformation of  Kubo formulas for $L_{ij}$ involves reversal of the external magnetic field and transposition of  the two currents. The conductivities  satisfy  Onsager's reciprocal relations~\cite{Onsager1,Onsager2},
\bea
 L^{\alpha\beta}_{ij}(\bB)  =   L^{\beta\alpha}_{ji}(-\bB)  \quad .
\label{Onsager-Lij}
\eea
As a consequence, the longitudinal conductivities $L^{\alpha\alpha}_{ii}$ are   even functions of magnetic field $\bB$. For  $\bB\parallel \hat{\bz}$, the  transverse conductivities can be written as,
\be
L^{xy,\pm}={1\over 2} \left(L^{xy}\pm L^{yx}\right) \quad ,
\ee
$L^{xy,-}$ are the {\em Hall conductivities}  which,
by Eq.~(\ref{Onsager-Lij}),  are antisymmetric with respect to reversal of   magnetic field:
\bea
&&\sigma_{xy,-}(\bB)=-\sigma_{xy,-}(-\bB) \quad ,\nonumber\\
&&\kappa_{xy,-}(\bB)=-\kappa_{xy,-}(-\bB) \quad .
\label{HallCond}
\eea
Eqs.~(\ref{HallCond}) allow experimenters to measure the Hall conductivities
in a four-probe bar geometry, by reversing the applied magnetic field.

For general crystals, $\alpha_{xy,-}\ne \bar{\alpha}_{xy,-}$. 
However, if the Hamiltonian has C4 rotation symmetry about the $z$-axis, and C2 symmetry about the $x$-axis, 
by Eq.~(\ref{Onsager-Lij}) it is easy to verify that 
\be
\alpha_{xy}(\bB)\underbrace{=}_{\rm C_4}\bar{\alpha}_{xy}(\bB) \quad .
\label{OnsagerC_4}
\ee
and that all symmetric transverse conductivities vanish,
\be
\sigma_{xy,+} =\alpha_{xy,+}=\bar{\alpha}_{xy,+}=\kappa_{xy,+}\underbrace{=}_{\rm C_4} 0 \quad .
\ee

\section{Hall conductivity proxies}
\label{sec:Proxies}
 For general Hamiltonians,  the Kubo formula (\ref{Kubo-Lehmann}) for the DC Hall conductivity $\sigma_{xy}^{\rm dc}$  is computationally challenging. Hence, two simpler proxy formulas for $\sigma_{xy}^{\rm dc}$ have been very popular: (i) The Chern conductivity $\sigma_{xy}^{\rm Chern}$ and (ii)  the Streda conductivity $\sigma_{xy}^{\rm Streda}$. We emphasize that these proxies are only valid  under restricted conditions, since they reverse the DC order of limits prescribed by Eq.~(\ref{DC}).

\subsection{Chern number}
\begin{figure}[h]
\begin{center}
\includegraphics[width=6cm]{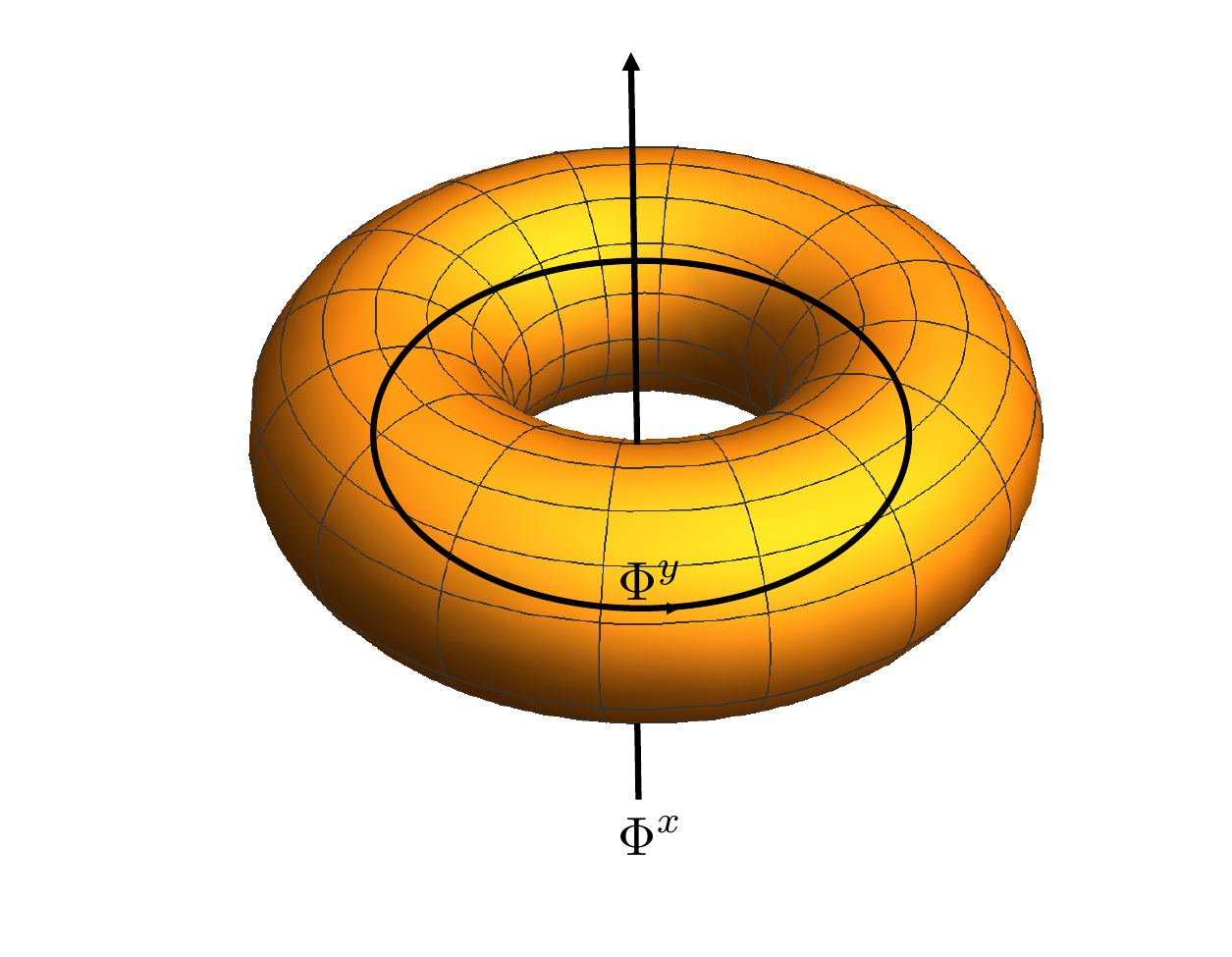}
\caption{The Gauged Torus setup for a Chern number calculation.  The charged system in placed on the surface of the torus, which is is penetrated by a uniform magnetic field.  Two Aharonov-Bohm fluxes $\Phi^x,\Phi^y$ which thread the holes of the torus serve as adiabatic parameters of the ground state wavefunction $\Psi_0$.}
\label{fig:torus}
\end{center}
\end{figure}
One considers a Hamiltonian 
\be
H =  H_0(\bB,\theta^x,\theta^y) \quad ,
\ee
which describes  particles with charge $e$ placed on the surface of a torus of area $\cV=L_x\times L_y$, with  PBC. A  uniform magnetic field $\bB$ penetrates the surface, and two Aharonov-Bohm fluxes $\Phi^\alpha,\alpha=x,y$ are threaded through the $x$ and $y$ holes of the torus, as depicted in Fig.~\ref{fig:torus}. 
The fluxes are parametrized by,
\be
\Phi^\alpha = {\Phi_0 \over 2\pi} \theta^\alpha \quad ,\quad \theta_\alpha\in [0,2\pi] \quad , 
\ee
where $\Phi_0={hc/ e}$ is the flux quantum.
 
The Chern number  of the ground state $|\Psi_0\rangle$ defined by an integral over the Berry curvature with respect to the angles $\theta_x, \theta_y$:
\bea
C_{xy}   &=&  {1\over \pi} \int \limits_{0}^{2\pi} \int \limits_{0}^{2\pi} \!  d\theta_x  d\theta_y ~ \Im  \langle { \partial\over \partial  \theta_x  }\Psi_0  | {\partial\over \partial \theta_y } \Psi_0 \rangle 
=  {1\over  2\pi } \oint \limits_{0}^{2\pi} \! d\btheta  \cdot  \Im  \langle  \Psi_0  |  \bnabla_\btheta \Psi_0 \rangle \nonumber\\
&=&     \mbox{Chern~number~(integer)} \quad .
\label{Chern}
\eea
The Chern number is a topological integer which characterizes a gapped Quantum Hall ground state $|\Psi_0\rangle$.

First order perturbation of the ground state w.r.t. the interaction
 \be
 H'= -{\hbar \over e}\sum_\alpha j^\alpha_0 {\theta^\alpha \over L_\alpha}  
 \label{Der-AB}
 \ee
yields,
 \be
{\partial\over \partial  \theta_\alpha }|\Psi_0\rangle = {\hbar \over e L_\alpha} \sum_{m\ne 0} |\Psi_m\rangle {\langle \Psi_m|j_0^\alpha|\Psi_0\rangle \over E_m-E_0 } \quad .
\ee
Thus the Chern number can be related   to the finite volume, zero frequency $\bq=0$ Kubo formula for $\sigma_{xy}$ as given by the real part of Eq.~(\ref{Kubo-Lehmann}):
\bea
\sigma^{\rm Kubo}_{xy} \Big(\bq=0,i\ve=0,\cV \Big)_{ \theta_x,\theta_y} &=&{\hbar\over \cV} 
\sum_{0m}   {   \langle \Psi_0| j_0^x |\Psi_m\rangle \langle \Psi_m|j_0^y |\Psi_0\rangle - \langle \Psi_0| j_0^y |\Psi_m\rangle \langle \Psi_m|j_0^x|\Psi_0\rangle
 \over (E_{m}-E_0)^2 } \nonumber\\
 &=& { 2 e^2 \over \hbar} \Im  \langle { \partial\over \partial  \theta_x  }\Psi_0  | {\partial\over \partial \theta_y } \Psi_0 \rangle \quad .
 \eea
 Thus, 
\be 
\sigma^{\rm Chern}_{xy} \equiv {1\over (2\pi)^2}
\int \limits_{0}^{2\pi}\! d\theta_x\!\int \limits_{0}^{2\pi} \!  d\theta_y  ~~ \sigma^{\rm Kubo}_{xy} \Big(\bq=0,i\ve=0,\cV \Big)_{ \theta_x,\theta_y}=  {e^2\over h} C_{xy} \quad .
\label{ChernMB}
\ee 
In other words, the flux-averaged Kubo formula yields an integer times $e^2/h$.
The variation of the integrand $\sigma^{\rm Kubo}_{xy}$, with flux parameters $ \theta_x,\theta_y\in[0,2\pi)$ is expected to vanish in the large volume limit. The double integral
$\int \int {d\theta_x d\theta_y\over (2\pi)^2} $  can be replaced by  $\sigma_{xy}(0,\cV)$ at $\theta_x=\theta_y=0$. 
 
Historically, the topological relation between $\sigma^{\rm Chern}$ and the Chern number was initially discovered  by Thouless, Kohmoto, Nightingale, and den Nijs, (TKNN)~\cite{TKNN},  for  filled  bands of non interacting, disorder-free electrons on a torus, penetrated by a unit-cell-commensurate magnetic field. 

The Chern proxy formula (\ref{Chern}) for interacting Hamiltonians was  independently derived by adiabatic transport theory by Niu, Thouless and Wu~\cite{NiuThoulessWu} and by Avron and Seiler~\cite{Avron}. The one plaquette  Chern number formulation by Kudo~\etal ~\cite{kudo} provides a useful simplification of  the computation of $\sigma^{\rm Chern}_{xy}$ in the large volume limit.

However, we must caution  that $\sigma^{\rm Chern}_{xy}$ {\em reverses} the proper DC order of limits prescribed by Eq.~(\ref{DC}).
Therefore the validity of this proxy is limited to the following conditions:
\begin{enumerate}
    \item For the adiabatic theorem to apply~\cite{Avron},  a finite gap $\Delta>0$ between the ground state and all excitations must exist in  the infinite volume limit.
    \item The  longitudinal conductivity must vanish $\sigma_{xx}=0$. This can be ensured only at zero temperature if there are no gapless current carrying excitations and no inelastic scattering processes.
\item If disorder is present it should be weak enough to prevent gapless current-carrying states from percolating through the bulk of the sample.   
\end{enumerate}
That said, we emphasize that  $\sigma^{\rm Chern}_{xy}$  has been instrumental in  mathematical  and experimental  characterization of Quantum Hall and topological insulator phases~\cite{Bernevig,Haldane}.

\subsection{Streda formula}

The Streda formula proxy for $\sigma_{xy}$  is obtained by reversing the DC order of limits of Eq.~\ref{DC}. In Ref.~\cite{EMT}
it is shown that,
\bea
\sigma_{xy}^{\rm Streda}&=&\lim_{\bq \to 0}\lim_{\cV\to \infty}\lim_{i\ve\to 0^+} \sigma_{xy} (\bq, i\ve) \nonumber\\
&=& c \lim_{\bq \to 0}  \lim_{\cV\to \infty}\left(\rho_{\bq} \big| M_{\bq}^z\right) =  c {\partial \rho \over \partial B}\Bigg|_{\mu,T} = c {\partial m^z \over \partial \mu}\Bigg|_{B,T} \quad ,
\label{Streda}
\eea
where $m^z$ is the $z$-magnetization and $\rho$ is the charge density.
The last equation follows from a Maxwell relation. The Streda formula is often used to characterize quantum Hall phases and topological insulators by  scanning compressibility measurements~\cite{Amir}.

In general the Streda order of limits   does not commute with the DC order of limits (\ref{DC}), and therefore  
\be
\sigma_{xy}^{\rm Streda}=\lim_{\bq\to 0} \lim_{\cV\to \infty}\lim_{\omega\to  0} \sigma_{xy} (\bq,\omega)\ne   \lim_{\omega\to  0}\lim_{\bq\to 0} \lim_{\cV\to \infty}\sigma_{xy} (\bq,\omega)=\sigma^{\rm dc}_{xy} \quad .
\label{ool}
\ee

The conditions which may permit  the
use Streda's proxy is that the ground state is bulk-incompressible, and that the Hall angle is large, i.e. $\sigma_{xy}/\sigma_{xx}\gg 1$.  Such conditions occur when there is a gap  $\Delta >0$ for excitations above the ground state, and the temperature is lower than this gap.   
A weaker condition is a pseudogap, where
the density of charge excitations
vanishes rapidly enough at low energies to allow reversal of the $\omega,\bq\to 0$ order of limits in the Kubo formula.

\section{Kubo formulas in operator Hilbert space}
\label{sec:OHS}

The Kubo formulas of Eq.~(\ref{Kubo-Lehmann}) may be formulated as matrix elements in OHS. This formalism avoids the Lehmann representation, and  proves to be  convenient for further mathematical manipulation.  Below, we shall use the OHS formalism to  derive the DPP  formulas in Part \ref{sec:DPP},  the continued fractions in Section \ref{sec:moments}, and the  Hall coefficient summation formulas in Part \ref{Part:Thermo II}.
We start with the formulation of susceptibilities as inner product in OHS.
We then proceed to formulate dynamical response functions in OHS.

\subsection{Equilibrium Susceptibilities}
Given a static Hamiltonian  and free energy,
\be
H=H_0-h_A A - h_B B,\quad F= -T \log \Tr \ e^{-\beta H} \quad .
\ee
A thermodynamic expectation value is given by,
\bea
\langle A\rangle &\equiv &  -\left. {\partial F\over \partial h_A } \right|_{h_a=h_B=0}\nonumber\\
&=& \Tr \left(\rho_0 A\right) \quad ,  
\eea
where Boltzmann weights are $\rho = {e^{-\beta H_0} / \Tr e^{-\beta H_0}}$. 
Static susceptibilities with respect to $A$ and $B$ are,
\bea
\chi_{AB} &=&  -{\partial^2 F[h_A,h_B] \over \partial h_A\partial h_B }\big|_{h_a=h_B=0}\nonumber\\
 &=& \int_0^\beta  d\tau ~ \langle A(\tau) B\rangle   ~+\beta   \langle A^{\rm diag} B^{\rm diag}\rangle  - \beta \langle A\rangle \langle B\rangle  \nonumber\\
&=& \sum_{m,n} { \rho_m  - \rho_n \over E_n- E_m}  \langle n|A|m\rangle \langle m|B|n\rangle ~+\beta   \langle A^{\rm diag} B^{\rm diag}\rangle  - \beta\langle A\rangle \langle B\rangle \quad .
\label{chiAB}
 \eea
 where $O^{\rm diag}=\sum_n |n\rangle\langle n|O|n\rangle \langle n| $.
In the second line $B(\tau)\equiv e^{\tau H_0} B e^{-\beta H_0}$.
The last line of (\ref{chiAB}) is  written in the  Lehmann representation of $H_0$. 
The vector space of linear operators $A\to |A)$ 
which act within the Schroedinger Hilbert space of the Hamiltonian $H_0$, is an OHS containing hyperstates $|A)$, and closed under linear superpositions.
Henceforth we consider general operators $A^{\rm diag}=B^{\rm diag}=0$, whose susceptibility is 
\bea
 \chi_{\rm AB} &=&  \sum_{nm} W_{nm} A^*_{nm} B_{nm} \equiv(A|B) \quad ,\nonumber\\
W_{nm} &\equiv & { \rho_m- \rho_n\over E_n- E_m}  \ge 0 \quad .
\label{BM}
\eea
$(A|B)$ defines the Bogoliubov-Mori (BM)  inner product, \cite{Mori} which depends on $H_0$ (with its boundary conditions) and 
inverse temperature $\beta$.
It obeys the Hilbert space conditions of (i) a positive norm, (ii) hermiticity and (iii) linearity,
\bea
&&({\rm i})~~(A|A)\ge 0 \quad ,\nonumber\\ 
&&({\rm ii})~~(A|B)=(B|A)^* \quad ,\nonumber\\
&& ({\rm iii})~~
(A|aB+bC)=a(A| B)+b(A|C),~\forall a,b\in \mathbb{C} \quad .
\eea
All operators which commute with $H_0$ are identified with the null hyperstate of zero norm.

\subsection{The Liouvillian and its inverse}
In OHS,  the Liouvillian hyperoperator $\cL$,
which generates the time evolution of hyperstates, is defined by,
\be
\cL | A) =  i|[H_0,A]) \quad ,
\ee
where $A$ is any operator in the  Hilbert space of $H_0$.
\begin{lemma} The Liouvillian is a hermitian hyperoperator in OHS,
\end{lemma}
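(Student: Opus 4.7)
My plan is to verify Hermiticity $(A|\cL B) = (\cL A|B)$ directly in the Lehmann basis provided by the eigenstates of $H_0$. The first ingredient is the elementary observation that $\cL$ acts multiplicatively in this basis: since $H_0|n\rangle = E_n|n\rangle$ with real $E_n$, one has
\[
(\cL X)_{nm} = \langle n | i[H_0, X] | m\rangle = i\,\omega_{nm}\, X_{nm},\qquad \omega_{nm} \equiv E_n - E_m,
\]
for every operator $X$. Thus $\cL$ is diagonal in the transition basis $\{|n\rangle\langle m|\}$ of OHS, with ``eigenvalues'' $i\omega_{nm}$.

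The second step is to substitute this action into the BM inner product of Eq.~(\ref{BM}) on either side of the desired identity. Acting on the right slot gives $(A|\cL B) = \sum_{nm} W_{nm}\, A^*_{nm}\,(i\omega_{nm})\,B_{nm}$, while acting on the left slot gives $(\cL A|B) = \sum_{nm} W_{nm}\, (\cL A)^*_{nm}\, B_{nm}$. Equality then follows from three facts already in hand: (i) the symmetry $W_{nm} = W_{mn}$, manifest from Eq.~(\ref{BM}) since both $\rho_m - \rho_n$ and $E_n - E_m$ flip sign under $n\leftrightarrow m$; (ii) the reality of $\omega_{nm}$, which allows this $c$-number to pass from one slot to the other; and (iii) $\cL$ commutes with Hermitian conjugation, $(\cL X)^\dagger = \cL(X^\dagger)$, which is immediate from $H_0^\dagger = H_0$. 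Property (iii) guarantees that complex conjugation in $(\cL A)^*_{nm}$ merely transfers the factor $i\omega_{nm}$ onto the $A$ side of the sum, matching $(A|\cL B)$.

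The one delicate point is the null subspace of $\cL$, consisting of operators that commute with $H_0$, whose non-vanishing matrix elements are supported on degenerate pairs $E_n = E_m$. These are precisely the operators identified with the zero hyperstate in the paragraph preceding the lemma, so they play no role in the identity; equivalently, $W_{nm}\,\omega_{nm} = \rho_m - \rho_n$ vanishes on degenerate pairs, so the integrand of both sides is defined continuously. The only potential obstacle I anticipate is notational --- whether $A^*_{nm}$ denotes the complex conjugate of the matrix entry or the $(n,m)$ entry of the Hermitian conjugate $A^\dagger$, since these differ for non-Hermitian $A$. A cleaner basis-free derivation uses the integral representation $(A|B) = \int_0^\beta d\tau\, \langle A^\dagger(-i\tau) B\rangle$ implicit in Eq.~(\ref{chiAB}), moving $[H_0,\cdot]$ between operands inside the trace with the help of $[H_0,\rho]=0$; this produces the Hermiticity identity in one line without ever decomposing into energy eigenstates.
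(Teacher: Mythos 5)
Your overall strategy --- work in the Lehmann basis, use that $\cL$ acts multiplicatively on the transition operators $|n\rangle\langle m|$, and move the resulting $c$-number between the two slots of the Bogoliubov--Mori product using the symmetry of $W_{nm}$ --- is exactly the route the paper takes in Eq.~(\ref{hermiticity}). However, there is a genuine error where you combine your facts (ii) and (iii). You correctly record the eigenvalue as $i\omega_{nm}$ with $\omega_{nm}=E_n-E_m$, i.e.\ a \emph{purely imaginary} number, and the ``delicate point'' you flag at the end is not merely notational: positivity $(A|A)\ge 0$ forces $A^*_{nm}$ in Eq.~(\ref{BM}) to mean the entrywise conjugate $\overline{A_{nm}}$ (with the transposed-index reading, $|1\rangle\langle 2|$ would be a null hyperstate, which is false). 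With that reading, $(\cL A)^*_{nm}=\overline{i\,\omega_{nm}A_{nm}}=-i\,\omega_{nm}A^*_{nm}$, so your two displayed sums differ by an overall sign and the computation as written proves $(\cL A|B)=-(A|\cL B)$, i.e.\ \emph{anti}-hermiticity. Fact (iii), $(\cL X)^\dagger=\cL(X^\dagger)$, does hold with the factor of $i$, but it is a statement about $\overline{X_{mn}}$ rather than the $\overline{X_{nm}}$ that appears in the inner product, so it cannot rescue the step.

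The resolution is that the factor of $i$ in the displayed definition $\cL|A)=i|[H_0,A])$ is inconsistent with the rest of the section and must be dropped: the assertion that the eigenvalues $E_n-E_m$ are real, the relation $|A(t))=e^{i\cL t}|A)$ for $A(t)=e^{iH_0t}Ae^{-iH_0t}$, the resolvent $(\cL-\omega-i\ve)^{-1}$, and the paper's own proof (which multiplies by $E_m-E_n$ with no $i$) all presuppose $\cL=[H_0,\,\cdot\,]$. For that operator the eigenvalue really is the real number $\omega_{nm}$, your argument closes, and it coincides with the paper's relabel-and-conjugate computation. A careful proof should either make this correction explicit or note the inconsistency. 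Your sketched basis-free alternative --- moving $[H_0,\cdot]$ across the thermal trace using $[\rho,H_0]=0$ in the imaginary-time representation --- is sound and is essentially the content of Eq.~(\ref{NoLehmann}), but as written it is only a sketch; your treatment of the degenerate pairs and the null subspace is fine.
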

\begin{proof}
\bea
\Big(A\big|\cL \big| B\Big) &=& \sum_{nm} {\rho_m-\rho_n \over E_n - E_m } A^*_{mn}(E_m-E_n) B_{mn}=
- \sum_{nm} (\rho_m-\rho_n )A^*_{nm} B_{mn}\nonumber\\
&=&  \left(\sum_{nm} (\rho_n-\rho_m ){E_m-E_n\over E_m - E_n }  B^*_{mn} A_{mn} \right)^*= \Big(B\big|\cL\big| A\Big)^*\nonumber\\
&&\Rightarrow ~~\cL = \cL^\dagger ~~~\qed
\label{hermiticity}
\eea
where the summation indices are relabelled $mn\to nm$ in the last equation.
\end{proof}
$\cL$ is therefore diagonalizable. Its  eigenoperators are $|n\rangle\langle m|$
with real eigenvalues $E_n-E_m$.
It is the generator of time evolution in the OHS, since
\be
\big|A(t)\Big) = \big|e^{i H_0 t} A e^{-iH_0 t} \Big)= e^{i\cL t} \big|A\Big) \quad .
\ee

The inverse Liouvillian $\left({1\over \cL}\right)_\ve'$  does not exist if the kernel of $\cL$ is non zero. Hence 
we define it with an imaginary infinitesimal $\ve$ prescription,
\be
 \left({1\over \cL-i\ve}\right)_\ve  \equiv 
 \left({1\over \cL}\right)'_\ve + i\left({1\over \cL}\right)''_\ve \quad ,
 \ee
 where the real part of the inverse is the hyperoperator  
 \be
 \left({1\over \cL}\right)'_\ve  \equiv 
 {\cL\over \cL^2 + \ve^2 } \quad ,
 \label{InverseLp}
 \ee
 and the imaginary part is
 \be
  \left({1\over \cL}\right)''_\ve  \equiv 
 {\ve \over \cL^2 + \ve^2 } = \pi \delta_\ve( \cL) \quad .
 \label{InverseLpp}
 \ee

 The weight of the inner product, Eq.~(\ref{BM}) can be expressed as
 \be
W_{nm}=\lim_{\ve\to 0} \sum_{nm} \Re\left( {\rho_m-\rho_n\over E_n-E_m+i\ve}\right) \quad .
\ee
Using Eq.~(\ref{InverseLp}) we can write BM inner product
in a representation independent form,
\bea
\Big(A\big|B\Big) &=& -\lim_{\ve\to 0} \Tr   \left[ \rho,\left({1\over \cL}\right)_\ve A^\dagger\right]  B  \nonumber\\
&=&  -\lim_{\ve\to 0} \Tr ~   \rho \left[ \left({1\over \cL}\right)_\ve A^\dagger, B\right] \quad .
\label{NoLehmann}
\eea
The advantage of Eq.~(\ref{NoLehmann}) over Eq.~(\ref{BM}) will be made clear during the mathematical manipulations of the Kubo formulas performed in the following Sections.

\subsection{Dynamical linear response functions}
Dynamical response functions are obtained in linear response by adding to the Hamiltonian $H_0$
a weak time-dependent field $h_B(t)$ coupled to an operator $B$. The field is turned at $t\ge 0$,
\be
H(t) = H_0 - \Theta(t)~h_B(t) ~ B \quad .
\ee
At $t>0$ the expectation value of an observable $A^\dagger$ is
\be
\langle A^\dagger \rangle(t)   =   \Tr \rho(\beta) U^\dagger (t) A^\dagger U([h_B],t] = \int \limits_{0}^{t} dt'  R_{AB}(t,t') h_B(t') + \cO(h_B^2) \quad ,
\label{A-av}
\ee
where the density matrix $\rho={1\over Z} e^{-\beta H_0}$ and the evolution operator $U$
satisfies $i \dot{U} =  H(t)U$. 
Expanding $U$ to linear order in $h_B$ yields, 
\be
R_{AB}(t-t') =   {-i}  \Tr \rho  \left[ A^\dagger(t-t'), B  \right]\Theta(t-t') \quad ,
\label{RABt}
\ee
where $A(t)=e^{{i}  H_0 t} A e^{-{i} H_0 t}$, and using $[\rho,H_0]=0$ to obtain  $R_{AB}(t,t') =  R_{AB}(t-t')$. 
The transform of (\ref{RABt}) into the upper complex plane defines the complex response function,
\bea
R_{AB}(z) &=&  \int \limits_{0}^{\infty} e^{ -izt} R_{AB}(t) =   \sum_{m n} { \rho_m - \rho_n \over E_n-E_m-z} 
\langle n|A^\dagger|m\rangle \langle m|B|n\rangle\nonumber\\
&=&   \left( A \left| {\cL\over \cL-z}\right| B\right) \quad ,\quad \Im(z)>0 \quad .
\label{RAB}
\eea

\subsection{Electric and Thermal conductivities}
The Kubo formulas of Eq.~(\ref{Kubo-Lehmann}) 
are matrix elements in OHS,
\be
L^{\alpha\beta}_{ij}(\bq,\omega+i\ve) = {1\over \cV} R_{J^\alpha_i,P^\beta_j }(\omega+i\ve)=
-{i\over \cV}\left( (J^\alpha_i)_{\bq} \left| {1\over \cL-\omega-i\ve}\right| (J^\beta_j)_{\bq}\right) \quad .
\label{Kubo-OHS}
\ee
The values of $\omega,i\ve, \bq, \cV$
are all finite. The DC limit must to be taken carefully as required by Eq.~(\ref{DC}).

\newpage

\part{DPP Hall conductivities}
\label{sec:DPP}

\section{Derivation of DPP formulas}
For simplicity the magnetic field is chosen along the $z$ axis and  $C_4$ symmetry is assumed in the $XY$ plane. According to Eq.~(\ref{DC}), the DC Hall-type conductivities are given by,
\be
\Re~(L^{xy}_{ij})^{\rm dc} =  \lim_{\varepsilon\to 0} \lim_{\bq\to 0}\lim_{\cV\to \infty} \left( \Re~L_{ij}^{xy}(\ve,\cV)- \Re~L_{ij}^{xy}(0,\cV)\right) \quad .
\label{Hall-cond}
\ee
As mentioned before,  the second term cancels the spurious magnetization currents which are created by the static component of the ``gravitational'' field $\bpsi(\omega=0)$. 
In  OHS notation,
\be   
\Re~L_{ij}^{xy}(\ve,\cV)  \equiv  {1\over \cV} \Im \left( J_i^x \left| \left({ 1\over \cL } \right)'_\ve \right| J_j^y\right)+{1\over \cV} \cancelto{0~(C_4)}{\Re \left( J_i^x \left| \left({ 1\over \cL } \right)''_\ve \right| J_j^y\right)} \quad .
\ee
where $\bq$-dependence of the currents is implicit.
Under $\pi/2$ rotation in the plane $\Re~L_{ij}^{xy}(\ve,\cV)=
-\Re~L_{ij}^{yx}(\ve,\cV)$, and since the second term is symmetric under $x\leftrightarrow y$, it vanishes.
In the Lehmann representation,
\be
\Re L_{ij}^{xy}(\ve,\cV)={1\over \cV}   \sum_{mn} 
{\rho_m-\rho_n  \over E_n-E_m }  \Im \left(  \langle n|J_i^x|m\rangle \langle m|J_j^y |n\rangle \right) ~  \left({  E_m- E_n\over (E_m- E_n)^2 +\ve^2 }  \right) \quad .
\label{Sxy-Lehmann}
\ee
At first glance, 
one might be tempted to discard $\cO(\ve^2)$ contributions in the denominator of Eq.~(\ref{Sxy-Lehmann}). On OBC, this would {\em be a gross error}! As shown below, the DC limit is dominated by eigenstates with $|E_n-E_m|\le \ve$.    

Using the identity (\ref{NoLehmann}),
\be
\Re L_{ij}^{xy}(\ve,\cV)=- {1\over \cV} \Im \Tr \rho \left[ \left({1\over \cL}\right)_{\ve'}' J_i^x, \left({1\over \cL}\right)_\ve' J_j^y \right] \quad .
\label{Sxy-eps}
\ee 
To remain consistent with $C_4$ symmetry, on finite volumes  we  identify $\ve'=\ve$, while keeping the DC order of limits in Eq.~(\ref{Hall-cond}). By Eq.~(\ref{JP}),
\be 
\left({1\over \cL}\right)_\varepsilon' J_j^\alpha = i \left( 1 - \Theta_\ve \right) P_j^\alpha  \equiv i \left( P_j^\alpha - (\tilde{P}_j^\alpha)_\ve \right) \quad ,\quad~\alpha=x,y \quad .
\label{Tve}
\ee
$\Theta_\ve$ is a Lorentzian  degeneracy projector,
\be
\Theta_\ve= {\ve^2\over \cL^2 + \ve^2} 
\label{HP}
\ee
 and  $(\tilde{P}_j^\alpha)_\ve\equiv  \Theta_\ve P_j^\alpha$ are
DPP's ({\em degeneracy projected polarizations}),  
whose matrix elements in the Lehmann representation are restricted (at small $\ve$) to connect quasi-degenerate eigenstates,
\be
\langle n| (\tilde{P}_i^\alpha)_\ve |m\rangle =\langle n|  P_i^\alpha  |m\rangle ~~{\ve^2\over |E_n-E_m|^2 + \ve^2} \quad .
\ee

Eq.~(\ref{Sxy-eps}) can be written as 
\be
\Re L_{ij}^{xy} (\ve,V) = {1\over \cV} \Im ~\Tr ~\rho ~\  \left[P_i^x-(\tilde{P}_i^x)_{\ve},  P_j^y-(\tilde{P}_j^y)_\ve \right] \quad .
\label{DPP-full1}
\ee
Expansion of the terms in (\ref{DPP-full1}) we obtain a sum of four terms,
\be
\Re L_{ij}^{xy} (\ve,V) =  \Re L_{ij}^{\rm xy}(\varepsilon=0,\cV) - L_b( \ve)-   L_c( \ve)+L_d( \ve) \quad .
\label{DPP-full}
\ee

The  first term, which is independent of $\ve$ and $\ve'$ is called a magnetization term,
\be
\Re L_{ij}^{\rm xy}(\varepsilon=0,\cV)\equiv {1\over \cV}~ \Im ~\Tr ~\rho   \left[P_i^x ,  P_j^y \right] \quad .
\label{Sa}
\ee
The physical content of  magnetization terms is discussed in Section \ref{sec:MagCT}. The magnetization term precisely cancels the second term  in Eq.~(\ref{Hall-cond}), and therefore does not need to be calculated for the Hall-type conductivity. 

The remaining three terms are related to each other,
\bea
L_b(\ve) &=&  {1\over \cV} \Im ~\Tr ~\rho  ~\left[ P_i^x, (\tilde{P}_j^y)_\ve\right] = {1\over \cV} \Im \sum_{n} \rho_n \sum_m \langle n| P_i^x|m\rangle  \langle m|  P_j^y|n\rangle  \left({\ve^2\over (E_n-E_m)^2 + \ve^2}\right) \quad ,\nonumber\\
L_c(\ve ) &=&  {1\over \cV} \Im ~\Tr ~\rho \ ~\left[ (\tilde{P}_i^x)_{\ve },  P_j^y \right] =  {1\over \cV} \Im \sum_{n} \rho_n \sum_m \langle n| P_i^x|m\rangle  \langle m|  P_j^y|n\rangle  \left({(\ve )^2\over (E_n-E_m)^2 + (\ve')^2}\right) \quad ,\nonumber\\
L_d(\ve) &=&  {1\over \cV} \Im ~\Tr ~ \rho ~\left[ (\tilde{P}_i^x)_{\ve'}, (\tilde{P}_j^y)_\ve\right] ={1\over \cV} \Im \sum_{n} \rho_n \sum_m \langle n| P_i^x|m\rangle  \langle| m P_j^y|n\rangle  \left({\ve^2\over (E_n-E_m)^2 + \ve^2}\right)^2 \quad .
\label{Sve}
\eea
$L_b$ and $L_c$ contain one DPP, and therefore  a Lorentzian factor, $\ve^2 /((E_n-E_m)^2 + \ve^2)$.
$L_d$ contains a Lorentzian square $(\ve^2 /((E_n-E_m)^2 + \ve^2))^2$.
The Lorentzian factor and its square can be effectively replaced for small $\ve$ by box projectors $\Theta(\pi \ve/2-|E_n-E_m|)$ and $\Theta(\pi \ve/4-|E_n-E_m|)$, respectively.
Since we are taking $\ve\to 0$, (after $\cV\to \infty$), 
the three terms have the same DC limit,
\be
\lim_{\ve\to 0} L_b(\ve)=\lim_{\ve\to 0} L_c(\ve)=\lim_{\ve\to 0} L_d(\ve) \quad .
\ee
 Thus, summing up Eq.~(\ref{DPP-full}) the DC Hall conductivities are given by the DPP  formulas,
\bea
\sigma_{xy}^{\rm dc} & =& - \lim_{\varepsilon\to 0}   \lim_{\bq\to 0}\lim_{\cV\to \infty} {1\over \hbar \cV}~\Im~
\Tr \rho \left[ (\tilde{P^x})_\ve, (\tilde{P^y})_\ve\right]  \quad ,\nonumber\\
\alpha_{xy}^{\rm dc} &= &- \lim_{\varepsilon\to 0}  \lim_{\bq\to 0}\lim_{\cV\to \infty}  { 1\over \hbar \cV k_B T}~ \Im
~\Tr \rho \left[ (\tilde{P^x})_\ve,  (\tilde{Q^y})_\ve\right]  \quad ,\nonumber\\
\kappa_{xy}^{\rm dc} &=& - \lim_{\varepsilon\to 0}    \lim_{\bq\to 0}\lim_{\cV\to \infty} {1\over\hbar \cV k_B T } ~\Im
~\Tr \rho \left[ (\tilde{Q^x})_\ve,  (\tilde{Q^y})_\ve\right] \quad .
\label{Hall-DPP}
\eea
Here we have restored  the dimensionful dependence on $\hbar$ and $k_B$ for practical applications.

\subsection{The Magnetization terms}
\label{sec:MagCT}
The  terms  $L_{ij}^{\rm xy}(\varepsilon=0,\cV)$ in Eq.~(\ref{Sa}) cancel against their counterterms in Eq.~(\ref{Hall-cond}). For  electric Hall conductivity $\sigma_{xy}$, the magnetization term vanishes since the two polarizations commute: $[P^x,P^y]=0$.
The  thermoelectric and thermal Hall magnetization terms do not vanish. 
For Schr\"odinger particles governed by $H^{\rm particles}$, the polarization commutators can be readily calculated to yield,
\bea
\lim_{\bq\to 0}\lim_{\cV\to \infty} L_{12}^{xy}&=&\lim_{\bq\to 0}\lim_{\cV\to \infty} 
{1\over \cV}\Im \langle {[P^x,Q^y]+  [Q^x,P^y] \over 2}\rangle  = - \lim_{\cV\to \infty}{1\over \cV} \langle\sum_{i=1}^{N_p}  \bx_i \times  \bj(\bx_i)  \cdot \hat{\bz}\rangle =  - c  \langle m^z\rangle \quad , \nonumber\\
\lim_{\bq\to 0}\lim_{\cV\to \infty}L_{22} &=& \lim_{\bq\to 0}\lim_{\cV\to \infty}{1\over \cV}\Im \langle [Q^x,Q^y] \rangle  =  -\lim_{\cV\to \infty}{1\over \cV} \langle\sum_{i=1}^{N_p}  \br_i \times  \bj_Q (\bx_i)\cdot \hat{\bz}\rangle= -2  \langle m_Q^z \rangle \quad ,
\eea   
which agrees with Ref.~\cite{Cooper,QNS}.
$m^z$ and $m_Q^z$ are the $z$-direction  electric and  thermal magnetization densities respectively.

The magnetization subtractions create  considerable headache when computing the Kubo formulas from Eq.~(\ref{Kubo-Lehmann}). Since $L_{ij}$ are weighted by ${1\over T}$ in Eqs.~(\ref{Lsigma}), any separate  approximations of the two terms in Eq.~(\ref{Hall-cond})  could result in an error which embarassingly diverges at low temperature. (Heat conductivities should actually vanish at low temperatures by the third law of thermodynamics).  Elimination of these  subtractions from the DPP formulas (\ref{Hall-DPP}) provides an essential simplification.

\section{DPP formulas for non-interacting Hamiltonians}
For a normal Hamiltonian of non-interacting fermions or bosons, Eq.~(\ref{Hsp}), the DPP formula is
\bea
L_{ij}^{xy} &=& - \lim_{\ve\to 0} \lim_{\bq\to 0}\lim_{\cV\to \infty} {1\over \cV} \Im \sum_{l} n(\epsilon_l) \langle l| \left[\tilde{P}_i^x,\tilde{P}_j^y \right]|l\rangle\nonumber\\
&=& - \lim_{\ve\to 0} \lim_{\bq\to 0}\lim_{\cV\to \infty} \sum_{ll'} \Big(n(\epsilon_l)-n(\epsilon_{l'})\Big)
~ \Im  \langle l| \tilde{P}_i^x |l' \rangle \langle l'| \tilde{P}_j^y  | l \rangle \quad .
\label{DPP-sp}
\eea
where the polarizations  $P_i^x,P_j^y$ are defined in (\ref{Pol-l}), and their DPP's are
given by
\be
\langle l|\tilde{P}_i^\alpha |l'\rangle= 
\langle l|{P}_i^\alpha |l'\rangle~{\ve^2 \over \ve^2 + (\epsilon_l-\epsilon_{l'})^2} \quad .
\ee

At low temperatures, the  factor $n(\epsilon_l)-n(\epsilon_{l'})$ ensures the sum is dominated by excitation energies at less than of order $T$ from the chemical potential.

For a bosonic Hamiltonian of coupled harmonic oscillators (\ref{CHO}), the thermal polarization is generally  an anomalous bilinear form given by Eq.~(\ref{Q-CHO}).
The anomalous matrix blocks $Q^\alpha_{A;\bq}, (Q^\alpha_{A;\bq})^* $ involve creation or annihilation
of two positive energy states, such that $\epsilon_l+\epsilon_{l'}>0$. Thus,
there are no anomalous contributions to  the DPP~\footnote{Unlike their conttribution to the magnetization terms. It is good to know
that anomalous terms are projected out, since their commutator would lead to diverging contributions to the low temperature limit of $\kappa_{xy}$.}. Thus we are left  with the normal parts of the polarizations,
\be
\langle l|\tilde{Q}_i^\alpha |l'\rangle=    \sum_{ll'}  \left( { \ve^2\over \ve^2  + (\vve^{\rm diag}_l-\vve^{\rm diag}_{l'})^2 }\right)~  (Q^\alpha_{N;\bq})_{ll'}  \quad .
\label{tQ-CHO}
\ee

\section{Berry curvature integrals}
\label{sec:Berry}
Hall-type conductivities for non-interacting  periodic crystal Hamiltonians $H^{\rm PC}$, defined in Eq.~(\ref{Hperiodic}),
can be expressed as Berry curvature
integrals over BZ, where the band Berry curvature is defined  in Eq.~(\ref{Berry-C}). 

The relations between Berry curvature integrals and Hall conductivity~\cite{Niu-Sxy}, Transverse thermoelectric conductivity~\cite{Niu-Axy}  and thermal Hall conductivity~\cite{QNS,Kxy-Murakami}, have been previously derived from semiclassical dynamics and also directly from the Kubo formula of perfectly periodic lattices~(\ref{KuboAC-LehmannSP1}).

Here we provide a somewhat simpler derivation  (in our view) starting from the single-particle DPP formulas for periodic lattices.
$U_{li}(\bk)$ is a unitary matrix which diagonalizes $h_{ij}(\bk)$ in $H^{\rm PC}$, and  defines the eigenmode   operators, 
\be
b^\dagger_{\bk l}= \sum_{i=1}^M U_{l i}(\bk)a^\dagger_{\bk i}~\quad l=1,\ldots M \quad ,
\label{eigen}
\ee
such that,
\be 
H^{\rm PC} =\sum_{\bk\in \BZ}\sum_{l=1}^M \left(\epsilon_{\bk l}-\mu \right)~  b^\dagger_{\bk l} b^{}_{\bk l} ~ \quad .
\label{HBerry}
\ee 
The  uniform  polarizations can be defined by sending $\bq\to 0$ after  $\cV\to \infty$ in Eq.~(\ref{Pol-periodic}), (see Eq.~(\ref{Pol0-PBC}): 
\bea
 {P}^\alpha &=& e     \sum_{\bk ij} a^\dagger_{\bk i} ~(i\bnabla_{\bk}^\alpha) a^{}_{\bk j} \quad ,\nonumber\\
 {Q}^\alpha   &= & {1\over 2} \sum_{\bk ij} a^\dagger_{\bk i}\left\{ \left(h_{ij}(\bk)-\mu\delta_{ij} \right), (i\bnabla_{\bk}^\alpha) \right\}a^{}_{\bk j}~ \quad .
\label{Pol-Berry}
\eea
The DPP operators are constrained to act only within the same band $l$. Using Eq.~(\ref{eigen}), the electric DPP's are given by
\be
\langle \bk l| \tilde{P}_\ve^\alpha |\bk' l\rangle ~=~  e~ {\ve^2 \over  \ve^2+( \epsilon_{\bk l}-\epsilon_{\bk', l})^2}  ~  \left\{
\overbrace{\sum_i U^\dagger_{li}(\bk)  \left(i\bnabla^\alpha_{\bk }  U_{i l} (\bk)\right)}^{A^\alpha_l(\bk )}+  i\bnabla^\alpha_{\bk'}\right\} \quad .
\label{Pol-Berry1}
\ee
The vector function $\bA_l(\bk)$ is the  {\em Berry gauge field}~\cite{Sundaram-Niu} of eigenmode $|\bk,l\rangle$,
\be
\bA_l(\bk)  \equiv  i \sum_{j=1}^M U_{lj}(\bk)    \bnabla^\alpha_{\bk} U^\dagger_{jl}
(\bk)  ~ \quad .
\ee
Similarly, the thermal DPP's are
\be
\langle \bk l| \tilde{Q}_\ve^\alpha |\bk' l \rangle ~=~   {\ve^2 \over  \ve^2+( \epsilon_{\bk l}-\epsilon_{\bk', l})^2}   ~ 
\left( (\epsilon_{\bk l}-\mu)\left(  i\bnabla_{\bk'}^\alpha + A^\alpha_l(\bk)\right) +{i\over 2} v^\alpha_{\bk l} \right) \quad ,
\label{Pol-Berry2}
\ee
where $\bv_{\bk l}= \bnabla_\bk \epsilon_{\bk l}$. Note that on a finite volume, $\bk$ are discrete with intervals of $\delta\bk = 2\pi/\cV^{1\over d} $. The DC limit is obtained by keeping  $\ve > \delta\bk$, as $\cV\to \infty$. The differentiability  of $\epsilon_{\bk l}$ and $U_{il}(\bk)$ allows us to set $\bk'\to \bk$ in Eqs.~(\ref{Pol-Berry1}) and (\ref{Pol-Berry2}), which involve  a relative correction of $\cO(\ve)$. 

By (\ref{Pol-Berry1}) the commutator  
\bea 
\langle \bk ,l|\left[\tilde{P}_\ve^x,\tilde{P}_\ve^y\right] |\bk ,l\rangle &=& i \bnabla_\bk\times \bA_{ l}(\bk)\cdot \hat{\bz} +\cO(\ve^2)\nonumber\\
&=& i ~ \Omega^z (\bk ,l) \quad ,
\label{Berry-commutator1}
\eea
where we discarded   
$[\partial_{k_x},\partial_{k_y}]=0 $ when acting on differentiable wavefunctions. $ \Omega^z(\bk, l)$ defines the Berry curvature field in the $\hat{\bz}$ direction of eigenmode $l$.

By Eq.~(\ref{Hall-DPP}) the intrinsic Hall conductivity recovers the result of Chang and Niu~\cite{Niu-Sxy},
\be
\sigma_{xy}^{\rm dc} = -  \lim_{\ve\to 0}\sum_l \int {d^d k\over (2\pi)^d }~n(\epsilon_{\bk l} )~\Omega^z_{\bk ,l} \quad .
\ee
This expression has been used to describe the anomalous Hall effect in ferromagnets~\cite{AHE-Berry}. Additional effects of impurities have been introduced semiclassically~\cite{Sinitsyn}.

The thermal Hall DPP's have  a slightly more complicated commutator,
\bea
  \langle \bk l | \left[\tilde{Q}_\ve^x , \tilde{Q}_\ve^{y}\right]|\bk l\rangle &=& i
\langle \bk l | \left( \epsilon_{\bk l}^2 \bnabla_\bk\times \bA_l(\bk) +  \epsilon_{\bk l} \bv_{\bk l}\times \bA_{\bk l} + \epsilon_{\bk l} \bv_{\bk l}\times \bnabla\right)| \bk l \rangle +{i\over 2} \epsilon_{\bk l} \nabla\times \bv_{\bk l} +\cO(\ve) \nonumber\\
&=&   i\epsilon_{\bk l}^2 \bnabla_\bk\times \bA + 2 \epsilon_{\bk l} \bv_{\bk l}\times \bA_{\bk l} +\cO(\ve) \quad .
\label{Berry-commutator2}
\eea
where $i\nabla_\bk \times \bv_{\bk l}=0$.  

For convenience we define energy resolved conductivities by,
 \bea
\tilde{\sigma}_{xy}(\epsilon) &=& - {1\over \cV} \sum_{\bk l}   \delta(\epsilon-\epsilon_{\bk l}  )  ~ \Omega^z_{\bk l} \quad , \nonumber\\
\tilde{\Sigma}_{xy}(\epsilon) &=& - {1\over \cV}\sum_{\bk l} \delta(\epsilon-\epsilon_{\bk l}  )  (\nabla_{\bk}\epsilon_{\bk l})\times \bA_{\bk l} \cdot \hat{\bz} \quad .
\eea
In three dimensions, for  fixed $k_z$,  the  $\delta$ function restricts the wavevectors to a circle on sphere of energy $\epsilon$, such that
\be
\sum_{\bk_\perp l} \delta(\epsilon-\epsilon_{\bk l}  ) ={1\over  (2\pi)^2} \oint \limits_{\bs\in \partial  \{\epsilon_{k_z,\bk_\perp,l}\le \epsilon\}} {d{\bf s} \over |\nabla_{\bk_\perp}  \epsilon_{\bk l}| } \quad .
\ee
Thus, we use  Stokes theorem to relate $\Sigma_{xy}$  to $\sigma_{xy}$ by,
\bea
\tilde{\Sigma}_{xy}(\epsilon)&=& {1\over  (2\pi)^3} \sum_l \int  dk_z \oint \limits_{\partial  \{\bs\in\epsilon_{k_z,\bk_\perp}\le \epsilon \}} d{\bf s} \cdot  \bA_l^z\nonumber\\
 &=&  \sum_l\int   {d^3 k\over (2\pi)^3}    \Theta( \epsilon-\epsilon_{\bk i}) ~\Omega^z_{l} (\bk)\nonumber\\
&=&   \int \limits_{-\infty}^{\epsilon}  d\epsilon' ~\tilde{\sigma}_{xy} (\epsilon') \quad .
\eea
Thus, we arrive at the expression previously derived by Qin, Niu and Shi~\cite{QNS},
\bea
\kappa^{\rm dc}_{xy} &=& \lim_{\ve\to 0}\lim_{\bq\to 0}\lim_{\cV\to \infty}{1\over T}    \int \limits_{-\infty}^{\infty} \! d \epsilon~ n(\epsilon) \left( \epsilon^2  \tilde{\sigma}_{xy} + 2\epsilon  \int_{0}^\epsilon  d\epsilon' \tilde{\sigma}_{xy} (\epsilon')  \right)\nonumber\\
&=&  {1\over T}    \int_{-\infty}^\infty \! d \epsilon \left(-{ \partial  n\over \partial \epsilon } \right)  \epsilon^2  \tilde{\Sigma}_{xy}(\epsilon) \quad .
\label{Kxy-Berry}
\eea

Similarly, the DPP formula for the thermoelectric conductivity of a clean $C_4$-symmetric metallic band of electrons with charge $e<0$, is given by
\bea
\alpha_{xy}^{\rm dc}&=& -\lim_{\ve\to 0}\lim_{\bq\to 0}\lim_{\cV\to \infty} {1\over 2\cV T}
\sum_{\bk l} n(\epsilon_{\bk l})\left(  \left[ \tilde{P}^x_{\bk l},\tilde{Q}^y_{\bk l}\right] +\left[ \tilde{Q}^x_{\bk l},\tilde{P}^y_{\bk l}\right]\right) \quad .
\eea
Using (\ref{Berry-commutator1}) and defining, 
\bea
\tilde{\sigma}_{xy}(\epsilon) &=& - {1\over \cV} \sum_{\bk l}   \delta(\epsilon-\epsilon_{\bk i}  ) \Omega^z_{\bk l} \nonumber\\
\tilde{\Sigma}_{xy}(\epsilon)&=& 
\int \limits_{-\infty}^{\epsilon}  d\epsilon' \tilde{\sigma}_{xy} (\epsilon') \quad ,
\eea
one can express, 
\bea 
\alpha_{xy}^{\rm dc} 
&=&  {1\over e T}    \int \limits_{-\infty}^{\infty} \! d\epsilon \left(-{ \partial  n\over \partial \epsilon } \right)  \epsilon  \tilde{\Sigma}_{xy}(\epsilon) \quad .
\label{Axy-Berry}
\eea

For low temperatures Eqs.~(\ref{Kxy-Berry}) and (\ref{Axy-Berry}) for metals reduce to the relations,
\bea
\kappa_{xy}^{\rm dc}&\simeq& {\pi^2 k_B^2 T \over 3 e}   \tilde{\sigma}_{xy}(\epsilon_F ) \quad ,\nonumber\\
\alpha_{xy}^{\rm dc} &\simeq &{\pi^2 k_B^2 T \over 3 e}  {d\tilde{\sigma}_{xy}\over d \epsilon_F } ~ \quad .
\eea
these relations extend Mott relations~\cite{Marder} between electric and thermoelectric conductivities to bandstructures with Berry curvatures~\cite{Niu-Axy}.

Introducing the effects of disorder in the anomalous Hall effect has been a major challenge in the field. The main difficulty is to extend the semiclassical analysis to include effects of short range scattering~\cite{Sinitsyn,Skew1,Skew2,Skew3}.
The general non-interacting DPP formulas~(\ref{DPP-sp}) can feasibly study effects of disorder numerically.

\section{DPP formula for confined Landau levels}
We consider an electron in a uniform magnetic field $\bB$, with vector potential $\bA$, $\bnabla\times\bA=B\hat{\bz}$
In first quantized notation, the  Landau operators
are,
\be
\Pi^\alpha\equiv  p^\alpha-{e\over c} A^\alpha \quad .
\ee
The Landau level (LL) raising and lowering operators
are respectively,
\be
a={l_B^2 \over \hbar }\left(\Pi^x + i \Pi^y \right) \quad ,\quad a^\dagger={l_B^2 \over \hbar }\left(\Pi^x - i \Pi^y \right) \quad .
\label{adag}
\ee
where $[a,a^\dagger]=1$, and the Landau length is $l_B=\sqrt{\hbar c\over eB} $.

The LL Hamiltonian is 
\be
H^{\rm LL}= {1\over 2m} \left((\Pi^x)^2 + (\Pi^y)^2 \right)=\hbar\omega_c \left(a^\dagger a^{}+{1\over 2} \right) \quad ,
\ee
where the cyclotron frequency is $\omega_c = {eB\over mc} $.
The eigenenergies are
\be
\epsilon_{k,\nu}= \hbar\omega_c \left(\nu+{1\over 2}\right) \quad ,~~~k=1,\ldots N_{\rm L}\quad .
\ee
where the integer LL index is $\nu$, and $N_{\rm L}=  {BA\over \Phi_0} $  is the number of  states per area $A$ in each LL.

The guiding center operators are,
\bea
&& R^x =x+{l_B^2\over \hbar}~\Pi^y~ \nonumber\\
&& R^y =y-  {l_B^2\over \hbar} \Pi^x \quad ,
\label{GC}
\eea
where
\be
[R^\alpha,\Pi^\beta]=0,~\quad  [R^x,R^y]=-il_B^2 \quad .
\ee

The polarization operators are
\bea
P^x &=& e x = eR^x -e{l_B^2\over \hbar}~\Pi^y \quad ,\nonumber\\
P^y &=& e y = eR^y +e{l_B^2\over \hbar}~\Pi^x \quad .
\eea
Since $\Pi^x, \Pi^y $ change the LL index
by $\pm 1$, while $R^\alpha$ connect states within each LL.
Hence under degeneracy projection, the $\Pi$-operators are projected out, and  DPP's are simply proportional to the guiding center operators, 
\bea
&& \tilde{P}^x=e R^x \quad ,\nonumber\\
&& \tilde{P}^y=e R^y \quad .
\eea
The DPP Hall conductivity formula ~(\ref{DPP-sp}) yields 
\bea
\sigma_{xy}^{\rm dc} &=& - {e^2\over \hbar \cV} \sum_{k,\nu} n(\epsilon_{k,\nu })~\Im\langle k,\nu|\left[R^x,R^y\right]|k,\nu\rangle\nonumber\\
&=& {ec\over B\cV}\sum_{k,\nu} n(\epsilon_{\nu,k})= {n_e  e c\over B} \quad ,
\label{Sxy-Galil}
\eea
where $n_e(\mu,T,B)$ is the electron density.

While the sum in Eq.~(\ref{Sxy-Galil}) includes all occupied states,  Eq.~(\ref{DPP-sp}) shows that  the Hall conductivity can be written only in terms of states near the Fermi energy $\epsilon_F$. 

Where are these current carrying states?
On finite systems with OBC, the semiclassical eigenenergies in each LL  (which apply for smooth confining potential on the scale of $l_B$),  reach the Fermi energy $\epsilon_F$ at the sample edges. Each  LL band contributes one gapless edge mode. The Hall conductivity is precisely given by Eq.~(\ref{Sxy-Galil}). 

\section{DPP Hall conductivity in a disordered metal} 
In the weak disorder and magnetic field regime proxies such as Chern numbers and Streda formulas do not work, since disorder mixes Landau levels, and the longitudinal conductivity is finite. Since $\sigma_ {xy}$ strongly depends on the scattering lifetime, it is instructive to compare Boltzmann's theory  to the Kubo formula by numerically calculating it via the DPP formalism. It also gives us a chance to demonstrate how the DC limit is taken in a metallic gapless system.

In Ref.~\cite{Bashan},  the  weakly disordered square lattice (DSL) Hamiltonian was considered,
\be
H^{\rm DSL}=-\sum_{\langle ij \rangle}  \left( e^{-iA_{ij}} c^\dagger_i c^\nd_j + {\rm h.c.} \right)+\sum_i ( w_i -\epsilon_F) c^\dagger_i c_i \quad ,
\label{DSL}
\ee
where $w_i\in[-w/2,w/2]$ is a uniformly distributed random number and $B= \sum_{\square} A_{ij}$ is the magnetic fux per plaquette.
Using the DPP formula (\ref{Hall-DPP}), we compute
the disordered averaged curves $\sigma_{xy} (\ve,L) $ 
 for a sequence of linear sizes $L$, as  shown in Fig.~\ref{fig:scaling}. 
 \begin{figure}[t!]
\begin{center}
\includegraphics[width=10cm]{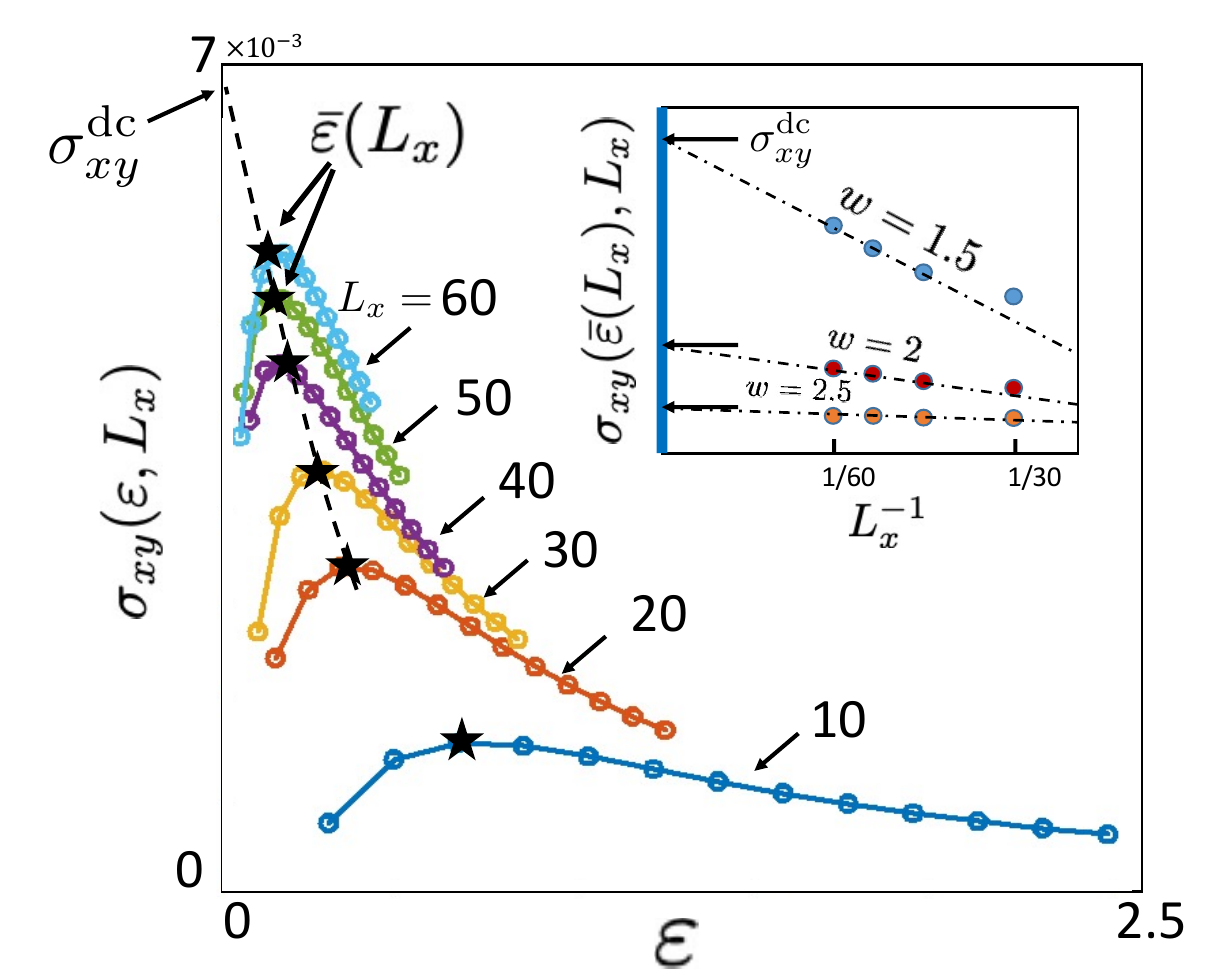}
\caption{Extrapolation of numerical Hall conductivities of the square lattice Hamiltonian, Eq.~(\ref{SL}).
Disorder averaged $\sigma_{xy}$ are plotted versus $\ve$, for a sequence of
linear dimensions $L_x$.  Stars mark the values of $\bar{\ve}$ as defined in Eq.~(\ref{scaling}).  The disorder strength is fixed at $w=3$. 
The temperature, Fermi energy, and magnetic field are $T=0.3,\epsilon_F=-1$ and $B=0.025$
respectively.
{\bf Inset:} The DC limit  $\sigma_{xy}^{\rm dc}$ (marked by black arrows)
for three values of disorder strength. }
\label{fig:scaling}
\end{center}
\end{figure}
 \begin{figure}[b!]
\begin{center}
\includegraphics[width=7cm]{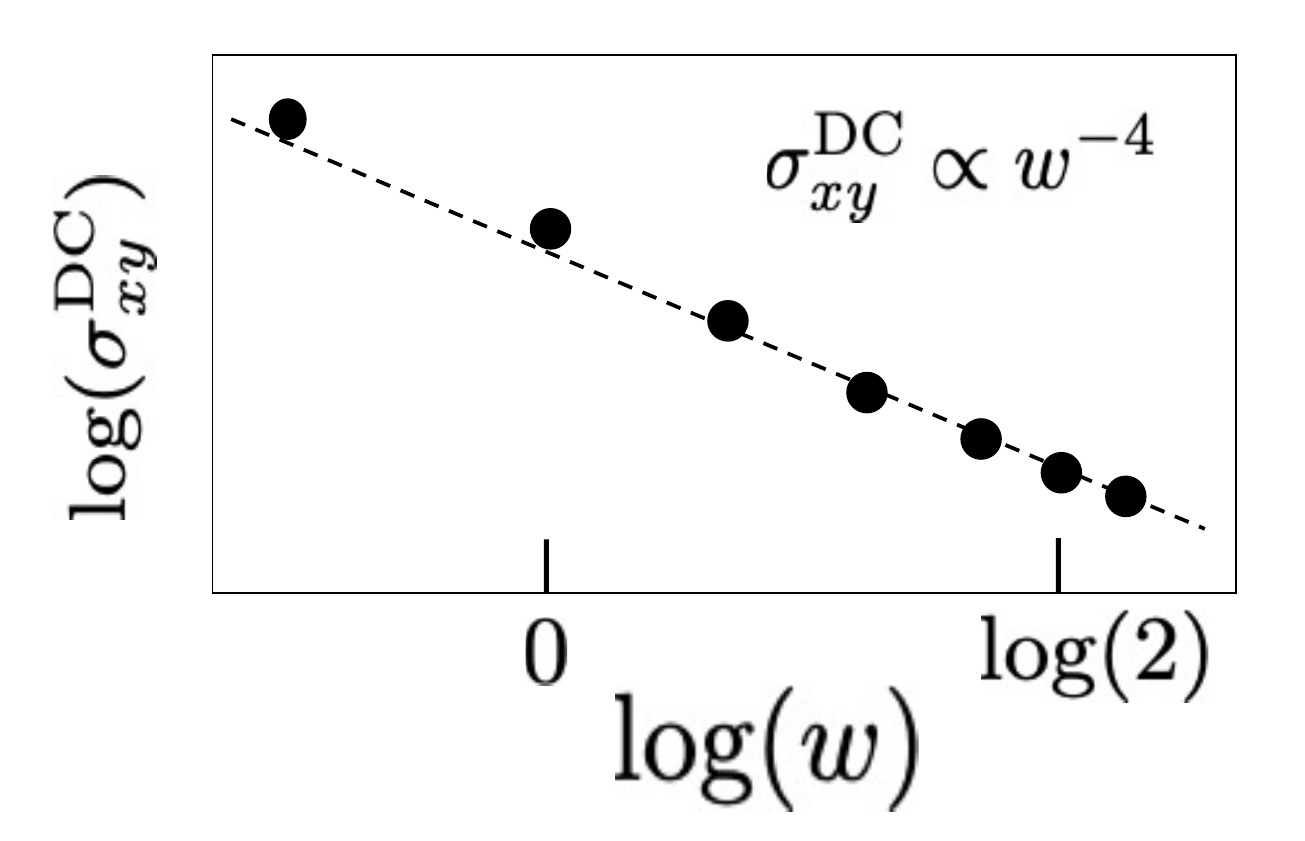}
\caption{ DC extrapolated Hall conductivity for weak magnetic field $B=0.03$
and moderate disorder strength $w$, at Fermi energy $\epsilon_F=-1$. The Hall conductivity scales as $w^{-4}$ as expected by Drude's theory $\sigma_{xy}\propto \omega_c \tau^2$.}
\label{fig:SXY-w}
\end{center}
\end{figure}

 For each curve we  determine  $\bar{\ve}(L)$ by 
 \be
 {d\sigma_{\rm xy}(\ve,L) \over d\ve }\big|_{\ve=\bar{\ve}} =0 \quad ,
 \label{scaling}
 \ee
 which  is  marked by black star in Fig.~\ref{fig:scaling}.  These values fit 
 the function  $\bar{\ve} \simeq 7 (L_x)^{-1}$.
 This scaling is consistent with level spacings  of 
one dimensional extended states. 

The DC conductivity is estimated by graphically extrapolating the sequence,
\be
\sigma_{xy}^{\rm dc} = \lim_{L\to \infty} \sigma_{xy}(\bar{\ve}(L), L) \quad .
\ee
In the inset of Figure \ref{fig:scaling}, we extrapolate the Hall conductivities at  various values of disorder strength $w$ to their their respective DC limits.

 The DPP formula calculation~\cite{Bashan}  has found $\sigma_{xy}^{\rm dc} \sim 
 B w^{-4}$, as depicted in 
Fig.~\ref{fig:SXY-w}.
These results are consistent with  DB  and BE theories~(\ref{S-Drude},\ref{Sab-Boltz})) which predict that $\sigma_{xy}\propto \omega_c \tau^2$ in the weak magnetic field regime   $\omega_c\tau\ll 1$. 
The $w^{-4} $ scaling is  consistent with Fermi's golden rule for the disorder-driven scattering rate,  Eq.~(\ref{FGR-tau}),
i.e.  $\tau^{-1}\propto w^2$. 

\section{Physical consequences of DPP formulas} 

The first implication of Eqs.~(\ref{Hall-DPP}) is that the Hall conductivities are {\em on-shell} expressions which is in apparent contrast to  the original Kubo formula (\ref{Sxy-Lehmann}). Furthermore, at low temperatures, the conductivity is
dominated by the low energy excitations $|E_n-E_0|\le \cO(T)$.
Conceptual conclusions from the DPP formulas are:
\begin{enumerate}
\item  A OBC system exhibiting a non-zero Hall-type conductivity possesses quasi-degenerate (gapless) manifolds of eigenstates which are created by the presence of the magnetic field.
\item At low temperatures, since the DPP's involve low energy eigenstates, it is easy to see that $\alpha_{xy}^{\rm dc}$ and 
$\kappa_{xy}^{\rm dc}$ vanish as $T\to 0$. It is also simple to verify for  phonons and spin-wave models, that the anomalous (i.e. $a^\dagger a^\dagger, aa$) terms are in the current operators.

\item  These manifolds are subjected to the {\em non commutative geometry} generated by the  DPP's. That is to say, since
\be
\langle \left[ \tilde{P}^\alpha ,\tilde{P}^\beta  \right] \rangle \propto  i \sigma_{\alpha\beta}^{\rm dc} \quad .
\ee
then   ${1\over \sigma_{\alpha\beta}^{\rm dc}} \tilde{P}^\beta$ acts as a conjugate momentum to $\tilde{P}^\alpha$. The DPP's are therefore similar in spirit to the non-commuting guiding center coordinates which act within a quasi-degerenate Landau level in the strong magnetic field limit. 
\item In quantum Hall and topological insulator phases, the Hall-current carrying states are be supported on the OBC sample edges. For compressible metallic phases,
the Hall current is carried on chiral extended states which  percolate through the bulk. An explicit  description of these states has been used by Chalker and Coddington model~\cite{CC} to describe the quantum-percolation transition between Hall plateaux.

\item Finally, DPP formula for $\kappa_{xy}$ proves that thermal Hall currents in insulators~\cite{Matsuda-Kitaev,Behnia,Taillefer-Kxy} are carried by extended chiral modes.
\end{enumerate}

{\em Numerical advantages.}-- The elimination of the higher eigenstates in the DPP formulas at low temperatures allows us to replace $H(B)$ by its effective Hamiltonian $H^{\rm eff}(B)$  renormalized onto the lower energy Hilbert space.  $H^{\rm eff}(B)$ may be computationally easier  to work with than $H$.

\newpage
\part{Continued fractions of longitudinal conductivities}
\label{Part:Thermo I}
\section{Moments expansion}
\label{sec:moments}
Real longitudinal conductivities (in the $x$ direction) are  given by the real part of Eq.~(\ref{Kubo-OHS}),
\be
L_{ii}^{xx}(\omega)  =    {  1 \over \cV}\Im \left(J_i^x \Big| \left({1\over \cL- \omega -i\varepsilon}\right)\Big|J_i^x\right)
=   { \pi \over \cV} \left(J^x |  \delta_\ve \left(   \cL -\omega   \right)|J^x\right) \quad ,\quad i=1,2,3 \quad .
\label{Liixx}
\ee
where $\delta_\epsilon$ is a broadened Dirac $\delta$-function of width $\ve > \cV^{1\over d}$. Henceforth we suppress $\ve,\bq,\cV$ dependence, keeping in mind the eventual correct DC order of limits of Eq.~(\ref{DC}). For the moments expansion we define a symmetrized mixed electric+thermal current as
\be
J_3^x=(J_1^x + J_2^x) /2 \quad . 
\ee

Thus, the thermoelectric conductivity can be written as a sum of auto-correlation functions, 
\be
 \alpha_{xx}(\omega)=    {2\over T} L^{xx}_{33}(\omega) -\half \sigma_{xx}(\omega)- \half \kappa_{xx }(\bq,\omega) \quad .
\ee
Henceforth we discuss $L_{ii}^{xx}(\omega)$ and suppress the currents' label $i$.

The moment  of order $2k$ of $L$ is
\bea
\mu_{2k}  &=& \int \limits_{-\infty}^{\infty} {d\omega\over \pi} ~L^{xx} (\omega )\omega^{2k}\nonumber\\
&=&   {  1 \over \cV} \int \limits_{-\infty}^{\infty}   d\omega  ~\left(J^x \big|\delta \left(  \cL - \omega   \right) \cL^{2k} \big|J^x\right)  \nonumber\\
&=& { 1 \over \cV } \left(J^x \big|  \cL^{2k} \big|J^x\right) \quad .
\label{SR}
\eea
The  $k=0$ moment is the CSR,
\be
\mu_0 ={1\over \cV} (J^x\big|J^x) =
{1\over \cV} \Im  \Tr\left( \rho \left[ P^x,J^x\right]\right)   \equiv \chi_{\rm csr} \quad .
\label{CSR-def}
\ee
which uses Eq.~(\ref{JP}) to express the susceptibility as a thermodynamic expectation value.
All odd moments  vanish by antisymmetry of $mn\to nm$,
\be
(J^x|\cL^{2k+1}|J^x) =\sum_{mn} W_{mn} (E_m-E_n)^{2k+1} ~A^*_{nm}A_{nm} =0 \quad .
\label{odd}
\ee

Moments are thermodynamic expectation values of  time-independent operators,
\bea
\mu_{2k} &=&  { 1 \over \cV  } \left( J^x\Bigg| \overbrace{[H_0,[H_0,\ldots [H_0}^{2k},J^x]]\ldots] ~\right)\nonumber\\  
&=&   { 1 \over \cV } \Im \Tr ~\left(\rho    [P^x,\overbrace{[H_0,[H_0,\ldots [H_0}^{2k-1},J^x]]\ldots]\right)  \quad ,
\label{EV}
\eea
where we used  Eq.~(\ref{NoLehmann}) and  $P^x$ is the $x$-polarization corresponding to $J^x$ by  Eq.~(\ref{JP}).

Moments are coefficients of the short time Taylor series of the real-time conductivity,
\bea
L^{xx}(t) &=& {1\over \pi} \int \limits_{-\infty}^{\infty} d\omega L^{xx}(\omega)e^{i\omega t}= {1\over \cV} \left(  J^x | e^{i\cL t }| J^x\right)\nonumber\\
&=&    \sum_{k=0,\infty} {(-1)^k\mu_{2k}\over (2k)! }~  t^{2k} \quad .
\label{Lii-t}
\eea
Unfortunately,  a finite set of moments cannot, in general, determine the long time behavior of Eq.~(\ref{Lii-t}), and the low frequency behavior of Eq.~(\ref{Liixx}).

The following constraints can be  very helpful in guiding us  to a viable extrapolation scheme to high orders.  

\begin{enumerate}
\item    Since  $~\delta_\ve \left(   \cL -\omega   \right)$ is a non-negative hermitian hyperoperator, $L^{xx}(\omega)\ge 0$ is a non-negative spectral function. 
    \item $L^{xx}(\omega)=L^{xx}(-\omega)$, which therefore  permits a finite DC conductivity at $\omega=0$.

\item All the moments $\mu_{2k}$ are squares of norms  $|| \cL^{k} A||^2$, and therefore non negative.

\item By thermodynamic arguments, the real-time conductivities of Eq.~(\ref{Lii-t}) should be continuous and differentiable for $t\in(-\infty,\infty)$. Also, they are expected to decay (relax) at long times. This ensures the analyticity of $L^{xx}(z)$ in the upper half plane $\Im(z)>0$. \footnote{The only notable exception is in a superconductor, where persistent currents do not decay. For superconductors, the zero frequency conductivity is excluded from $L^{xx}(\omega)$.}
Thus, by (\ref{Lii-t}),  the asymptotic growth of the high order moments is bounded by,
\be
 \lim_{k\to \infty}{\mu_{2k}\over (2k)!}<\infty \quad.
 \label{muMAX}
 \ee
 \end{enumerate}

\subsection{Krylov bases}
An orthonormal Krylov basis of hyperstates can be constructed from the current operator. We start with the normalized root (zeroth order) state,
\be
\kket{0}  \equiv   {J\over \sqrt{(J|J)} } \quad .
\label{Krylov0}
\ee
($\kket{\bullet}$ denotes normalized hyperstates, in contrast to non normalized hyperstates such as $\big|\bullet\big)$.)

Higher order Krylov hyperstates are inductively generated by the equations
\be
\kket{n+1}   = {1\over N_{n+1}} \left( \cL \kket{n}  -\Delta_n \kket{n-1}\right) \quad ,~~~~N_{n+1}\equiv \left( \kexpect{n}{\cL^2}{n} - \Delta_n^2 \right)^{1\over 2} \quad ,\quad n=1,2,\ldots \infty
\label{Krylov-n}
\ee
where $\Delta_n$ is the {\em recurrent} of order  $n$, which is equal to the  
matrix elements of the Liouvillian,
\be
\Delta_n = \kexpect{n}{\cL}{n-1} = \kexpect{n-1}{\cL}{n} \quad .
\label{Delta-n1}
\ee
In order to evaluate $\Delta_n$  the following statements are verified:

\begin{enumerate}
\item The Liouvillian expectation values vanish in the Krylov basis.  By construction of Eq.~(\ref{Krylov-n}), even (odd) order Krylov hyperstates involve states with  even (odd) powers $\cL^n|J)$. Hence we can expand any Krylov hyperstate as
\be
\kket{n}  = \sum_{k=0}^{^{  {\rm Int}(n/2)} } a_{n-2k}  \cL^{n-2k}\kket{0} \quad .
\ee
The expectation values of $\cL$ in Krylov hyperstates are,
\bea
\kexpect{n}{\cL}{n} &=&\sum_{k,k'=0}^{  {\rm Int}(n/2)}   a_{n-2k}a_{n-2k'} \kexpect{0}{\cL^{n-2k} \cL \cL^{n-2k'}}{0}  \nonumber\\
&=&   
\sum_{k,k'=0,n}   a_k b_{k'} \kexpect{0}{\cL^{2(n - k- k')+1}}{0} =0 \quad ,
\eea
which follows from Eq.~(\ref{odd}).

\item The Krylov basis is orthonormal,
\be 
\kbraket{m}{n} = \kbraket{n}{m} = \delta_{nm} \quad , \quad n,m=0,1,2,\ldots \infty
\label{ortho}
\ee
which is proven by induction using Eq.~(\ref{Krylov-n}) starting from $n,m=0$. 
\item The Liouvillian representation in the Krylov basis is, 
\bea
L_{nm}&=&  \kexpect{n}{\cL}{m} = \delta_{m,n-1} \Delta_n +\delta_{n,m-1} \Delta_{n+1} \quad , \nonumber\\
\\
&=&   \left( \begin{array} {cccccc}
0&\Delta_1&0&0&0&\ldots \\
\Delta_1&0&\Delta_2&0&0&\ldots \\
0&\Delta_2&0&\Delta_3&0&\ldots \\
0&0&\Delta_3&0&\Delta_4&\ldots \\
0&0&0&\Delta_4&0\ldots \\
0&\vdots &\vdots&\vdots&\vdots&\ldots \\
\end{array}\right)_{nm}\quad  n,m=0,1,2,\ldots\infty \quad .
\label{Lmn}
\eea

The matrix $L_{nm}$  can be regarded as a tight binding hopping Hamiltonian on a half chain, with hopping parameters  $\Delta_n$.
\item 
The phases of $|n\rangle$ can be chosen such that we can gauge all the recurrents $\Delta_n$ to be  real and positive. 
\end{enumerate}

$G_{nm}(z)$ is the ``Liouvillian Green  function'',
\be
G_{nm} (z) \equiv  \kexpect{n}{\left( z-\cL \right)^{-1}}{m} \quad .
\label{Gnm}
\ee
Since $L$ is symmetric, so is $G_{mn}(z)$, which is in general complex.
On the real axis 
$z\to \omega+i0^+$, $-\Im G_{nn}(\omega+i\ve)=-G_{nn}''(\omega)$ is a  {\em spectral function}. It is
non-negative and symmetric in  $\omega\to-\omega$. 
The real part of the complex function $G_{nn}(z)$ can be obtained from the spectral function by the Kramers-Kronig (KK) transform,
\be
G_{nn}'(\omega)={1\over \pi} \int \limits_{-\infty}^{\infty} d\omega' ~{G_{nn}'' (\omega')\over \omega-\omega'} \quad .
\label{KK}
\ee
$G''_{nn}(\omega)=G''_{nn}(-\omega)$, and by Eq.~(\ref{KK}), $G'_{nn}(0)=0$. Non diagonal Green functions $G_{nm}(0+i\ve)$ can be obtained  from the inversion equation $(G L)_{nm} =\delta_{nm}$,
\be
\sum_{m\ge 0} G_{nm}(0^+)~ L_{mn'}  = \delta_{nn'} \quad .
\ee
For example,
\bea
\sum_{m} G_{0 m}(0^+) L_{m 1}&=& G_{00} \Delta_1 + G_{02}\Delta_2=0, ~\Rightarrow ~ G_{02}(0^+)= i G''_{00} \left(-{\Delta_1\over \Delta_2}\right) \quad ,\nonumber\\
\sum_{m} G_{0 m}(0^+)L_{m 4}&=& G_{02}(0^+) \Delta_3+ G_{04}(0^+)\Delta_4=0+  ~\Rightarrow ~  G_{04}(0^+)= i G''_{00} \left(-{\Delta_1\over \Delta_2}\right)\left(-{\Delta_3\over \Delta_3}\right) \quad ,\nonumber\\
\vdots && \vdots,
\eea
and generally,
\bea
G_{0,2k} (0^+) &=& G_{2k,0}(0^+)  = i G''_{00}  R_k \quad , \quad k=0,1,2\ldots\nonumber\\
R_k &=& \prod_{k'=1}^k \left(-{\Delta_{2k'-1}\over \Delta_{2k'}}\right) \quad .
\label{G02k}
\eea
Another important relation is,
\be
 G_{1, n}(0^+)  ={\delta_{n,0}\over \Delta_1} \quad ,
\label{G1n}
\ee
which is purely real.
As a consequence, the matrix elements of the imaginary inverse Liouvillian do not connect to the first Krylov state: 
\be
\kexpect{n}{\left({1\over \cL }\right)''}{1} = \kexpect{1}{\left({1\over \cL }\right)''}{n} =  0 \quad .
\label{G1n-useful}
\ee
This result will prove to be useful in  Section \ref{sec:dSdB}.

\subsection{The continued fraction representation}
\label{sec:LI}
Since $z-L$ is a tridiagonal matrix, an iterative inversion can be used to invert $G_{00}$ in Eq.~(\ref{Gnm}), and express it as
\bea
G_{00} (z) &=&    
 {1\over    z-\Delta_1^2  G^>_{11}(z) }~ \quad ,\nonumber\\
G^>_{11} (z) &=&   {1\over   z-\Delta_2^2  G^>_{22}(z)  }~ \quad ,\nonumber\\
\vdots &=& \quad  \vdots \quad,\nonumber\\
\label{CF1}
\eea

where $G^>_{nn}$ is the {\em termination Green function} on the half-chain with  the sites $m\ge n$.
The sequence of equations (\ref{CF1}) comprises an infinite continued fraction (CF),
 \begin{equation}
G_{00}(z)\simeq {1\over z - {\vvv\textstyle{\Delta_1^2} \over \textstyle{z \, - \, } {\vvv\textstyle \Delta_2^2\over 
\textstyle{z \, - \, } {\vvv\textstyle \Delta_3^2 \over
{ \textstyle {\vdots} } } } }} \quad .
\label{CF2}
\end{equation} 

\section{From moments to recurrents}
\label{sec:moments-rec}
The first question that comes to mind is: How do we obtain the recurrents $\Delta_1,\Delta_2\ldots\Delta_k$, from a given sequence of moments  $\mu_0,\mu_2\ldots \mu_{2k}$ ? 

Eq.~(\ref{SR}) and  (\ref{Lmn})  generate a sequence of identities:
\bea
\mu_{2k} &=&\mu_0~ \langle0| (L[\Delta])^{2k}|0\rangle \quad .\nonumber\\
{\mu_2\over 
\mu_0}&=& \Delta_1^2 \quad ,\nonumber\\
{\mu_4\over 
\mu_0}&=& \Delta_1^4 + \Delta^2_1 \Delta^2_2 \quad ,\nonumber\\
{\mu_6\over 
\mu_0 }&=& \Delta_1^6 + 2 \Delta_1^4 \Delta_2^2 + \Delta_1^2 \Delta_2^4 + \Delta_1^2 \Delta_2^2 \Delta_3^2 \quad ,\nonumber\\
 {\mu_8\over 
\mu_0}&=&  \Delta_1^8 + 3 \Delta_1^6 \Delta_2^2 + 3 \Delta_1^4 \Delta_2^4 + \Delta_1^2 \Delta_2^6 + 2 \Delta_1^4 \Delta_2^2 \Delta_3^2 + 
 2 \Delta_1^2 \Delta_2^4 \Delta_3^2 + \Delta_1^2 \Delta_2^2 \Delta_3^4 + \Delta_1^2 \Delta_2^2 \Delta_3^2 \Delta_4^2 \quad ,\nonumber\\
\vdots && \quad\quad \vdots
\label{mu-Delta1}
\eea
Importantly,  $\mu_{2k}/\mu_0$ depend  only on the  recurrents $\Delta_1,\Delta_2\ldots \Delta_k$.
This allows us to solve for the recurrents iteratively, starting at $k=1$. As a result we obtain
the algebraic equations, 
\bea
\Delta_1^2&=& \bmu_2 \quad ,\nonumber\\
\Delta^2_2 &=& {\bmu_4\over \bmu_2} - \bmu_2 \quad ,\nonumber\\
\Delta_3^2 &=& {\bmu_4^2  -\bmu_2\bmu_6\over\bmu_2^3 - \bmu_2\bmu_4 } \quad ,\nonumber\\
\Delta_4^2 &=&  \bmu_2{(\bmu_4^3 + \bmu_6^2 + \bmu_2^2\bmu_8 - \bmu_4 (2\bmu_2 \bmu_6 + \bmu_8))\over
(\bmu_2^2 - \bmu_4 ) (\bmu_2 \bmu_6 - \bmu_4^2 )} \quad ,\nonumber\\
\vdots &=& \vdots
\label{Delta-mu}
\eea
In general $\mu_{2k}$ increase very rapidly (of order $(2k)!$) with $k$, while $\Delta_k$ are much smaller numbers which increase at a much slower rate. The presence of subtractions in Eq.~(\ref{Delta-mu}) imply that small relative errors in $\mu_{2k'}$ may result in  large relative errors in $\Delta_{k\ge k'}$~\footnote{We thank Snir Gazit for alerting us to this numerical challenge.}

\begin{figure}[!ht]
    \centering
  \includegraphics[width=0.6\columnwidth]{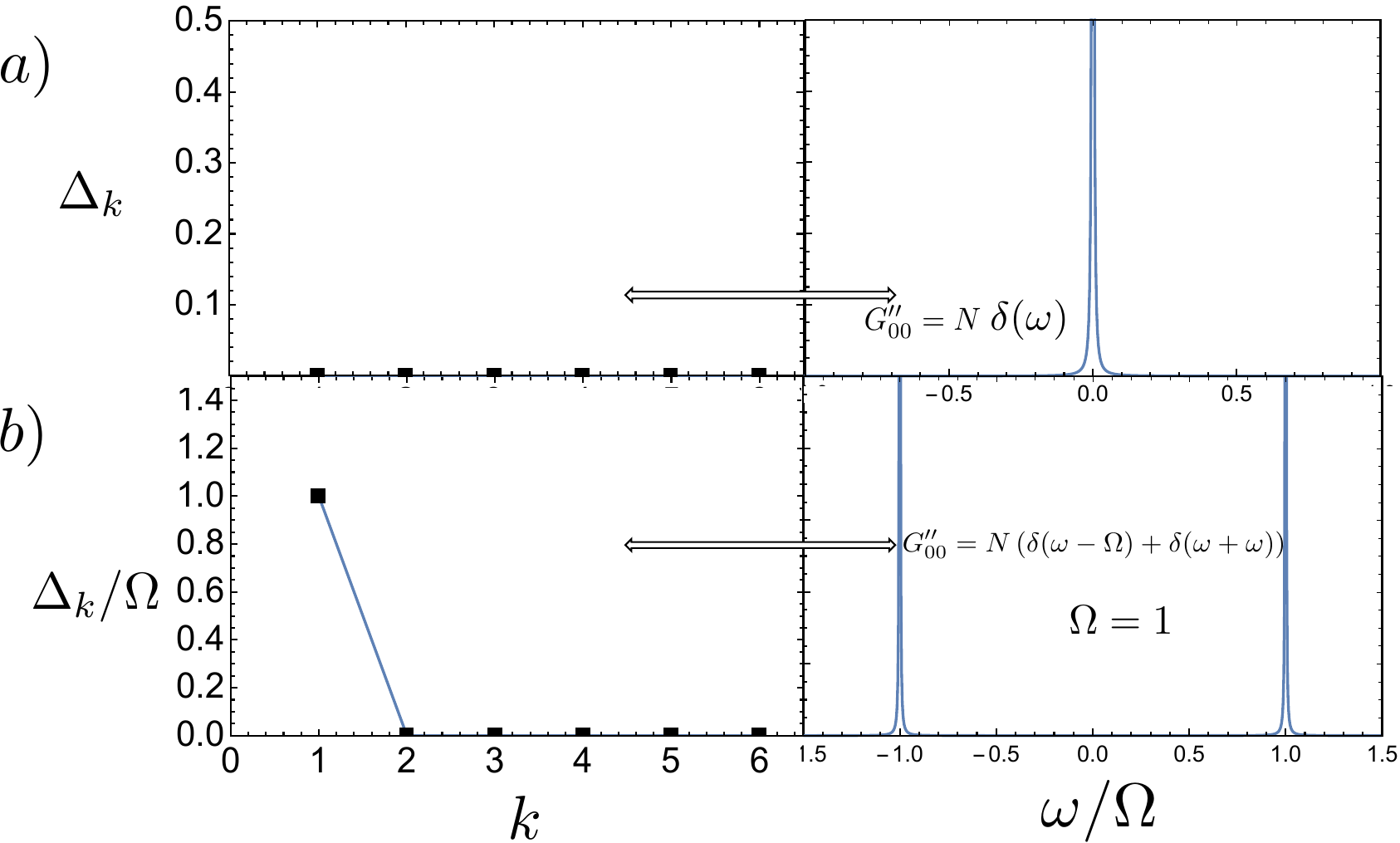}
    \caption{Single mode correlations.}   \label{fig:SMA}   
 \end{figure}

\subsection{Single Mode Spectra} 
Let us start with a single sharp mode at zero frequency as shown in Fig.~\ref{fig:SMA}(a),
\be
-G_{00}''(\omega) =  \pi \delta(\omega) \quad . 
\ee
All the recurrents vanish identically, since $[H,A]=0$.
\be
\Delta_k^2=0 \quad ,\quad \forall n \quad .
\ee
A single mode at finite frequency $|\Omega|>0$ is shown in Fig.~\ref{fig:SMA}(b). It is described by 
\be
-G_{00}''(\omega) =  \pi\left( \delta(\omega-\Omega)+\delta(\omega+\Omega)\right)/2 \quad . 
\ee
The only non-zero recurrent is $\Delta_1=\Omega$. Physically this occurs when  $J^x$ is an eigen-operator of the Liouvillian
\be
\cL J^x = \Omega J^x \quad .
\ee
and therefore applying $J^x$ to the Hamiltonian's ground state $|\Psi_0\rangle$ creates an   eigenstate $J^x|\psi_0\rangle$  with excitation energy  $\Omega$. 

An approximation  which {\em assumes} a spectral function which approximately is described by a $\delta$-function, is the single mode approximation. This approximation was used by Feynman~\cite{Feynman} to describe the roton minima in the spectrum of superfluid helium~\cite{Feynman}, by  Girvin, Macdonald and Platzman~\cite{GMP} to describe the magneto-roton excitation of the Fractional Quantum Hall phase, in Refs.~\cite{AAH,IEQM} to approximate the Haldane gap in a spin-one chain model.

\section{From recurrents to conductivities}
\label{sec:Recurrents-Cond}

The  longitudinal conductivity of Eq.~(\ref{Liixx}) is determined by the zeroth order spectral function,
\bea
L^{xx}(\omega) &=&-\mu_0 \lim_{z\to \omega+i0^+} G_{00}''(\omega +i0^+) \nonumber\\
&=& \pi   \mu_0 \kexpect{0}{\delta(\cL-\omega)}{0} \quad , \eea
where $L[\Delta_n]$ is the infinite tridiagonal matrix of Eq.~(\ref{Lmn}), and $G_{00}(z)$
is defined as a CF in Eq.~(\ref{CF2}).

In practice, the calculation of $\Delta_n$ is limited to a finite sequence. However, truncating the CF at any finite order does not lead to a continuous function, but to a sequence of $\delta$-functions. In effect, determining $G_{00}''$ amounts to inversion of the moments series, which is in general and unsolved problem. 

CF however allow us to
design extrapolation schemes which may be suitable for certain class of physical problems, which provide additional information about the desired  $G_{00}''$.  For example, $-G''_{00}(\omega)$ is a positive and symmetric function, whose all its recurrents are finite, non-negative numbers. 

Since extrapolation is a tricky art, it is useful to first learn about rigorous results  relating  high frequency asymptotics of $G_{00}''$ and high order recurrents. 

\subsection{Freud's high order asymptotics}
For a large class of smooth spectral function $-G_{00}''(\omega)$ with support on the whole  frequency axis~\footnote{We thank Ari Turner for explaining to us the mathematical literature reviewed in this section}, Freud~\cite{Freud} has conjectured an asymptotic relation between the high frequency asymptotic fall-off of the spectral function,  and the   asymptotic behavior of the high order  recurrents $\Delta_k$ as  $k\to \infty$. 

This conjecture was proven by Lubinsky, Mhaskar and Saff (LMS)~\cite{Lubinsky}, for spectral functions described by
\be
-G_{00}''(\omega)=\exp\left(-Q(\omega) \right) \quad .
\label{GQ}
\ee
The fall-off exponent $Q$ is assumed to satisfy the following conditions:

\begin{enumerate}
    \item  $Q(\omega)=Q(-\omega)$.    
    \item  $Q'(\omega)$ exists for $\omega\ne 0$, and $\omega Q'(\omega)$ is bounded near the origin as $\omega\to 0^+$. Furthermore, $Q''$ exists for large enough $\omega$. 
    
    \item For finite values of $C>0, \alpha>0$  at large enough $\omega$,
\bea
&&Q'(\omega)>0 \quad ,\nonumber\\
&&\omega^2 |Q''(\omega)|/Q'(\omega)\le C \quad ,\nonumber\\
&&
\lim_{\omega\to \infty} \left( 1+ \omega   Q''(\omega) /Q'(\omega) \right)=\alpha>0 \quad .
\eea
\end{enumerate}

\begin{figure}[!ht]
    \centering
  \includegraphics[width=0.6\columnwidth]{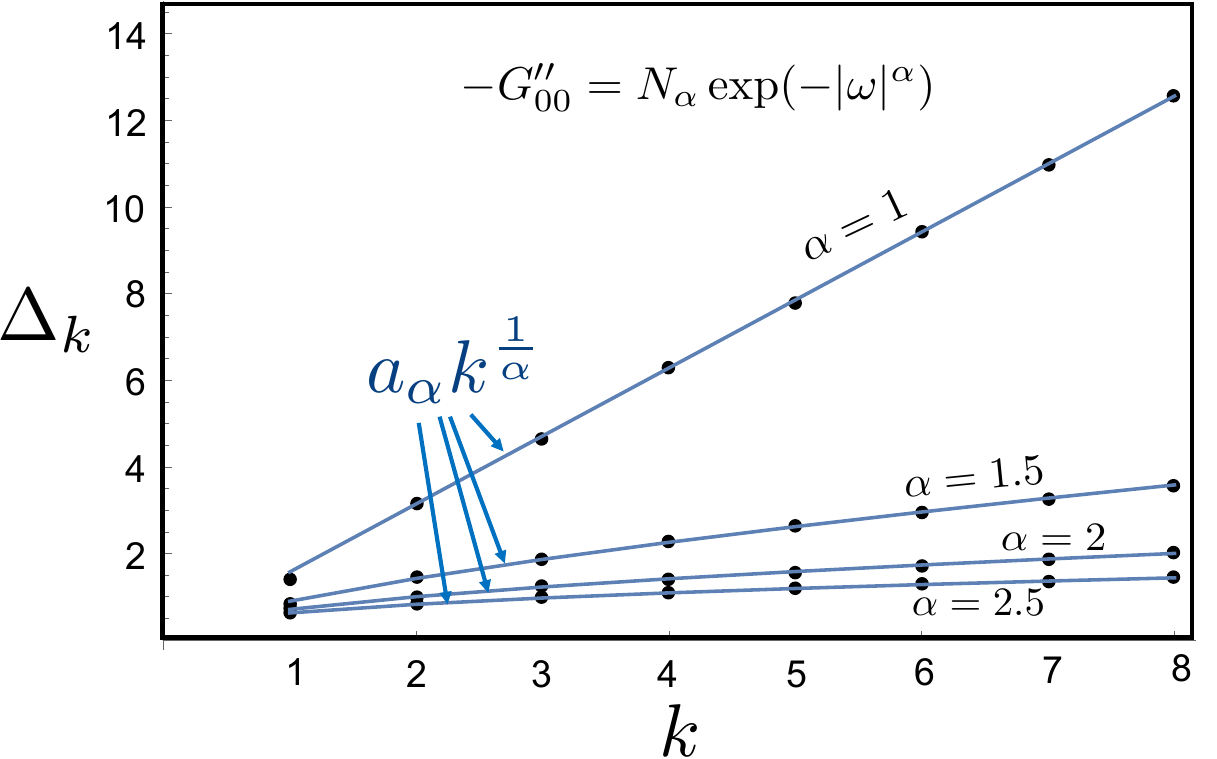}
    \caption{The recurrents of the pure stretched exponentials (black circles), and Freud's asymptotic expression (blue  lines) 
    as governed by Eq.~(\ref{Freud}). The differences between the values are plotted in Fig.~\ref{fig:FreudErrors}.
 \label{fig:Freud}}   
 \end{figure}

LMS proved that the corresponding recurrents $\Delta_n$ for Eq.~(\ref{GQ})  exhibit the asymptotic behavior,
\bea
\lim_{k\to \infty}  \Delta_k  &\sim &a_\alpha k^{1\over \alpha}\nonumber\\
a_\alpha&=& {1\over 2} \left({\sqrt{\pi}\Gamma\left({\alpha\over 2}\right) \over     \Gamma\left({\alpha+1\over 2}\right) } \right)^{1\over \alpha} \quad .
\label{Freud}
\eea

Interestingly, for pure stretched exponential
\be
Q(\omega,\alpha)= |\omega|^\alpha \quad , 
\ee
as shown in Figs.~\ref{fig:Freud} and \ref{fig:FreudErrors}, 
the exact  low order recurrents  are close but slightly different from the analytic asymptotic values of Eq.~(\ref{Freud}), except for the pure Gaussian  $\alpha=2$, for which Eq.~(\ref{Freud}) is exact.     

\begin{figure}[!ht]
    \centering
  \includegraphics[width=0.6\columnwidth]{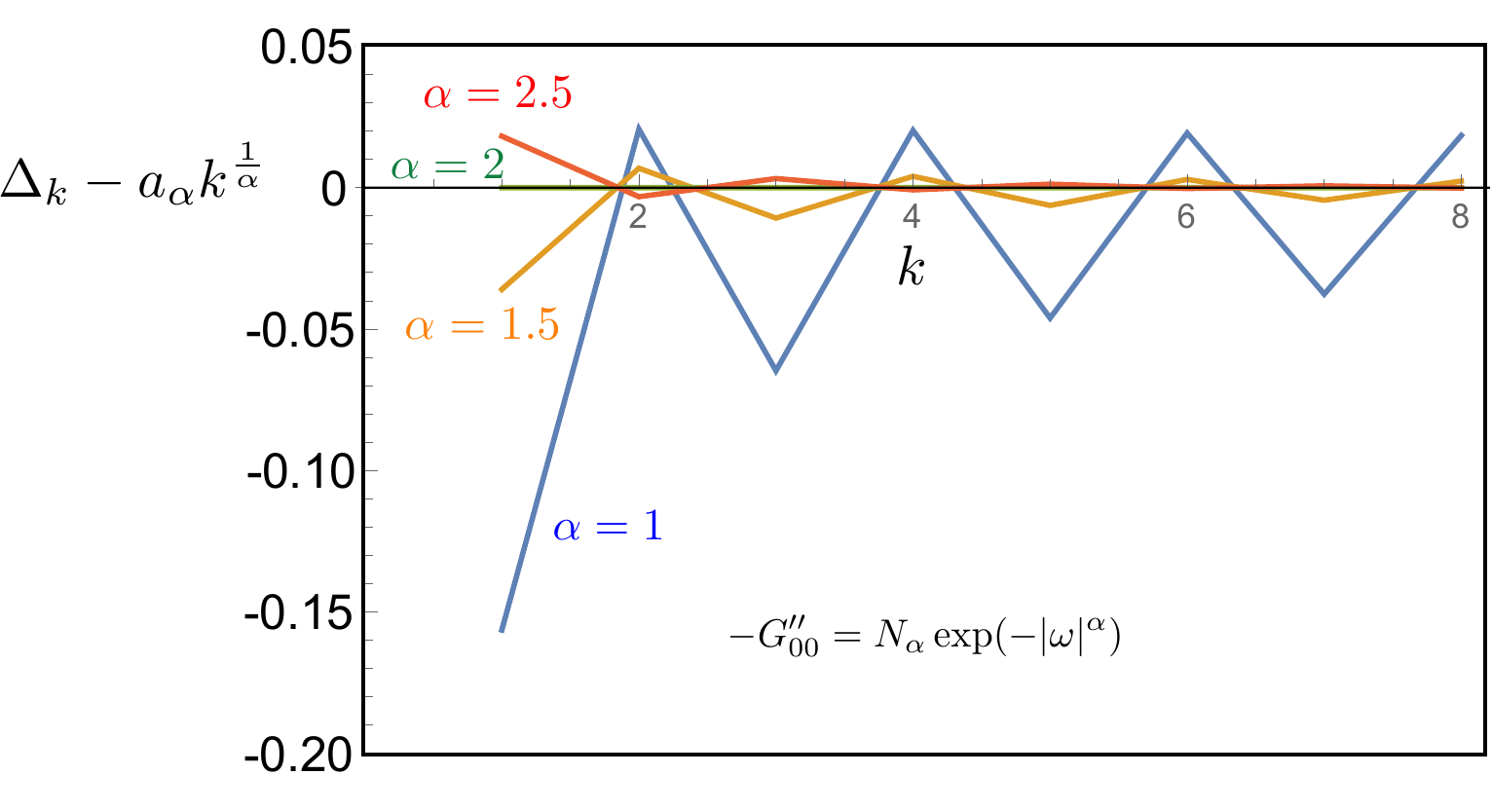}
    \caption{The difference between the true recurrents of the stretched exponential and asymptotic values of Eq.~(\ref{Freud}), which decrease with the order $k$. Note that the asymptotic values are exact only for the Gaussian $\alpha=2$.
 \label{fig:FreudErrors}}   
 \end{figure}

In general, since the moments and recurrents are non-negative, by Eqs.~(\ref{mu-Delta1}),
the moments grow at least as fast as 
\be
\mu_{2k}\ge (k!)^{2 / \alpha} \quad .
\ee
Due to assumed continuity of $G_{00}(t)$ on the real-time axis,  by Eq.~(\ref{muMAX}), the moments cannot grow faster than $\mu_{2n}\le (2n)! $.

This implies that for continuous and bounded response functions,  the asymptotic power $\alpha$ of $Q(\omega)$   
is limited to
\be
1\le \alpha\le \infty \quad .
\ee

\subsection{Termination functions}
Assume that we  have calculated a finite set of recurrents
\be
\{\Delta \}_1^{k_{\rm max}}= (\Delta_1,\ldots \Delta_{k_{\rm max}}) \quad .
\label{Delta-n2}
\ee
This finite set is not sufficient to describe the continuous function $G_{00}''(\omega)$.  Extrapolation of recurrents is tantamount to finding an accurate termination function  $\bar{G}^>_{k_{\rm max}}(z)$ such that,
\begin{equation}
G_{00}(z)\simeq {1\over z - {\vvv\textstyle{\Delta_1^2} \over \textstyle{z \, - \, } {\vvv\textstyle \Delta_2^2\over { \textstyle {\quad\ddots} \atop \qquad\qquad 
{\textstyle{\Delta_{k_{\rm max}-1}^2} \over \vvv\textstyle{z - \Delta_{k_{\rm max}}^2 \bar{G}^>_{k_{\rm max}}(z;[\alpha])}}} } } }\quad .
\label{CF-term}
\end{equation} 
$\bar{G}^>_{k_{\rm max}}(z)$ is the normalized CF which contains only the higher order  recurrents $\bar{\Delta}_k,k\ge k_{\rm max}+1$,
\begin{equation}
\bar{G}^>_{k_{\rm max}}(z)= {1\over z - {\vvv\textstyle{\bar{\Delta}_{k_{\rm max}+1}^2} \over \textstyle{z \, - \, } {\vvv\textstyle \bar{\Delta}_{k_{\rm max}+2}^2\over 
\textstyle{z \, - \, } {\vvv\textstyle \bar{\Delta}_{k_{\rm max}+3}^2 \over
{ \textstyle {\vdots} } } } }}\quad .
\label{CF-termFunc}
\end{equation}

If $\{\bDelta\}_{k},~k=0,\ldots\infty$ are the known recurrents of a complex variational spectral function $\bar{G}_{00}(z)$.
The variational reurrents  are used to produce the termination function by iteratively inverting the CF of the complex function $\bar{G}_{00}(z)$,
\bea
\bar{G}^>_{11}(z) &=&  {1\over \bDelta_1^2} \left( z-{1\over \bar{G}_{00}(z)}\right)\nonumber\\
\vdots &=& \vdots\nonumber\\
\bar{G}^>_{k_{\rm max}} &=&    {1 \over \bDelta_{k_{\rm max}-1}^2 } \left(  z-{1\over \bar{G}^>_{k_{\rm max}-1 }(z)}\right) \quad.
\label{TF}
\eea
We emphasize that or the inversion, both real and imaginary parts of $\bar{G}_{00}(\omega+i\ve)$ are required, where the real part is obtained by a Kramers-Kronig transformation,
Eq.~(\ref{KK}), of the imaginary part.

Termination functions can be used to study  the effects of  low order recurrents on the frequency dependence of $G_{00}''(\omega)$. 

\subsection{Low frequency behavior}  
By  LMS theorem, the high order recurrents can
determine  high frequency decay of  a large class of spectral functions. Conversely, low frequency behavior can be deduced in certain cases from the low order recurrents. We first demonstrate the low frequency effects of varying  the first recurrent, and then we  show the low frequency effects of alternating even-odd deviations of recurrents
from their high order asymptotic behavior.

The recurrents of the semicircle spectral function function,
\be
-G_{00}''(\omega) = {1\over 2\Delta^2}\sqrt{2\Delta^2-\omega^2} \Theta(2\Delta^2-\omega^2)
\label{SC}
\ee
are,
\be
\bar{\Delta}^{\rm sc}_k=\Delta \quad ,\quad k=1,2,\infty \quad .
\label{SCT}
\ee

The recurrents of a Gaussian spectral function are,
\be
-G_{00}''(\omega) = \sqrt{\pi} e^{-\omega^2}
\ee
are,
\be
\bar{\Delta}^{\rm Gauss}_k=\sqrt{k\over 2} \quad ,\quad k=1,2,\infty \quad .
\label{GT}
\ee

\begin{figure}[!ht]
    \centering
  \includegraphics[width=0.7\columnwidth]{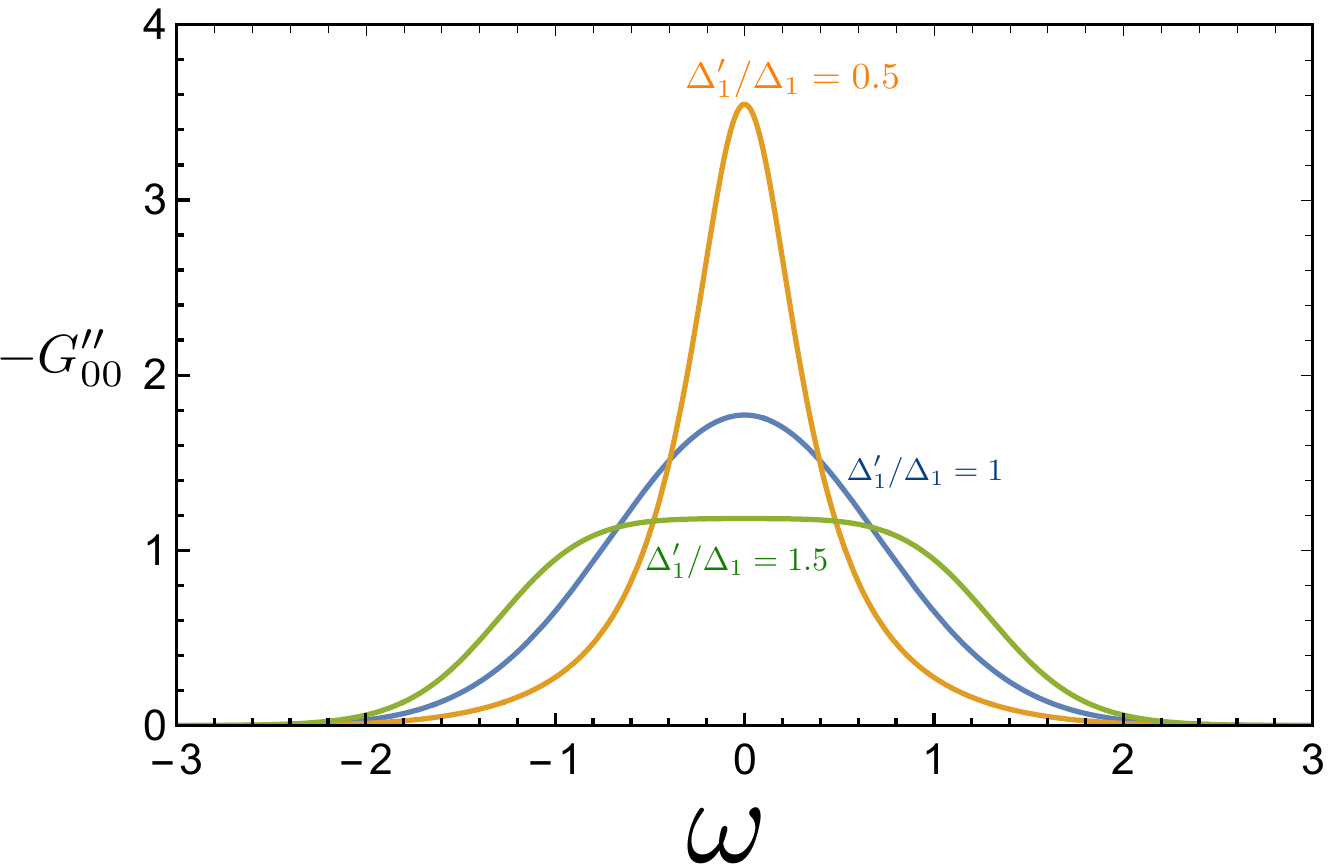}
    \caption{Changing just the first recurrent $\Delta_1'$ in the Gaussian spectral function.}
    \label{fig:GaussD1}
\end{figure}

\begin{figure}[!ht]
    \centering
  \includegraphics[width=0.7\columnwidth]{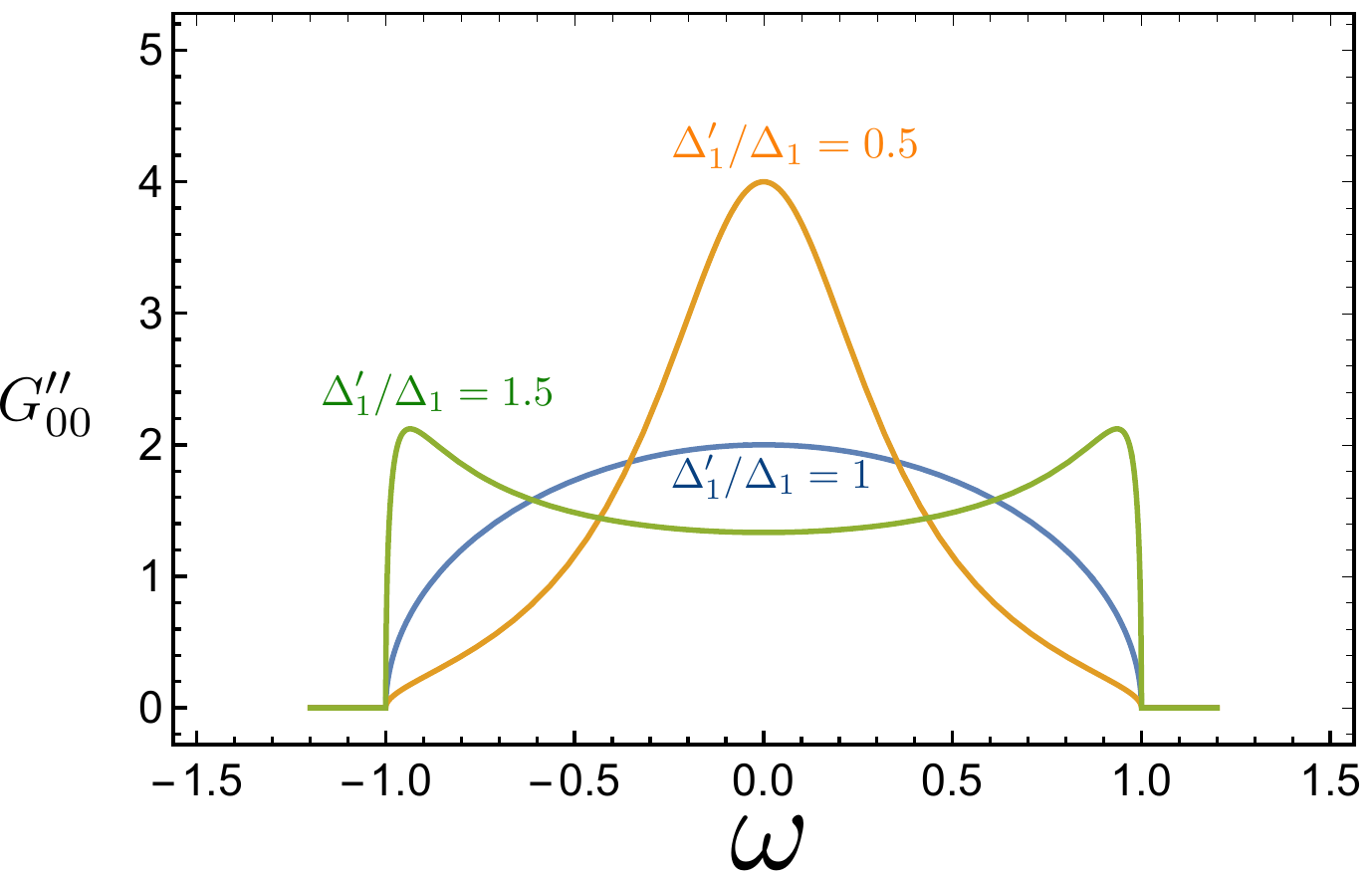}
    \caption{Changing just the first recurrent $\Delta_1'$ in the semicircle spectral function.
    }
 \label{fig:SCD1}  
 \end{figure}
As noted by the memory function approach, in the weak scattering limit, the first order recurrent plays an important role in determining the low frequency conductivity. 

The first recurrent $\Delta_1$ 
dominates the DC conductivity.
 This is demonstrated by varying $\Delta_1'$ in  Fig. \ref{fig:GaussD1}, keeping the higher order recurrents the same.  For both the Gaussian in Fig.~\ref{fig:GaussD1}, and the semicircle in  Fig.~\ref{fig:SCD1}, the DC limit of the resulting normalized modified functions vary with $\Delta_1'/\Delta_1$ as
\be
{ G_{00}''(0)\over \bar{G}_{00}''(0)} = {(\Delta_1')^2  \over \Delta_1^2 } \quad .
\label{RatioSxx}
\ee

 \begin{figure}[!ht]
    \centering
  \includegraphics[width=0.7\columnwidth]{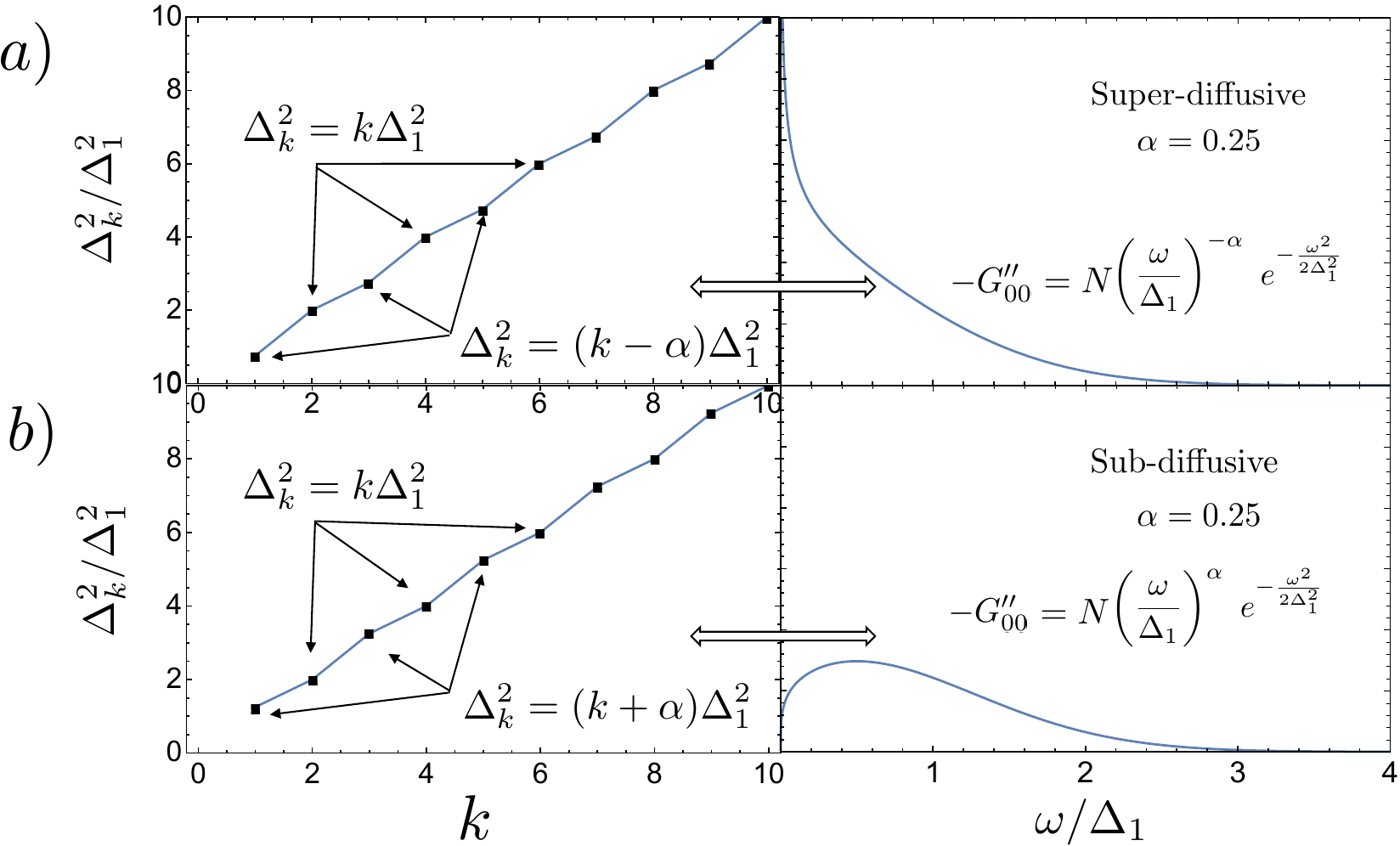}
    \caption{Continued fraction recurrents of PG  functions. High frequency Gaussian tail is reflected in the average linear increase of $\Delta_k^2$. Low frequency divergence (vanishing) in the super-diffusive (sub-diffusive) functions is signalled in the the oscillation of even and odd recurrents   starting at the lowest orders. 
 \label{fig:PGauss}}      
\end{figure}

Stronger modifications of the  low frequency behavior is produced by  even-odd deviations of the recurrents from the asymptotic form.  This is demonstrated by the power-law times Gaussian function (PG), as discussed in detail in Ref.~\cite{Viswanath},
\be
 -  G_{\rm PG}''(\omega) =    {\sqrt{2}\pi\over \Omega\Gamma\left({\beta + 1\over2} \right) } ~\left| {\omega\over  \Omega }\right|^\beta \exp\left( -{\omega^2 \over  \Omega^2 }\right) \quad .
   \label{PGauss}
    \ee  
The recurrents of the PG have been evaluated  analytically,
    \be 
  \left( \Delta^{\rm PG}_k\right)^2 =  {1\over 2}   (k+\beta) ~ \delta_{k,{\rm odd}} ~ + ~{1\over 2} k ~\delta_{k,{\rm even}} 
   \ee
and are plotted versus their low order recurrents in Fig.~(\ref{fig:PGauss}).
The zero frequency conductivity vanishes or diverges depending on the sign of $\beta$. While the relative deviations from the asymptotic $k^{1/2}$  behavior at large $k$ decreases, the effects of the low order recurrents are dramatic at low frequencies. A similar effect is seen numerically for power law functions times stretched exponentials at $\alpha\ne 2$~\cite{Viswanath}.

\subsection{Addition of spectral functions with 
different frequency scales}
   
\begin{figure}[!ht]
    \centering
  \includegraphics[width=0.7\columnwidth]{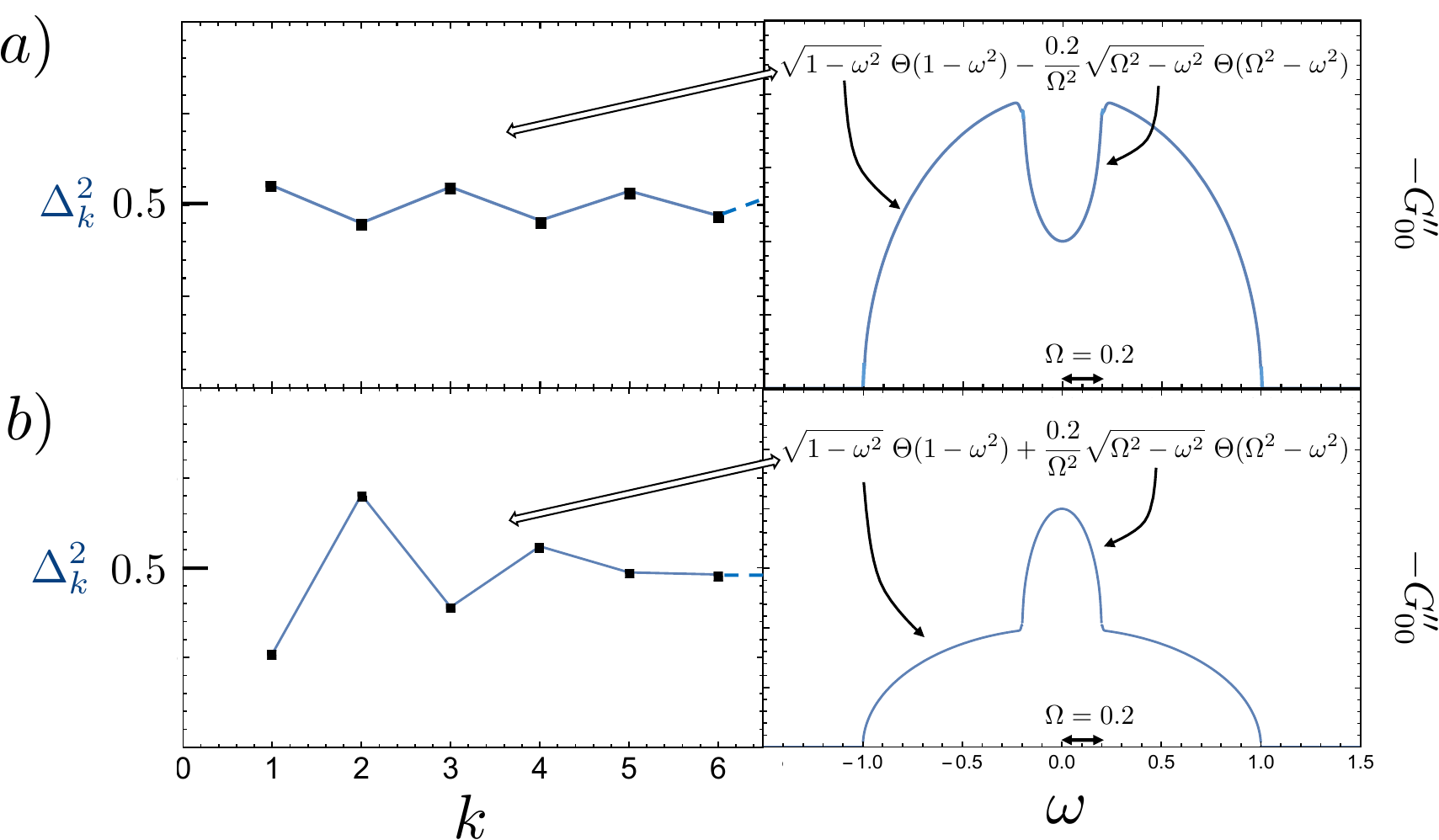}
    \caption{General example of low frequency effects. Top Frames:  low frequency suppression of a semicircle function, corresponds to odd order recurrents larger than even orders. Bottom frame: low frequency enhancement  corresponds to even order recurrents higher than odd orders. This is qualitatively similar to the effects seen in the recurrents of PGauss functions in Fig.~\ref{fig:PGauss} }
    \label{fig:SCpm}
\end{figure}

The low frequency regime of the response function
affects the behavior of the recurrents in a more  complicated way  than the high frequency asymptotics.   For example we consider
a sum of two semicircles with different energy scales $\Omega_2\ll \Omega_1$ 
\be
G''_{\rm 2sc}(\omega)=   G_{\rm sc}''(\omega;\Omega_1) \pm  a_2 G''_{\rm sc}(\omega;\Omega_2) \quad ,
\ee
where $G_{\rm sc}''(\omega;\Omega)$ is given by Eq.~(\ref{SC}). 
$a_2$ may be positive (for enhanced DC conductivity), or negative, 
for suppressed DC conductivity. 
The moments are additive,  
\be
\mu^{\rm 2sc}_{2k} = \mu_{2k}^{\rm sc1} \pm a_2 \mu_{2k}^{\rm sc1} \quad . 
\ee
(where we must use $a_2<1$, since all moments must be non-negative).
After hashing the algebraic relations (\ref{Delta-mu}), the combined recurrents are highly entangled. The effects of summing two spectral functions are shown in
Fig.~\ref{fig:SCpm}. The addition of two spectral functions results in  even-odd oscillations of the recurrents which persist to high orders, although its signature is already observed at low orders.

\subsection{Variational Extrapolation of Recurrents}
\label{sec:VER}
Since nested  commutators of the Hamiltonian with the current create a factorial growth of number of operators,
the computational cost of high order moments and recurrents in many cases increases faster than exponentially. 

Here we describe the Variational Extrapolation of Recurrents as a possible scheme which allows for a numerical test of its convergence.
 
Having computed $\Delta_1,\ldots, \Delta_{n_{\rm max}}$, may lead to a reasonable guess for
the possible asymptotic behavior at large and low frequencies. We choose a family of analytic variational spectral functions $\bar{G}_{00}''(\omega;\{\alpha \})$,
which can be parametrized by variational parameters
$\alpha_1,\ldots \alpha_{m_{\rm var}}$.

Using Kramers-Kronig relation (\ref{KK}) find the real part of $\bar{G}_{00}(\omega+i0^+)$, 
\be
\bar{G}_{00}'(\omega;\{\alpha\})= {1\over \pi} \int \limits_{-\infty}^{\infty} d\omega' \bar{G}_{00}''(\omega;\{\alpha\}) \quad ,
\label{bG00}
\ee
and calculate the variational recurrents $(\bDelta_1\ldots \bDelta_{n_{\rm max}})$ from the lowest $2n_{\rm max}$ moments of $\bar{G}''(\omega)$.

The variational parameters are obtained by minimizing the least squares function with respect to $\{\alpha\}$,
 \be
\chi^2= \underset{\{\alpha_1,\ldots\alpha_{m_{\rm max}}\}}{\rm min} {1\over n_{\rm max}-n_{\rm min}  } \sum\limits_{n=n_{\rm min}+1}^{n_{\rm max}} \left( {\Delta_n -  {\bDelta}_n(\{\alpha\})\over \Delta_n}\right)^2 \quad .
\label{eq:LS}
\ee

The lower cut-off $n_{\rm min} \ge 1 $ is  chosen to preferentially fit the recurrents of $\bar{G}_{00}$ to the higher orders of the calculated recurrents. The total number of variational parameters $m_{\rm var}$ must be smaller than $n_{\rm max}-n_{\rm min}$ to avoid overfitting.

Inverting Eq.~(\ref{CF-term}) provides  $\bar{G}_{n_{\rm max},n_{\rm max}}^>(z)$ which directly provides the Variational Extrapolation of Recurrents (VER) approximation for $G_{00}$ (see Refs.~\cite{LA,Ilia}
\begin{equation}
G^{\rm ver}_{00}(z)\simeq {1\over z - {\vvv\textstyle{\Delta_1^2} \over \textstyle{z \, - \, } {\vvv\textstyle \Delta_2^2\over { \textstyle {\quad\ddots} \atop \qquad\qquad 
{\textstyle{\Delta_{k_{\rm max}-1}^2} \over \vvv\textstyle{z - \Delta_{k_{\rm max}}^2 G^>_{k_{\rm max}}(z;[\alpha])}}} } } }\quad .
\label{VER}
\end{equation} 
The spectral function is given by,
\begin{equation}
\Im ~G^{\rm ver}_{00}(\omega+i0^+)\simeq \Im {1\over \omega - {\vvv\textstyle{\Delta_1^2} \over \textstyle{\omega \, - \, } {\vvv\textstyle \Delta_2^2\over { \textstyle {\quad\ddots} \atop \qquad\qquad 
{\textstyle{\Delta_{k_{\rm max}-1}^2} \over \vvv\textstyle{\omega - \Delta_{k_{\rm max}}^2 \bar{G}^>_{k_{\rm max}}(\omega+i0^+;[\alpha])}}} } } }\quad .
\label{VERpp}
\end{equation} 
Eq.~(\ref{VERpp}) shows that the imaginary part of $G_{00}$ is due to the complex termination function with $\Im ~\bar{G}_{k_{\rm max}}\ne 0$.
The quality of the VER can be tested by two criteria: (i) The  $\chi^2$ in (\ref{eq:LS}) should be much less than unity.   
(ii) Computing some additional recurrents $\Delta_{k> k_{\rm max}}$ and comparing them to the extrapolated $\tilde{\Delta}_k$.

\newpage
\part{Thermodynamic formulas of Hall coefficients}
\label{Part:Thermo II}
For simplicity, the magnetic field is chosen along the $z$ axis and $C_4$ symmetry is assumed in the $XY$ plane. The Hall-type coefficients are the
electric Hall coefficient $R_{\rm H}$, the  Nernst 
Coefficient $N$, and the thermal Hall coefficient $R_{\rm TH}$:
\be
\begin{split}
R_{\rm H} &= \left. {1\over (\sigma^{\rm dc}_{xx})^{2}} {d\sigma^{\rm dc}_{xy} \over dB} \right|_{B=0} \quad ,\nonumber\\
N  &= \left.  {1\over \sigma^{\rm dc}_{xx}} {d\alpha^{\rm dc}_{xy} \over dB} \right|_{B=0}- R_{\rm H} ~\alpha^{\rm dc}_{xx} \quad ,\nonumber\\
R_{\rm TH} &= \left.  {1\over (\kappa^{\rm dc}_{xx})^2  } {d\kappa^{\rm dc}_{xy} \over dB} \right|_{B=0} \quad .
\end{split}
\label{HallCoeff}
\ee
The coefficients in Eq.~(\ref{HallCoeff})  exist if we assume that 
\be
\lim_{B\to 0}\sigma_{xx} \quad , \quad \lim_{B\to 0} \kappa_{xx} >0 \quad .
\ee
That is to say, the expressions apply to gapless dissipative phases, and do not apply to superconductor, insulator, or  quantum Hall phases.  
In addition we also assume no spontaneous magnetization at zero magnetic field, which implies no anomalous Hall effect.

Straightforward Kubo formula calculations of Eq.~(\ref{HallCoeff}) demand  determination of both longitudinal and Hall conductivities. The summation formulas~\cite{EMT,EMTPRL}
derived below, replaces the difficulties of DC Kubo formulas by calculations of thermodynamic coefficients.

 \section{Magnetic Field Expansion of Hall-type Conductivity}
\label{sec:dSdB}

The derivative  with respect to the magnetic field is very difficult to perform using the Lehmann representation, which requires taking derivatives of current operators, wavefunctions, and eigenenergies.

It is therefore  much more convenient to differentiate by parts the DC Hall conductivities as written in Eq.~(\ref{DPP-full1}), 
\bea
\left({dL_{ij}^{xy}(\ve,V)\over dB}\right)_{\bB=0} &=&
{1\over \cV} \Im ~{d\over dB} \Tr ~\rho ~\  \left[P_i^x-(\tilde{P}_i^x)_{\ve},  P_j^y-(\tilde{P}_j^y)_\ve \right]  \nonumber\\
&=& \cancelto{0}{\Xi_{\rm TR}} + \Xi_{\cL} \quad . 
\eea
We assume OBC in order to define uniform polarizations. OBC are also required to continuously vary the magnetic field $B$ and avoid Dirac's quantization~\cite{RXY-PRB}.

$\Xi_{\rm TR}$ vanishes under the trace by even time reversal symmetry of $H_0,\rho, P^x_i,P^y_j$ at $B=0$,  and  odd symmetry of ${d\rho\over dB}, {dP^x_i\over dB},{dP^x_i\over dB}$:
\bea
\Xi_{\rm TR} &=& {1\over \cV} \Im \cancelto{0}{\Tr    {d\rho\over dB}~\left[  P_i^x-\tilde{P}_i^x,P_j^y-\tilde{P}_j^y \right]} + {1\over \cV}\Im \cancelto{0}{\Tr \rho \left[  (1-\Theta_\ve){dP_i^x\over dB} ,(1-\Theta_\ve)  P_j^y  \right]}\nonumber\\
&&  + {1\over \cV}\Im \cancelto{0}{ \Tr \rho 
\left[  (1-\Theta_\ve) P_i^x  ,(1-\Theta_\ve) {dP_j^y\over dB}  \right]}   =0 \quad .
\eea
The remaining term $\Xi_\cL$ is calculated by differentiating the hyper-projector (\ref{HP}) with respect to $B$,
\bea
 {d \Theta_\ve(B)\over dB}   &=&  
{\ve \over \cL^2 + \ve^2} \left( \cM \cL + \cL \cM \right) {\ve \over \cL^2 + \ve^2}  \nonumber\\
&=& (-i) {\ve \over \cL^2 + \ve^2} \cM  {\ve \over \cL^2 + \ve^2} J_i + 
{\ve   \over \cL^2 + \ve^2}\cL \cM  {\ve \over \cL^2 + \ve^2} \quad .
\label{Der2}
\eea
The hypermagnetization is defined by
\be
\cM \equiv  -{\partial \cL\over dB} = [M^z, \bullet] \quad ,
\ee
where $M^z = - {dH_0\over dB}$.
\footnote{Note: Under TR, $M^z\to -M^z$,  and  the commutators in the response functions reverse their order $[A,B] \to  [{\rm TR}(B),{\rm TR}(A)]$. Therefore the hypermagnetization $\cM$ is even under TR. }

Thus we obtain,
\bea
\Xi_{\cL} &=& {1\over \cV}  \Im  \Tr  \rho \left( \left[P_i^x-   (\tilde{P}_i^x)_\ve, -{d\Theta_\ve\over dB}P_j^y,\right]  -
\left[P_j^y-   (\tilde{P}_j^y)_\ve, -{d\Theta_\ve\over dB} P_i^x \right] \right) \nonumber\\
&=&  {1\over \cV}  \Im  \Tr  \rho \left( \left[\left({1\over \cL}\right)_\ve' J_i^x,    {\ve \over \cL^2 + \ve^2} \cM \cL {\ve \over \cL^2 + \ve^2}  P^y_i  + i {\ve \over \cL^2 + \ve^2} \cL \cM  {\ve \over \cL^2 + \ve^2} P^y_i\right]  + \left(J_i^x \leftrightarrow  J_j^y\right) \right)\nonumber\\
&=& - {1\over \cV}  \Im \left( J_i^x \left| \left( {1\over \cL }\right)_\ve'' \cM  \left({1\over \cL }\right)_\ve'' \right| J_j^y\right)+ \cancelto{0}{\Xi_0(J_i^x,J_j^y)} -   \left(J_i^x \leftrightarrow  J_j^y\right) \quad .
\label{XiL}
\eea
In the last row of (\ref{XiL}), we applied Eq.~(\ref{NoLehmann}) to reconstruct  the
OHS current matrix element. The vanishing of $\Xi_0$ is proven as follows. Using the
hermiticity of $\cL$ (\ref{hermiticity}), we can write:
\bea
\Xi_0 &=&  {i\over V}  \Im \left( J_i^x \left|     \left( {1\over \cL }\right)_\ve'' ~\cL  \cM ~ \left({1\over \cL }\right)_\ve''\right| P_j^y \right)    \nonumber\\
&=&  {i\over V}  \Im \left( ~    \cL  \left( {1\over \cL }\right)_\ve''   ~J_i^x   \left| ~  \cM  \left({1\over \cL }\right)_\ve''~\right. P_j^y\right) \quad .
\eea
Next, we will prove that,
 \be
  \left({1\over \cL }\right)_\ve'' \cL  ~\big|J^\alpha_i\Big)  =0+\cO(\ve) \quad .
 \label{xDx}
 \ee
The inner product of the hyperstate given by Eq.~(\ref{xDx}) with any Krylov basis hyperstate $\kbra{ n; J_i^\alpha}$ belonging to the same root current is, 
 \bea
\kexpect{n; J_i^\alpha} {\left({1\over \cL }\right)_\ve'' \cL}  {J^\alpha_i} &=&  \sqrt{\chi_{\rm csr}}~  
 \kexpect{n; J_i^\alpha} {\left({1\over \cL }\right)_\ve'' \cL}  {0;J_i^\alpha} \nonumber\\
 &=&\sqrt{\chi_{\rm csr}}~ \Delta_1~\Im G_{n1}=0 +\cO(\ve) \quad ,
  \eea
since $G_{n1}$ is purely real, as proven in Eq.~(\ref{G1n-useful}).
 
 In contrast to Eq.~(\ref{xDx}), the surviving terms in (\ref{XiL}) do not vanish since, 
 \bea
 \kbra{n; J_i^\alpha} \cL\left({1\over \cL }\right)_\ve'' \ |P^\alpha_i) &=& \kbra{n; J_i^\alpha} \left({1\over \cL }\right)_\ve'' \cL \ |P^\alpha_i) 
 \nonumber\\
 &=&-i \sqrt{\chi_{\rm csr}} \ \kexpect{n; J_i^\alpha} {\left({1\over \cL } J_i^\alpha \right)_\ve''} {1;J_i^\alpha} \nonumber\\
 &=& -i \sqrt{\chi_{\rm csr}} ~\Im G_{n,0} = -i \delta_{n,2k} \chi_{\rm csr} G_{00}'' R_{k}~  \ne 0 \quad .
 \eea
where the last identity uses  Eq.~(\ref{G02k}). 

Thus we obtain
\be
 \left({dL_{ij}^{xy}(\ve,V)\over dB}\right)_{\bB=0} = -   {2\over \cV}  \Im \left( J_i^x \big| \left( {1\over \cL }\right)_\ve'' \cM  \left({1\over \cL }\right)_\ve'' \big| J_j^y\right)  +\cO(\ve) \quad .
\label{dSdB2}
\ee

It is possible to evaluate Eq.~(\ref{dSdB2}) by inserting two Krylov bases resolutions of identity,
 \be
 \sum_{n=0}^\infty \kket{n;J_i^\alpha} \kbra{n; J_i^\alpha} = \mathbb{I}_{i\alpha} \quad , 
 \label{ROI}
 \ee
 where $\mathbb{I}_\alpha$ is the projector onto  OHS subspace spanned by the hyperstates $\cL^k | J_i^\alpha)$ for $  0\le k\le \infty$.
Using the $C_4$ symmetry, we obtain
 \bea
   \left({dL_{ij}^{xy}(\ve,\cV)\over dB}\right)_{\bB=0} &=&  -   {2\chi_{\rm csr}\over \cV}  \Im \sum_{nm} \kexpect{0;J_i^x} {\left( {1\over \cL }\right)_\ve''} {n;J_i^x} ~\cM^{ij}_{nm}~ \kexpect{m; J_j^y} {\left({1\over \cL }\right)_\ve''} {0;J_j^y} \nonumber\\
&=& -   {2\chi_{\rm csr}\over \cV}  \Im \sum_{nm} G_{0,n}'' ~\cM^{ij}_{nm}~ G_{m,0}'' \quad .
\label{dSdBnm}
\eea
where
   $\cM^{ij}_{nm}$ are the $z$-hypermagnetization normalized matrix elements between Krylov hyperstates,
   \be
   \cM^{ij}_{nm} = \Im \ \kexpect{n; J_i^x} {\cM} {m,J_j^y} \quad .
\ee
By Eq.~(\ref{G02k}), 
\be
\Im G^i_{0,n}= \delta_{n,2k} ~ R^i_k ~\Im G^i_{00} ,\quad    R^i_k =\prod_{k'=1}^k \left(-{\Delta^i_{2k'-1}\over \Delta^i_{2k'}}\right) \quad ,
\ee
and by Eq.~(\ref{Lsigma}),
\be
\Re L_{ii}^{xx} =\Re L_{ii}^{yy}= -\chi^i_{\rm csr} ~\Im G^i_{00} \quad .
\ee 
Thus the sum over $n,m$ in Eq.~(\ref{dSdBnm}) includes only even integers and results in the  summation formula,
\be 
 \left({L_{ij}^{xy}(\ve,\cV)\over dB}\right)_{\bB=0}
 = -2 {L_{ii}^{yy}(\ve,\cV) ~L_{jj}^{xx}(\ve,\cV)\over \sqrt{ \chi_{\rm csr}^{i}~\chi_{\rm csr}^{j} }}\sum_{kl} R_k^{(i)}R_{l}^{(j)}  \cM^{ij}_{2k,2l}+\cO(\ve) \quad .
 \label{dSdB3}
 \ee 

The longitudinal conductivities $L_{ii}^{xx}(\ve)$ and $L_{jj}^{yy}(\ve)$  factor out of the sum in Eq.~(\ref{dSdB3}). They produce the non-commuting DC limit of $\ve,1/\cV\to 0$.

Thus,  (unless longitudinal conductivities vanish, or the currents are completely separable as explained in subsection \ref{sec:separable}), 
the longitudinal conductivities may be divided out
and the Hall-type coefficients in Eqs.~(\ref{HallCoeff}) can be expressed as a 
$\ve$-independent thermodynamic coefficients.

\section{Hall Coefficient}
In the double summation over the Krylov bases we separate out the $n=0,m=0$ term  write the electric Hall coefficient as a sum of two terms,
\be
R_{\rm H} =  R_{\rm H}^{(0)} + R_{\rm H}^{\rm corr} \quad , 
\ee
where
\bea
R^{(0)}_{\rm H}& =& {\chi_{\rm cmc}\over \chi_{\rm csr}^2 } \quad , \nonumber\\
\chi_{\rm csr}&=&\lim_{\bq\to 0}\lim_{\cV\to \infty}~{\hbar\over \cV}  (j^x|j^x) \quad ,\nonumber\\
\chi_{\rm cmc} &=& -2 \lim_{\bq\to 0}\lim_{\cV\to \infty}~{\hbar\over \cV}  \Im (j^x|\cM|j^y) 
\quad ,\nonumber\\
R_{\rm H}^{\rm corr}&=& -{2 \over\chi_{\rm csr}} \sum_{kl} R_k R_l  \cM_{2k,2l} \left(1-\delta_{k,0}\delta_{l,0}\right) \quad .
\label{RH}
\eea 
$\chi_{\rm csr}$ is the zeroth moment of the longitudinal conductivity, which was defined in Section \ref{sec:moments}. It represents the kinetic energy of the constituent charge carriers. 
The current-magnetization-current (CMC) susceptibility,  $\chi_{\rm cmc}$,  measures the effect of the Lorentz force on the currents, as shown below. $R_{\rm H}^{(0)}$ reproduces Boltzmann's equation result for energy dependent scattering time~\cite{Abhisek}. 

The correction term $R^{\rm corr}_{\rm H}$ involves higher order recurrents and hypermagnetization matrix elements,  $\cM^{ij}_{nm}$ defined in (\ref{dSdBnm}).
The recurrents and Krylov operators involve  current non-conservation, caused by  disorder, hard core interactions and lattice Umklapp scattering. 
These terms are increasingly difficult to compute.
In order to be allowed to neglect them,  they must be estimated to be smaller in magnitude than $R_{\rm H}^{\rm corr}$.
In Section \ref{sec:tJM} and Section \ref{sec:HCB}, such estimates are obtained by calculation of the lowest order terms for certain lattice models of strongly interacting electrons and hard core bosons. 
 
The current-magnetization-current (CMC) susceptibility,  $\chi_{\rm cmc}$,  measures the effect of the Lorentz force on the currents, as shown below. $R_{\rm H}^{(0)}$ reproduces Boltzmann's equation result for energy dependent scattering time~\cite{Abhisek}. 
$R^{\rm corr}_{\rm H}$  includes the higher order corrections due to disorder, hard core interactions and lattice Umklapp scattering. For Hard Core Bosons, $R^{\rm corr}_{\rm H}$ will be partially evaluated in subsection \ref{sec:Rcorr-HCB}.  
\subsection{Weak scattering limit}
\label{sec:RH-WS}
For non-interacting band electrons with dispersion
$\epsilon_{\bk}$, the
CSR is given by Eq.~(\ref{CSR-NI}), and the CMC is
\be
\chi_{\rm cmc}=  {e^3  \over  c}    \sum_{\bk}   \left( -{\partial f_\bk^0 \over \partial \epsilon} \right) \left( v_\bk^{y} \left( v_\bk^{y} {\partial \over \partial k^{x}} - v_\bk^{x} {\partial \over \partial k^{y}}\right)  v_\bk^{x}\right) \quad .
\label{CMC-NI}
\ee
Thus, $R_{\rm H}^{(0)}= \chi_{\rm cmc}/\chi_{\rm csr}^2$ recovers Boltzmann equation result (\ref{RH-Boltz}) for isotropic lifetime.  $R^{\rm corr}\propto \Delta_1$, which depends on $[H,j^x]\ne 0$. Therefore the correction is relatively suppressed at low disorder.

\section{Modified Nernst Coefficient}
The Modified Nernst Coefficient $\bar{N}$ for $C_4$ symmetric Hamiltonians
is
\be
\bar{N} =   {1\over \sigma^{\rm dc}_{xx}\kappa^{\rm dc}_{xx}}{d\alpha_{xy}\over dB} = {1\over \kappa_{xx}^{\rm dc}}\left( N+R_{\rm H} \alpha_{xx}\right) \quad .
\ee
$\bar{N}$ can be expressed by the summation formula,
\be 
\bar{N} = \bar{N}^{(0)} + \bar{N}^{\rm corr} \quad ,
\ee
where 
\bea
\bar{N}^{(0)} &=&  {\chi^{th-el}_{\rm cmc}\over \chi_{\rm csr} \chi_{csr}^{\rm th} } \quad ,\nonumber\\
\chi^{\rm th}_{\rm csr}&=&\lim_{V\to \infty}~{\hbar\over \cV}  (j_Q^x|j_Q^x) \quad ,\nonumber\\
\nonumber\\
\chi^{\rm th-el}_{\rm cmc} &=& - {2 \over V} ~\Im (j_Q^x|\cM|j^y) \quad ,\nonumber\\
\bar{N}_{\rm H}^{\rm corr}&=&  -{2\over\sqrt{ \chi^{\rm th}_{\rm csr}\chi_{\rm csr} }} \sum_{kl} R^{\rm th}_k R_l    \cM^{\rm th-el}_{2k,2l} \left(1-\delta_{k,0}\delta_{l,0}\right) \quad ,\nonumber\\
\cM^{\rm th-el}_{nm} &=& \Im \ \kexpect{n;j_Q^x} {\cM} {m;j^y} \quad .
\label{barN}
\eea 
$\chi_{\rm csr}^{\rm th}$ and $\cM^{\rm th-el}_{nm}$ are
the thermal CSR,  
and  thermoelectric hypermagnetization matrix elements respectively.
$\bar{N}$ is therefore expressed as a thermodynamic coefficient similar to $R_{\rm H}$, and $R_{\rm TH}$ which follows below.

\section{Thermal Hall coefficient}
The thermal Hall coefficient is, 
\be 
R_{\rm TH} =R^{(0)}_{\rm TH}  + R_{\rm TH}^{\rm corr} \quad ,
\ee
where 
\bea
R^{(0)}_{\rm TH}& =& T { \chi^{th}_{\rm cmc}\over (\chi^{\rm th}_{\rm csr})^2 } \nonumber\\
 R_{\rm TH}^{\rm corr}&=&  -{2 T \over \chi^{\rm th}_{\rm csr} }\sum_{kl} R^{\rm th}_k R^{\rm th}_l   \cM^{\rm th-th}_{2k,2l} \left(1-\delta_{k,0}\delta_{l,0}\right)\nonumber\\
\cM^{\rm th-th}_{nm} &=& \Im \ \kexpect{n;j_Q^x} {\cM} {m;j_Q^y} \quad .
\label{RTH}
\eea 

\section{Calculating the correction terms}
\label{sec:Rcorr}
The correction  terms  in Eqs.~(\ref{RH},\ref{barN},\ref{RTH}) depend on
recurrents $\Delta_n, n=1,2,\ldots$, which demand calculations of moments $\chi_{\rm csr}, \mu_2, \mu_4 ,\ldots $, and  hypermagnetization matrix elements $\cM_{2k,2l}$.
The latter are most conveniently derived from the {\em non-normalized} magnetization matrix elements
\be
\tilde{\cM}_{nm}\equiv \Im \Big(\cL^n J_i^y \big| \cM \big| \cL^m J_j^x\Big)=
- \Im \Tr \rho \left[\cL^{n-1}J_i^y,  \cM \cL^{m}J_j^x\right] \quad ,
\label{HyperM}
\ee
which are thermodynamic expectation values of nested commutators. For short range Hamiltonians, the commutators include sums over connected clusters of operators which are easier to trace over than   calculating two-operator susceptibilities.
These clusters can be generated and traced over by symbolic manipulation, as demonstrated  in Sections \ref{sec:tJM} and \ref{sec:HCB}.

$\tilde{\cM}_{nm}$ are related to $ {\cM}_{nm}$ of the correction terms by the Gramm-Schmidt matrix $K$. This matrix is determined as follows.  Applying the resolution of identity with the Krylov basis,  
\bea
\cL^{k} \kket{0} &=&   
\sum_{k'\le k} \kket{k'}\kbra{k'} \cL^{k} \kket{0} \nonumber\\
&\equiv&  \sum_{k'} K_{k, k'} \kket{k'} \quad .
\eea
The matrix $K$ is obtained by powers of the tridiagonal Liouvillian matrix (\ref{Lmn}), which results in  polynomials of  the recurrents $\Delta_{k'\le k}$,
\be
 K_{k',k}[\Delta_1,\ldots\Delta_k] =  (L^k)_{k',0} ,\quad k'\le k \quad .
\ee
Thus we obtain,
\be
\cM_{n,m}  = \sum_{n'\le n}\sum_{m'\le m}~ K^{-1}_{n,n'} [\Delta]~   K^{-1}_{m,m'}[\Delta] ~~ \tilde{\cM}''_{n',m'} \quad .
\ee

\section{Optimization of thermodynamic approaches}

\subsection{The Separability problem}
\label{sec:separable}
The thermodynamic approaches have their limitations.
The continued fraction extrapolation can only be performed efficiently if a clear asymptotic power law can be deduced from the dependence of the calculated  recurrents on their order.  In practice, the fluctuations of the  calculated recurrents from an asymptotic power law behavior should decrease monotonically if a reliable termination function is to be found.

Nearly separable currents can significantly slow the  convergence of continued fractions and Hall coefficients corrections.
For example, the recurrents of  a sum of two spectral functions with very different characteristic frequency scales is shown in Fig.~\ref{fig:SCpm}. We note that the fluctuations about average recurrents do not converge rapidly.  

Physically, sums of spectral functions with vastly different frequency scales can be expected  for separable currents with different relaxation times.
That is to say, if the Hamiltonian and its currents 
can be written as sum of commuting channels,
\be
H= \sum_{r=1 }^{N_r}  H_r \quad ,\quad \bj= \sum_{r }  \bj_r \quad ,\quad [\bj_r,H_{r'}]\propto \delta_{rr'} \quad ,
\ee
the resulting conductivities will be sums,
\be
\sigma_{\alpha\beta}(\omega) =\sum_r \sigma_{\alpha\beta}^r(\omega) \quad . 
\ee
While their moments are additive,
\be
\mu_{2k} = \sum_r \mu^r_{2k} \quad ,
\ee
the recurrents  which are obtained by Eqs.~(\ref{Delta-mu}) are highly entangled functions of the separate $\mu_{2k}^r$'s.

Similarly, the Hall coefficient  summation formulas are  formally correct, but cease to be useful for separable currents. Since the total Hall coefficient is,
\be
R_{\rm H} = {\sum_{r=1 }^{N_r} (d\sigma^r_{xy}/dB) \over \left( \sum_{r=1}^{N_r } \sigma^r_{xx} \right)^2 } \quad ,
\label{RH-tot}\ee
for multichannel separable currents $N_r>1$,  ratios between different conductivities must enter the  Hall coefficient, which therefore cannot be captured accurately  just by $R_{\rm H}^{(0)}$.
 For example, Fermi surface electrons with strongly dependent $\bk$-dependent lifetime $\tau_{\bk}$.
 At weak scattering, the currents of different wavevectors are separable.  While
$R_{\rm H}^{(0)}$ of Eq.~(\ref{RH}) recovers the single lifetime expression of Eq.~\ref{Sab-tauconst}, it does not agree (within a factor of order unity) with the $\bk$-dependent lifetime result of Boltzmann's equation in Eq.~\ref{Sab-Boltz}. The difference between $R_{\rm H}^{(0)}$ and Boltzmann's result is contained in the  higher order correction terms, which are unwieldy.

In conclusion, continued fractions and Hall coefficient summation formulas are best suited for a non-separable current governed by one relaxation timescale.  Otherwise,
in Hamiltonians which support two or more separable currents with different relaxation tiemscales, the conductivities of each current should be calculated separately  and later
assembled in Eq.~(\ref{RH-tot}) to obtain the  total Hall coefficient.\footnote{We thank Steve Kivelson for emphasizing the anistropic lifetime problem in the Hall coefficient formula.}

\subsection{Renormalized Hamiltonians at low temperatures}
\label{sec:Renorm}
Effective Hamiltonians are crucially important for efficient use of thermodynamic approaches to DC transport coefficients.
As can be seen from the Lehmann representation of longitudinal conductivities~(\ref{Kubo-Lehmann}),  and the DPP formulas of the Hall conductivities~(\ref{DPP-full}), the important part of the spectrum and eigenstates resides in an energy window of order $T$ above the ground state.
The Hall coefficient formula, which is derived from 
Eq.~(\ref{dSdB2}), also depends on that part of the spectrum as can be seen from,
\be
\left({dL^{xy}_{ij}(\ve,V)\over dB}\right)= -\pi^2 
    {2\over \cV}  \Im \sum_{nmn'} W_{nm}  \langle n |J_i^x|m\rangle\langle n' |J_j^y |n\rangle \cM_{nm'}^{n'n}  \delta_\ve(E_n-E_{m'}) \delta_\ve(E_{n'}-E_{n}) \quad .
\ee
where $\delta_\ve(x)=\ve/(x^2+\ve^2)$. Thus, in the DC limit, $dL_{ij}^{xy}/dB$ is also an {\em on-shell} expression.
It is greatly advantageous to replace the microscopic Hamiltonian $H^{\rm micro}$  by an effective
Hamiltonian $H^{\rm eff}(B)$ which operates
in the low energy Hilbert space  and  reproduces the correct spectrum at $\{ E_n \le {\rm const}~T\}$. 

The renormalized currents and magnetization within this reduced Hilbert space  are defined by,
\bea
\bj_i^{\rm eff} &=&  i[H^{\rm eff},\bP^{\rm eff}_i] \quad ,\nonumber\\
M^{\rm eff} &=&  -{\partial H^{\rm eff}\over dB} \quad ,
\eea
where $\bP^{\rm eff}_i$ are also projected onto the reduced  Hilbert space. 
A consequence of renormalization in certain models is to reduce the magnitude of  currents' non-conservation, i.e.
\be
||[H^{\rm eff},\bj_i^{\rm eff}]||\le ||[H^{\rm micro} ,\bj_i]|| \quad .
\ee
The relative magnitude of the first recurrent depends on this commutator. Since all terms in the
Hall coefficient's correction terms $R_{\rm corr}$
are proportional to $\Delta_1$, this ensures reduction of the relative magnitude $R_{\rm corr}/R_{\rm H}^{(0)}$.

An simple example of the value of renormalization is demonstrated by comparing the zeroth Hall coefficients  of the microscopic Hamiltonian of electrons in a periodic potential, to that of the conduction band effective Hamiltonian.
The microscopic Hamiltonian is $H^{\rm particles}$
goven in Eq.~(\ref{Hparticles}).
Its magnetization is given by
\be
M^z ={1\over 2c}  \sum_{i=1}^{N_p}  \bx_i \times {\bp_i\over m} \quad .
\ee
The CSR and CMC are completely independent on the potential terms and hence,    
\bea
\chi^{\rm particles}_{\rm csr}&=& {N_p e^2 \over \cV m }\nonumber\\
\chi^{\rm particles}_{\rm cmc}&=& {N_{\rm tot} e^3 \over \cV m^2 c}\nonumber\\
R_{\rm H}^{(0),^{\rm particles}} &=&  {\cV \over N_p  ec  } \quad .
\eea
Where $N_p$ includes the electrons in  {\em all} the filled bands  and core states. The zeroth Hall coefficient must clearly be very far from the full answer, since we know from Boltzmann's equation (\ref{RH-Boltz}) that $R_{\rm H}$ depends only on the filling of the {\em conduction} band. The correction term is therefore expected to be significant. Indeed, the magnitude of the first recurrent , which is a factor in $R^{\rm corr}$ is of order
\be
\Delta_1^2  = {1\over \cV \chi_{\rm csr}} \sum_{\bk,\bq,n,n'} {f_{\bk,n}-f_{\bk+\bq,n'} \over  \epsilon_{\bk+\bq,n}-\epsilon_{\bk,n'} }   (q^x)^2 |V(\bq)|^2 \quad ,
\ee
which can be very large for crystal potentials $|V(\bq)|\gg |\epsilon_F|$.

In comparison, if we use the  renormalized conduction band Hamiltonian,
\bea
H^{\rm eff} &=&  \sum_{\bk} {\bk^2\over 2m^*}c^\dagger_\bk c^{}_\bk + {1\over \cV} \sum_{\bk,\bk'} U^{\rm dis}_{\bk\bk'}c^\dagger_\bk c^{}_{\bk'} \quad , \nonumber\\
\bj^{\rm eff} &=&  \sum_{\bk}  {\bk\over m^*}  c^\dagger_\bk c^{}_\bk \quad .
\label{Hj}
\eea
where $U^{\rm dis}\ll \epsilon_F$ is due to impurities, then
\bea
\chi^{\rm eff}_{\rm csr}&=& {(e^*)^2  ~n_{\rm cond} \over m^*} \quad , \nonumber\\
\chi^{\rm eff}_{\rm cmc}&=& {(e^*)^2 ~ n_{\rm cond} \over (m^*)^2 ~c} \quad ,\nonumber\\\nonumber\\
\left(R_{\rm H}^{(0)}\right)^{\rm eff} &=&  {1\over n_{\rm cond} ~e^*~c} \quad .
\eea
$n_{\rm cond}$ is just the density of the partially filled conduction band. $R_{\rm H}^{(0)}$ 
recovers Drude-Boltzmann theory (\ref{RH-Drude}), with a small correction term which is suppressed by a factor of $U^{\rm dis}/\epsilon_F\ll 1 $.

For dirty semimetals~\cite{Abhisek}, where the interband gap is of the same order as $U^{\rm dis}$,  the Hall coefficient can be well approximated by $R_{\rm H}^{(0)}$  of the  two-band Hamiltonian, including  both intraband and interband current matrix elements. The Hall coefficient formula provides a   simpler route to the Hall coefficient than  the coupled two-band Boltzmann equation~\cite{Bulbul,Culcer,Tserkov}.

\part{Numerical algorithms for thermodynamic coefficients}
\label{Part:Thermo-numerics}
In Parts \ref{Part:Thermo I} and \ref{Part:Thermo II}, longitudinal conductivities and Hall-type coefficients were expressed in terms of thermodynamic expectation values, for which a host of  statistical mechanics algorithms exist. Below, as a preparation for the calculations of high temperature conductivities of strongly correlated lattice models in Parts \ref{part:SCE} and \ref{part:SCB}, we elaborate on the application of high temperature series and stochastic (Monte-Carlo) methods to evaluate the following relevant thermodynamic coefficients:
\begin{enumerate}
    \item The CSR, as defined by Eq.~\ref{CSR-def}.
       \item The conductivity moments $\mu_{2k},k=1,2,\ldots$ as defined by Eq.~(\ref{EV}). The recurents can be obtained from themoments by Eq.~(\ref{Delta-mu}).
    \item  CMC susceptibilities, which can be written as expectation values,
    \be
    \chi_{\rm cmc} =
2 \lim_{\cV\to \infty}~{\hbar\over \cV}  \Re \Tr~\left(\rho \left[P^x, ,\left[M,J^y\right]\right]\right) 
\quad ,
\ee
\item The hypermagetization matrix elements, of Eq.~\ref{HyperM}.
\end{enumerate}
At low temperatures, the use of variational wavefunctions to compute thermodynamic expectation values has seen enormous progress in recent years.. As examples, we refer the reader to the density matrix renormalization group~\cite{DMRG,DMRG-Schollwok}, projected entangled-pair states~\cite{PEPS} and  tensor networks~\cite{TN}.
For lack of space, these algorithms will not be reviewed here.

\section{High temperature series}
\label{sec:HighT}
For interacting Hamiltonians, an advantage of thermodynamic expectation values over dynamic (time dependent) correlations,  is its amenability to high temperature series expansion. Each term in the series involves traces over powers of the Hamiltonian, which are  much less costly than Hamiltonian diagonalization over large lattice sizes.  

The expansion works as follows. The density matrix $\rho=\exp(-\beta H)/Z $ in powers of $\beta$, 
\bea
\exp(-\beta H)&=& \sum_{n=0}^\infty { ( -\beta )^n  \over n!}~H^n\quad,\nonumber\\
Z(\beta) &=& \sum_{n=0}^\infty {(-\beta)^n\over n!}  ~ \Tr H^n\quad.
\eea
For simplicity we restrict the following discussion to Hamiltonians which are sums of traceless bond operators on a bipartite lattice. Such Hamiltonians 
are used 
to describe the strongly correlated electrons and bosons models in Parts \ref{part:SCE} and \ref{part:SCB}
respectively. Hence, traces over even powers of $H$ are assumed to vanish. 

Expansion of the expectation value of a local operator $O$ up to fourth order in $\beta$ yields,
\bea
\langle O \rangle&=& \Tr (O) -\beta \Tr (HO)  + {\beta^2 \over 2}  \left(\Tr (H^2 O)-\Tr(O)\times \Tr(H^2) \right)\nonumber\\
&&~~~- {\beta^3\over 3!} \Big(\Tr (H^3 O)  -3\Tr  (H O) \times\Tr (H^2) \Big)  \nonumber\\
&&~~~ + {\beta^4\over 4!} \Big(  \Tr (H^4 O) - 6 \Tr (H^2 O) \times  \Tr (H^2) -4 \Tr (H O) \times  \Tr (H^3)-\Tr(O)\times\Tr(H^4) \Big) +\cO(\beta^5) \quad .
\label{HighT}
\eea

The thermodynamic coefficients are obtain by tracing over a lattice  large enough to  include  {\em at least} all the connected operators in $H^4$ which share  sites with  $O$.

\subsection{Linked clusters theorem}
Let's consider a local  operator $O$ with $\Tr O \ne 0$, which occupies a cluster of sites $c_O=\{\br_1,\ldots \br_{n_O}\}$.
The strings of site operators  which have sites in $c_O$  define a connected cluster to $c_O$.
The trace  $\Tr H^n O$  is a sum of strings of operators
which can be factorized as a trace over sites in the cluster $c_O^{k}$ which are connected by $k$ powers of $H$ to $c_0$,  and a trace over the rest of the powers of $H$ as follows:
\be
\Tr (H^n O) = \sum_{k=0}^n {n!\over k! (n-k)!} \Tr_{ \in c^{(k)}_O}(H^k O) \times \Tr_{\notin c^{(k)}_0 } (H^{n-k}) \quad .
\label{cluster}
\ee

\begin{figure}[h]
\begin{center}
 \includegraphics[width=8cm,angle=0]{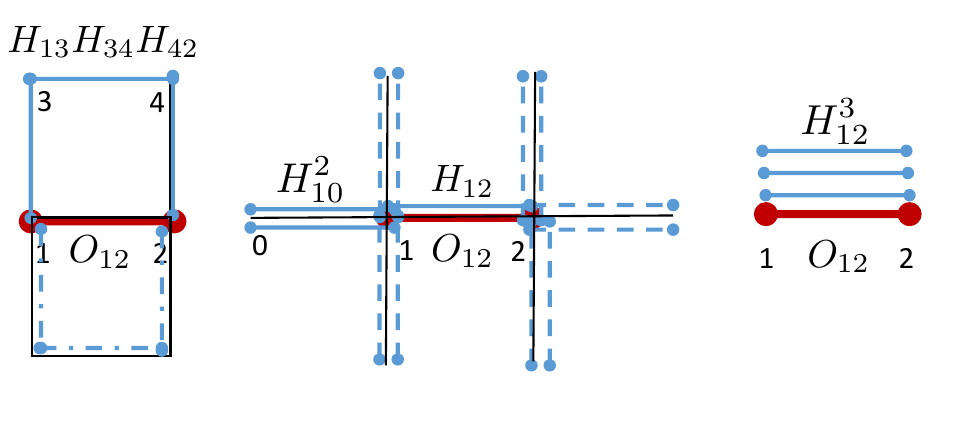}
\caption{High temperature expansion of a nearest neighbor Hamiltonian on the square lattice. The graphs describe the three types of connected operator products in $\Tr (H^3 O_{12})_{\in c^{(3)}_O}$ of Eq.~(\ref{HighT-cc}). The linked cluster  $c_O^{(3)}$ is defined by the  lattice of 7 bonds which is shown in  the central graph, on which the counter-term $-3\Tr_{ \in c^{(3)}_O}  (H O) \times\Tr_{ \in c^{(3)}_O} (H^2)$ is also defined. }
 \label{fig:HighT-3}
\end{center}
\end{figure}

 A cancellation of  all the non-connected clusters $\notin c_0$  in Eq.~(\ref{HighT}) occurs between the numerator and the expansion of the partition function at each order in $\beta^n$.
 This leaves us  to compute only traces over strings of operators in $(H^{k} O)$ which include sites in $c^{(k)}_O$ (linked clusters of $H^{k}O$), and traces of $H^{n-k}$   for $k=0,1\ldots n$ (a.k.a linked cluster theorem). Eq.~(\ref{HighT}) then reads as
 \bea
\langle O \rangle&=& \Tr (O)  -\beta \Tr_{ \in c_O} (HO)  + {\beta^2 \over 2}  \left(\Tr_{ \in c^{(2)}_O} (H^2 O)-\Tr_{ \in c^{(2)}_O}\times \Tr_{ \in c^{(2)}_O}(H^2) \right)\nonumber\\
&&~~~- {\beta^3\over 3!} \Big( \Tr_{ \in c^{(3)}_O} (H^3 O)  -3\Tr_{ \in c^{(3)}_O}  (H O) \times\Tr_{ \in c^{(3)}_O} (H^2) \Big)  \nonumber\\
&&~~~ + {\beta^4\over 4!} \Big(  \Tr_{ \in c^{(4)}_O} (H^4 O) - 6 \Tr_{ \in c^{(4)}_O} (H^2 O) \times  \Tr_{ \in c^{(4)}_O} (H^2) -\Tr(O)\times\Tr_{ \in c^{(4)}_O}(H^4) \Big) +\cO(\beta^5) \quad .
\label{HighT-cc}
\eea

An example of a calculation of $\beta^3 \Big( \Tr_{ \in c^{(3)}_O} (H^3 O) \Big) $ is described in Fig.~\ref{fig:HighT-3}. Here, $H$ is taken as a sum over traceless bond operators $H_{ij}$ on the square lattice, while $O_{12}$ is any traceless bond operator.

\section{Stochastic algorithms}
\label{QMC-Hall}
Stochastic sampling of thermodynamic expectation values a.k.a. Quantum Monte Carlo (QMC), have proven to be very efficient in extending high temperature series to lower temperatures.
The algorithms described below  are particularly
tailored for interacting  quantum particles
on lattices. These are the Determinant QMC (DQMC) for fermions and the Directed Loop Algorithm (DLA) for lattice bosons and spins. 

\subsection{Determinant Quantum Monte Carlo}
\label{sec:DQMC}
In DQMC, the goal is to rewrite the grand-canonical partition function of a given fermionic Hamiltonian as a sum of “statistical weights" over a space of field configurations~\cite{blankenbecler}. 
To illustrate the method~\cite{scalettarnotes}, we consider the single band Hubbard model (HM),
\be
H^{\rm HM} = -t\sum_{\ij\sigma} (c^{\dagger}_{i\sigma}c_{j\sigma} + h.c.) - \mu\sum_{i\sigma}n_{i\sigma} + U\sum_{i}\left(n_{i\uparrow} - {1 \over 2}\right)\left(n_{j\downarrow} - {1 \over 2}\right) \quad .
\label{HM}
\ee
where $ c^\dagger_{i\sigma}$ creates a fermion of spin $\sigma=\uparrow,\downarrow$ on lattice site $i$, and $n_{i\sigma}=c^\dagger_{i\sigma}c^{}_{i\sigma}$. $\mu$ is the chemical potential.

One now divides $\beta$ into $L_{\tau}$ `Trotter' steps, such that $\beta=\Delta \tau L_{\tau}$. The partition function is rewritten as-
\be
Z = Tr(e^{-\beta H}) = \Tr(e^{-\Delta\tau H})^{L_\tau} \approx \Tr\underbrace{(e^{-\Delta\tau K}e^{-\Delta\tau V}e^{-\Delta\tau K} e^{-\Delta\tau V} \cdots)}_{L_\tau} \quad , 
\ee
where $K=-\sum_{<ij>\sigma} t_{ij}(c^{\dagger}_{i\sigma}c_{j\sigma} + h.c.) - \mu\sum_{i\sigma}n_{i\sigma}$ is the kinetic term and $V= U\sum_{i}\left(n_{i} - {1 \over 2}\right)\left(n_{j} - {1 \over 2}\right)$ is the potential term. The second equality is approximate as $K$ and $V$ don't commute. The approximation improves upon increasing $L_{\tau}$ (decreasing $\Delta\tau$).
To handle the interaction $V$ term, we employ the discrete  Hubbard-Stratonovich (HS) transformation:
\be
e^{-U\Delta \tau \left(n_{i} - {1 \over 2}\right)\left(n_{j} - {1 \over 2}\right)} = {1 \over 2} e^{-{U\Delta\tau \over 4}}\sum_{s}e^{\lambda s(n_{\uparrow} - n_{\downarrow})} \quad ,
\ee
where $\cosh\lambda = e^{U\Delta\tau/2}$ and $s=\pm 1$ is an Ising variable.

The $e^{-\Delta\tau K}$ terms are bilinear in the fermion operators. For each factor of the $L_{\tau}$ terms $e^{-\Delta\tau V}$, we introduce N number of HS fields, one for each spatial site, where we have an on–site interaction to decouple. The HS field $s(i,l)$ therefore has two indices, space $i$ and imaginary time $l$. Now, the $e^{-\Delta\tau V(l)}$ also become bilinear in the fermion operators. We put an argument $l$ on the $V$-terms to emphasize that while the $K$-terms are all identical, the $V(l)$ contain different HS fields on the different imaginary time slices.

Finally, the trace evaluates to 
\be
Z = \sum_{s(il)} \det M_{\uparrow} \det M_{\downarrow} \quad , 
\ee
where $M_{\sigma} = I + e^{-k}e^{-v_{\sigma}(1)}e^{-k}e^{-v_{\sigma}(2)} \cdots e^{-k}e^{-v_{\sigma}(L_{\tau})}$. Here, the matrix elements of $k$ are equal to $\Delta\tau k_{ij}$. $v_{\sigma}(l)$ equals the diagonal matrix, whose elements are $\lambda s(i,l)$, obeying $v_{\uparrow}(l) = -v_{\downarrow}(l)$. $I$ is the $N \times N$ identity matrix.

Thus, we get a determinant for each of the two spin species. The quantum partition function has now been re-expressed as a classical Monte Carlo problem. One needs to sum over the possible configurations of the real, classical variables $s(i,l)$ with the ``Boltzmann weight'', which is the product of the two fermion determinants. 
The workhorse of the algorithm is the Green's function $G_{\sigma}$, which is first initialized as-
\be
G_{\sigma} =  [I + e^{-k}e^{-v_{\sigma}(1)}e^{-k}e^{-v_{\sigma}(2)} \cdots e^{-k}e^{-v_{\sigma}(L_{\tau})}]^{-1} \quad .
\ee
Next, the HS fields are updated one by one, and the new $G_{\sigma}$'s are computed. Finally, after completing an entire set of updates on all space-time points of the lattice, measurements are made from these Green's functions. A couple of simple examples are-
\bea
\langle n_{i\sigma} \rangle &=& 1 - [G_{\sigma}]_{ii} \\ \nonumber
\langle S^{+}_{i}S^{-}_{j} \rangle &=& -[G_{\uparrow}]_{ji}[G_{\downarrow}]_{ij} \quad ,
\eea
where $\langle n_{i\sigma} \rangle$ denotes the local spin density at site $i$ and $\langle S^{+}_{i}S^{-}_{j} \rangle$ expresses (equal-time) spin correlations between sites $i$ and $j$.
The algorithm scales in CPU time as $N^4L$. The reason is that re–evaluating the determinant of $M$ takes $N^3$ operations, and we must do that $NL$ times to sweep through all the HS variables.
There are tricks to reduce this computational cost to $N^{3}L$, relying on the sparse nature of the “difference matrix" $dM$, after updating the HS fields.

Some salient features of DQMC simulations are as follows-
\begin{itemize}
    \item Measurement of correlation functions with non–zero imaginary time separation is possible, but requires more work. Analytic continuation of such correlations is required to get the dynamical response. That is significantly more difficult and often requires uncontrolled approximations~\cite{shao}.
    \item The product of matrices required in constructing $M$ and hence $G=M^{-1}$ is numerically unstable at low temperatures and strong couplings. Special ``stabilization'' is required to do the matrix manipulations~\cite{bauer}.
    \item The determinants of the matrices can turn negative. This is the infamous ``fermion sign problem''~\cite{mondaini1,mondaini2,iskakov}. The sign problem does not occur for certain special cases. For example, if $U$ is negative (the “attractive” Hubbard model), the individual determinants can go negative, but the matrices are always equal and hence the determinant appears as a perfect square. If $U>0$, but the chemical potential $\mu=U/2$ (``half–filling''), it's also fine. The matrices are not identical in this case, but the determinants are nevertheless related by a positive factor, i.e., they again have the same sign, so their product is always positive.
    \item Alternate HS transformations are possible~\cite{karakazu}. One may couple more symmetrically to the spin, which does not single out the $z$-component. Alternatively, one can couple to pair creation operators. So far, all such alternatives give a worse sign problem than the one mentioned here.
\end{itemize}

A determinant QMC calculation for expectation values of lattice fermions with discrete auxiliary fields can be implemented using the ALF package~\cite{ALFQMC}. 
The statistical fluctuations can be evaluated by  {\em``Jackknife
resampling''}~\cite{efron} (a method used for error estimation).
The  average fermion-induced sign in the QMC sampling is defined as
\be
\langle S\rangle=\langle {\rm sgn}({\rm det}) \rangle \quad .
\label{QMC-sign}
\ee
Empirically, the value of $\langle S\rangle$ is found to low temperature, reflecting the diminishing effects of exchange processes in fermion world lines. It has been argued to be safe to use the expectation value of $S$ to normalize the QMC averages, provided $\langle S\rangle>0.8$. 

The CMC and CSR susceptibilities are computed by first expressing the operators in terms of fermionic creation and annihilation operators.  Since the Boltzmann eights are given by imaginary-time action of free fermions with time dependent fields, the operator expectation values
can be factorized as products of Green's functions (using Wick's theorem), which can be sampled
over configurations of the auxiliary fields.
The displayed data in Section~\ref{sec:tJM} is restricted to the regime of $\langle S\rangle \ge 0.8$, which for  $U=8t$ and all doping range is satisfied at $T \ge  t/2\approx J$.  

\subsection{Directed Loop Algorithm}
\label{sec:QMC-DLA}
The Stochastic Series Expansion~\cite{SSE} (SSE) and QMC world-line and worm algorithms~\cite{prokofiev,Worm}  have been found to be very efficient for  lattice bosons and spin models~\cite{sandvik1,boninsegni}.

A particular variant of these algorithms is the Directed Loop Algorithm (DLA)~\cite{DLA}, which may be formulated along both world-line and SSE approaches. For simplicity, we review the DLA scheme within the SSE method, where the bonds of the Hamiltonian are decomposed as 
\be
H = -\sum_{a}\sum_{b}H_{a,b} , 
\ee
where $H_{a,b}$ satisfies $H_{a,b} |\alpha \rangle = |\alpha^{\prime} \rangle$ for a chosen basis $\{ |\alpha \rangle \}$. The index $a$ refers to operator type (either  diagonal or off-diagonal  in the chosen sites basis), and index $b$ labels a particular bond position on the lattice. $H_{0,0}$ denotes the unit operator. Using a Taylor expansion, the partition function is expressed as,
\be
Z = \sum_{\alpha} \sum_{S_{M}} {\beta^n (M-n)! \over M!} \left \langle \alpha \left \vert \prod^{M}_{p=1} H_{a_{p},b_{p}} \right \vert \alpha \right \rangle \quad ,
\label{SSE}
\ee
where $M$ is the truncation order of the expansion, $S_{M} = [[a_{1},b_{1}],[a_{2},b_{2}],\cdots,[a_{M},b_{M}]]$ denotes the operator product and $n$ is the number of non-$[0,0]$ elements.
$M$ is adjusted during the equilibration. Typically $M\sim\beta N$ suffices for most purposes, where $N$ denotes the number of lattice sites. Next, one defines a state $|\alpha(p)\rangle$ by   
\bea
|\alpha(p)) \equiv \prod^{p}_{i=1} H_{a_{i},b_{i}}|\alpha(0)\rangle\quad, \nonumber\\
|\alpha(p) \rangle = {|\alpha(p))/(\alpha(p)|\alpha(p))^{1\over 2}}\quad.
\eea
The periodicity constraint $|\alpha(M)\rangle = |\alpha(0)\rangle$ needs to be satisfied for a non-vanishing contribution to Z. The sum is sampled by transitions $(\alpha, S_{M}) \rightarrow (\alpha^{\prime}, S^{\prime}_{M})$ satisfying detailed balance.

To concretize the procedure, we consider the anisotropic Heisenberg model in a magnetic field, whose Hamiltonian is- 
\be
H = \sum_{<ij>} J [S^{x}_{i}S^{x}_{j} + S^{y}_{i}S^{y}_{j} + \Delta S^{z}_{i}S^{z}_{j}] - h\sum_{i} S^z_{i} \quad .
\ee
We'll use the standard $z$-component basis ($|\alpha \rangle = |S^{z}_{1}S^{z}_{2}\cdots S^{z}_{N} \rangle$). The diagonal and off-diagonal bond operators are defined as-
\bea
H_{1,b} &=& \epsilon + \Delta/4 - \Delta S^{z}_{i(b)}S^{z}_{j(b)} + h_{b}[S^{z}_{i(b)} + S^{z}_{j(b)}] \\ \nonumber
H_{2,b} &=& -{1 \over 2} [S^{+}_{i(b)}S^{-}_{j(b)} + S^{-}_{i(b)}S^{+}_{j(b)}] \quad , 
\eea
where $i(b)$, $j(b)$ are the sites connected by bond $b$ and $h_{b}$ is the bond field ($h_{b} = h/2d$ on a d-dimensional cubic lattice). The constant $\epsilon+\Delta/4+h_{b}$ is added to the diagonal bond operator to ensure that all its matrix elements are positive. Moreover, on a bipartite lattice, the minus sign in front of the off-diagonal operator is irrelevant, as it can be ``gauged out''. Storing the operator sequence $S_{M}$ and a single state $|\alpha(p)\rangle$, diagonal updates of the form $[0,0] \leftrightarrow [1,b]$ can be carried out sequentially for $p=1,\cdots,M$ at all elements $[a_{p},b_{p}]$ in $S_{M}$ with $a_{p}=0,1$. The Metropolis acceptance probabilities for these processes may be derived as-
\bea
P([0,0] \rightarrow [1,b]) &=& N_{b}\beta \langle S^{z}_{i}(p)S^{z}_{j}(p) | H_{1,b} | S^{z}_{i}(p)S^{z}_{j}(p) \rangle / (M-n) \\ \nonumber 
P([1,b] \rightarrow [0,0]) &=& (M-n+1) / [N_{b}\beta \langle S^{z}_{i}(p) S^{z}_{j}(p) | H_{1,b} | S^{z}_{i}(p)S^{z}_{j}(p) \rangle] \quad ,
\eea
where $P>1$ is interpreted as probability one. 

We next discuss the ``operator loop'' updates, which are efficient off-diagonal updates. To formulate these, it is useful to introduce a different representation of the SSE configurations. It is not necessary to store the full states $|\alpha(p)\rangle$ as they contain a great deal of redundant information. One can represent the matrix element in Eq.~(\ref{SSE}) as a linked list of ``vertices''. First, we note that the weight of a configuration $(\alpha,S_{M})$ can be written as-
\be
W(\alpha,S_{M}) = {\beta^n (M-n)! \over M!} \prod^{n}_{p=1} W(p) \quad ,
\label{SSE-VW}
\ee
where the product is over the n non-$[0,0]$ operators in $S_{M}$. The $W(p)$ are called ``bare vertex weights'' and are given by-
\be
W(p) = \langle S^{z}_{i(b_{p})}(p)S^{z}_{j(b_{p})}(p) | H_{b_{p}} | S^{z}_{i(b_{p})}(p-1)S^{z}_{j(b_{p})}(p-1) \rangle \quad , 
\ee
where $H_{b} = H_{1,b} + H_{2,b}$ is the full bond operator. 
A vertex represents the spins on bond $b_{p}$ before and after the operator has acted. These four spins constitute the legs of the vertex. There are six allowed vertices, with four different vertex weights- 
\bea
W_{1} &=& \langle \downarrow\downarrow | H_{b} | \downarrow\downarrow \rangle = \epsilon \quad , \\ \nonumber 
W_{2} &=& \langle \downarrow\uparrow | H_{b} | \downarrow\uparrow \rangle = W_{3} = \langle \uparrow\downarrow | H_{b} | \uparrow\downarrow \rangle = \Delta/2 + h_{b} + \epsilon \quad , \\ 
\nonumber
W_{4} &=& \langle \uparrow\downarrow | H_{b} | \downarrow\uparrow \rangle = W_{5} = \langle \downarrow\uparrow | H_{b} | \uparrow\downarrow \rangle = 1/2 \quad , \\ \nonumber
W_{6} &=& \langle \uparrow\uparrow | H_{b} | \uparrow\uparrow \rangle = \epsilon + 2h_{b} \quad .
\eea
The building of a loop in the linked-vertex representation consists of a series of steps. The starting point of a loop is chosen at random. One creates two link-discontinuities (worm ``head'' and ``tail'') here and propagates the head, keeping the tail fixed. The head enters a vertex at one leg (the entry leg) and an exit leg is chosen according to probabilities that depend on the entry leg and the spin states on all the legs. The entrance to the following vertex is given by the link from the chosen exit leg. The spins at all visited legs are flipped, except in the case of a ``bounce'', where the exit is the same as the entrance leg, and only the direction of movement is reversed. When the head and tail meet each other, the loop closes, and a configuration contributing to $Z$ is generated.
Measurements of diagonal operators (like the magnetization $<S^{z}>$) are done over these class of configurations. The statistics obtained from the worm propagation moves (while the loop is open) corresponding to a fixed space-time separation of the worm head and tail directly provide an estimate for the two-point off-diagonal correlation functions 
(like $<S^+_{i}(\tau)S^-_{j}(0)>$).

The probabilities for the different exit legs, given the type of the vertex and an entrance leg, are chosen such that detailed balance is satisfied. This leads to the directed-loop equations. These equations underdetermine the required probabilities, and additional considerations (like minimizing the number of bounces) are required to constrain them. For brevity, we refrain from providing a detailed discussion of this issue here and point to the references mentioned above~\cite{sandvik1,sandvik2}. The relation between the SSE and world-line approaches is explained in Ref.~\cite{sandvik3}. The algorithm becomes slightly more complicated if longer range interactions are included in the Hamiltonian~\cite{sandvik4}.

\begin{figure}[h]
\begin{center}
 \includegraphics[width=10cm,angle=0]{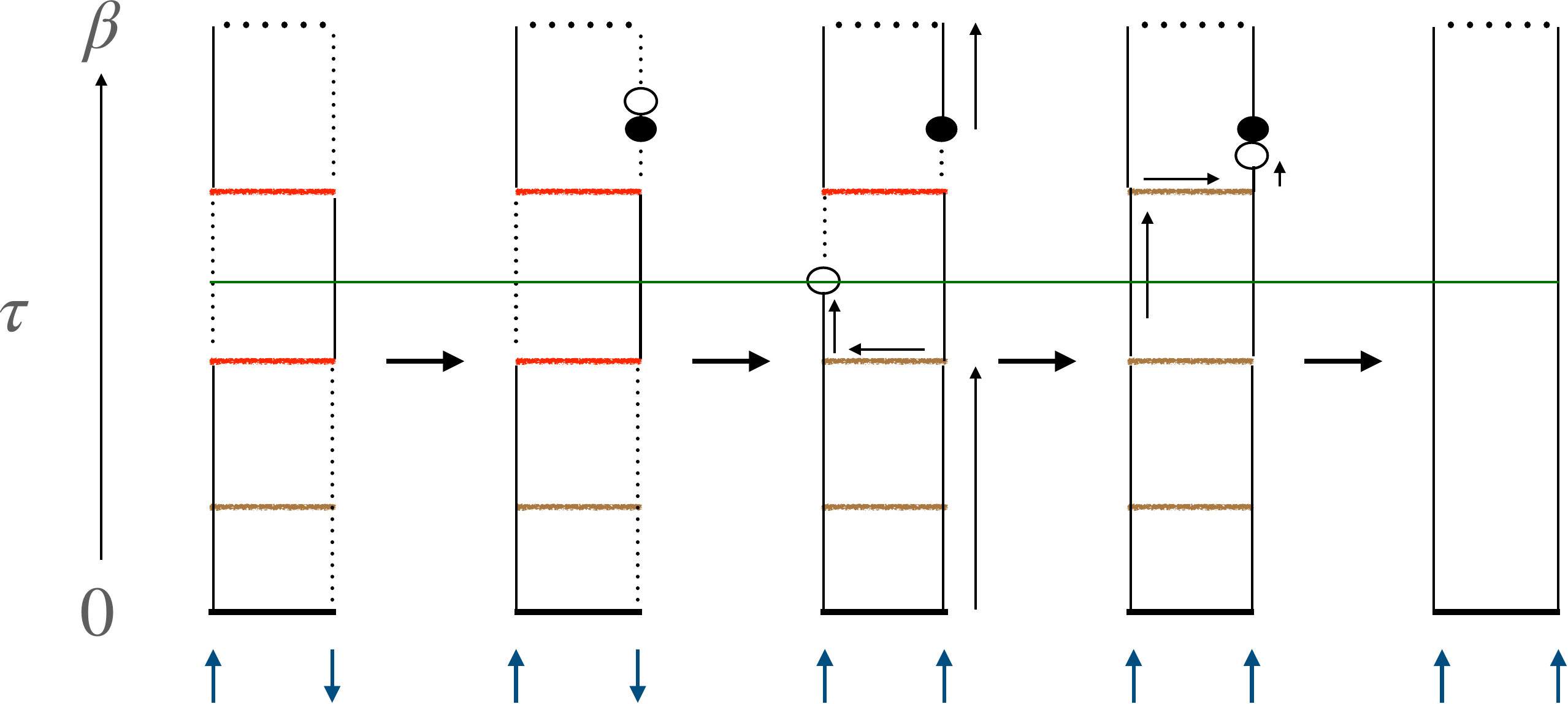}
\caption{ The propagation of a worm within DLA for the $H = \vec{S}_{1}\cdot\vec{S}_{2}$ model. The vertical axis represents the imaginary-time axis running from $0$ to $\beta$, with $\tau=0$ and $\tau=\beta$ identified. The short blue arrows at the bottom denote the starting ($\tau=0$) spin states.  The brown (red) horizontal bars depict the ``diagonal''(``off-diagonal'') Hamiltonian vertices. Solid and dashed vertical lines stand for the propagation of up and down spins respectively. The worm `head' and `tail' are represented by open and filled ovals. Once the worm head propagates, it flips the spins on the go and also may (or may not) change the vertex type it encounters. Moreover, the worm can move from site to site through these vertices, as shown in the third diagram of the sequence. The worm movement is shown by black arrows. Finally, once the worm head and tail meet each other, they annihilate and a new vertex configuration needs to be generated. The number of times the worm head crosses the green line contributes as counts for the $<S^+_{1}(\tau)S^-_{2}(\tau)>$ correlation function.}
 \label{fig:QMC-bosons}
\end{center}
\end{figure}

As a concrete implementation of the DLA within a world-line scheme, we used the DSQSS package~\cite{Motoyama}. In Fig.~\ref{fig:QMC-bosons}, we show a toy example of a worm update in this scheme. One starts with the Heisenberg model ($H = \vec{S}_{1}\cdot\vec{S}_{2}$) of two spins on a bond (in a full lattice model with nearest neighbour interactions, it's enough to consider this for simplicity, since one can always decompose the total Hamiltonian into a sum of nearest neighbour bond operators).
The imaginary time axis is periodic, with $\tau=0$ and $\tau=\beta$ identified. The blue arrows exhibit the starting ($\tau=0)$ spin configuration. One first distributes the ``vertices'' on various nearest neighbor bonds, with \textit{a priori} weights consistent with the temperature and Hamiltonian matrix elements in the $S^{z}$ basis (within SSE, these are given by Eq.~\ref{SSE-VW}).

As depicted in Fig.~\ref{fig:QMC-bosons}, in each configuration, the solid (dashed) black vertical lines denote the propagation of up (down) spins in imaginary time. There are both ``off-diagonal'' (red) and ``diagonal'' (brown) vertices, which either flip the entering spins or not, respectively.

The worm update is depicted in five consecutive steps, whose sequence is marked by solid black arrows. A pair of ``worm heads'' (open and filled black ovals) are introduced at a randomly chosen space-time point. This represents the insertion of an off-diagonal operator in the Ising basis (such as $S^{x}$). Thereafter, the filled oval (called the ``tail'') is kept fixed while the other (called ``head'')  propagates vertically, with periodic boundary condition in the vertical direction, until it encounters a vertex. Each vertex has two entry and two exit points (also called ``legs''). Depending on the ``entry leg'' and the ``leg probabilities'' satisfying directed loop equations, the worm head may travel horizontally (instead of vertically), and may also change the vertex type. It always flips the spins on the legs it goes through. Finally, the propagation terminates when the head meets the tail. The green horizontal arrow in Fig.~\ref{fig:QMC-bosons} denotes an imaginary time slice. The fact that this line is crossed by the worm head once during its propagation,  contributes an integer count to the expectation value of $<S^+_{1}(\tau)S^-_{2}(\tau)>$. 
 
 \newpage
\part{Strongly Correlated Electrons}
\label{part:SCE}
\section{The Hubbard Model}
\label{sec:HM}
Electrons in a tight binding bandstructure, with strong local interactions can be minimally described by the HM~\cite{Hubbard,Kivelson-Hubbard}, as defined by Eq.~(\ref{HM}).

For historical reasons, the electron filling in the HM is often measured by the ``hole concentration'' relative to half filling, i.e. $x=\langle n^h_i\rangle$, as defined by
$n^h_i=1-\sum_{s=\uparrow,\downarrow} n_{is}$.

For $U=0$ the HM reduces to the non-interacting tight binding model $H^{\rm SL}$  of Eq.~(\ref{SL}). As shown in Section \ref{sec:RH-WS}, $R_{\rm H}^{(0)}$  is given by the isotropic scattering time solution of Boltzmann's equation which is depicted in Fig.~\ref{fig:SL-RH}.

\begin{figure}[h!]
\begin{center}
\includegraphics[width=8cm,angle=0]{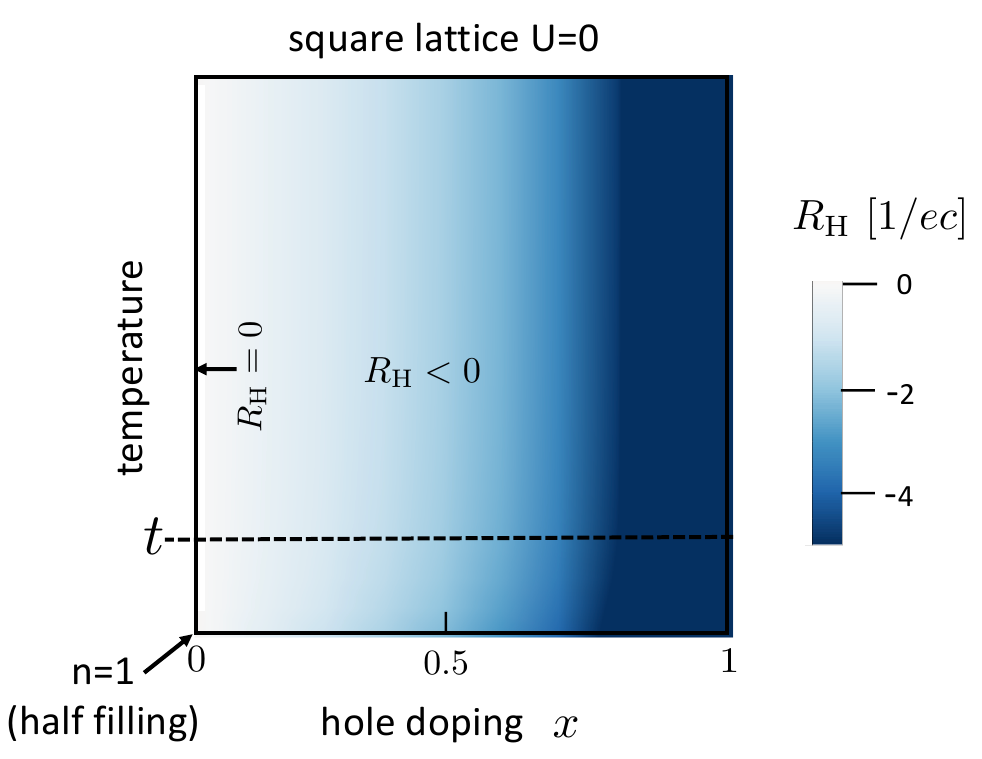}
\caption{ 
Hall coefficient  of  weakly disordered ($U=0$) square lattice.  $R_{\rm H}^{\rm SL}$ given by Eq.~(\ref{RH-Boltz}).  The strongest prediction of Boltzmann theory for this model is that  $R_{\rm H}<0$  for $0\!<\!x\!<\!1$, and does not  diverge  anywhere except at $x\to 1$. }
 \label{fig:HallMap-SL}
\end{center}
\end{figure}
The Hall coefficient temperature-doping map of the nn SL model, is depicted in Fig.~\ref{fig:HallMap-SL}. We note that the Hall sign is negative for all $x>0$, and vanishes half filling  $x_{\rm sign}=0$. This Hall map will be contrasted with the Hall map in the $U\gg t$ regime. 

\section{The t-J Model}
\label{sec:tJM}
In the strong interacting regime of $U  \gg t$, the electronic correlations of the HM differ drammatically from the predictions of the weakly interacting model on the SL.
This is true especially near  half filling $x=0$, where a charge gap of order $U$ appears between the singly occupied spin and hole configurations, and configurations which contain  doubly occupied sites. At zero temperature
$T\ll U$, the half filled HM describes a Mott insulator, which orders antiferromagnetically. Away from half filling, the ground state phase diagram is still under debate, with variational studies indicating possible charge and spin density wave order, and/or $d$-wave superconducting order~\cite{PEPS,Sorella,Kivelson-Hubbard}.

As argued in subsection \ref{sec:Renorm}, at  temperatures $ T\ll U$, the thermodynamic approaches converge much better after the HM Hamiltonian is renormalized onto its lower energy  Gutzwiller-projected (GP) subspace of no-double-occupancies. There, for $U\gg t$,  the HM maps onto the t-J model (tJM)~\cite{IEQM,tJ-Spalek}, which to second leading order in $t/U$,
\bea
H^{\rm tJM} &=& \cP_{\rm GP}\left(  H^t+H^J +H^{J'} \right) \cP_{\rm GP}+\cO(t^3/U^2) \quad ,\nonumber\\
H^t  &=&  -\tt\sum_{\ij }  K^+_{ij} \quad ,\nonumber\\
H^{J}   &=& J \sum_{\ij}  \bs_i\cdot \bs_j (1\!-\!n^h_i)(1\!-\!n^h_j) \quad ,\nonumber\\
H^{J'}   &=& - {J \over 4} \sum_{\ij \jk}   \left(   K^+_{ik} - 2{\bf \Sigma}^+_{ik}\cdot \bs_j\right)(1\!-\!n^h_j) \quad .
\label{tJM}
\eea
The GP bond operators are
\bea
&&{K}^{\pm}_{ij}  \equiv    \sum_s \tc^\dagger_{is} \tc^{\nd}_{j s} \pm \tc^\dagger_{js} \tc^{\nd}_{i s} \quad , \nonumber\\
&& {\bf \Sigma}^{\pm}_{ij}   \equiv   \sum_{ss'} \tc^\dagger_{is} \vec{\sigma}_{ss'} \tc^{\nd}_{js'}  \pm  \tc^\dagger_{js}\vec{\sigma}_{ss'} \tc^{\nd}_{is'} \quad ,
 \label{K-def}
\eea
where $\tc^\dagger_{is}  \equiv  c^\dagger_{is} (1-n_{i,-s})$. 
$K^+$ ($K^-$) describes the bond kinetic energy (current)  and $\Sigma^\pm$ describes the  spin-dependent hopping.  
$J=4t^2/U$ is Anderson's  antiferromagnetic superexchange energy. 

\subsection{Linear Resistivity slope}

\begin{figure}[h!]
\begin{center}
\includegraphics[width=7cm,angle=0]{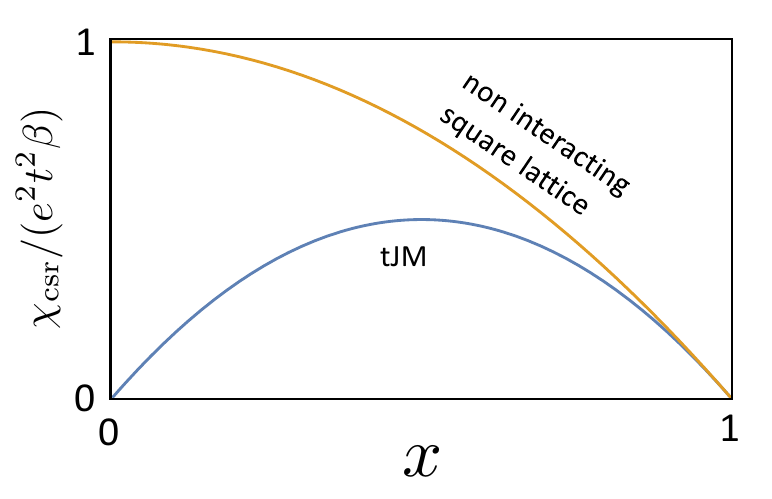}
\caption{Doping dependent CSRs at intermediate temperatures $t<T\ll U$.
In contrast to the non-interacting case, the CSR of the tJM vanishes toward the Mott phase as $x\to 0$. This results in an interaction induced divergence of $R_{\rm H}$ and the resistivity slope.}
 \label{fig:CSR-tJ}
\end{center}
\end{figure}

$H^t$, (the ``$t$-model''), dominates the metallic charge transport at $U\gg T\gg J$. The doping dependence of the CSR 
of the $H^t$ was  calculated by Jaklic and Prelovsek~\cite{Jak} and Perepelitsky \etal~\cite{Perepel} up to order $\beta^3$ :
\be
\chi_{\rm csr}=2\beta e^2\tt^2 x(1-x) 
+{ \beta^3 e^2  \tt^4   \over 6}  x(1-x) (-9 + 2 x + x^2 ) + \cO(\beta^5\tt^6) \quad .
\label{CSR-hiT}
\ee
As depicted in  Fig.~\ref{fig:CSR-tJ}, the CSR vanishes toward $x\to 0$ as a consequence of the GP. It is also suppressed relative to the non-interacting limit even far from half filling. Vanishing of the CSR  produces a divergence of the Hall coefficient by Eq.~(\ref{RH0}).

\begin{figure}[h]
\begin{center}
\includegraphics[width=8cm,angle=0]{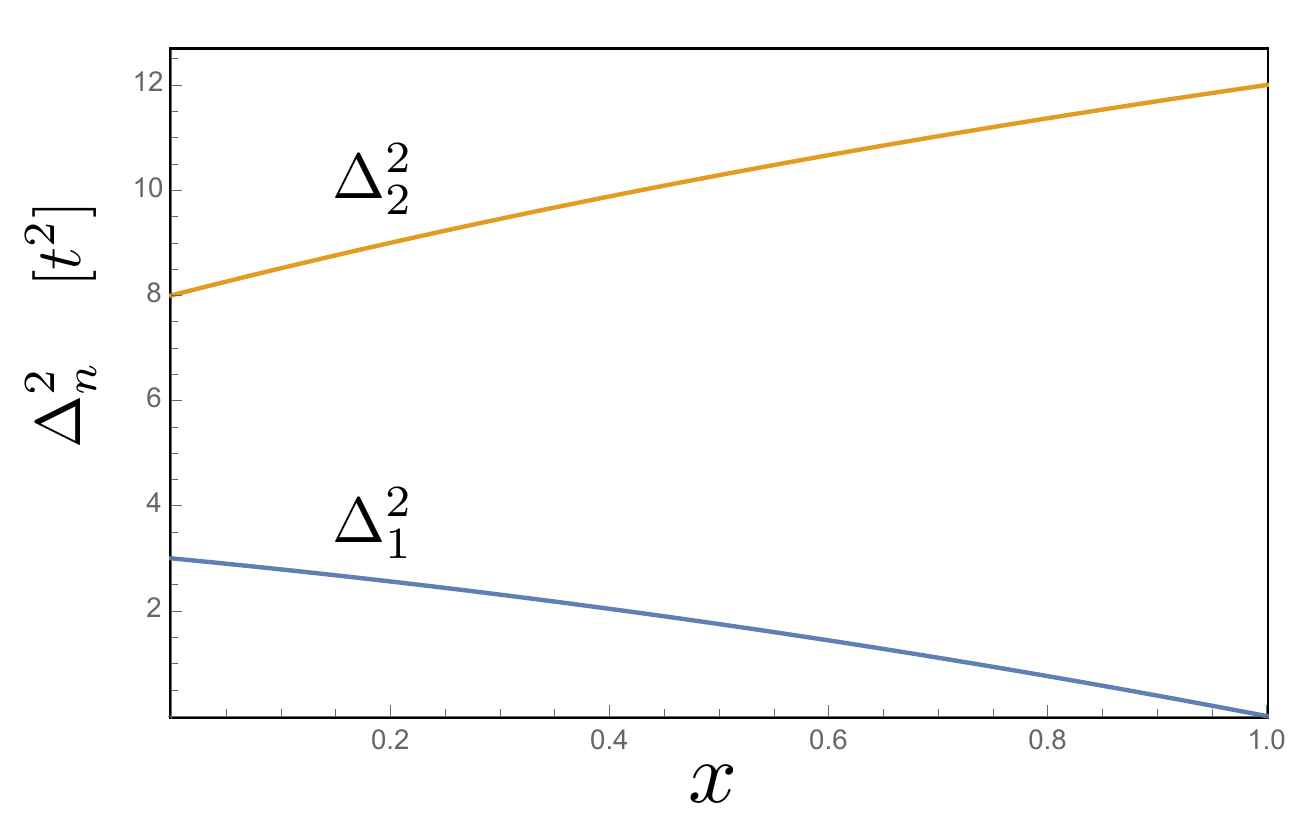}
\caption{The first two recurrents of $H^t$ at order $\beta^0$ as a funcrtion of hole doping $x$.}
 \label{fig:DeltaRatios}
\end{center}
\end{figure}
The high temperature limit of the first two conductivity recurrents of $H^t$ are 
\be
\Delta_1^2=   t^2( 3  - 2x -x^2) ) \quad , 
~~\Delta_2^2 =  t^2 {24(1+x)\over 3 +x }\quad ,
\label{recurrents}
\ee
which are shown as functions of doping $x$ in Fig.~\ref{fig:DeltaRatios}.
$\Delta_1^2$ was calculated by Jak \etal~\cite{Jak}. $\Delta_2^2$
was calculated by Khait \etal  in Ref.~\cite{tJM}.

The DC conductivity as a CF is,
\be
 \sigma_{xx}(\omega) =-\beta \chi_{\rm csr}\lim_{\ve\to 0}  \Im {1 \over    i\ve -{\Delta_1^2  \over i\ve - \Delta_2 G^>_{22}(i\ve) } } \quad .
 \ee
\begin{figure}[h]
\begin{center}
\includegraphics[width=8cm,angle=0]{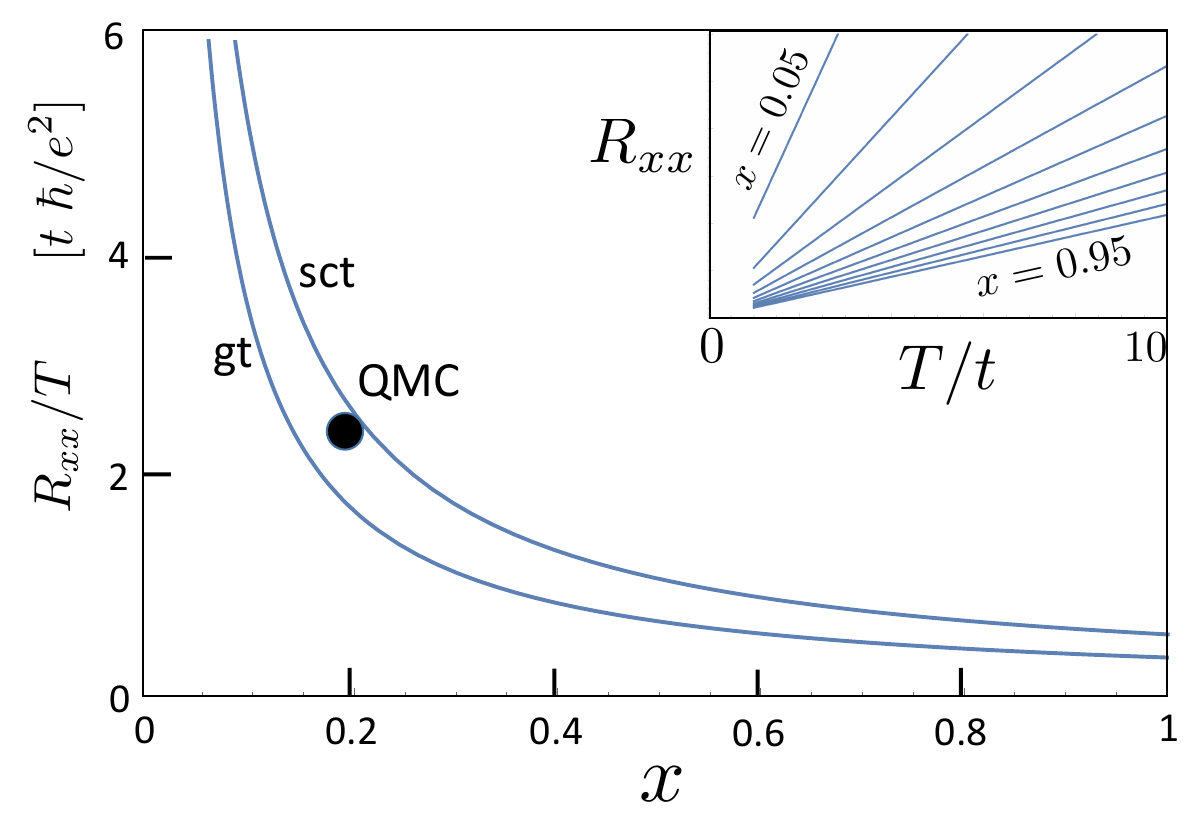}
\caption{High temperature resistivity slopes of  $H^t$.
sct and  gt denote two different continued fraction extrapolation schemes  (see text).  Inset depicts  $R_{xx}(T)$ for  gt approximation.
The solid circle marks the Quantum Monte Carlo result for the $U\!=\!8t$ HM, reported in Fig. S3 of Ref.~\cite{Dev-Rxx}.  \label{fig:RXXvsT} }
\end{center}
\end{figure}
 $\Delta_1$ and $\Delta_2$, can be extrapolated to evaluate $\Im G^>_{22}$  by semicircle termination (SCT), Eq.~(\ref{SCT})),
 \be
 (\bar{\Delta}^{\rm sct}_n)^2 \Rightarrow  \Delta^2_2 \quad , \quad n=2,3,\ldots \infty \quad ,
 \ee
 and by Gaussian termination (GT), Eq.~(\ref{GT})),
 \be 
 (\bar{\Delta}^{\rm gt}_n)^2 \Rightarrow {1\over 2} n \Delta^2_2 \quad , \quad n=2,3,\ldots \infty \quad . 
 \ee 
In Fig.~\ref{fig:RXXvsT}, the high temperature DC resistivity slopes of $H^t$  are plotted for both termination approximations. We see that the slopes diverge toward the Mott limit, as expected by the suppression of the CSR depicted in Fig.~\ref{fig:CSR-tJ}. 
Interestingly, the  resistivity is finite at high temperatures even in the dilute electron density limit   $x\to 1$. The slopes of Fig.~\ref{fig:RXXvsT}  are in qualitative agreement with the HM calculation in Ref.~\cite{Dev-Rxx}.

Note: Perepelitsky \etal~\cite{Perepel} had evaluated the high temperature resistivity slope of the large $U/t$ HM in the high dimensional lattice (large-$d$) limit as,
\bea
R_{xx}^{d\to \infty}(x,T) &\approx& {1\over x}~ R_0(x=1,T)\quad ,\nonumber\\
R_0(x=1,T)&\propto& {T\over t}\quad.
\eea
In the large $d$ limit, vertex corrections to the conductivity bubble can be neglected~\cite{Perepel}. As $x\to 1$, the vanishing compressibility effectively cancels the  diverging effective scattering time. This conclusion agrees with the CF extrapolation of Ref.~\cite{tJM} at $d=2$,
\be
R_{xx}^{d=2}={T\over t e^2} {2\over \sqrt{\pi}}  {\Delta^2_1 \over \chi_{\rm CSR} ~\Delta_2 }\propto {T\over x t} \quad ,
\label{Sxx-GT}
\ee
where the $x$-dependence  is given by  Eqs.~(\ref{CSR-hiT}) and (\ref{recurrents}).   
In the CF, the  ratio $\Delta_2/\Delta_1^2$ is interpreted as the effective scattering time, which  diverges as $x\to 1$.   
The resistivity slopes therefore saturate at $x\to 1$ as depicted in Fig.~\ref{fig:RXXvsT}.

\begin{figure}[h!]
\begin{center}
\includegraphics[width=8cm,angle=0]{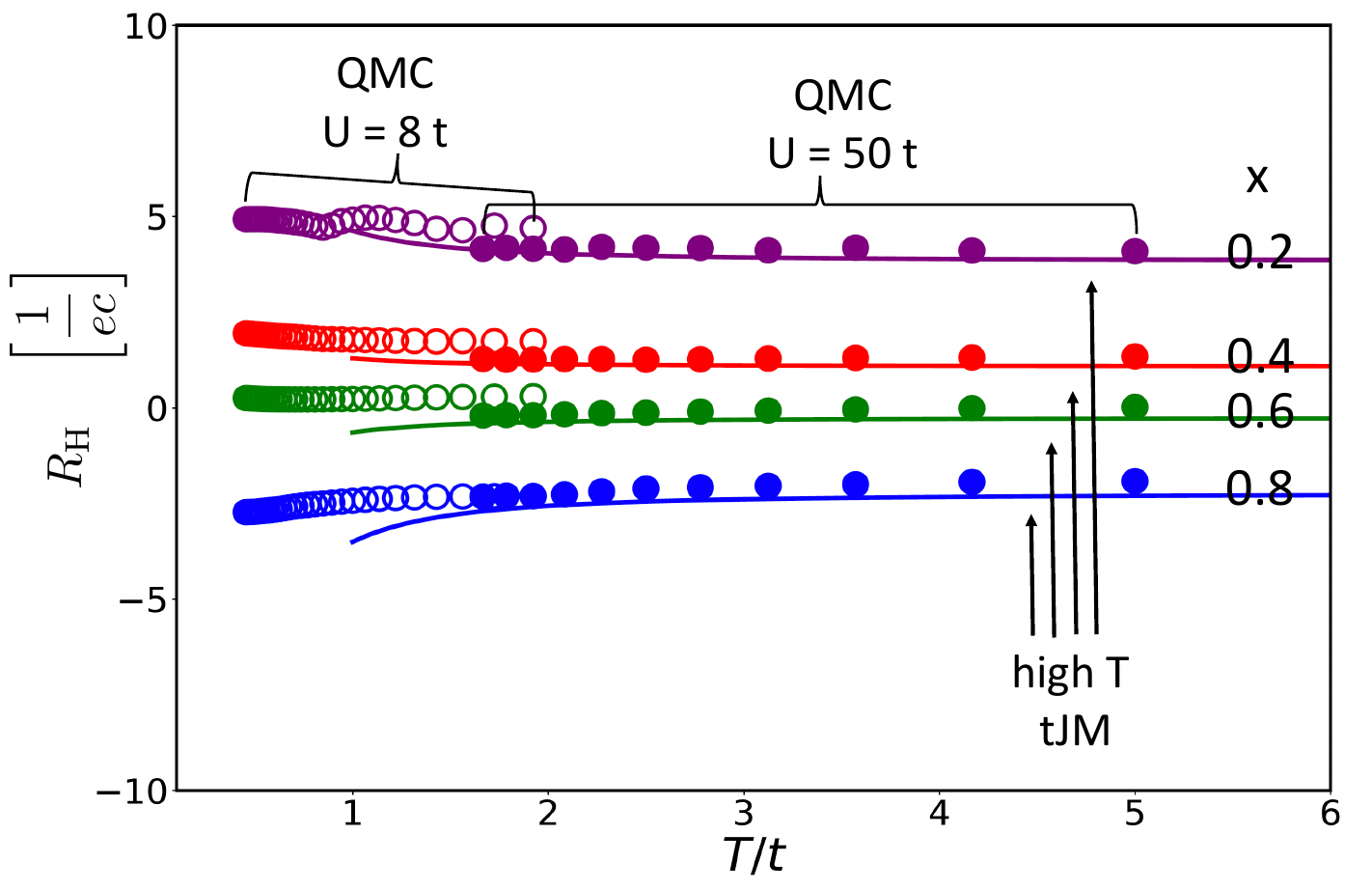}
\caption{
Hall coefficient $R_{\rm H}^{(0)}$  in the intermediate temperature regime $J \le T \ll U$ from Ref.~\cite{tJM}. Lines depict the asymptotic high temperature values for the $t$-model Eq.~(\ref{RH-t}), which are calculated by the high temperature series expansion reviewed in Section \ref{sec:HighT}. The solid and open  circles are obtained by DQMC (see Section \ref{sec:DQMC})  using  Boltzmann weights  of the HM.
The different values of $U/t$ are used to  cover different temperature regimes. The DQMC results are shown in regimes of negligible fermion sign error. We note that the high temperature expansion agrees with the DQMC data down to $T\simeq 2t$, and  
the DQMC data shows quite weak temperature dependence down to $T\simeq J$.  } \label{fig:QMC-RH}
\end{center}
\end{figure}

\subsection{Hall map for large $U/t$}
$R_{\rm H}$ of the t-model was calculated in Ref.~\cite{tJM} by Eq.~(\ref{RH}), using the high temperature series  described in Section \ref{sec:HighT}. 
The doping dependent CMC  was determined up to order $(\beta \tt)^4 $,
\bea
\chi^t_{\rm cmc} &=&     {\beta^2\tt^4 e^3  \over 2c }   x(1-x)  (-5+10x  + 3x^2 )\nonumber\\
&+&   {\beta^4\tt^6 e^3\over 16 c} x(1-x) (45 - 136 x + 50 x^2 +48 x^3 - 71 x^4) \quad .
\label{CMC-hiT}
\eea
The  zeroth Hall coefficient was determined up  to second order in $\beta t$,
 \be 
 R^{t,(0)}_{\rm H}=  {1\over ec} \left(  {-5  +10 x + 3 x^2\over  8 x(1-x)} \right)+
 \left( (\beta \tt)^2   { -45 - 53 x + 145 x^2 + 225 x^3\over 192  x}\right) \quad .  
\label{RH-t}
\ee 
The calculation of  $R^{t,(0)}_{\rm H}$ was extended down to  the intermediate temperature (IT) regime: $J\le T\ll U$, by DQMC which is reviewed in Section~\ref{sec:DQMC}. The weights are given by the HM, which at large $U/t$ effectively includes the
effects of the spin interactions $H^{J}$  in the tJM.  
The DQMC data is plotted in Fig.~\ref{fig:QMC-RH}. The analytic high temperature series curves of Eq.~(\ref{RH-t}) are depicted by solid lines. 
The DQMC was performed using
Boltzmann weights of the strongly interacting HM on the square lattice.
Typical feasible system sizes are chosen between $8\!\times\!8$ and $12 \times 12$ lattice sites, where we found little size dependence of our results. This indicates a short correlation length in the studied temperature regime.
The imaginary time step was  chosen to render the Trotter errors to be insignificant.
The  number of Monte Carlo sweeps was generally $\sim 10^{5}$.

The DQMC data shown in Fig.~\ref{fig:QMC-RH} was included for temperatures and doping concentrations  where the average fermion-induced sign of the Boltzmann weights $\langle S\rangle$ (see  Eq.~(\ref{QMC-sign})), is  higher than 0.8. Ref.~\cite{tJM} also showed that $\langle S\rangle$ decreases or large values of $U/t$, and at intermediate doping concentrations, while it approaches unity at high temeratures and for $x\to 1$ and $x\to 0$.

In the extremely high temperatures regime $T\gg  U$, the tJM is not valid. One can  use the unprojected HM of Eq.~(\ref{HM}) since the suppression of double occupancies in the HM diminishes since the Boltzmann weights become weakly dependent of $U$.
$R_{\rm H}^{(0)}$ is then simply given by the high temperature limit of the non-interacting SL, i.e.,
\bea
&&  \chi^{\rm HM}_{\rm csr}\sim \beta e^2 t^2 n(2-n) \quad , \quad
 \chi^{\rm HM}_{\rm cmc}\sim   \beta^2 {e^3\over c} t^4 n(2-n)(1-n) \quad ,   \nonumber\\
&& R^{(0)}_{\rm H} =  {2(1-n) \over   n(2-n)ec} +\cO(\beta U)^2 \simeq R_{\rm H}^{\rm SL} \quad ,
\label{RH-HM}
\eea
where $n=1-x$ is the electron density. 

Interestingly, at lower temperatures, which lie within the applicability of the tJM, the recovery of the $U\simeq 0$ behavior at high temperatures is heralded by the effects of $H^{J'}$ in the tJM (\ref{tJM}).
$H^{J'}$ become important in the CMC susceptibility, by adding to the currents and magnetizations 
next neighbor hopping terms, 
\bea
&&{j'}^{\alpha}_{ijk}= -i eJ (1-n^h_j) \left( K^-_{ik;\alpha}-   2{\bf \Sigma}^-_{ik;\alpha}\cdot \bs_j\right)   \nonumber\\
&&M' =   {1\over 2c}   \sum_{\ij\jk} ( x_i {j'}^{y}_{  ijk}- y_i  {j'}^x_{ ijk}  ) \quad .
\label{OrderJ}
\eea
Since $J\ll t$, these terms are unimportant for the CSR. However, for the CMC they contribute one less power of $\beta H $ than those $\chi_{\rm cmc}^t$, since they encircle a magnetic flux with one less Hamiltonian bond. Hence 
they become dominant at  temperatures of order $T\simeq U $,
\be
\chi^{J'}_{\rm cmc} = {\beta J e^3\over 2} x(1-x)(1+2x-3x^2) \quad .
\label{CMC-J'}
\ee
 We note that $\chi^{J'}_{\rm cmc}$  has  the opposite sign to $\chi^{t}_{\rm cmc}$, and $H^{J'}$ yields the contribution to $R_{\rm H}^{(0)}$ of,
\be
\Delta R^{J'}_{\rm H} =  {4 T\over ecU} \left(  {1+2x-3x^2  \over 8  x (1-x)} \right) \quad .
\label{RH-J}
\ee
Therefore,
as $x\to 0$ and $T\to U$,  $\Delta R^{J'}_{\rm H}$ reduces the positive Hall coefficient divergence, and  the  doping levels of the  Hall sign reversal. Its effect is observed in the upper blueish regions of Fig.~\ref{fig:HallMap-U}.
\subsection{Hall coefficient corrections}
  \label{sec:Corr}
   \begin{figure}[h!]
\begin{center}
\includegraphics[width=8cm,angle=0]{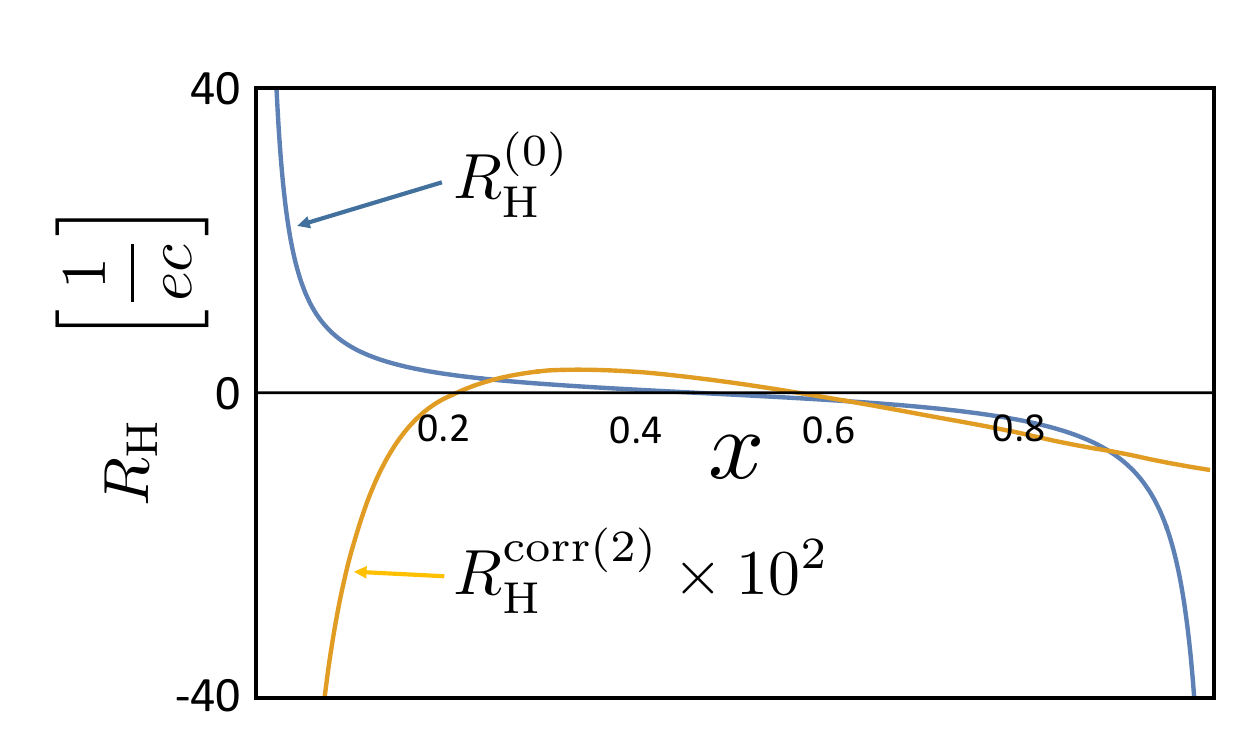}
\caption{Comparison of the the leading order high temperature Hall coefficient of $t$-model, defined in Eq.~(\ref{RH-t}) (blue line), with its second order  correction term (orange line, multiplied by 100 for visibility) defined in Eq.~(\ref{RHcorr2}).   
The ratio of magnitudes  vanishes at $x\to 1$, and approaches $0.06$ at $x\to 0$. 
}
 \label{fig:RHcorr}
\end{center}
\end{figure}

We calculate the correction term up to second order
\be
R_{\rm H}^{{\rm corr}(2)} =  {1\over \chi_{\rm csr}} \left(  \left({\Delta_1\over \Delta_2}\right)^2 \cM_{2,2}  -\left({\Delta_1\over \Delta_2}\right)\left( \cM_{2,0} +\cM_{0,2} \right) \right)\! \quad .
 \label{RHcorr2}
 \ee
The three hypermagnetization matrix elements $\cM_{02},\cM_{20},\cM_{22}$  were evaluated numerically. 

The calculation of $M_{2,2}''$ to leading order in $(\beta t)$  involved traces over up to $10^5$ operator clusters.

In Fig.~\ref{fig:RHcorr}, we plot the final result for $R_{\rm H}^{{\rm corr}(2)}$ for all doping concentrations.
We see that in comparison $R_{\rm H}^{(0)}$, the
its quantitative effect is  negligible, and maximized toward $x\to 0$ by
 \be
\lim_{x\to 0}  \left| R_{\rm H}^{{\rm corr}(2) }   /R^{(0)}_{\rm H} \right| \to 6\% \quad .
\label{ratio}
 \ee

Based on the  high temperature ($\cO(\beta t)^0$) result in Eq.~(\ref{ratio}) , and the weak temperature dependence found
for $R_{\rm H}^{(0)}$ shown in Fig.~\ref{fig:QMC-RH}, we may assume that the correction term remains negligible throughout the IT regime.
  We note that we have not calculated $R_{\rm H}^{\rm corr}$ for  the HM at $T\ge U$.

\subsection{Numerical calculation Krylov operators and hypermagnetization matrix elements}
\label{sec:numericalOperators}
The high temperature moments $\mu_{2},\mu_4$ which yield the recurrents $\Delta_1,\Delta_2$,  and the hypermagnetization matrix elements $\tilde{\cM}_{02},\tilde{\cM}_{20},\tilde{\cM}_{22}$  defined in Section \ref{sec:Rcorr}, were evaluated numerically by symbolic manipulation. 

The operator $\cL^n j^\alpha_i$ is expanded in terms of  products of site operators $O_{i_1}({\br_1}) \cdot  O_{i_2}(\br_2)\cdot ... \cdot O_{i_N}(\br_N)$ with  real coefficients that are stored separately.
Each application of the Liouvillian or the hypermagnetization changes the hyperstate after  multiplying the operators site-by-site using the tJM multiplication Table~\ref{table:opmult} of the 8 GP fermions and spin operators. 
One  keeps track of the overall order of the fermion  operators $\tilde{c}_i,\tilde{c}_j^\dagger$, and the negative signs produced when reordering the contributions to the same final  hyperstate from different multiplication paths.

\begin{table}[h!]
	\centering
	\begin{tabular}{ |c|c|c|c|c|c|c|c|c| } 
		\hline
		& $\tilde{c}^\dagger_\uparrow$ & $\tilde{c}^\dagger_\downarrow$ & $\tilde{c}_\uparrow$ & $\tilde{c}_\downarrow$ & $n_\uparrow$ & $n_\downarrow$ & $ \tilde{s}^+$ & $\tilde{s}^-$ \\ 
		\hline
		$\tilde{c}^\dagger_\uparrow$ & 0 & 0 & $n_\uparrow$ & $\tilde{s}^+$ & 0 & 0 & 0 & 0 \\ 
		\hline
		$\tilde{c}^\dagger_\downarrow$ & 0 & 0 & $\tilde{s}^-$ & $n_\downarrow$ & 0 & 0 & 0 & 0 \\ 
		\hline
		$\tilde{c}_\uparrow$ & $n_h$ & 0 & 0 & 0 & $\tilde{c}_\uparrow$ & 0 & $\tilde{c}_\downarrow$ & 0 \\ 
		\hline
		$\tilde{c}_\downarrow$ & 0 & $n_h$ & 0 & 0 & 0 & $\tilde{c}_\downarrow$ & 0 & $\tilde{c}_\uparrow$ \\ 
		\hline
		$n_\uparrow$ & $\tilde{c}^\dagger_\uparrow$ & 0 & 0 & 0 & $n_\uparrow$ & 0 & $\tilde{s}^+$ & 0 \\ 
		\hline
		$n_\downarrow$ & 0 & $\tilde{c}^\dagger_\downarrow$ & 0 & 0 & 0 & $n_\downarrow$ & 0 & $\tilde{s}^-$ \\ 
		\hline
		$\tilde{s}^+$ & 0 & $\tilde{c}^\dagger_\uparrow$ & 0 & 0 & 0 & $\tilde{s}^+$ & 0 & $n_\uparrow$ \\ 
		\hline
		$\tilde{s}^-$ & $\tilde{c}^\dagger_\downarrow$ & 0 & 0 & 0 & $\tilde{s}^-$ & 0 & $n_\downarrow$ & 0  \\ 
		\hline
	\end{tabular}
	\renewcommand{\tablename}{Table}
\caption{Multiplication table of GP operators in the tJM. The entry $O_{i,j}=O_i \cdot  O_j$, where $i$ and $j$ are row and column respectively, and 
$\tilde{s}^\alpha=s^\alpha (1-n^h),\quad n_{\uparrow,\downarrow}= (1-n^h)\left({1\over 2}\pm \tilde{s}^z\right)$.
}
	\label{table:opmult}
\end{table}
When evaluating the final traces to obtain the moments or matrix elements,  most operators vanish unless they only contain unit operators and  density operators $n_\gamma$.

\section{Discussion}
Sign reversal of the Hall coefficient at low doping
has been previously obtained by dynamical mean field theory (DMFT)~\cite{DMFT2013,Khomskii},  QMC~\cite{Dev-PR,Comm:QMC-AC} and  determinant QMC~\cite{DMFT2018}. 
These methods have found  evidence of hole pockets in the momentum dependent occupation,
which is qualitatively consistent with our results at low doping.  
 Refs.~\cite{SignSxy-1,SignSxy-2,New-DMFT} calculate (within DMFT) the Hall conductivity of the Hubbard model  at strong magnetic fields. They found that the Hall  sign is reversed relative to band theory, near half filling. These effects were attributed to the Chern numbers of the non-interacting Hofstadter's butterfly bands of the square lattice. It is interesting that  these  sign changes which were predicted at strong fields,
(as measured in strongly correlated flat bands Moir\'e systems~\cite{Science17}), are qualitatively similar to the Hall sign we obtain in the weak field limit. 

Here however, we find that the sign reversal occurs already at  $x\le 0.45$, which may come  as a surprise vis-a-vis the widely
used band  theoretical approaches at much lower doping. The reason is simply related to the
spin and charge entangled commutation relations of GP current operators of the tJM, 
\be
 [K^{-}_{12},K^{-}_{23}]\! =\!  K^{-}_{13} \left( {1+  n^h_{2} \over 2}\right) 
\!+\! {\bf \Sigma}^{-}_{13} \cdot   {\bf s}_{2}(1-n^h_2) \quad ,  
\label{comm}
\ee
which affects the hole density dependence of the CMC, and determines the doping concentration of the sign reversal.

Since the hole density operators have coefficients of order unity, it  is natural that the sign change occurs at a fraction with a 
denominator not much larger than unity.
The important lesson we can learn from this is that
the effects of GP reach far into  the high doping and temperature regimes.

Previous QMC  calculations  of  $R_{\rm H}^{(0)}$ for the  HM~\cite{Dev-PR,Dev-RXY} have used our Eq.~(\ref{RH}),
and neglected $R_{\rm H}^{\rm corr}$.
They have also reported a positive Hall sign near half filling (but no apparent divergence) for the square lattice model.  
However in the regime of $U/t \simeq 16$,  $R^{\rm corr}\propto||[H,j^x]||\sim U/t$, is expected to  dominate over $R_{\rm H}^{(0)}$, and hence cannot be ignored.

The     difference between $R_{\rm H}^{(0)}({\rm HM})$~\cite{Dev-RXY} and  $R_{\rm H}^{(0)}({\rm tJM})$ of Eq.~(\ref{RH-t}),  can be explained by the fact that $R^{\rm corr} ({\rm tJM}) \ll R^{\rm corr}({\rm HM})$  in the IT regime.

We can compare $R_{\rm H}^{(0)}$  of  Eq.~(\ref{RH-t}) to the  infinite frequency Hall coefficient of the t-model calculated at leading order in $\beta$ by Shastry, Shriman and Singh~\cite{SSS}, 
\be
R_{\rm H}^*={d\over dB} \lim_{\beta\to 0,\omega\to \infty} \rho_{xy}(\omega) = {1\over ec} \left( -{1\over 4x}+{1\over 1-x} -{3\over 4}\right) \quad .
\ee
$R_{\rm H}^*$ changes sign at $x=1/3$ and 
diverges as $1/(4x)$ toward the Mott limit. While   Eq.~(\ref{RH-t}) changes sign at $x=0.4415$ and diverges as $5/(8x)$ at small $x$.
 Still, the qualitative similarity  we find between the infinite and zero frequencies
 is surprising, but we cannot infer any general relations from this coincidence. We note that $R_{\rm H}^*$ may be relevant to the optical Kerr effect~\cite{Kerr}.

In summary, the calculation of the DC Hall coefficient as a function of hole doping in the IT regime appears to be well controlled. 
We can learn from it that the Mott insulator phase affects the charge carriers in the nearby metallic phase. This has theoretical implication to any  possible ordered phases at lower temperatures, such as superconductivity and/or other orders~\cite{DMRG,Sorella,Troyer,RVB}.
The superconducting order parameter should  consist of GP holes, with spin entangled commutation relations~(\ref{comm}), rather than 
quasiparticles near the non-interacting Fermi surface. The GP also governs the relation  between superfluid stiffness (which is bounded by the CSR), and the electron density~\cite{Uemura}. It affects the moving vortex charge and the Hall sign  in the flux flow regime~~\cite{RH-Kim,SciPost}.
\newpage
\part{Strongly Correlated Bosons}
\label{part:SCB}
\begin{figure}[h!]
\begin{center}
\includegraphics[width=8cm,angle=0]{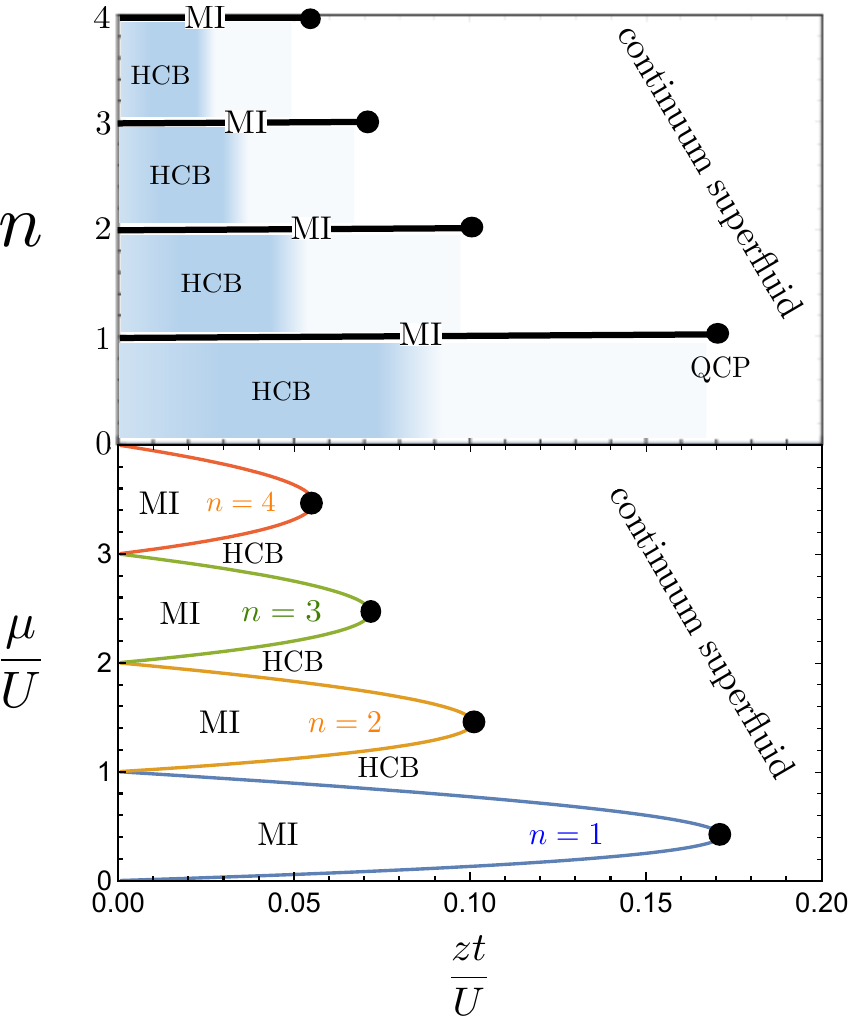}
\caption{$T=0$ phase diagram of the Bose Hubbard Model, Eq.~(\ref{BHM}). At weak interactions $U/zt\to 0$, the lattice periodicity is unimportant, and the bosons condense into a continuum superfluid ground whose excitations are goverened by the Gross-Pitaevskii field theory~(\ref{GP}). The quantum critical points (marked by black disks), 
denote second order transitions into gapped Mott insulators of integer fillings $n$.   Sandwiched between the Mott insulator phases, at fractional average densities, the Hamiltonian renormalizes into Hard Core Bosons (HCB) model Eq.~(\ref{HCB}), whose boson occupations are constrained to fluctuate between two consecutive integers.
}
 \label{fig:BHM}
\end{center}
\end{figure}

\section{The Bose Hubbard model}
The Bose Hubbard model (BHM) is a minimal model of interacting lattice bosons,
\be
H^{\rm BHM}= -t\sum_{\ij} \left( e^{-iqA_{ij}/c} a^\dagger_i a^{}_j + e^{iqA_{ij}/c} a^\dagger_j a^{}_i \right) +   U \sum_i  n_i^2 -\mu\sum_i n_i \quad.
\label{BHM}
\ee
$t$ is the hopping rate and $U$ is the local repulsive interaction, $q$ is the boson charge, and $c$ is the speed of light. The lattice constant is unity.
$a^\dagger_i$ creates a boson on lattice site $i$, and $n=a^\dagger_{i} a^{}_i$ is the boson occupation.  $A_{ij} =  \int_{\bx_i}^{\bx_j}d\bx\cdot \bA$, where $\bA$
is the electromagnetic vector potential. 

$H^{\rm BHM}$ can be realized by cold bosonic atoms trapped in an optical lattice, where $\bA$ can be implemented by light induced artificial gauge fields~\cite{Spielman}.

The BHM also describe a Josephson junction array where the BCS pairing gap in each grain is larger than the Josephson coupling $t$. $U$ would be the inverse capacitance of each grain.

The BHM has a well known phase diagram~\cite{Scalettar} which is shown in Fig.~\ref{fig:BHM}.

\subsection{Weak interactions}
We note that for weak interactions, $t/U\to \infty$,
the bosons condense into the $\bk\simeq 0$ eigenstates  with effective dispersion $\omega_\bk \simeq {|\bk|^2 \over2  m^*}$. The low energy theory is therefore effectively a Galilean symmetric continuum theory described by the Gross-Pitaevskii (GP) action~\cite{pitaevski},
\be
S^{\rm GP}=\int dt d^d x \left( i \psi^*\dot{\psi} - {1\over 2m^*}|\bnabla - i{q\over c} \bA \psi|^2 + U |\psi|^4 - \mu|\psi|^2 \right) \quad ,
\label{GP}
\ee
where $\psi(\bx_i)$ is a complex coherent-state field which represents the lattice boson operator $a_i$.
For dimension $d>2 $, $S^{\rm GP}$ describes a superconductor at low temperatures, and a charged bosons gas above a finite transition temperature. In the presence of impurities, the metallic conductivity tensor  is described by Drude theory Eq.~(\ref{S-Drude}) with scattering rate  which vanishes with impurity concentration. Therefore by (\ref{RH-Drude}), the Hall coefficient is equal to the Galilean symmetric result,
\be
R^{\rm GP}_{\rm H} = {1\over nqc} \quad ,
\label{RH-Gal}
\ee
where $n$ is the boson density.
No Hall sign change is expected at as  function of density or temperature in the weak interactions regime.  

\subsection{Strong interaction}
\begin{figure}[h!]
\begin{center}
\includegraphics[width=8cm,angle=0]{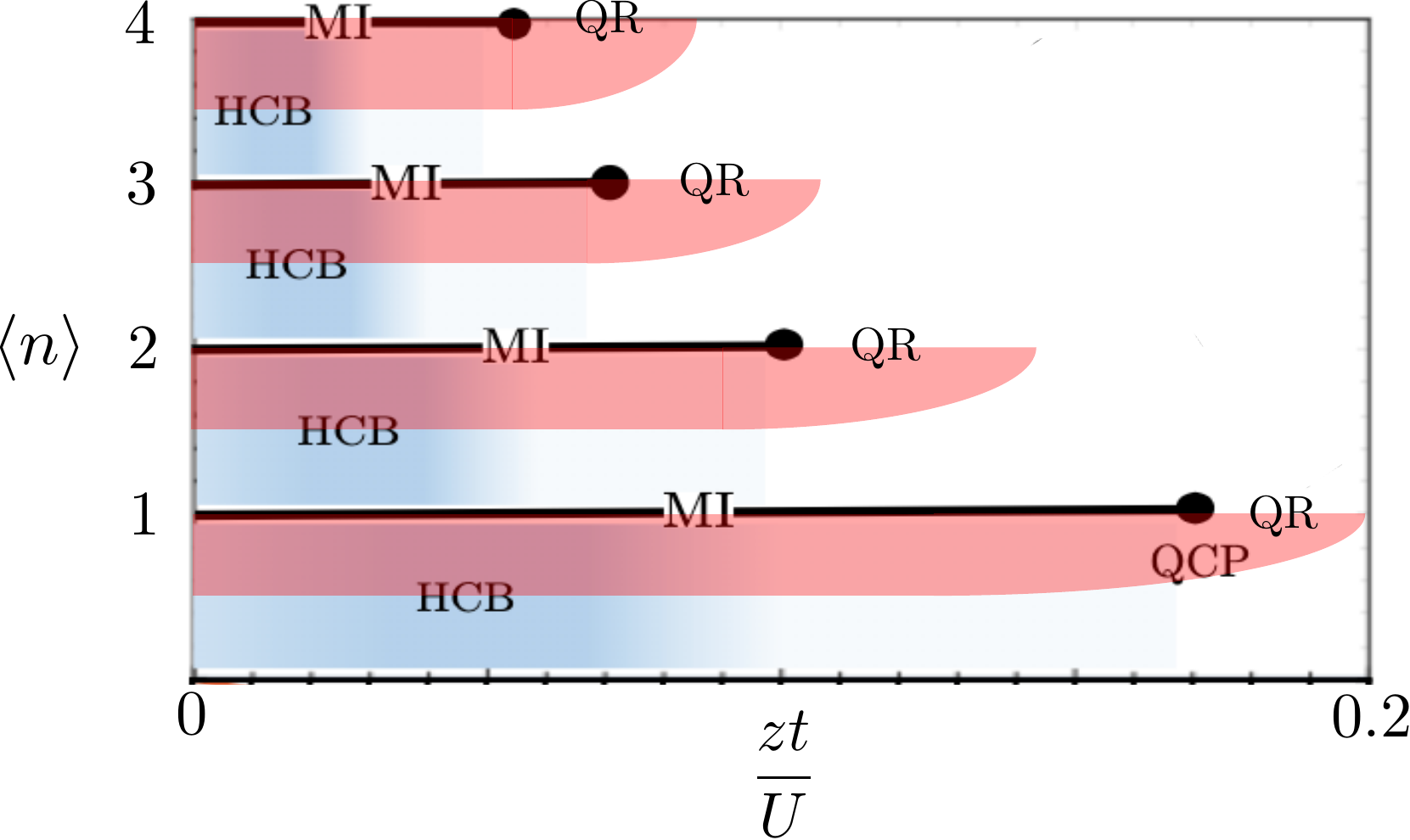}
\caption{Hall signs (for $q>0$) for the BHM phase diagram as calculated by the ground state Chern number Eq.~(\ref{Chern})~\cite{Huber,LAA}. Red regions are negative, and blue and white regions are positive. QR and HCB  regions denote the effective strong interaction models (\ref{HQR}) and  (\ref{HCB}) respectively. The BHM ground state Chern number sign agrees with that of $R_{\rm H}$ at high temperatures.
}
 \label{fig:BHM-RH}
\end{center}
\end{figure}

Fig.~\ref{fig:BHM-RH} depicts the BHM ground state Chern number (Eq.~(\ref{Chern}) as calculated by Huber and Lindner~\cite{Huber}
in the strong interaction regime.
We note the red regions of Hall sign reversal at strong interactions. While at zero temperature the ground state is  a gapless superfluid with zero Hall resistivity, and the Chern calculation should not apply (see Section~\ref{sec:Proxies}). 
Nevrtheless, we find that the Hall sign agrees with the Hall coefficient sign in the the corresponding metallic phases at higher temperature, as described below.

For strong interactions, $t/U<1 $ there are Mott insulator lobes in which  the boson filling is locked into an integer $\langle n_i\rangle=m=0,1,2\ldots$.
The mean field boundaries $\mu_{\rm cr}(U, m)$ for the Mott lobe $m$, is  given by the implicit equation~\cite{QPT-Subir},
\be
zt \left( {m+1\over U m-\mu} -{m+1\over Um-\mu}\right) =1 \quad ,
\ee
which also determines the location of the Quantum Critical Points (QCP) 
$(\mu_{\rm cr}(m), U_{\rm cr}(m))$, by the equation $\left( {\partial \mu \over \partial U}\right)^{-1}=0$.

Near the QCP at density $\bar{n}$, the BHM maps onto an O(2) quantum field theory~\cite{FisherBoson}, described by a complex field $\psi(\bx_i)\propto \langle b_i\rangle$. In a dimensionless scaled form~\cite{Higgs} the O(2) action is,
\be
S^{\rm O(2)} =  {1\over g} \int \int d^{d+1} x  \sum_{\mu=1,d+1}|\left(\partial_\mu -i{q\over c} A_\mu\right) \psi |^2   + {m_0\over 8} \left( |\psi|^2 - 2 \right)^2 \quad .
\label{O2FT}
\ee
$x_{d+1}$ is the imaginary time coordinate,  $A_{d+1}=c\mu/q$. The coupling constant $g \propto U/t$
determines ground state. In $d\ge 2$, $g<g_c$, the field theory describes a superconductor, and for $g>g_c$ it is a Mott insulator.  $g_c$ is the QCP. $m_0$ is the bare Higgs (amplitude mode) mass. The low energy Higgs mass  and the superfluid stiffness vanish at the QCP.

In order to obtain operator expression for the currents and magnetization, which are necessary for the CMC and CSR susceptibilities, the BHM near the QCP can be renormalized onto an effective Quantum Rotators (QR) Hamiltonian~\cite{QPT-Subir},
\be
 H^{\rm QR} = \sum_i   {q^2\over 2 C} (n_i-\chi_c \mu)^2 -  J^{\rm eff}(n_i-\bar{n}) \sum_{\langle ij\rangle}\cos(\phi_i-\phi_j-{q\over c}A_{ij}) \quad ,
\label{HQR}
\ee
where 
\be
\left[n_i,\phi_j\right]=-i\delta_{ij} \quad .
\label{CCR}
\ee

$ H^{\rm QR}$ physically describes a Josephson junction array, whose grain capacitance and Josephson couplings are  related to the BHM parameters,
\bea
&& C \simeq {1\over U}\nonumber\\
&&J^{\rm eff} \simeq 2dt \quad . 
\eea

Near the QCP, $J^{\rm eff}$ is renormalized down to  the 
low energy superfluid stiffness $\rho_s$, as given by  the CSR,
\be
\chi_{\rm csr} =q^2 \left( j^x|j^x\right)=q^2 \rho_s(\bar{n}) \quad .
\ee
$\rho_s$ vanishes at the QCP as $(g_c-g)^\nu$, where $\nu$ is the $d+1$ dimensional correlation length exponent of the O(2) model~\cite{Snir}.
$\rho_s$ also depends on the electron density $n$.
Expanding the density fluctuations about the integer Mott density $\bar{n}$ in the path integral,  we obtain the operator,
\be
\rho_s[n_i]= \rho_s(\bar{n}) + \sum_i {d\rho_s(\bar{n})\over d\bar{n}} (n_i-\bar{n}) \quad .
\ee
Thus, the renormalized current and magnetization operators involve both phase and density operators
\bea
j^\alpha(\bx_i) &=& {q\sum_i \rho_s[n_i]~\bnabla^\alpha\phi(\bx_i) +{\rm h.c}\over 2}\nonumber\\
m(\bx_i) &=& {{q\over 2c}\rho_s[n_i] (x_i  j^y-y_i j^x )+{\rm h.c.}\over 2} \quad .
\eea
The CMC can be  calculated using the cannonical commutations in Eq.~(\ref{CCR}):
\be
\chi_{\rm cmc}= {2\over \cV} \left( j^y,[M^z,j^x]\right)={q\over \cV}\left( j^y,   x  \partial_x  \left({d\rho_s\over dn}\right) j^y \right)={q^3\over c} \rho_s {d\rho_s\over d\bar{n}} \quad ,
\ee
which according to Eq.~(\ref{RH0}):
\be
R_{\rm H}^{(0)} = {\chi_{\rm cmc}\over \chi_{\rm csr}^2 }={1\over q c} {d\log\rho_s\over d\bar{n} } \quad ,
\label{QR-RH}\ee
which qualitatively agrees with similar expressions found for the reactive Hall constant near a Mott insulator~\cite{Zotos-RH},
and for the Hall coefficient of a two leg Luttinger-liquid~\cite{Thierry-RH}.
$R^{\rm corr}$  may be assumed to be small near the QCP since 
it depdns on current relaxation which scales with
the ratio of lattice constant to  the longer correlation length.

Since  the Mott densities $\bar{n}$ minimize $\rho_s[\bar{n}]$, 
The Hall coefficient  vanishes at extrema of $\rho_s(n)$.
 Near the QCP, $R_{\rm H}/q$ changes from negative to positive at the Mott density $\bar{n}=1,2,\ldots$, since the superfluid stiffness vanishes at the QCP.  The switch from  $R_{\rm H}/q >0$ (given by Eq.~(\ref{RH-Gal}) at low densities) to $R_{\rm H}/q <0$ is expected around midpoint between two Mott lobes.
 This behavior is captured by the Hard Core Bosons model below.

\section{Hard Core Bosons on the square lattice}
\label{sec:HCB}

Between consecutive Mott lobes, at temperatures below $T<< |\mu_{\rm cr}(n)-\mu_{\rm cr}(n+1)|$,  there is a superfluid phase  with constrained density fluctuations.  

These fluctuations can be described by the renormalizing the BHM onto the projected space between two Mott lobes, $n_i\in (m,m+1)$. The projected charge fluctuations are described hard core bosons (HCB) operators,  
\bea
&&\tilde{a}_i, \tilde{a}^\dagger_i \quad , \quad (\tilde{a}^{})^2=(\tilde{a}^\dagger)^2=0 \quad ,\nonumber\\
&&0\le \tilde{n}_i \le 1 \quad .
\eea
These operators are faithfully represented by SU(2) spin-half operators,
\be
\tilde{a}^\dagger_i \!\to\! S^+_i \quad , \quad \tilde{n}_i - {1 \over 2} \!\to\!  S_i^z \quad ,\quad \bS^2={3\over 4} \quad.
\ee
which can be used to represent the HCB hamiltonian by the gauged quantum XY model of spin half,
\be
H^{\rm HCB}=- t \left(\sum_{\ij }e^{-i{q\over c} A_{ij}} S^{+}_{i}S^{-}_{j} + h.c\right) - \mu\sum_{i}S^{z}_{i} \quad ,
\label{HCB} 
\ee 
where $\mu,t$,  which depend on $U/t, m$ are renormalized from the corresponding BHM values.
The HCB charge polarizations, currents and magnetization operators are respectively represented by,
\bea
&&\bP = q \sum_{i} ~ \bx_i  S^z_i \quad ,\quad \bj=i[H,{\bf P}]=\sum_{\ij} \bj_{ij} \quad , \nonumber \\ 
&& j^\alpha_{ij}=- i q t (S^+_i S^-_j - S^-_i S^+_j)(x^\alpha_j-x^\alpha_i) \quad ,\nonumber\\
&&M={1\over 4c} \sum_{\ij}  (\bx_i +\bx_j) \times \bj_{ij} \quad.
\label{electric}
\eea
Here $\br_i$ denotes the position of site $i$.
For the square lattice, the density dependent BKT transition temperature  for HCB on the square lattice has been evaluated by QMC:~\cite{Ding-QMC1,Ding-QMC2,Harada}
\be
T_{\rm BKT}(n) \simeq 2.8 t n(1-n) \quad .
\ee

\subsection{Superconducting phase}
HCB on the square lattice exhibit long range superfluid order below the Berezinskii, Kosterlitz and Thouless~\cite{Ber,BKT}  transition (BKT) transition temperature $T-{\rm BKT}$. 
At $T\ge 0$ the
two dimensional superfluid stiffness $\rho_s$ in the classical (large $S$ approximation) is,
\be
 \rho_s \equiv   q^{-2} {d^2 F^{\rm cl}(T,n) \over (dA_x)^2}\Bigg|_{\bA=0} >0 \quad ,
\ee
which yields a maximum at half filling,
\be
\rho_s^{cl}(0,n)= 2t n(1-n) \quad .
\label{rho-cl}
\ee
Quantum corrections  to $\rho_s^{cl}(0,\half)$ enhance it  by
about 7\%~\cite{Stiffness-Sandvik,Stiffness-Troyer}.

\begin{figure}[h]
   \centering
\includegraphics[width=0.7\columnwidth]{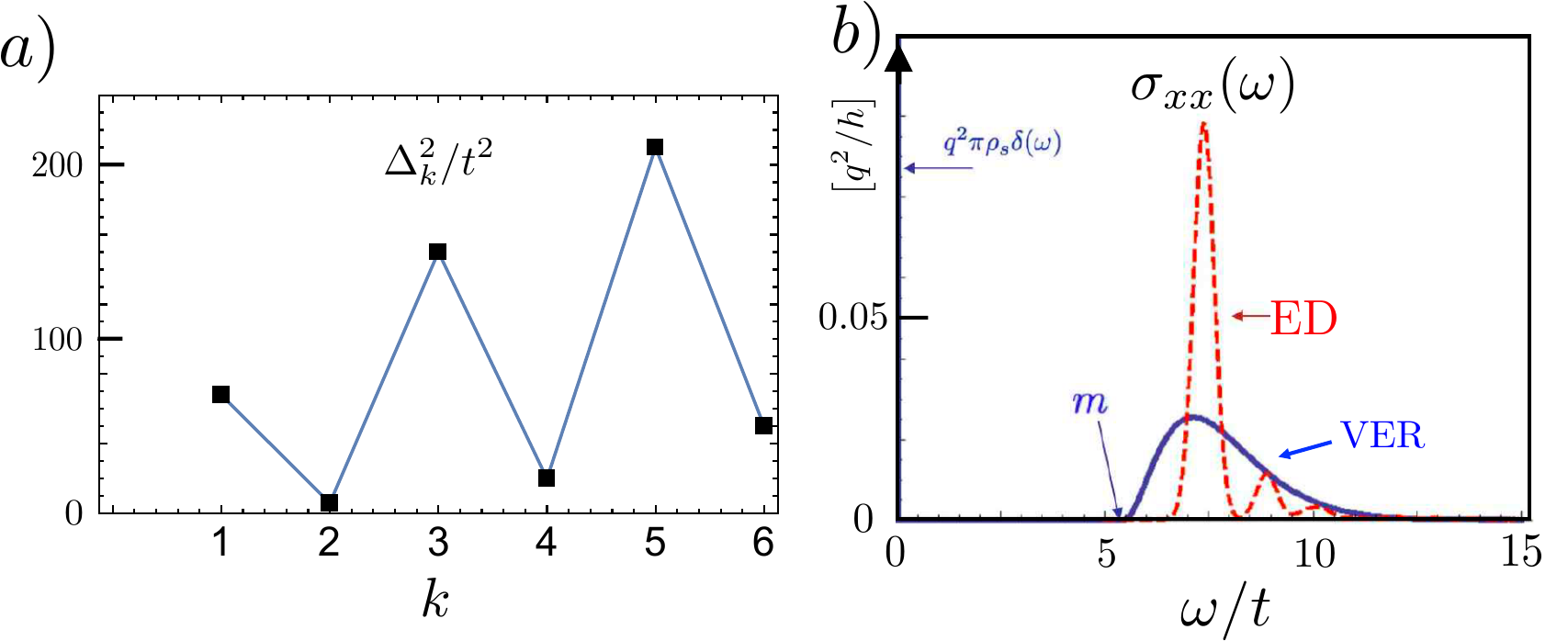}
\caption{Dynamical conductivity $\sigma_{xx}(\omega) $  of HCB in the superfluid phase.  a) 6  ground state recurrents, computed  by ED on  $4\times 4$ lattice.  Even-Odd alternation of the recurrents indicates a strong suppression of $\sigma_{xx}$ at low frequencies. b) $\rho_s$ is the zero temperature superconducting stiffness. Blue line: Conductivity by continued fraction extrapolation by Variation extrapolation of recurents. Red line: Exact Diagonalization calculation of Kubo fomrula . The apparent threshold  energy $m$ is interpreted as Higgs-amplitude mode's mass.}
\label{fig:SxxHCB0}
\end{figure}

The zero temperature AC longitudinal conductivity at half filling $n={1\over 2}$, is evaluated by computing the conductivity moments $\mu_0,\ldots \mu_{12}$ in Eq.~(\ref{EV}) by ED.
The resulting 6 recurrents $\Delta_1,\ldots \Delta_6$ are depicted in Fig.~\ref{fig:SxxHCB0}(a). The large odd $>$ even oscillations of the recurrents indicates that the conductivity is  suppressed at low frequencies. In fact, the continuous form of $\sigma_{xx}(\omega)$, shown in Fig.~\ref{fig:SxxHCB0}(a), exhibits a threshold at around $\omega=5t$. This threshold behavior is  supported by an ED calculation for the Kubo formula on a 4$\times$ 4 lattice.
\be
\sigma_{xx}(A,\ve)={1\over A}\sum_{n>0} |\langle \Psi_0|j^x |\Psi_n\rangle|^2 {\ve \over (E_n-E_0-\omega)^2 + \ve^2 } \quad .
\ee

We note that this threshold  feature  is not expected by BCS theory in a single band, weak coupling superconductor without disorder.
By Mattis and Bardeen~\cite{MB}, such a  threshold 
would appear at $2\Delta$, where $\Delta$ is the pairing gap only if there is sufficient  disorder.
For  HCB model (\ref{HCB}) $\Delta=\infty$, and there is no disorder!

\begin{figure}[h]
   \centering
\includegraphics[width=0.3\columnwidth]{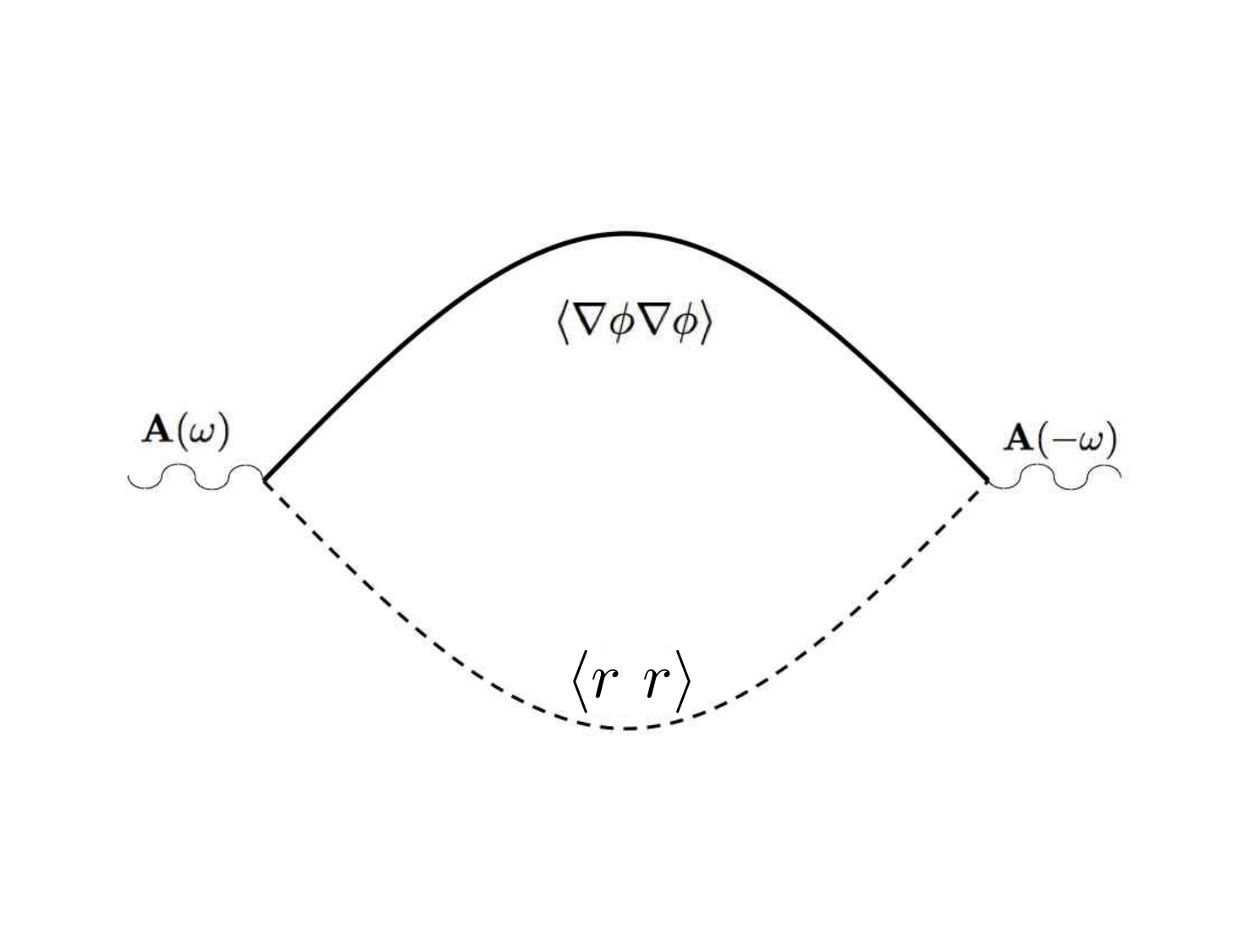}
\caption{Dynamical conductivity of the Relativistic Gross Pitaevskii model Eq.~(\ref{RGP}), at zero temperature in the superconducting phase at weak coupling.  $r$ is the amplitude/Higgs mode  with mass $m$, and $\phi$ the massless Goldstone mode. The conductivity exhibits a threshold at $\omega=m$  as shown in Eq.~(\ref{SxxHiggs}). }
\label{fig:Bubble}
\end{figure}

This threshold was  argued to be the Higgs-amplitude mode of the particle-hole symmetric Relativistic Gross-Pitaevskii (RGP)  theory~\cite{Snir,Higgs} with Eucledean action,
\beq 
S^{\rm RGP}={1\over 2g}\int_\Lambda \! d^{2+1}\!x\left[ \left|(\bnabla-i{q\over c} \bA)\vec{\Phi}\right|^2+{m_0^2\over 4}\left(| \vec{\Phi} |^2-1\right)^2\right] \quad .
\label{RGP}
\eeq 
Here, $\vec{\Phi}$ is the  two component  rotator  field, defined at wavevectors below the cut-off $\Lambda$. The imaginary time component is $x_{3}=c\tau$, where  $c=\sqrt{\rho_s/\chi}$,   $\rho_s \sim t$ is the microscopic stiffness, and $\chi\sim  {1\over U}$
is the local compressibility. $g$ is the quantum parameter, which drives a quantum phase transition~\cite{QPT-Subir} into a bosonic Mott insulator as it grows toward  $g\to g_c$.  

At weak coupling and zero temperature, $\vec{\Phi}(\bx)=\vec{\Phi}_0$,
and the conductivity is evaluated  by fluctuations of the amplitude $r$ and phase $\phi$,
\be
\vec{\Phi}=~\Phi_0 (1+r)~\hat{\br}  +~\phi ~ \hat{z}\times\hat{\br} \quad . 
\ee
At lowest order in $g^0$, the bubble diagram in Fig.~\ref{fig:Bubble} yields a threshold AC conductivity given by~\cite{Higgs},
\be 
\sigma^{\rm RGP}_{xx}(\omega)\sim \omega^5 ~\Theta(\omega^2-m^2) \quad ,
\label{SxxHiggs}
\ee 
where $m$ is the mass-gap of the amplitude field $r$. $m(g)\sim |g-g_c|^\nu$  is the analogue of the Higgs particle mass, which vanishes at the critical point with the three dimensional XY model's correlation length exponent $\nu\simeq 0.671$.

\begin{figure}[ht!]
\begin{center}
\includegraphics[width=8cm]{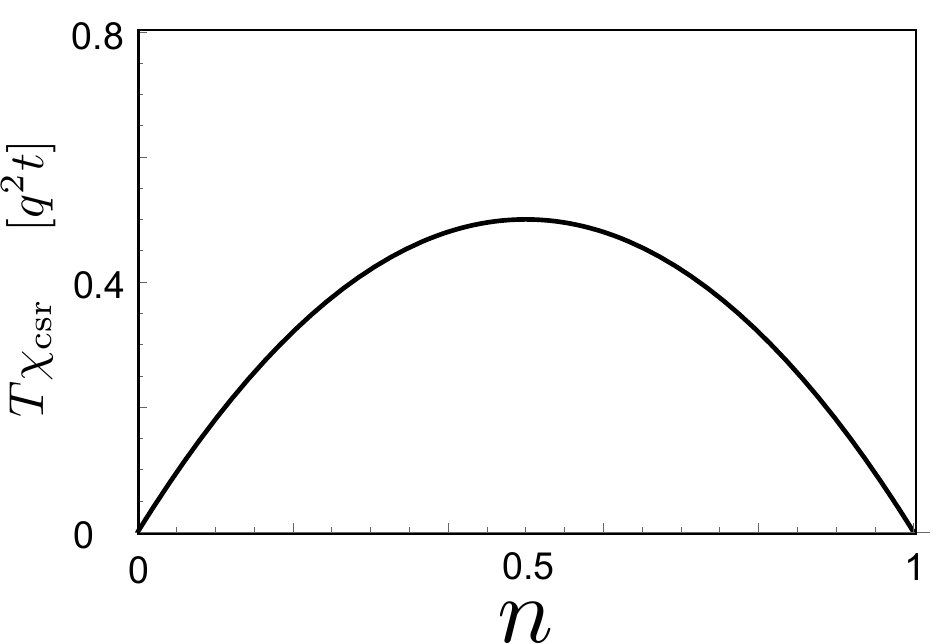}
\caption{CSR of HCB at  order, $\cO(\beta)$,
as give by Eq.~(\ref{CSR-HCB}).}
\label{fig:CSR-HCB}
\end{center}
\end{figure}

\subsection{Metallic phase: longitudinal conductivity}
\label{sec:HCB-cond}
The longitudinal conductivity was evaluated for the metallic phase of $H^{\rm HCB}$
on the square lattice, at $T> T_{\rm BKT}$ In Refs.~\cite{LA,HCB}.
The CF expansion is extrapolated following the Variational Extrapolation of Recurrents~\cite{Ilia} as described in subsection \ref{sec:VER}.
\begin{figure}[h]
\begin{center}
\includegraphics[width=8cm,angle=0]{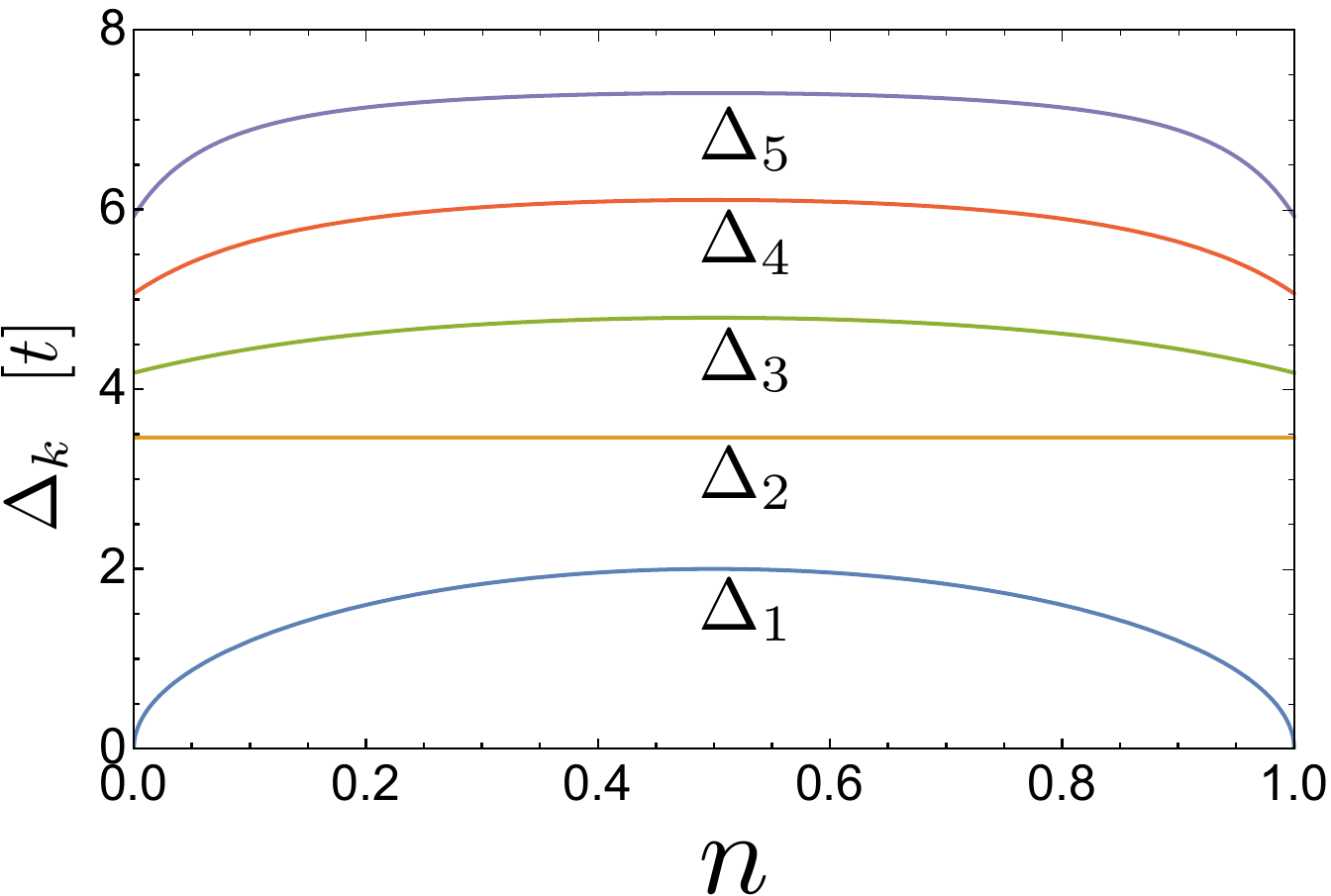}
\caption{Density dependence of the five lowest recurrents. The cancellation of the density dependence of $\Delta_1(n)$ and  $\chi_{\rm csr}(n)$ leads to the weak density dependence of the linear resistivity slopes.}
\label{fig:Deltas}
\end{center}
\end{figure}

The   leading orders CSR is,
\be
\chi_{\rm csr} = 2 q^2 \beta \tt^2 n(1-n) + \cO(\beta t)^3 \quad .
\label{CSR-HCB}
\ee
$\mu_{2k}, k=0,\ldots 5$ were also evaluated to leading order in $\beta t$, which yields  recurrents of order $\cO(\beta^0)$. 
In Fig.~\ref{fig:Deltas}, the  recurrents are plotted  for densities $0<n<1$.
\begin{figure}[h]
\begin{center}
\includegraphics[width=8cm,angle=0]{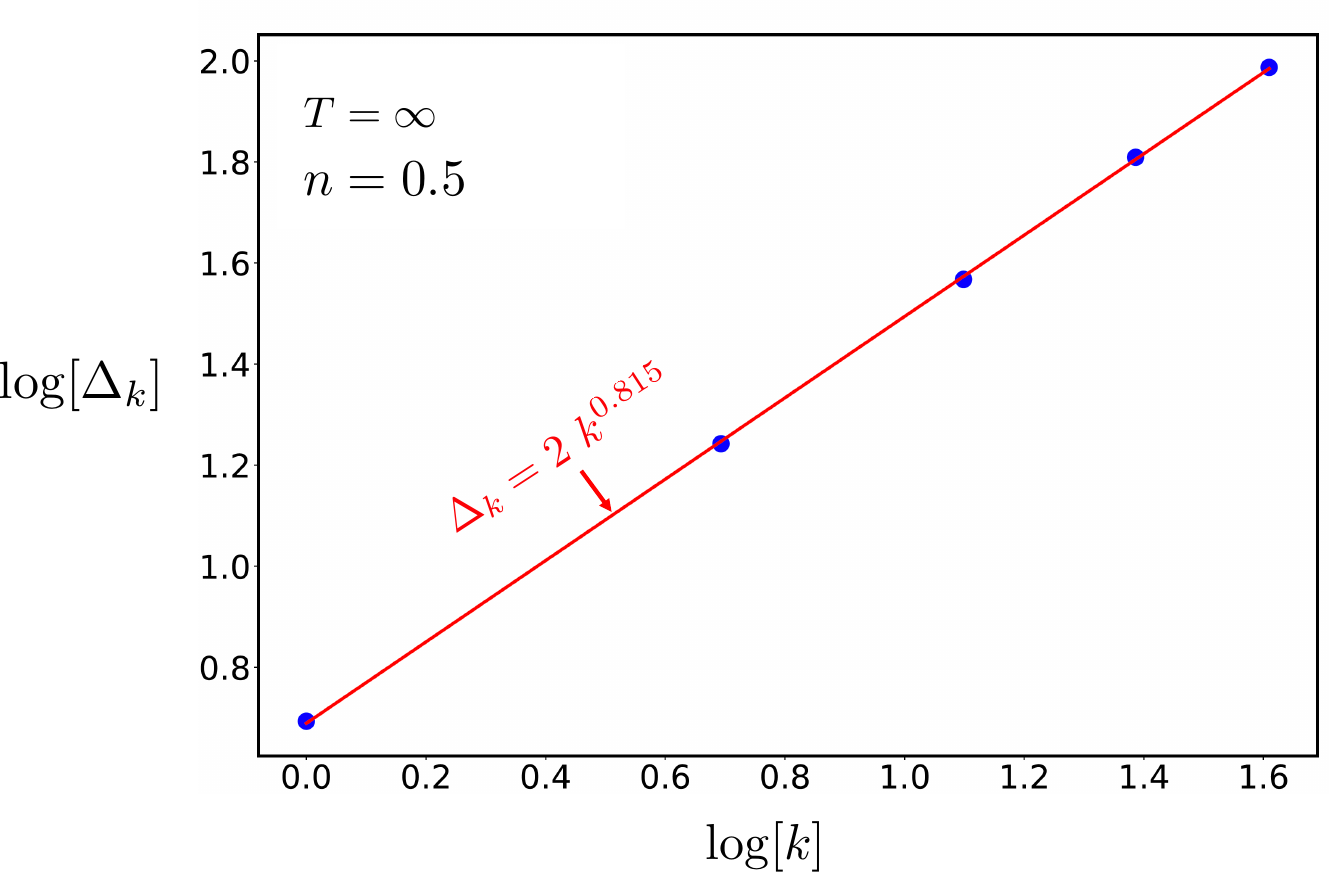}
\caption{Order dependence of HCB recurrents (in units of $t$) at half filling, and infinite temperature}
\label{fig:HalfFill-fit}
\end{center}
\end{figure}

\begin{figure}[h]
\begin{center}
\includegraphics[width=8cm,angle=0]{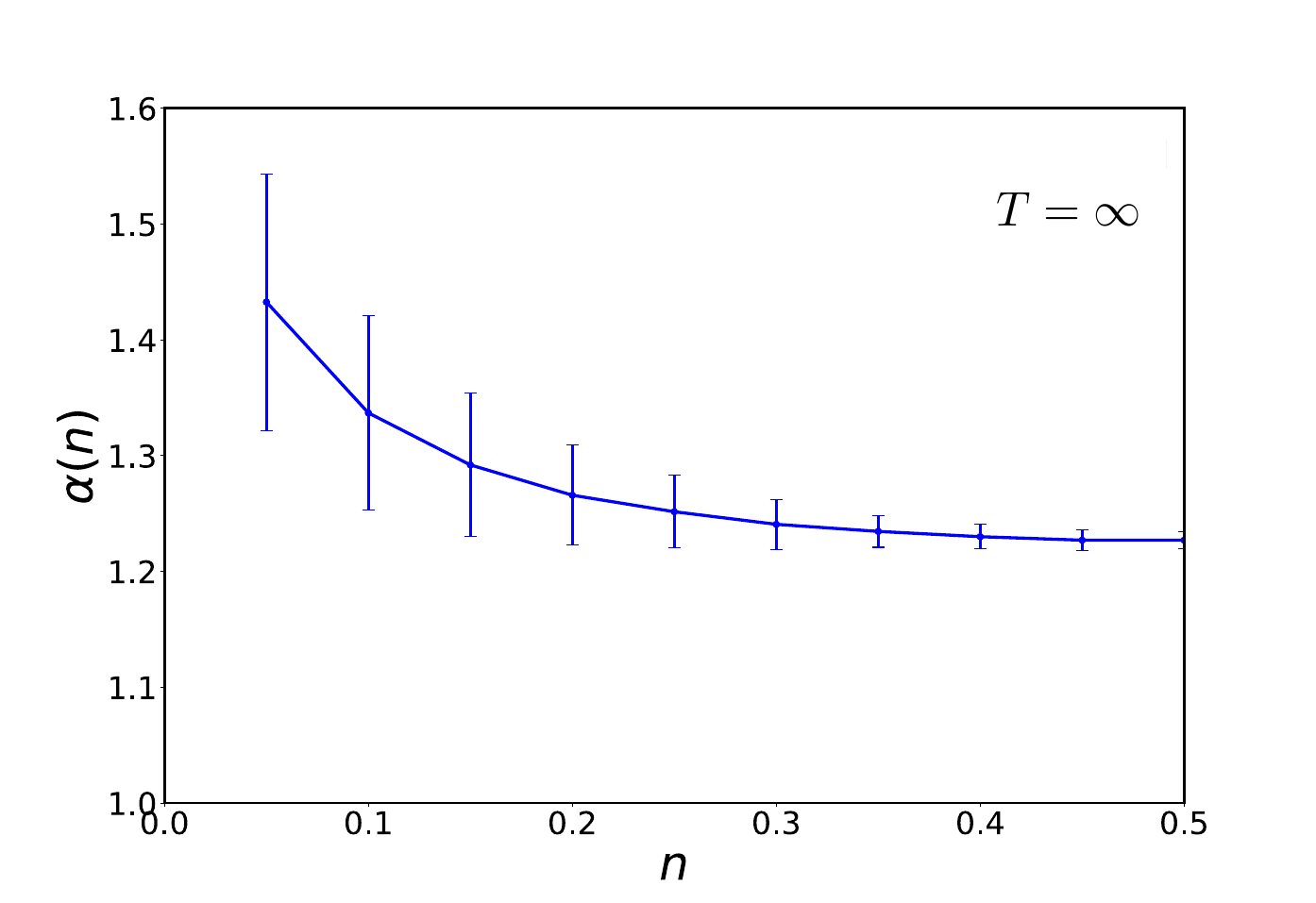}
\caption{The fit exponent $\alpha$ as a function of HCB filling $n$,
which determines the termination function $G^>_{2}$ of the AC conductivity. Error bars describe the mean square deviations for the power law fits of $\Delta_k \sim k^\alpha, k=2,\ldots 5$.}
\label{fig:alpha-n}
\end{center}
\end{figure}

\begin{figure}[h]
\begin{center}
\includegraphics[width=8cm,angle=0]{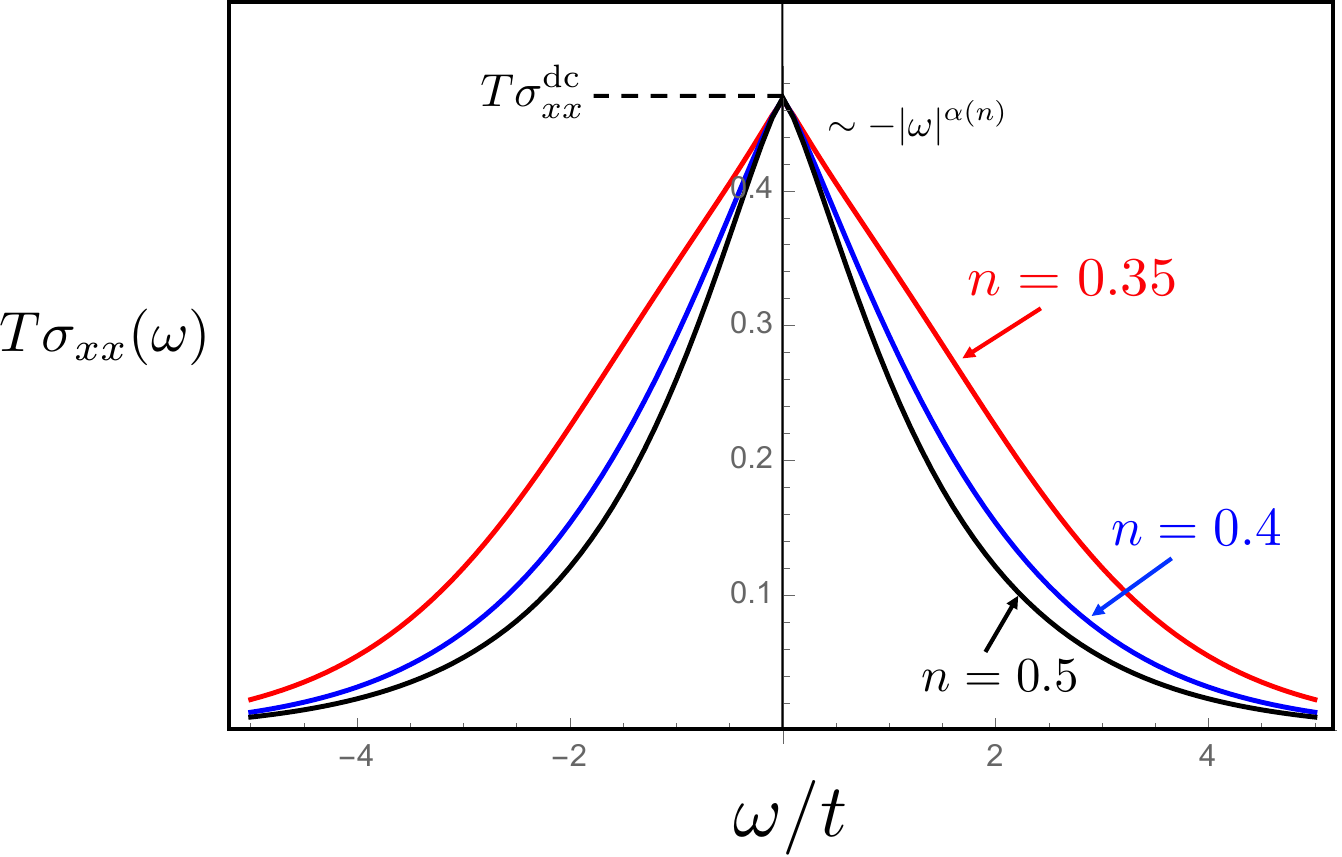}
\caption{HCB dynamical conductivity at high temperatures. The DC value of $T\sigma_{xx}(0)$ is independent of boson density $n$, while the width of the conductivity increases away from half filling. The singularity at zero frequency
depends on $\alpha$ as determined by Fig.~\ref{fig:alpha-n}.}
\label{fig:Sxx-highT}
\end{center}
\end{figure}
According to (\ref{Freud}) and Fig.~\ref{fig:Freud}, the termination function appears to be  well fit by the stretched exponential form i.e.
\bea
\bar{G}_{00}''(\omega,\alpha) &=&  N_\alpha \exp\left(  -\left|{\omega\over \Omega_\alpha }\right|^{\alpha}\right) \quad ,\nonumber\\
\bar{G}_{00}'(\omega,\alpha) &=&  {1\over \pi} {\rm PV}\int_{-\infty}^\infty\!d\omega' { \bar{G}_{00}''(\omega,\alpha)\over \omega-\omega' } \quad ,
\label{barG-HCB}
\eea
where the normalization 
\bea
N_\alpha &=&{\pi \alpha\over  2 \Omega_\alpha \Gamma\left({1\over \alpha}\right) } 
\eea
ensures that  
$\int d\omega\bar{G}_{00}''(\omega,\alpha)=\pi $.

The $S^z\to -S^z$ symmetry of the HCB hamiltonian leads to $\sigma_{xx}(n)=\sigma_{xx}(1-n)$.
For $n\ne 0.5$, $\Delta_1$ is an outlier of the power law line, since it vanishes as $n\to 0,1$.
The higher recurrents are fit by
\be 
\bDelta_k=\Omega_\alpha ~ a_\alpha  k^{\alpha(n)} \quad ,\quad k=2,\ldots
\label{dD-HCB}
\ee
 where $\alpha(n)$ is obtained by least square fit in Fig.~\ref{fig:alpha-n}. The error bars increase  away from half filling. We find that fluctuations of the   recurrents about Eq.~(\ref{dD-HCB}) induces large uncertainty in th extrapolated $\sigma_{xx}(\omega)$.
 Hence  the use of Eq.~(\ref{dD-HCB}) to the regime $0.35\le n\le 0.5$.

The high temperature dynamical conductivities are plotted in Fig.~\ref{fig:Sxx-highT}. 
Since the first two recurrents are,
\be
\Delta_1^2 = 16 n(1-n)t^2 \quad ,\quad   \Delta^2_2= 12t^2 \quad .
\label{Delta1-HCB}
\ee
The conductivity is proportional to the  density independent ratio,
\be
\sigma_{xx}\propto {\chi_{\rm csr}(n)\Delta_2\over \Delta^2_1(n)} \quad .
\label{DCratio-HCB}
\ee
Eq.~(\ref{DCratio-HCB}) is interpreted  as the ratio of kinetic energy ($\chi_{\rm csr}(n)$) to scattering rate ($\Delta_1^2(n)/\Delta_2$), which have a similar density dependence. Hence, since $\alpha$ changes very little ($\sim 1\%$ in the range $n\in (0.35,0.5)$), the linear resistivity slope in this regime is very weakly density dependent, and given by
\be
R_{xx}^{\rm dc} \simeq  0.33  {T\over  t} R_Q \quad ,~~~~~R_Q\equiv {h\over q^2} \quad .
\label{RT}
\ee

The CF extrapolation also yields a singular frequency dependence at low frequencies,
as shown in Fig.~(\ref{fig:Sxx-highT}), 
\be
{ \sigma_{xx}(\omega)- \sigma^{\rm dc}_{xx}\over \sigma^{\rm dc}_{xx} } \sim -   \left|{\omega\over  \Omega_\alpha  }\right|^{\alpha} \quad .
\ee
The sharp zero frequency cusp is consistent with Mukerjee,  Oganesyan, and Huse~\cite{Vadim}, (MOH)  who found a similar singularity in  the high temperature conductivity of a one dimensional non-integrable fermion model, and predicted similar behavior in higher dimensional  models.

\begin{figure}[ht!]
\begin{center}
\includegraphics[width=8cm]{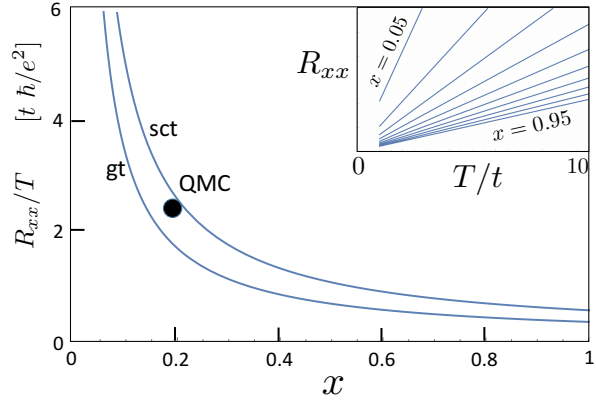}
\caption{Resistivity versus temperature of HCB at half filling. The high temperature linear slope of Eq.~(\ref{RT}) (solid line) is matched toward the superconducting transition temperature $T-{\rm BKT}$ using Halperin and Nelson's (HN) free vortices theory, Eq.~(\ref{HN}) (Dashed line). }
\label{fig:RXX}
\end{center}
\end{figure}

 The high temperature linear temperature slope of the HCB resistivity can be connected to the superconducting transition at  $T_{\rm BKT}\simeq 0.7 t$. 
 Halperin and Nelson (HN)~\cite{HN} described the superconducting fluctuation region just  $T_{\rm BKT}$, where the resitivity rises as 
\bea
R^{\rm HN}_{\alpha\beta} &\simeq& 2.7 R^{\rm n}_{\alpha\beta}(T)\left({\xi_+\over \xi_c}\right)^{-2} \nonumber\\
&=&  2.7 R^{\rm n}_{\alpha\beta}(T)~ \exp\left(-2b\left( {T_{\rm BKT}\over T-T_{\rm BKT} } \right)^{1\over 2} \right) \quad .
\label{HN}
\eea
$\xi_+$ is the BKT correlation length, and $\xi_c$ is of the order of the HCB lattice constant and $b\simeq 1$.  
For the HCB model, the ``normal state'' resistivity $R^{\rm n}_{xx}(T)$ is obtained from our Eq.~(\ref{RT}).
We use these values to plot the crossovers from HN theory Eq.~(\ref{HN}) to higher temperatures as dashed lines in Fig.~\ref{fig:RXX}.

\subsection{Metallic phase: Hall coefficient}
The zeroth term,
\be
R_{\rm H}^{(0)} = {\chi_{\rm cmc}\over \chi_{\rm csr}^2 } \quad .
\label{RH0}
\ee
is calculated by the high temperature expansion of the CSR, Eq.~(\ref{CSR-HCB}) and the 
CMC susceptibility,  as prescribed in Section \ref{sec:HighT}.  The leading order in $\beta$ evaulated in Ref.~\ref{HCB} is
\bea
\chi_{\rm cmc} &=& {2\over c} q^3  t^2 \Tr \left( \rho(\beta,n)    (S^{+}_{1}S^{-}_{3}+S^{-}_{1} S^{+}_{3})S^{z}_{2} \right) \nonumber\\
&=&  {1\over c} \beta^2 q^3 t^4   (1-2n)(1-(2n-1)^2) \quad .
\label{chi-CMC}
\eea

One observes that $\chi_{\rm csr}$ ($\chi_{\rm cmc}$) is ``particle-hole'' symmetric (antisymmetric) under $n\to 1-n$. The zeroth Hall coefficient expanded to second order in $\beta$~\cite{HCB} is
\be
R_{\rm H}^{(0)}=   \frac{1}{qc}\left(\frac{2n-1}{n(n-1)} + \frac{2}{3}(\beta t)^2\left(n-\frac{1}{2}\right)\right) \quad . 
\label{RH0-highT}
\ee
Notably at low density, the Hall coefficient recovers the continuum Galilean invariant result
$R_{\rm H}^{(0)} \to (nqc)^{-1}$. Near half-filling, $R_{\rm H} \!\sim \!  -8({n-{1\over 2})/(qc)}$, reflecting the effects of lattice Umklapp and hard core scattering.

\begin{figure}[ht!]
\begin{center}
\includegraphics[width=8cm]{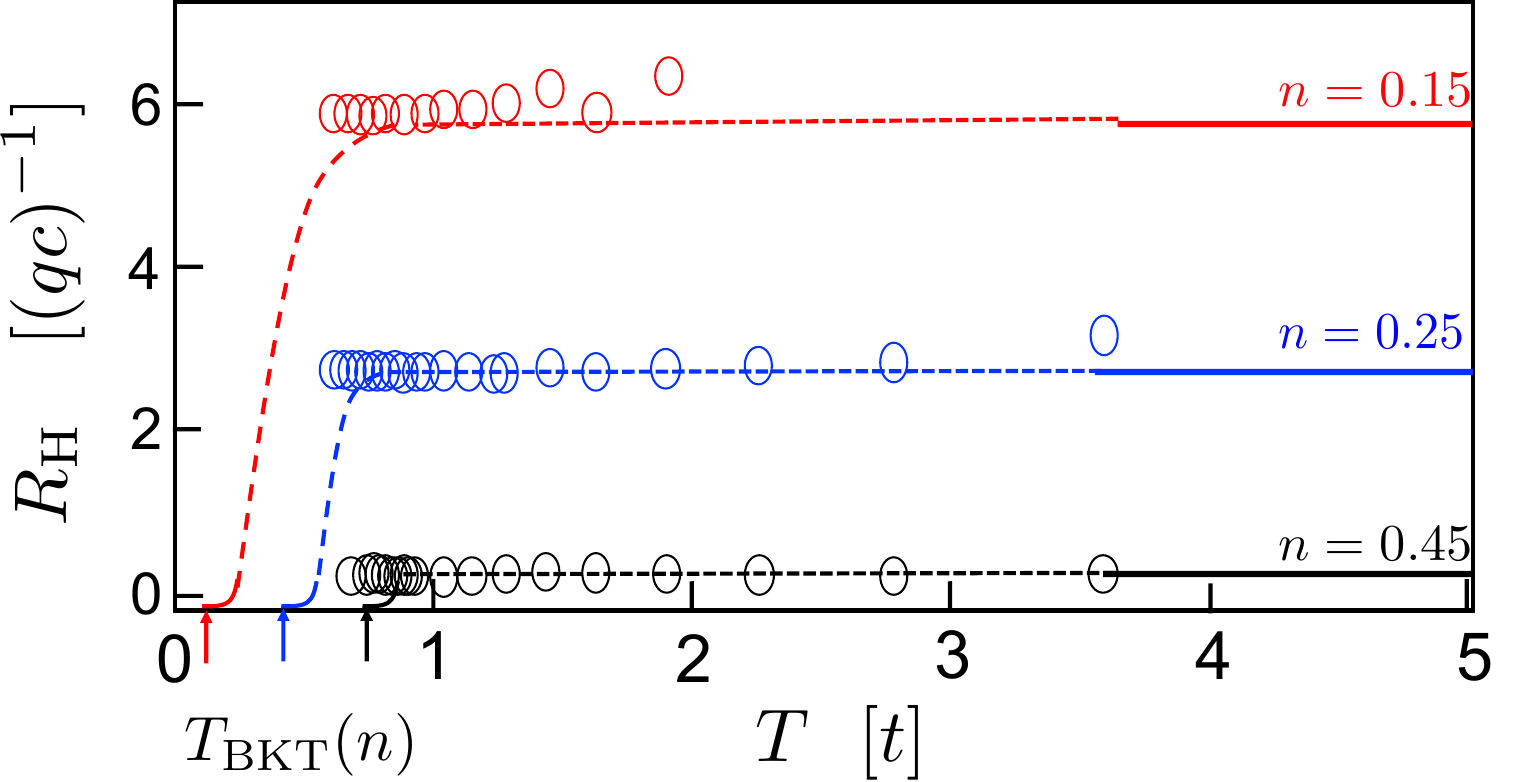}
\caption{Metallic Hall coefficient $R_{\rm H}^{(0)}$ of HCB, from Ref.~\cite{HCB}.
High temperature asymptotes are depicted by solid lines, and  lower temperature  QMC calculations are plotted with circles. Dashed lines interpolate between $R_{\rm H}^{(0)}(T)$  and vortex fluctuations theory~\cite{HN} 
near the superfluid transition at $T_{\rm BKT}(n)$. }
\label{fig:RHvsT}
\end{center}
\end{figure}

Eq.~(\ref{RH0-highT}) was extended to lower temperatures numerically~\cite{HCB} by a path-integral based QMC 
for bosonic lattice models using the DSQSS package~\cite{Motoyama}, with a DLA algorithm~\cite{DLA}. The method is reviewed Section~\ref{sec:QMC-DLA}. 

We used the equal-time ($\tau_1-\tau_2=0$) correlation function results for evaluating the static expectation values. For two-body operators (e.g. as needed for the CSR), one can directly use this result, for nearest-neighbour separation. For the CMC, we had to `re-weight' the measurements of the DSQSS code (which can only measure two-point correlation functions) to account for the additional $S^{z}$ operator at the nearest neighbour location. Since the algorithm is explicitly formulated in the $S^{z}$ basis, the lattice configuration of these values are easily accessed at each imaginary time slice. Hence, a simple modification of the correlation function measurement part of the source code sufficed for our purpose.

For acceptable statistical error,  we used $24 \times 24$ size lattices, which were sufficient for converging expectation values at temperatures higher than the expected superfluid transition. The number of Monte Carlo sweeps was $\sim 10^{6}$.
In Fig.~\ref{fig:RHvsT}, the solid lines are the analytical results of Eq.~(\ref{RH0-highT}), while the QMC data are depicted by open circles. We see that the Hall coefficients above the HN regime, rapidly saturate  to their high temperature limit. 

\begin{figure}[t]
\begin{center}
\includegraphics[scale=0.3]{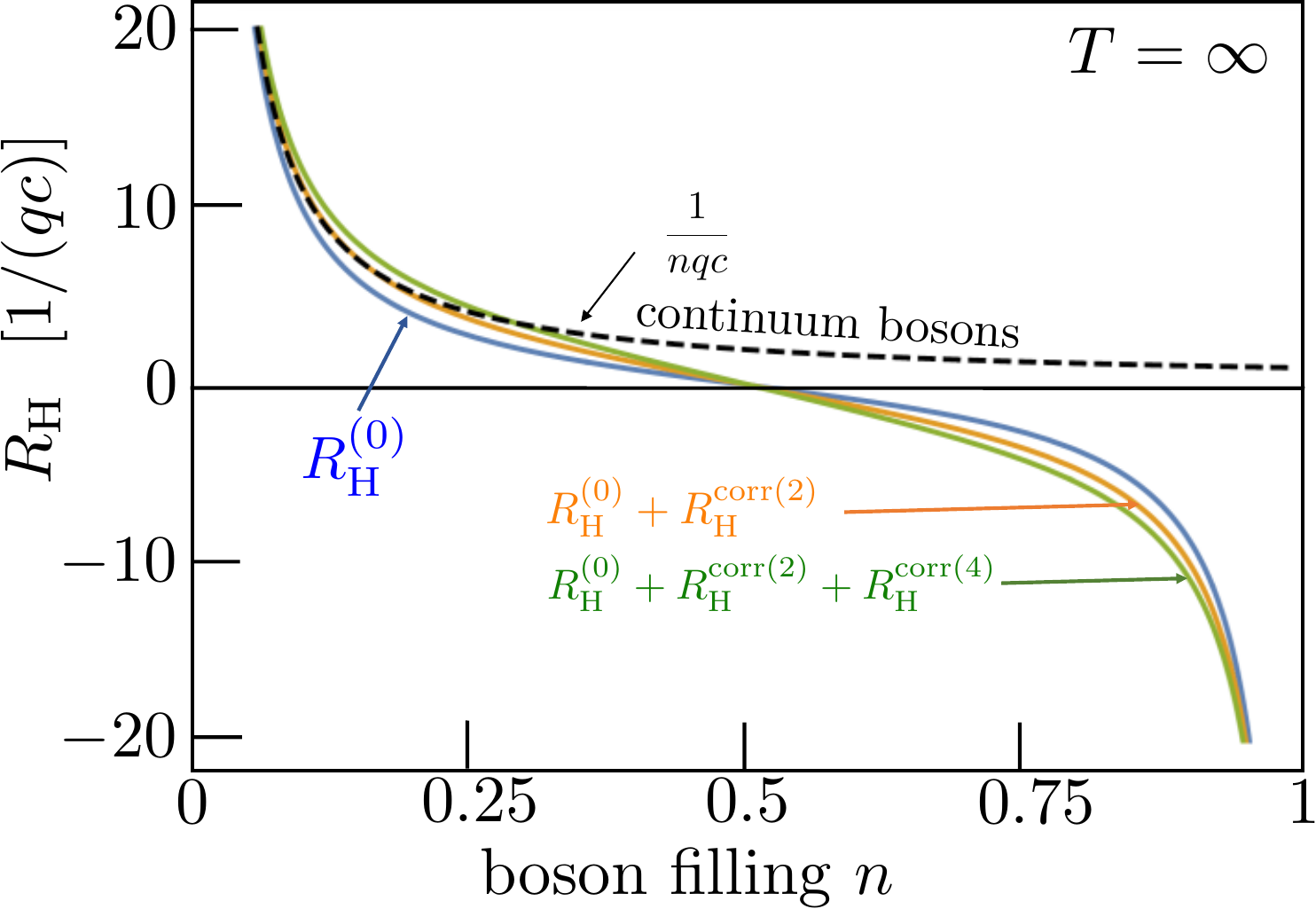}
\caption{Density dependent Hall coefficient $R_{\rm H}(n)$ of HCB on a square lattice at high temperature from Ref.~\cite{HCB}. $R_{\rm H}^{(0)} and R^{\rm corr}_{\rm H}$ are defined in Eqs.~(\ref{RH0}) and (\ref{Rcorr}).
Monotonous convergence of the corrections  up to fourth order are shown by the yellow and green curves. The  Hall sign change at half filling is a consequence of the the hard core interactions and the lattice effects on the bosons. 
}
\label{fig:RHwCorr}
\end{center}
\end{figure}
\subsection{Hall coefficient corrections}
\label{sec:Rcorr-HCB}
In general,  $R^{\rm corr}$ can be a cumbersome calculation. The relative simplicity of the HCB Hamiltonian permits a feasible computation of $R^{\rm corr}$  up to fourth Krylov order. The calculated correction~\cite{HCB} is the truncated sum,
\bea
&&R_{\rm H}^{\rm corr(4)}= {1\over \chi_{\rm csr}}  \sum_{i,j=0}^4 R_i R_j  (1 -\delta_{i,0} \delta_{j,0}) \cM_{2i,2j}\nonumber\\
&& \cM_{2i,2j}=  \Im \left( \kexpect{2i; y} {\cM} {2j; x} - \kexpect{2i; x} {\cM} {2j; y} \right) \quad ,\nonumber\\
&&R_{i>1} = \prod_{r=1}^i \left(-{\Delta_{2r-1}\over \Delta_{2r} } \right) \quad ,\quad R_0 =1 \quad.
\label{Rcorr}
\eea

The high temperature hypermagnetization matrix elements $\cM_{2i,2j}$ are calculated as described in  Section \ref{sec:Rcorr}.
$\tilde{\cM}_{44}$  required  numerical evaluation of the traces of over $\sim 10^7$ operator products, using a multiplication table similar to the calculations described in Section  Section \ref{sec:numericalOperators}.
The density dependent high temperature Hall coefficient is depicted in Fig.~\ref{fig:RHwCorr}. 

We see that these corrections do not qualitatively change the zeroth term's behavior especially near the densities $n=0,1,{1\over 2}$, although they  converge slower around intermediate densities $n= 0.25,0.75$. Since the Hall coefficient is finite for any metal, the
summation over higher order corrections must converge. 
Thus, $R^{(0)}_{\rm H}$ appears to be qualitatively correct at high temperatures. 
At lower temperatures  $R_{\rm H}^{(0)}$, as evaluated by QMC in Fig.~\ref{fig:RHvsT},   appears to be blind to the onset of long range superconducting phase correlations and vortices, as described by HN theory. Therefore toward $T\to T_{\rm BKT}$, the correction term is expected to grow  relative to $R_{\rm H}^{(0)}$ and cancel it completely at $T_{\rm BKT}$.

\section{Discussion}
From the calculations shown above, one concludes that near half filling, the `weakly interacting continuum bosons' description fails for HCB. The quantum mechanical effects of lattice periodicity and  constraints of no-double occupancies play a crucial role in the transport coefficients.

We note that metallic phases of HCB and the tJM electrons share their proximity to Mott insulators. As shown in Part \ref{part:SCE}, the Hall sign of the tJM  also diverges toward the Mott phase, and reverses its sign relative to that predicted by models of weak interactions.

\newpage

\part{Summary and Future Directions}
 \label{Part:Summary}

This report reviews recent theoretical advances in quantum transport theory with particular emphasis on application to strongly interacting, gapless phases of matter. The DPP formulas for DC Hall-type conductivities  can reduce the computational cost of Kubo formulas in the Lehmann representation. They also help clarify conceptual dilemmas about the role of  gapless eigenstates
 in carrying the Hall currents on OBC, and the ultimate irrelevancy of the magnetization subtractions in thermal Hall conductivities.
The DPP formulas generalize the Berry curvature expressions  derived in Section \ref{sec:Berry}, for clean non interacting models. This may help us understand the interplay between Berry curvatures  and effects of boundaries and disorder~\cite{Sinitsyn} within the Kubo formula framework, without appealing to semiclassical approximations.

A  significant fraction of this report is devoted to thermodynamic approaches which include continued fractions for AC conductivities, and thermodynamic summation formulas for the electric, thermoelectric  and thermal Hall coefficients. These  approaches converge better by first renormalizing the Hamiltonian down to the temperature scale
of interest, and calculating conductivities which do not consist of separable contributions. 

As examples, continued fractions and Hall coefficient summation formulas  are applied to basic models of interacting lattice fermions and bosons: the Hubbard, t-J,  and  HCB Hamiltonians.  Conductivities of strongly interacting models are contrasted with their weakly interacting limits in order to illuminate  the qualitative effects of strong interactions.

In future studies, thermodynamic approaches can take advantage of reliable variational ground states, e.g.  DMRG~\cite{DMRG}, to obtain low temperature dynamical correlations in other strongly correlated lattice models of spins, fermions and bosons.
For example: thermal Hall coefficient formula of  strongly frustrated magnetic insulators~\cite{Matsuda-Kitaev},
the Hall conductivity of  Heavy fermion metals~\cite{RH-HF-Analytis}, and nearly ferroelectric 
 material SrTiO$_4$~\cite{Behnia-SrTiO}, conductivities of 
 interacting flat band multilayers of graphene~\cite{TBG} and transition metal dichalcoginides~\cite{NPJ1,NPJ2}. 
 
 Experimental measurements of conductivities of cold fermionic and bosonic atoms trapped in optical lattices with artifical gauge fields~\cite{Bakr} could test the accuracy of thermodynamic calculations.  
 On the mathematical side,  further understanding of the relation between low order recurrents and the low frequency behavior of  conductivities would be very worthwhile.

\section{Acknowledgements}
We thank Gil Refael for his encouragement to write the Report. We are indebted to  Netanel Lindner, Snir Gazit, Ilia Khait, Noga Bashan, Abhisek Samanta and Ari Turner,  whose results are reviewed in this Report. We are   grateful for critical comments by Steve Kivelson, Steve Simon, Bruno Uchoa, and Daniel Arovas, which improved the readability of the manuscript. A.A. acknowledges the
Israel Science Foundation (ISF) Grant No. 2081/20. This Report was written in part at the  Aspen Center for Physics, which is supported by National Science Foundation grant PHY-2210452, and at the Kavli Institute for Theoretical Physics (KITP) supported in part by Grant Nos. NSF PHY-1748958, NSF PHY-1748958 and NSF PHY-2309135.

\bibliographystyle{elsarticle-num} 

\bibliography{refs-PR.bib}

\begin{thebibliography}{100}
\expandafter\ifx\csname url\endcsname\relax
  \def\url#1{\texttt{#1}}\fi
\expandafter\ifx\csname urlprefix\endcsname\relax\def\urlprefix{URL }\fi
\expandafter\ifx\csname href\endcsname\relax
  \def\href#1#2{#2} \def\path#1{#1}\fi

\bibitem{SSE}
A.~W. Sandvik, Stochastic series expansion method with operator-loop update,
  Physical Review B 59~(22) (1999) R14157.

\bibitem{prokofiev}
N.~Prokof'ev, B.~Svistunov, I.~Tupitsyn,
  \href{https://www.sciencedirect.com/science/article/pii/S0375960197009572}{“worm”
  algorithm in quantum monte carlo simulations}, Physics Letters A 238~(4)
  (1998) 253--257.
\newblock \href {https://doi.org/https://doi.org/10.1016/S0375-9601(97)00957-2}
  {\path{doi:https://doi.org/10.1016/S0375-9601(97)00957-2}}.
\newline\urlprefix\url{https://www.sciencedirect.com/science/article/pii/S0375960197009572}

\bibitem{Worm}
N.~Prokof'ev, B.~Svistunov,
  \href{https://link.aps.org/doi/10.1103/PhysRevLett.87.160601}{Worm algorithms
  for classical statistical models}, Phys. Rev. Lett. 87 (2001) 160601.
\newblock \href {https://doi.org/10.1103/PhysRevLett.87.160601}
  {\path{doi:10.1103/PhysRevLett.87.160601}}.
\newline\urlprefix\url{https://link.aps.org/doi/10.1103/PhysRevLett.87.160601}

\bibitem{DMRG}
S.~R. White,
  \href{https://link.aps.org/doi/10.1103/PhysRevLett.69.2863}{{Density matrix
  formulation for quantum renormalization groups}}, Phys. Rev. Lett. 69 (1992)
  2863--2866.
\newblock \href {https://doi.org/10.1103/PhysRevLett.69.2863}
  {\path{doi:10.1103/PhysRevLett.69.2863}}.
\newline\urlprefix\url{https://link.aps.org/doi/10.1103/PhysRevLett.69.2863}

\bibitem{DMRG-Schollwok}
U.~Schollwock,
  \href{http://www.sciencedirect.com/science/article/pii/S0003491610001752}{{The
  density-matrix renormalization group in the age of matrix product states}},
  Annals of Physics 326~(1) (2011) 96 -- 192, january 2011 Special Issue.
\newblock \href {https://doi.org/http://dx.doi.org/10.1016/j.aop.2010.09.012}
  {\path{doi:http://dx.doi.org/10.1016/j.aop.2010.09.012}}.
\newline\urlprefix\url{http://www.sciencedirect.com/science/article/pii/S0003491610001752}

\bibitem{PEPS}
P.~Corboz, S.~R. White, G.~Vidal, M.~Troyer, Stripes in the two-dimensional t-j
  model with infinite projected entangled-pair states, Physical Review B 84~(4)
  (2011) 041108.

\bibitem{TN}
M.~Fishman, S.~White, E.~Stoudenmire, The itensor software library for tensor
  network calculations, SciPost Physics Codebases (2022) 004.

\bibitem{KL}
W.~Kohn, J.~Luttinger, Quantum theory of electrical transport phenomena,
  Physical Review 108~(3) (1957) 590.

\bibitem{Ziman}
J.~Ziman, Electrons and phonons: the theory of transport phenomena in solids,
  Oxford university press, 1960.

\bibitem{Mahan}
G.~Mahan, \href{https://books.google.co.il/books?id=xzSgZ4-yyMEC}{Many-Particle
  Physics}, Physics of Solids and Liquids, Springer, 2000.
\newline\urlprefix\url{https://books.google.co.il/books?id=xzSgZ4-yyMEC}

\bibitem{QHE}
M.~E. Cage, K.~Klitzing, A.~Chang, F.~Duncan, M.~Haldane, R.~B. Laughlin,
  A.~Pruisken, D.~Thouless, The quantum Hall effect, Springer Science \&
  Business Media, 2012.

\bibitem{Bernevig}
B.~A. Bernevig, \href{https://doi.org/10.1515/9781400846733}{Topological
  Insulators and Topological Superconductors}, Princeton University Press,
  Princeton, 2013 [cited 2024-02-20].
\newblock \href {https://doi.org/doi:10.1515/9781400846733}
  {\path{doi:doi:10.1515/9781400846733}}.
\newline\urlprefix\url{https://doi.org/10.1515/9781400846733}

\bibitem{TKNN}
D.~J. Thouless, M.~Kohmoto, M.~P. Nightingale, M.~den Nijs,
  \href{https://link.aps.org/doi/10.1103/PhysRevLett.49.405}{Quantized hall
  conductance in a two-dimensional periodic potential}, Phys. Rev. Lett. 49
  (1982) 405--408.
\newblock \href {https://doi.org/10.1103/PhysRevLett.49.405}
  {\path{doi:10.1103/PhysRevLett.49.405}}.
\newline\urlprefix\url{https://link.aps.org/doi/10.1103/PhysRevLett.49.405}

\bibitem{Avron}
J.~E. Avron, R.~Seiler, Quantization of the hall conductance for general,
  multiparticle schr{\"o}dinger hamiltonians, Physical review letters 54~(4)
  (1985) 259.

\bibitem{Streda}
P.~Streda, L.~Smrcka,
  \href{http://stacks.iop.org/0022-3719/16/i=24/a=005}{Thermodynamic derivation
  of the hall current and the thermopower in quantising magnetic field},
  Journal of Physics C: Solid State Physics 16~(24) (1983) L895.
\newline\urlprefix\url{http://stacks.iop.org/0022-3719/16/i=24/a=005}

\bibitem{badmetals}
.~V. Emery, S.~Kivelson, {Superconductivity in bad metals}, Physical Review
  Letters 74~(16) (1995) 3253.

\bibitem{Kubo}
R.~Kubo, \href{https://doi.org/10.1143/JPSJ.12.570}{{Statistical-Mechanical
  Theory of Irreversible Processes. I. General Theory and Simple Applications
  to Magnetic and Conduction Problems}}, Journal of the Physical Society of
  Japan 12~(6) (1957) 570--586.
\newblock \href {http://arxiv.org/abs/https://doi.org/10.1143/JPSJ.12.570}
  {\path{arXiv:https://doi.org/10.1143/JPSJ.12.570}}, \href
  {https://doi.org/10.1143/JPSJ.12.570} {\path{doi:10.1143/JPSJ.12.570}}.
\newline\urlprefix\url{https://doi.org/10.1143/JPSJ.12.570}

\bibitem{Snir}
S.~Gazit, D.~Podolsky, A.~Auerbach, D.~P. Arovas,
  \href{https://link.aps.org/doi/10.1103/PhysRevB.88.235108}{Dynamics and
  conductivity near quantum criticality}, Phys. Rev. B 88 (2013) 235108.
\newblock \href {https://doi.org/10.1103/PhysRevB.88.235108}
  {\path{doi:10.1103/PhysRevB.88.235108}}.
\newline\urlprefix\url{https://link.aps.org/doi/10.1103/PhysRevB.88.235108}

\bibitem{Cooper}
N.~R. Cooper, B.~I. Halperin, I.~M. Ruzin,
  \href{https://link.aps.org/doi/10.1103/PhysRevB.55.2344}{Thermoelectric
  response of an interacting two-dimensional electron gas in a quantizing
  magnetic field}, Phys. Rev. B 55 (1997) 2344--2359.
\newblock \href {https://doi.org/10.1103/PhysRevB.55.2344}
  {\path{doi:10.1103/PhysRevB.55.2344}}.
\newline\urlprefix\url{https://link.aps.org/doi/10.1103/PhysRevB.55.2344}

\bibitem{QNS}
T.~Qin, Q.~Niu, J.~Shi, Energy magnetization and the thermal hall effect,
  Physical review letters 107~(23) (2011) 236601.

\bibitem{RXX-PRB}
N.~H. Lindner, A.~Auerbach, Conductivity of hard core bosons: A paradigm of a
  bad metal, Physical Review B 81~(5) (2010) 054512.

\bibitem{Viswanath}
V.~Viswanath, G.~M{\"u}ller, The recursion method: application to many body
  dynamics, Vol.~23, Springer Science \& Business Media, 1994.

\bibitem{tJM}
I.~Khait, S.~Bhattacharyya, A.~Samanta, A.~Auerbach, {Hall anomalies of the
  doped Mott insulator}, npj Quantum Materials 8~(1) (2023) 75.

\bibitem{Sorella}
S.~Sorella, \href{https://arxiv.org/abs/2101.07045}{{The phase diagram of the
  Hubbard model by Variational Auxiliary Field quantum Monte Carlo}}, Preprint
  at: https://arxiv.org/abs/2101.07045 (2021).
\newblock \href {https://doi.org/10.48550/ARXIV.2101.07045}
  {\path{doi:10.48550/ARXIV.2101.07045}}.
\newline\urlprefix\url{https://arxiv.org/abs/2101.07045}

\bibitem{Troyer}
A.~Dorneich, M.~Troyer,
  \href{https://link.aps.org/doi/10.1103/PhysRevE.64.066701}{Accessing the
  dynamics of large many-particle systems using the stochastic series
  expansion}, Phys. Rev. E 64 (2001) 066701.
\newblock \href {https://doi.org/10.1103/PhysRevE.64.066701}
  {\path{doi:10.1103/PhysRevE.64.066701}}.
\newline\urlprefix\url{https://link.aps.org/doi/10.1103/PhysRevE.64.066701}

\bibitem{AM}
N.~W. Ashcroft, N.~D. Mermin, Solid state physics, Cengage Learning, 2022.

\bibitem{Girvin}
S.~M. Girvin, K.~Yang, Modern condensed matter physics, Cambridge University
  Press, 2019.

\bibitem{Niu-Sxy}
M.-C. Chang, Q.~Niu,
  \href{https://link.aps.org/doi/10.1103/PhysRevB.53.7010}{Berry phase,
  hyperorbits, and the hofstadter spectrum: Semiclassical dynamics in magnetic
  bloch bands}, Phys. Rev. B 53 (1996) 7010--7023.
\newblock \href {https://doi.org/10.1103/PhysRevB.53.7010}
  {\path{doi:10.1103/PhysRevB.53.7010}}.
\newline\urlprefix\url{https://link.aps.org/doi/10.1103/PhysRevB.53.7010}

\bibitem{Mori}
H.~Mori, Transport, collective motion, and brownian motion, Progress of
  theoretical physics 33~(3) (1965) 423--455.

\bibitem{Zwanzig}
R.~Zwanzig, Nonequilibrium statistical mechanics, Oxford university press,
  2001.

\bibitem{mem-das}
N.~Das, P.~Bhalla, N.~Singh,
  \href{https://doi.org/10.1142/S0217979216300152}{Memory function approach to
  correlated electron transport: A comprehensive review}, International Journal
  of Modern Physics B 30~(23) (2016) 1630015.
\newblock \href
  {http://arxiv.org/abs/https://doi.org/10.1142/S0217979216300152}
  {\path{arXiv:https://doi.org/10.1142/S0217979216300152}}, \href
  {https://doi.org/10.1142/S0217979216300152}
  {\path{doi:10.1142/S0217979216300152}}.
\newline\urlprefix\url{https://doi.org/10.1142/S0217979216300152}

\bibitem{GW}
W.~G{\"o}tze, P.~W{\"o}lfle, Homogeneous dynamical conductivity of simple
  metals, Physical Review B 6~(4) (1972) 1226.

\bibitem{PWA}
P.~W. Anderson,
  \href{https://link.aps.org/doi/10.1103/PhysRev.109.1492}{Absence of diffusion
  in certain random lattices}, Phys. Rev. 109 (1958) 1492--1505.
\newblock \href {https://doi.org/10.1103/PhysRev.109.1492}
  {\path{doi:10.1103/PhysRev.109.1492}}.
\newline\urlprefix\url{https://link.aps.org/doi/10.1103/PhysRev.109.1492}

\bibitem{Go4}
E.~Abrahams, P.~Anderson, D.~Licciardello, T.~Ramakrishnan, Scaling theory of
  localization: Absence of quantum diffusion in two dimensions, Physical Review
  Letters 42~(10) (1979) 673.

\bibitem{Loc-Abrahams}
E.~Abrahams, 50 years of Anderson Localization, Vol.~24, world scientific,
  2010.

\bibitem{Eric}
E.~Akkermans, G.~Montambaux, Mesoscopic physics of electrons and photons,
  Cambridge university press, 2007.

\bibitem{IoffeRegel}
A.~Ioffe, A.~Regel, Non-crystalline, amorphous, and liquid electronic
  semiconductors, Progress in semiconductors (1960) 237--291.

\bibitem{IR-Hussey}
K.~T. N.~E.~Hussey, H.~Takagi,
  \href{https://doi.org/10.1080/14786430410001716944}{Universality of the
  mott-ioffe-regel limit in metals}, Philosophical Magazine 84~(27) (2004)
  2847--2864.
\newblock \href
  {http://arxiv.org/abs/https://doi.org/10.1080/14786430410001716944}
  {\path{arXiv:https://doi.org/10.1080/14786430410001716944}}, \href
  {https://doi.org/10.1080/14786430410001716944}
  {\path{doi:10.1080/14786430410001716944}}.
\newline\urlprefix\url{https://doi.org/10.1080/14786430410001716944}

\bibitem{Bulbul}
P.~B. Allen, B.~Chakraborty,
  \href{https://link.aps.org/doi/10.1103/PhysRevB.23.4815}{Infrared and dc
  conductivity in metals with strong scattering: Nonclassical behavior from a
  generalized boltzmann equation containing band-mixing effects}, Phys. Rev. B
  23 (1981) 4815--4827.
\newblock \href {https://doi.org/10.1103/PhysRevB.23.4815}
  {\path{doi:10.1103/PhysRevB.23.4815}}.
\newline\urlprefix\url{https://link.aps.org/doi/10.1103/PhysRevB.23.4815}

\bibitem{Culcer}
A.~Sekine, D.~Culcer, A.~H. MacDonald,
  \href{https://link.aps.org/doi/10.1103/PhysRevB.96.235134}{Quantum kinetic
  theory of the chiral anomaly}, Phys. Rev. B 96 (2017) 235134.
\newblock \href {https://doi.org/10.1103/PhysRevB.96.235134}
  {\path{doi:10.1103/PhysRevB.96.235134}}.
\newline\urlprefix\url{https://link.aps.org/doi/10.1103/PhysRevB.96.235134}

\bibitem{Tserkov}
C.~H. Wong, Y.~Tserkovnyak,
  \href{https://link.aps.org/doi/10.1103/PhysRevB.84.115209}{Quantum kinetic
  equation in phase-space textured multiband systems}, Phys. Rev. B 84 (2011)
  115209.
\newblock \href {https://doi.org/10.1103/PhysRevB.84.115209}
  {\path{doi:10.1103/PhysRevB.84.115209}}.
\newline\urlprefix\url{https://link.aps.org/doi/10.1103/PhysRevB.84.115209}

\bibitem{Luttinger}
J.~M. Luttinger,
  \href{https://link.aps.org/doi/10.1103/PhysRev.135.A1505}{Theory of thermal
  transport coefficients}, Phys. Rev. 135 (1964) A1505--A1514.
\newblock \href {https://doi.org/10.1103/PhysRev.135.A1505}
  {\path{doi:10.1103/PhysRev.135.A1505}}.
\newline\urlprefix\url{https://link.aps.org/doi/10.1103/PhysRev.135.A1505}

\bibitem{Shastry-SR}
B.~S. Shastry, Sum rule for thermal conductivity and dynamical thermal
  transport coefficients in condensed matter, Physical Review B 73~(8) (2006)
  085117.

\bibitem{QZS}
T.~Qin, J.~Zhou, J.~Shi, Berry curvature and the phonon hall effect, Physical
  Review B 86~(10) (2012) 104305.

\bibitem{Nagaosa-Lee}
H.~Katsura, N.~Nagaosa, P.~A. Lee, Theory of the thermal hall effect in quantum
  magnets, Physical review letters 104~(6) (2010) 066403.

\bibitem{Simon-arg}
S.~H. Simon, M.~S. Rudner,
  \href{https://link.aps.org/doi/10.1103/PhysRevB.102.165148}{Contrasting
  lattice geometry dependent versus independent quantities: Ramifications for
  berry curvature, energy gaps, and dynamics}, Phys. Rev. B 102 (2020) 165148.
\newblock \href {https://doi.org/10.1103/PhysRevB.102.165148}
  {\path{doi:10.1103/PhysRevB.102.165148}}.
\newline\urlprefix\url{https://link.aps.org/doi/10.1103/PhysRevB.102.165148}

\bibitem{Onsager1}
L.~Onsager, \href{https://link.aps.org/doi/10.1103/PhysRev.37.405}{Reciprocal
  relations in irreversible processes. i.}, Phys. Rev. 37 (1931) 405--426.
\newblock \href {https://doi.org/10.1103/PhysRev.37.405}
  {\path{doi:10.1103/PhysRev.37.405}}.
\newline\urlprefix\url{https://link.aps.org/doi/10.1103/PhysRev.37.405}

\bibitem{Onsager2}
L.~Onsager, \href{https://link.aps.org/doi/10.1103/PhysRev.38.2265}{Reciprocal
  relations in irreversible processes. ii.}, Phys. Rev. 38 (1931) 2265--2279.
\newblock \href {https://doi.org/10.1103/PhysRev.38.2265}
  {\path{doi:10.1103/PhysRev.38.2265}}.
\newline\urlprefix\url{https://link.aps.org/doi/10.1103/PhysRev.38.2265}

\bibitem{NiuThoulessWu}
Q.~Niu, D.~J. Thouless, Y.-S. Wu, Quantized hall conductance as a topological
  invariant, Physical Review B 31~(6) (1985) 3372.

\bibitem{kudo}
K.~Kudo, H.~Watanabe, T.~Kariyado, Y.~Hatsugai, Many-body chern number without
  integration, Physical review letters 122~(14) (2019) 146601.

\bibitem{Haldane}
F.~D.~M. Haldane, Model for a quantum hall effect without landau levels:
  Condensed-matter realization of the" parity anomaly", Physical review letters
  61~(18) (1988) 2015.

\bibitem{EMT}
A.~Auerbach, {Equilibrium formulae for transverse magnetotransport of strongly
  correlated metals}, Physical Review B 99~(11) (2019) 115115.

\bibitem{Amir}
A.~Yacoby, H.~Hess, T.~Fulton, L.~Pfeiffer, K.~West, Electrical imaging of the
  quantum hall state, Solid state communications 111~(1) (1999) 1--13.

\bibitem{Niu-Axy}
D.~Xiao, Y.~Yao, Z.~Fang, Q.~Niu,
  \href{https://link.aps.org/doi/10.1103/PhysRevLett.97.026603}{Berry-phase
  effect in anomalous thermoelectric transport}, Phys. Rev. Lett. 97 (2006)
  026603.
\newblock \href {https://doi.org/10.1103/PhysRevLett.97.026603}
  {\path{doi:10.1103/PhysRevLett.97.026603}}.
\newline\urlprefix\url{https://link.aps.org/doi/10.1103/PhysRevLett.97.026603}

\bibitem{Kxy-Murakami}
R.~Matsumoto, S.~Murakami,
  \href{https://link.aps.org/doi/10.1103/PhysRevB.84.184406}{Rotational motion
  of magnons and the thermal hall effect}, Phys. Rev. B 84 (2011) 184406.
\newblock \href {https://doi.org/10.1103/PhysRevB.84.184406}
  {\path{doi:10.1103/PhysRevB.84.184406}}.
\newline\urlprefix\url{https://link.aps.org/doi/10.1103/PhysRevB.84.184406}

\bibitem{Sundaram-Niu}
G.~Sundaram, Q.~Niu,
  \href{https://link.aps.org/doi/10.1103/PhysRevB.59.14915}{Wave-packet
  dynamics in slowly perturbed crystals: Gradient corrections and berry-phase
  effects}, Phys. Rev. B 59 (1999) 14915--14925.
\newblock \href {https://doi.org/10.1103/PhysRevB.59.14915}
  {\path{doi:10.1103/PhysRevB.59.14915}}.
\newline\urlprefix\url{https://link.aps.org/doi/10.1103/PhysRevB.59.14915}

\bibitem{AHE-Berry}
T.~Jungwirth, Q.~Niu, A.~MacDonald, Anomalous hall effect in ferromagnetic
  semiconductors, Physical review letters 88~(20) (2002) 207208.

\bibitem{Sinitsyn}
N.~Sinitsyn, Q.~Niu, J.~Sinova, K.~Nomura, Disorder effects in the anomalous
  hall effect induced by berry curvature, Physical Review B 72~(4) (2005)
  045346.

\bibitem{Marder}
M.~P. Marder, Condensed matter physics, John Wiley \& Sons, 2010.

\bibitem{Skew1}
J.-i. Inoue, G.~E.~W. Bauer, L.~W. Molenkamp,
  \href{https://link.aps.org/doi/10.1103/PhysRevB.70.041303}{Suppression of the
  persistent spin hall current by defect scattering}, Phys. Rev. B 70 (2004)
  041303.
\newblock \href {https://doi.org/10.1103/PhysRevB.70.041303}
  {\path{doi:10.1103/PhysRevB.70.041303}}.
\newline\urlprefix\url{https://link.aps.org/doi/10.1103/PhysRevB.70.041303}

\bibitem{Skew2}
K.~Nomura, J.~Sinova, T.~Jungwirth, Q.~Niu, A.~MacDonald, Nonvanishing spin
  hall currents in disordered spin-orbit coupling systems, Physical Review B
  71~(4) (2005) 041304.

\bibitem{Skew3}
A.~Shytov, E.~Mishchenko, H.-A. Engel, B.~Halperin, Small-angle impurity
  scattering and the spin hall conductivity in two-dimensional semiconductor
  systems, Physical Review B 73~(7) (2006) 075316.

\bibitem{Bashan}
N.~Bashan, A.~Auerbach, Degeneracy-projected polarization formulas for
  hall-type conductivities, Physical Review Letters 128~(3) (2022) 036601.

\bibitem{CC}
J.~Chalker, P.~Coddington, Percolation, quantum tunnelling and the integer hall
  effect, Journal of Physics C: Solid State Physics 21~(14) (1988) 2665.

\bibitem{Matsuda-Kitaev}
Y.~Kasahara, K.~Sugii, T.~Ohnishi, M.~Shimozawa, M.~Yamashita, N.~Kurita,
  H.~Tanaka, J.~Nasu, Y.~Motome, T.~Shibauchi, Y.~Matsuda,
  \href{https://link.aps.org/doi/10.1103/PhysRevLett.120.217205}{Unusual
  thermal hall effect in a kitaev spin liquid candidate
  $\ensuremath{\alpha}\text{\ensuremath{-}}{\mathrm{rucl}}_{3}$}, Phys. Rev.
  Lett. 120 (2018) 217205.
\newblock \href {https://doi.org/10.1103/PhysRevLett.120.217205}
  {\path{doi:10.1103/PhysRevLett.120.217205}}.
\newline\urlprefix\url{https://link.aps.org/doi/10.1103/PhysRevLett.120.217205}

\bibitem{Behnia}
X.~Li, B.~Fauqu{\'e}, Z.~Zhu, K.~Behnia, {Phonon thermal Hall effect in
  strontium titanate}, Physical review letters 124~(10) (2020) 105901.

\bibitem{Taillefer-Kxy}
G.~Grissonnanche, S.~Th{\'e}riault, A.~Gourgout, M.~E. Boulanger, E.~Lefran{\c
  c}ois, A.~Ataei, F.~Lalibert{\'e}, M.~Dion, J.~S. Zhou, S.~Pyon, T.~Takayama,
  H.~Takagi, N.~Doiron-Leyraud, L.~Taillefer,
  \href{https://doi.org/10.1038/s41567-020-0965-y}{Chiral phonons in the
  pseudogap phase of cuprates}, Nature Physics 16~(11) (2020) 1108--1111.
\newblock \href {https://doi.org/10.1038/s41567-020-0965-y}
  {\path{doi:10.1038/s41567-020-0965-y}}.
\newline\urlprefix\url{https://doi.org/10.1038/s41567-020-0965-y}

\bibitem{Feynman}
R.~P. Feynman, Atomic theory of the two-fluid model of liquid helium, Physical
  Review 94~(2) (1954) 262.

\bibitem{GMP}
S.~M. Girvin, A.~H. MacDonald, P.~M. Platzman,
  \href{https://link.aps.org/doi/10.1103/PhysRevB.33.2481}{Magneto-roton theory
  of collective excitations in the fractional quantum hall effect}, Phys. Rev.
  B 33 (1986) 2481--2494.
\newblock \href {https://doi.org/10.1103/PhysRevB.33.2481}
  {\path{doi:10.1103/PhysRevB.33.2481}}.
\newline\urlprefix\url{https://link.aps.org/doi/10.1103/PhysRevB.33.2481}

\bibitem{AAH}
D.~P. Arovas, A.~Auerbach, F.~Haldane, Extended heisenberg models of
  antiferromagnetism: Analogies to the fractional quantum hall effect, Physical
  review letters 60~(6) (1988) 531.

\bibitem{IEQM}
A.~Auerbach, {Interacting electrons and quantum magnetism}, Springer Science \&
  Business Media, 2012.

\bibitem{Freud}
G.~Freud, On the coefficients in the recursion formulae of orthogonal
  polynomials, in: Proceedings of the Royal Irish Academy. Section A:
  Mathematical and Physical Sciences, JSTOR, 1976, pp. 1--6.

\bibitem{Lubinsky}
D.~Lubinsky, H.~Mhaskar, E.~Saff, A proof of freud's conjecture for exponential
  weights, Constructive Approximation 4 (1988) 65--83.

\bibitem{LA}
N.~H. Lindner, A.~Auerbach, Conductivity of hard core bosons: A paradigm of a
  bad metal, Physical Review B 81~(5) (2010) 054512.

\bibitem{Ilia}
I.~Khait, S.~Gazit, N.~Y. Yao, A.~Auerbach, {Spin transport of weakly
  disordered Heisenberg chain at infinite temperature}, Physical Review B
  93~(22) (2016) 224205.

\bibitem{EMTPRL}
A.~Auerbach, Hall number of strongly correlated metals, Physical Review Letters
  121~(6) (2018) 066601.

\bibitem{RXY-PRB}
N.~Lindner, A.~Auerbach, D.~P. Arovas, Vortex dynamics and hall conductivity of
  hard-core bosons, Physical Review B 82~(13) (2010) 134510.

\bibitem{Abhisek}
A.~Samanta, D.~P. Arovas, A.~Auerbach, Hall coefficient of semimetals, Physical
  Review Letters 126~(7) (2021) 076603.

\bibitem{blankenbecler}
R.~Blankenbecler, D.~J. Scalapino, R.~L. Sugar,
  \href{https://link.aps.org/doi/10.1103/PhysRevD.24.2278}{Monte carlo
  calculations of coupled boson-fermion systems. i}, Phys. Rev. D 24 (1981)
  2278--2286.
\newblock \href {https://doi.org/10.1103/PhysRevD.24.2278}
  {\path{doi:10.1103/PhysRevD.24.2278}}.
\newline\urlprefix\url{https://link.aps.org/doi/10.1103/PhysRevD.24.2278}

\bibitem{scalettarnotes}
R.~T. Scalettar,
  \href{https://scalettar.physics.ucdavis.edu/michigan/howto1.pdf}{How to write
  a determinant qmc code} (2010).
\newline\urlprefix\url{https://scalettar.physics.ucdavis.edu/michigan/howto1.pdf}

\bibitem{shao}
H.~Shao, A.~W. Sandvik,
  \href{https://www.sciencedirect.com/science/article/pii/S0370157322003921}{{Progress
  on stochastic analytic continuation of quantum Monte Carlo data}}, Physics
  Reports 1003 (2023) 1--88.
\newblock \href {https://doi.org/https://doi.org/10.1016/j.physrep.2022.11.002}
  {\path{doi:https://doi.org/10.1016/j.physrep.2022.11.002}}.
\newline\urlprefix\url{https://www.sciencedirect.com/science/article/pii/S0370157322003921}

\bibitem{bauer}
C.~Bauer, \href{https://scipost.org/10.21468/SciPostPhysCore.2.2.011}{{Fast and
  stable determinant quantum Monte Carlo}}, SciPost Phys. Core 2 (2020) 011.
\newblock \href {https://doi.org/10.21468/SciPostPhysCore.2.2.011}
  {\path{doi:10.21468/SciPostPhysCore.2.2.011}}.
\newline\urlprefix\url{https://scipost.org/10.21468/SciPostPhysCore.2.2.011}

\bibitem{mondaini1}
R.~Mondaini, S.~Tarat, R.~T. Scalettar, Quantum critical points and the sign
  problem, Science 375~(6579) (2022) 418--424.

\bibitem{mondaini2}
R.~Mondaini, S.~Tarat, R.~T. Scalettar,
  \href{https://link.aps.org/doi/10.1103/PhysRevB.107.245144}{Universality and
  critical exponents of the fermion sign problem}, Phys. Rev. B 107 (2023)
  245144.
\newblock \href {https://doi.org/10.1103/PhysRevB.107.245144}
  {\path{doi:10.1103/PhysRevB.107.245144}}.
\newline\urlprefix\url{https://link.aps.org/doi/10.1103/PhysRevB.107.245144}

\bibitem{iskakov}
S.~Iskakov, M.~I. Katsnelson, A.~I. Lichtenstein,
  \href{https://doi.org/10.1038/s41524-024-01221-w}{Perturbative solution of
  fermionic sign problem in quantum monte carlo computations}, npj
  Computational Materials 10~(1) (2024) 36.
\newblock \href {https://doi.org/10.1038/s41524-024-01221-w}
  {\path{doi:10.1038/s41524-024-01221-w}}.
\newline\urlprefix\url{https://doi.org/10.1038/s41524-024-01221-w}

\bibitem{karakazu}
S.~Karakuzu, B.~Cohen-Stead, C.~D. Batista, S.~Johnston, K.~Barros,
  \href{https://link.aps.org/doi/10.1103/PhysRevE.107.055301}{Flexible class of
  exact hubbard-stratonovich transformations}, Phys. Rev. E 107 (2023) 055301.
\newblock \href {https://doi.org/10.1103/PhysRevE.107.055301}
  {\path{doi:10.1103/PhysRevE.107.055301}}.
\newline\urlprefix\url{https://link.aps.org/doi/10.1103/PhysRevE.107.055301}

\bibitem{ALFQMC}
F.~F. Assaad, M.~Bercx, F.~Goth, A.~Götz, J.~S. Hofmann, E.~Huffman, Z.~Liu,
  F.~P. Toldin, J.~S.~E. Portela, J.~Schwab,
  \href{https://scipost.org/10.21468/SciPostPhysCodeb.1}{{The ALF (Algorithms
  for Lattice Fermions) project release 2.0. Documentation for the
  auxiliary-field quantum Monte Carlo code}}, SciPost Phys. Codebases (2022)
  1\href {https://doi.org/10.21468/SciPostPhysCodeb.1}
  {\path{doi:10.21468/SciPostPhysCodeb.1}}.
\newline\urlprefix\url{https://scipost.org/10.21468/SciPostPhysCodeb.1}

\bibitem{efron}
B.~Efron, \href{http://www.jstor.org/stable/2241457}{Discussion: Jackknife,
  bootstrap and other resampling methods in regression analysis}, The Annals of
  Statistics 14~(4) (1986) 1301--1304.
\newline\urlprefix\url{http://www.jstor.org/stable/2241457}

\bibitem{sandvik1}
A.~W. Sandvik, O.~F. Syljuåsen, \href{https://doi.org/10.1063/1.1632141}{{The
  Directed‐Loop Algorithm}}, AIP Conference Proceedings 690~(1) (2003)
  299--308.
\newblock \href
  {http://arxiv.org/abs/https://pubs.aip.org/aip/acp/article-pdf/690/1/299/11550991/299\_1\_online.pdf}
  {\path{arXiv:https://pubs.aip.org/aip/acp/article-pdf/690/1/299/11550991/299\_1\_online.pdf}},
  \href {https://doi.org/10.1063/1.1632141} {\path{doi:10.1063/1.1632141}}.
\newline\urlprefix\url{https://doi.org/10.1063/1.1632141}

\bibitem{boninsegni}
M.~Boninsegni, N.~V. Prokof'ev, B.~V. Svistunov,
  \href{https://link.aps.org/doi/10.1103/PhysRevE.74.036701}{Worm algorithm and
  diagrammatic monte carlo: A new approach to continuous-space path integral
  monte carlo simulations}, Phys. Rev. E 74 (2006) 036701.
\newblock \href {https://doi.org/10.1103/PhysRevE.74.036701}
  {\path{doi:10.1103/PhysRevE.74.036701}}.
\newline\urlprefix\url{https://link.aps.org/doi/10.1103/PhysRevE.74.036701}

\bibitem{DLA}
O.~F. Sylju\aa{}sen, A.~W. Sandvik,
  \href{https://link.aps.org/doi/10.1103/PhysRevE.66.046701}{{Quantum Monte
  Carlo with directed loops}}, Phys. Rev. E 66 (2002) 046701.
\newblock \href {https://doi.org/10.1103/PhysRevE.66.046701}
  {\path{doi:10.1103/PhysRevE.66.046701}}.
\newline\urlprefix\url{https://link.aps.org/doi/10.1103/PhysRevE.66.046701}

\bibitem{sandvik2}
A.~W. Sandvik,
  \href{https://link.aps.org/doi/10.1103/PhysRevB.59.R14157}{{Stochastic series
  expansion method with operator-loop update}}, Phys. Rev. B 59 (1999)
  R14157--R14160.
\newblock \href {https://doi.org/10.1103/PhysRevB.59.R14157}
  {\path{doi:10.1103/PhysRevB.59.R14157}}.
\newline\urlprefix\url{https://link.aps.org/doi/10.1103/PhysRevB.59.R14157}

\bibitem{sandvik3}
A.~W. Sandvik, \href{https://doi.org/10.1063/1.3518900}{{Computational Studies
  of Quantum Spin Systems}}, AIP Conference Proceedings 1297~(1) (2010)
  135--338.
\newblock \href
  {http://arxiv.org/abs/https://pubs.aip.org/aip/acp/article-pdf/1297/1/135/11407753/135\_1\_online.pdf}
  {\path{arXiv:https://pubs.aip.org/aip/acp/article-pdf/1297/1/135/11407753/135\_1\_online.pdf}},
  \href {https://doi.org/10.1063/1.3518900} {\path{doi:10.1063/1.3518900}}.
\newline\urlprefix\url{https://doi.org/10.1063/1.3518900}

\bibitem{sandvik4}
A.~W. Sandvik,
  \href{https://link.aps.org/doi/10.1103/PhysRevE.68.056701}{Stochastic series
  expansion method for quantum ising models with arbitrary interactions}, Phys.
  Rev. E 68 (2003) 056701.
\newblock \href {https://doi.org/10.1103/PhysRevE.68.056701}
  {\path{doi:10.1103/PhysRevE.68.056701}}.
\newline\urlprefix\url{https://link.aps.org/doi/10.1103/PhysRevE.68.056701}

\bibitem{Motoyama}
Y.~Motoyama, K.~Yoshimi, A.~Masaki-Kato, T.~Kato, N.~Kawashima,
  \href{https://www.sciencedirect.com/science/article/pii/S0010465521000692}{Dsqss:
  Discrete space quantum systems solver}, Computer Physics Communications 264
  (2021) 107944.
\newblock \href {https://doi.org/https://doi.org/10.1016/j.cpc.2021.107944}
  {\path{doi:https://doi.org/10.1016/j.cpc.2021.107944}}.
\newline\urlprefix\url{https://www.sciencedirect.com/science/article/pii/S0010465521000692}

\bibitem{Hubbard}
J.~Hubbard, \href{http://doi.org/10.1098/rspa.1963.0204}{{Electron correlations
  in narrow energy bands}}, Proc. R. Soc. Lond. (1963).
\newblock \href {https://doi.org/10.1103/PhysRevLett.111.036401}
  {\path{doi:10.1103/PhysRevLett.111.036401}}.
\newline\urlprefix\url{http://doi.org/10.1098/rspa.1963.0204}

\bibitem{Kivelson-Hubbard}
D.~P. Arovas, E.~Berg, S.~A. Kivelson, S.~Raghu,
  \href{https://doi.org/10.1146/annurev-conmatphys-031620-102024}{The hubbard
  model}, Annual Review of Condensed Matter Physics 13~(1) (2022) 239--274.
\newblock \href
  {http://arxiv.org/abs/https://doi.org/10.1146/annurev-conmatphys-031620-102024}
  {\path{arXiv:https://doi.org/10.1146/annurev-conmatphys-031620-102024}},
  \href {https://doi.org/10.1146/annurev-conmatphys-031620-102024}
  {\path{doi:10.1146/annurev-conmatphys-031620-102024}}.
\newline\urlprefix\url{https://doi.org/10.1146/annurev-conmatphys-031620-102024}

\bibitem{tJ-Spalek}
J.~Spa\l{}ek, \href{https://link.aps.org/doi/10.1103/PhysRevB.37.533}{{Effect
  of pair hopping and magnitude of intra-atomic interaction on
  exchange-mediated superconductivity}}, Phys. Rev. B 37 (1988) 533--536.
\newblock \href {https://doi.org/10.1103/PhysRevB.37.533}
  {\path{doi:10.1103/PhysRevB.37.533}}.
\newline\urlprefix\url{https://link.aps.org/doi/10.1103/PhysRevB.37.533}

\bibitem{Jak}
J.~Jakli{\v{c}}, P.~Prelov{\v{s}}ek, {Charge dynamics in the planar t-J model},
  Physical Review B 52~(9) (1995) 6903.

\bibitem{Perepel}
E.~Perepelitsky, A.~Galatas, J.~Mravlje, E.~Khatami, B.~S. Shastry, A.~Georges,
  et~al., {Transport and optical conductivity in the Hubbard model: A
  high-temperature expansion perspective}, Physical Review B 94~(23) (2016)
  235115.

\bibitem{Dev-Rxx}
E.~W. Huang, R.~Sheppard, B.~Moritz, T.~P. Devereaux,
  \href{https://www.science.org/doi/abs/10.1126/science.aau7063}{Strange
  metallicity in the doped hubbard model}, Science 366~(6468) (2019) 987--990.
\newblock \href
  {http://arxiv.org/abs/https://www.science.org/doi/pdf/10.1126/science.aau7063}
  {\path{arXiv:https://www.science.org/doi/pdf/10.1126/science.aau7063}}, \href
  {https://doi.org/10.1126/science.aau7063}
  {\path{doi:10.1126/science.aau7063}}.
\newline\urlprefix\url{https://www.science.org/doi/abs/10.1126/science.aau7063}

\bibitem{DMFT2013}
W.~Xu, K.~Haule, G.~Kotliar,
  \href{https://link.aps.org/doi/10.1103/PhysRevLett.111.036401}{{Hidden Fermi
  Liquid, Scattering Rate Saturation, and Nernst Effect: A Dynamical Mean-Field
  Theory Perspective}}, Phys. Rev. Lett. 111 (2013) 036401.
\newblock \href {https://doi.org/10.1103/PhysRevLett.111.036401}
  {\path{doi:10.1103/PhysRevLett.111.036401}}.
\newline\urlprefix\url{https://link.aps.org/doi/10.1103/PhysRevLett.111.036401}

\bibitem{Khomskii}
E.~Z. Kuchinskii, N.~A. Kuleeva, D.~I. Khomskii, M.~V. Sadovskii, Hall effect
  in a doped mott insulator: Dmft approximation, JETP Letters 115~(7) (2022)
  402--405.

\bibitem{Dev-PR}
W.~O. Wang, J.~K. Ding, B.~Moritz, Y.~Schattner, E.~W. Huang, T.~P. Devereaux,
  \href{https://link.aps.org/doi/10.1103/PhysRevResearch.3.033033}{{Numerical
  approaches for calculating the low-field dc Hall coefficient of the doped
  Hubbard model}}, Phys. Rev. Research 3 (2021) 033033.
\newblock \href {https://doi.org/10.1103/PhysRevResearch.3.033033}
  {\path{doi:10.1103/PhysRevResearch.3.033033}}.
\newline\urlprefix\url{https://link.aps.org/doi/10.1103/PhysRevResearch.3.033033}

\bibitem{Comm:QMC-AC}
{Imaginary time QMC conductivities require analytic continuation, which is
  limited to frequencies higher than temperature (see Appendix B in
  \cite{Snir}). The DC conductivities are often deduced by proxies to the
  analytical continuation~\cite{Dev-PR}.}

\bibitem{DMFT2018}
W.~Wu, M.~S. Scheurer, S.~Chatterjee, S.~Sachdev, A.~Georges, M.~Ferrero,
  \href{https://link.aps.org/doi/10.1103/PhysRevX.8.021048}{Pseudogap and
  fermi-surface topology in the two-dimensional hubbard model}, Phys. Rev. X 8
  (2018) 021048.
\newblock \href {https://doi.org/10.1103/PhysRevX.8.021048}
  {\path{doi:10.1103/PhysRevX.8.021048}}.
\newline\urlprefix\url{https://link.aps.org/doi/10.1103/PhysRevX.8.021048}

\bibitem{SignSxy-1}
A.~A. Markov, G.~Rohringer, A.~N. Rubtsov,
  \href{https://link.aps.org/doi/10.1103/PhysRevB.100.115102}{{Robustness of
  the topological quantization of the Hall conductivity for correlated lattice
  electrons at finite temperatures}}, Phys. Rev. B 100 (2019) 115102.
\newblock \href {https://doi.org/10.1103/PhysRevB.100.115102}
  {\path{doi:10.1103/PhysRevB.100.115102}}.
\newline\urlprefix\url{https://link.aps.org/doi/10.1103/PhysRevB.100.115102}

\bibitem{SignSxy-2}
J.~Vu\ifmmode \check{c}\else \v{c}\fi{}i\ifmmode \check{c}\else
  \v{c}\fi{}evi\ifmmode~\acute{c}\else \'{c}\fi{}, R.~\ifmmode~\check{Z}\else
  \v{Z}\fi{}itko,
  \href{https://link.aps.org/doi/10.1103/PhysRevB.104.205101}{{Electrical
  conductivity in the Hubbard model: Orbital effects of magnetic field}}, Phys.
  Rev. B 104 (2021) 205101.
\newblock \href {https://doi.org/10.1103/PhysRevB.104.205101}
  {\path{doi:10.1103/PhysRevB.104.205101}}.
\newline\urlprefix\url{https://link.aps.org/doi/10.1103/PhysRevB.104.205101}

\bibitem{New-DMFT}
Y.~Shi, J.~Schirmer, L.-Q. Chen, Hall coefficient and resistivity in the doped
  bilayer hubbard model, Preprint at: https://arxiv.org/abs/2308.03862 (2023).

\bibitem{Science17}
R.~Krishna~Kumar, X.~Chen, G.~Auton, A.~Mishchenko, D.~A. Bandurin, S.~V.
  Morozov, Y.~Cao, E.~Khestanova, M.~Ben~Shalom, A.~Kretinin, et~al.,
  {High-temperature quantum oscillations caused by recurring Bloch states in
  graphene superlattices}, Science 357~(6347) (2017) 181--184.

\bibitem{Dev-RXY}
W.~O. Wang, J.~K. Ding, B.~Moritz, E.~W. Huang, T.~P. Devereaux, Dc hall
  coefficient of the strongly correlated hubbard model, npj Quantum Materials
  5~(1) (2020) 51.

\bibitem{SSS}
B.~Shastry, B.~Shraiman, R.~Singh, Faraday rotation and the hall constant in
  strongly correlated fermi systems, Physical review letters 70~(13) (1993)
  2004.

\bibitem{Kerr}
P.~Hosur, A.~Kapitulnik, S.~A. Kivelson, J.~Orenstein, S.~Raghu, W.~Cho,
  A.~Fried,
  \href{https://link.aps.org/doi/10.1103/PhysRevB.91.039908}{{Erratum: Kerr
  effect as evidence of gyrotropic order in the cuprates [Phys. Rev. B 87,
  115116 (2013)]}}, Phys. Rev. B 91 (2015) 039908.
\newblock \href {https://doi.org/10.1103/PhysRevB.91.039908}
  {\path{doi:10.1103/PhysRevB.91.039908}}.
\newline\urlprefix\url{https://link.aps.org/doi/10.1103/PhysRevB.91.039908}

\bibitem{RVB}
P.~W. Anderson,
  \href{https://www.science.org/doi/abs/10.1126/science.235.4793.1196}{{The
  Resonating Valence Bond State in LaCuO and Superconductivity}}, Science
  235~(4793) (1987) 1196--1198.
\newblock \href
  {http://arxiv.org/abs/https://www.science.org/doi/pdf/10.1126/science.235.4793.1196}
  {\path{arXiv:https://www.science.org/doi/pdf/10.1126/science.235.4793.1196}},
  \href {https://doi.org/10.1126/science.235.4793.1196}
  {\path{doi:10.1126/science.235.4793.1196}}.
\newline\urlprefix\url{https://www.science.org/doi/abs/10.1126/science.235.4793.1196}

\bibitem{Uemura}
Y.~J. Uemura, G.~M. Luke, B.~J. Sternlieb, J.~H. Brewer, J.~F. Carolan, W.~N.
  Hardy, R.~Kadono, J.~R. Kempton, R.~F. Kiefl, S.~R. Kreitzman, P.~Mulhern,
  T.~M. Riseman, D.~L. Williams, B.~X. Yang, S.~Uchida, H.~Takagi,
  J.~Gopalakrishnan, A.~W. Sleight, M.~A. Subramanian, C.~L. Chien, M.~Z.
  Cieplak, G.~Xiao, V.~Y. Lee, B.~W. Statt, C.~E. Stronach, W.~J. Kossler,
  X.~H. Yu,
  \href{https://link.aps.org/doi/10.1103/PhysRevLett.62.2317}{{Universal
  Correlations between ${T}_{c}$ and $\frac{{n}_{s}}{{m}^{*}}$ (Carrier Density
  over Effective Mass) in High-${T}_{c}$ Cuprate Superconductors}}, Phys. Rev.
  Lett. 62 (1989) 2317--2320.
\newblock \href {https://doi.org/10.1103/PhysRevLett.62.2317}
  {\path{doi:10.1103/PhysRevLett.62.2317}}.
\newline\urlprefix\url{https://link.aps.org/doi/10.1103/PhysRevLett.62.2317}

\bibitem{RH-Kim}
S.~Y.~F. Zhao, N.~Poccia, M.~G. Panetta, C.~Yu, J.~W. Johnson, H.~Yoo,
  R.~Zhong, G.~D. Gu, K.~Watanabe, T.~Taniguchi, S.~V. Postolova, V.~M.
  Vinokur, P.~Kim,
  \href{https://link.aps.org/doi/10.1103/PhysRevLett.122.247001}{{Sign-Reversing
  Hall Effect in Atomically Thin High-Temperature
  ${\mathrm{Bi}}_{2.1}{\mathrm{Sr}}_{1.9}{\mathrm{CaCu}}_{2.0}{\mathrm{O}}_{8+\ensuremath{\delta}}$
  Superconductors}}, Phys. Rev. Lett. 122 (2019) 247001.
\newblock \href {https://doi.org/10.1103/PhysRevLett.122.247001}
  {\path{doi:10.1103/PhysRevLett.122.247001}}.
\newline\urlprefix\url{https://link.aps.org/doi/10.1103/PhysRevLett.122.247001}

\bibitem{SciPost}
A.~Auerbach, D.~P. Arovas,
  \href{https://scipost.org/10.21468/SciPostPhys.8.4.061}{{Hall anomaly and
  moving vortex charge in layered superconductors}}, SciPost Phys. 8 (2020)
  061.
\newblock \href {https://doi.org/10.21468/SciPostPhys.8.4.061}
  {\path{doi:10.21468/SciPostPhys.8.4.061}}.
\newline\urlprefix\url{https://scipost.org/10.21468/SciPostPhys.8.4.061}

\bibitem{Spielman}
Y.~J. Lin, R.~L. Compton, K.~Jim{\'e}nez-Garc{\'\i}a, J.~V. Porto, I.~B.
  Spielman, \href{https://doi.org/10.1038/nature08609}{Synthetic magnetic
  fields for ultracold neutral atoms}, Nature 462~(7273) (2009) 628--632.
\newblock \href {https://doi.org/10.1038/nature08609}
  {\path{doi:10.1038/nature08609}}.
\newline\urlprefix\url{https://doi.org/10.1038/nature08609}

\bibitem{Scalettar}
R.~T. Scalettar, G.~G. Batrouni, G.~T. Zimanyi, Localization in interacting,
  disordered, bose systems, Physical review letters 66~(24) (1991) 3144.

\bibitem{pitaevski}
L.~Pitaevskii, S.~Stringari, Bose-Einstein condensation and superfluidity, Vol.
  164, Oxford University Press, 2016.

\bibitem{Huber}
S.~D. Huber, N.~H. Lindner, Topological transitions for lattice bosons in a
  magnetic field, Proceedings of the National Academy of Sciences 108~(50)
  (2011) 19925--19930.

\bibitem{LAA}
N.~H. Lindner, A.~Auerbach, D.~P. Arovas, Vortex quantum dynamics of two
  dimensional lattice bosons, Physical review letters 102~(7) (2009) 070403.

\bibitem{QPT-Subir}
S.~Sachdev, Quantum phase transitions, second ed. Edition, Cambridge University
  Press, Cambridge, 2011.

\bibitem{FisherBoson}
M.~P. Fisher, P.~B. Weichman, G.~Grinstein, D.~S. Fisher, Boson localization
  and the superfluid-insulator transition, Physical Review B 40~(1) (1989) 546.

\bibitem{Higgs}
D.~Podolsky, A.~Auerbach, D.~P. Arovas, Visibility of the amplitude (higgs)
  mode in condensed matter, Physical Review B 84~(17) (2011) 174522.

\bibitem{Zotos-RH}
X.~Zotos, F.~Naef, M.~Long, P.~Prelov{\v{s}}ek, Drude Weight, Integrable
  Systems and the Reactive Hall Constant, Springer Netherlands, Dordrecht,
  2001, pp. 273--282.

\bibitem{Thierry-RH}
R.~Citro, T.~Giamarchi, E.~Orignac, Hall response in interacting bosonic and
  fermionic ladders (2024).
\newblock \href {http://arxiv.org/abs/2404.16973} {\path{arXiv:2404.16973}}.

\bibitem{Ding-QMC1}
H.-Q. Ding, M.~Makivic, Kosterlitz-thouless transition in the two-dimensional
  quantum xy model, Physical Review B 42~(10) (1990) 6827.

\bibitem{Ding-QMC2}
H.-Q. Ding, \href{https://link.aps.org/doi/10.1103/PhysRevB.45.230}{Phase
  transition and thermodynamics of quantum xy model in two dimensions}, Phys.
  Rev. B 45 (1992) 230--242.
\newblock \href {https://doi.org/10.1103/PhysRevB.45.230}
  {\path{doi:10.1103/PhysRevB.45.230}}.
\newline\urlprefix\url{https://link.aps.org/doi/10.1103/PhysRevB.45.230}

\bibitem{Harada}
K.~Harada, N.~Kawashima,
  \href{https://link.aps.org/doi/10.1103/PhysRevB.55.R11949}{Universal jump in
  the helicity modulus of the two-dimensional quantum xy model}, Phys. Rev. B
  55 (1997) R11949--R11952.
\newblock \href {https://doi.org/10.1103/PhysRevB.55.R11949}
  {\path{doi:10.1103/PhysRevB.55.R11949}}.
\newline\urlprefix\url{https://link.aps.org/doi/10.1103/PhysRevB.55.R11949}

\bibitem{Ber}
V.~Berezinskii, Destruction of long-range order in one-dimensional and
  two-dimensional systems having a continuous symmetry group i. classical
  systems, Sov. Phys. JETP 32~(3) (1971) 493--500.

\bibitem{BKT}
J.~M. Kosterlitz, D.~J. Thouless, Ordering, metastability and phase transitions
  in two-dimensional systems, Journal of Physics C: Solid State Physics 6~(7)
  (1973) 1181.

\bibitem{Stiffness-Sandvik}
A.~W. Sandvik, C.~J. Hamer, Ground-state parameters, finite-size scaling, and
  low-temperature properties of the two-dimensional s= 1 2 xy model, Physical
  Review B 60~(9) (1999) 6588.

\bibitem{Stiffness-Troyer}
K.~Bernardet, G.~Batrouni, J.-L. Meunier, G.~Schmid, M.~Troyer, A.~Dorneich,
  Analytical and numerical study of hardcore bosons in two dimensions, Physical
  Review B 65~(10) (2002) 104519.

\bibitem{MB}
D.~C. Mattis, J.~Bardeen, Theory of the anomalous skin effect in normal and
  superconducting metals, Physical Review 111~(2) (1958) 412.

\bibitem{HCB}
S.~Bhattacharyya, A.~De, S.~Gazit, A.~Auerbach, {Metallic transport of hard
  core bosons}, arXiv:2309.14479 (2023).
\newblock \href {http://arxiv.org/abs/2309.14479} {\path{arXiv:2309.14479}}.

\bibitem{Vadim}
S.~Mukerjee, V.~Oganesyan, D.~Huse, Statistical theory of transport by strongly
  interacting lattice fermions, Physical Review B 73~(3) (2006) 035113.

\bibitem{HN}
B.~Halperin, D.~R. Nelson, Resistive transition in superconducting films,
  Journal of low temperature physics 36 (1979) 599--616.

\bibitem{RH-HF-Analytis}
N.~Maksimovic, D.~H. Eilbott, T.~Cookmeyer, F.~Wan, J.~Rusz, V.~Nagarajan,
  S.~C. Haley, E.~Maniv, A.~Gong, S.~Faubel, I.~M. Hayes, A.~Bangura,
  J.~Singleton, J.~C. Palmstrom, L.~Winter, R.~McDonald, S.~Jang, P.~Ai,
  Y.~Lin, S.~Ciocys, J.~Gobbo, Y.~Werman, P.~M. Oppeneer, E.~Altman,
  A.~Lanzara, J.~G. Analytis, {Evidence for a delocalization quantum phase
  transition without symmetry breaking in CeCoIn}, Science 375~(6576) (2022)
  76--81.
\newblock \href
  {http://arxiv.org/abs/https://www.science.org/doi/pdf/10.1126/science.aaz4566}
  {\path{arXiv:https://www.science.org/doi/pdf/10.1126/science.aaz4566}}, \href
  {https://doi.org/10.1126/science.aaz4566}
  {\path{doi:10.1126/science.aaz4566}}.

\bibitem{Behnia-SrTiO}
X.~Li, B.~Fauqu{\'e}, Z.~Zhu, K.~Behnia, {Phonon thermal Hall effect in
  strontium titanate}, Physical review letters 124~(10) (2020) 105901.

\bibitem{TBG}
E.~Y. Andrei, A.~H. MacDonald,
  \href{https://doi.org/10.1038/s41563-020-00840-0}{{Graphene bilayers with a
  twist}}, Nature Materials 19~(12) (2020) 1265--1275.
\newblock \href {https://doi.org/10.1038/s41563-020-00840-0}
  {\path{doi:10.1038/s41563-020-00840-0}}.
\newline\urlprefix\url{https://doi.org/10.1038/s41563-020-00840-0}

\bibitem{NPJ1}
M.~M. Scherer, D.~M. Kennes, L.~Classen,
  \href{https://doi.org/10.1038/s41535-022-00504-z}{{Chiral superconductivity
  with enhanced quantized Hall responses in moir{\'e} transition metal
  dichalcogenides}}, npj Quantum Materials 7~(1) (2022) 100.
\newblock \href {https://doi.org/10.1038/s41535-022-00504-z}
  {\path{doi:10.1038/s41535-022-00504-z}}.
\newline\urlprefix\url{https://doi.org/10.1038/s41535-022-00504-z}

\bibitem{NPJ2}
J.~Pizarro, S.~Adler, K.~Zantout, T.~Mertz, P.~Barone, R.~Valent{\'\i},
  G.~Sangiovanni, T.~O. Wehling,
  \href{https://doi.org/10.1038/s41535-020-00277-3}{{Deconfinement of Mott
  localized electrons into topological and spin--orbit-coupled Dirac
  fermions}}, npj Quantum Materials 5~(1) (2020) 79.
\newblock \href {https://doi.org/10.1038/s41535-020-00277-3}
  {\path{doi:10.1038/s41535-020-00277-3}}.
\newline\urlprefix\url{https://doi.org/10.1038/s41535-020-00277-3}

\bibitem{Bakr}
P.~T. Brown, D.~Mitra, E.~Guardado-Sanchez, R.~Nourafkan, A.~Reymbaut, C.-D.
  H{\'e}bert, S.~Bergeron, A.-M. Tremblay, J.~Kokalj, D.~A. Huse, et~al., {Bad
  metallic transport in a cold atom Fermi-Hubbard system}, Science 363~(6425)
  (2019) 379--382.

\end{thebibliography}
\end{document}